\def\colorful{1}
\newif\ifhyper\IfFileExists{hyperref.sty}{\hypertrue}{\hyperfalse}
\ifhyper\usepackage{hyperref}\fi
\newcommand{\inote}[1]{\footnote{{\bf [[Ilias: {#1}\bf ]] }}}
\def\nnewcolor{0}
\newcommand{\tr}{\mathrm{tr}}
\newcommand{\spn}{\mathrm{span}}
\newcommand{\0}{\mathbf{0}}
\newtheorem{theorem}{Theorem}[section]
\newtheorem{lemma}[theorem]{Lemma}
\newtheorem{informal theorem}[theorem]{Theorem (informal statement)}
\newtheorem{proposition}[theorem]{Proposition}
\newtheorem{claim}[theorem]{Claim}
\newtheorem{fact}[theorem]{Fact}
\newtheorem{remark}[theorem]{Remark}
\theoremstyle{definition}
\newtheorem{definition}[theorem]{Definition}
\newcommand{\eqdef}{\stackrel{{\textrm {\footnotesize def}}}{=}}
\newcommand{\R}{\mathbb{R}}
\newcommand{\s}{\mathbb{S}}
\newcommand{\Z}{\mathbb{Z}}
\newcommand{\eps}{\epsilon}
\newcommand{\pr}{\mathbf{Pr}}
\newcommand{\poly}{\mathrm{poly}}
\newcommand{\polylog}{\mathrm{polylog}}
\newcommand{\E}{\mathbf{E}}
\newcommand{\proj}{\mathrm{proj}}
\newcommand{\rank}{\mathrm{rank}}
\newcommand{\Bg}{\mathrm{B}}
\newcommand{\bg}{\mathrm{b}}
\newcommand{\Sm}{\mathrm{S}}
\newcommand{\sm}{\mathrm{s}}
\newcommand{\sgn}{\mathrm{sign}}
\newcommand{\opt}{\mathrm{OPT}}
\newcommand{\D}{\mathcal{D}}
\newcommand{\littlesum}{\mathop{\textstyle \sum}}
\newcommand{\wh}{\widehat}
\newcommand{\nearest}[1]{\left\lceil#1\right\rfloor}
\newcommand\blfootnotea[1]{%
  \begingroup
  \renewcommand\thefootnote{}\footnote{#1}%
  \endgroup
}
\title{A Strongly Polynomial Algorithm for Approximate Forster Transforms
and its Application to Halfspace Learning\blfootnotea{Author names are in randomized order.}}
\author{
Ilias Diakonikolas\thanks{Supported by NSF Medium Award CCF-2107079, NSF Award CCF-1652862 (CAREER), a Sloan Research Fellowship, and a DARPA Learning with Less Labels (LwLL) grant.}\\
University of Wisconsin-Madison\\
{\tt ilias@cs.wisc.edu}\\
\and
Christos Tzamos\thanks{Supported by NSF Award CCF-2008006 and NSF Award CCF-2144298 (CAREER).}\\
University of Wisconsin-Madison\\
{\tt tzamos@cs.wisc.edu}
\and
Daniel M. Kane\thanks{Supported by NSF Medium Award CCF-2107547, NSF Award CCF-1553288 (CAREER), and a grant from CasperLabs.}\\
University of California, San Diego\\
{\tt dakane@cs.ucsd.edu}
}
\begin{document}

\maketitle

\begin{abstract}
The Forster transform is a method of regularizing a dataset 
by placing it in {\em radial isotropic position}
while maintaining some of its essential properties.
Forster transforms have played a key role in a diverse range of settings
spanning computer science and functional analysis. Prior work had given
{\em weakly} polynomial time algorithms for computing Forster transforms, when they exist.
Our main result is the first {\em strongly polynomial time} algorithm to
compute an approximate Forster transform of a given dataset 
or certify that no such transformation exists. By leveraging our strongly polynomial Forster algorithm, 
we obtain the first strongly polynomial time algorithm for {\em distribution-free} PAC learning of halfspaces.
This learning result is surprising because {\em proper} PAC learning of halfspaces 
is {\em equivalent} to linear programming.
Our learning approach extends to give a strongly polynomial halfspace learner
in the presence of random classification noise and, more generally, Massart noise. 
\end{abstract}

\setcounter{page}{0}

\thispagestyle{empty}

\newpage

\section{Introduction} \label{sec:intro}

\subsection{Forster Transforms and Their Applications} \label{ssec:forster-intro}

The Forster transform is a method of regularizing a dataset $X$ 
(in particular, by placing it in {\em radial} isotropic position) 
while maintaining some of its essential properties.
Forster transforms have been an essential tool in a diverse range of settings, 
including functional analysis~\cite{Barthe98, GargGOW17},
communication complexity~\cite{Forster02}, coding theory~\cite{DvirSW17}, 
mixed determinant/volume approximation~\cite{GurSam02},
learning theory~\cite{HardtM13, HopkinsKLM20, DKT21, DPT21}
and the Paulsen problem in frame theory~\cite{KwokLLR18, HamiltonM19}.
The reader is referred to~\cite{AKS20} for a more detailed discussion.

Known algorithms for computing (approximate) Forster transforms~\cite{HardtM13, AKS20, DKT21} 
rely on black-box convex optimization (e.g., the ellipsoid algorithm)
and consequently have {\em weakly} polynomial runtimes. 
Here we study the question of whether Forster transforms 
can be computed in {\em strongly} polynomial time.
We then leverage Forster transforms for the problem of PAC learning halfspaces 
(both in the realizable setting and in the presence of semi-random label noise).


Intuitively speaking, a Forster transform is a mapping that turns a dataset
into one with good {\em anti-concentration} properties.
Specifically, given a dataset $X \subset \R^d_{\ast}$\footnote{We use 
$\R_{\ast}$ to denote the set $\R \setminus \{ \mathbf{0}\}$.},
a Forster transform of $X$ is an invertible linear transformation $A \in \R^{d \times d}$
such that the set of points $Y = \{Ax/\|Ax\|_2, x \in X \}$
is in isotropic position (i.e., has identity second moment matrix). 
Formally, we have the following more general definition 
allowing for {\em approximate} isotropic position.

\begin{definition}[Approximate Forster Transform] \label{def:forster}
Let $X$ be a set of $n$ nonzero points in $\R^d$ 
and $0 \leq \eps  \leq 1$ be an error parameter.
An {\em $\eps$-approximate Forster transform} of $X$ is an invertible linear transformation
$A \in \R^{d \times d}$ such that, considering the mapping $f_A: \R^d_{\ast}  \mapsto \s^d$
defined by $f_A(x) \eqdef Ax/\|Ax\|_2$, the matrix
$M_A(X) \eqdef (1/n) \sum_{x \in X} f_A(x) f_A(x)^\top$ satisfies
$\frac {1-\eps}{d} \, I \preceq M_A(X)  \preceq \frac{1+\eps}{d} I$.
\end{definition}

\noindent An {\em exact} Forster transform (corresponding to $\eps=0$ in Definition~\ref{def:forster})
aims to linearly transform a given dataset so that
the normalizations of these points are in isotropic position. This notion is
known as ``Forster's isotropic position'' or ``radial isotropic position'' and
can be viewed as an outlier-robust analogue of isotropic position. 
As already mentioned, radial isotropy has been extensively studied 
in functional analysis and computer science.

\begin{remark}
{\em At a high-level, a Forster transform aims to transform a given dataset so that it becomes
``well-conditioned'' in a well-defined technical sense. We note that several other such transformations
have been studied in the literature, including the ``outlier-removal technique'' 
of Dunagan and Vempala~\cite{DV:04} (improving on~\cite{BlumFKV96}) 
and the rescaling method of Dunagan and Vempala~\cite{DunaganV04} for linear programming. 
We provide a summary of these techniques and 
a comparison to radial isotropy in Section~\ref{sec:related}}.
\end{remark}

\vspace{-0.2cm}

\paragraph{Existence}
Forster \cite{Forster02} showed that if the set of points $X$
is in general position, 
then a Forster transform exists. Interestingly, generalizations of Forster's theorem
appear implicitly in~\cite{Barthe98} and explicitly in~\cite{GurSam02}.
We note that there are datasets for which a Forster transform does not exist.
For example, if there is a $d/3$-dimensional subspace
that contains half of the points in $X$,
then after applying {\em any} such transformation to our dataset,
this will still be the case; thus, there will be a $d/3$-dimensional subspace
over which the trace of the second moment matrix is at least $1/2$.
In a recent refinement of the aforementioned works, 
\cite{HopkinsKLM20} showed
that this is the only thing that can go wrong. That is, a Forster transform 
of a given dataset $X$ exists unless there is a $k$-dimensional subspace, 
for some $0<k<d$, containing at least 
a $k/d$-fraction of the points in $X$.

\vspace{-0.2cm}

\paragraph{Efficient Computability}
Forster's existence proof proceeds via a non-constructive iterative argument.
By analyzing a convex program proposed by~ Barthe~\cite{Barthe98},
Hardt and Moitra~\cite{HardtM13} (see also~\cite{AKS20})
showed that the ellipsoid method yields a {\em weakly polynomial} time algorithm
to compute an approximate Forster transform (when it exists).
(More recently,~\cite{DKT21} pointed out that a simple 
explicit SDP can be used to obtain a similar guarantee.)
We remind the reader that the term {\em weakly polynomial time} algorithm 
refers to the fact that the {\em number of arithmetic operations} 
performed by the algorithm scales 
polynomially with {\em the bit complexity of the numbers in the input}.
Specifically, in our Forster setting, the number of arithmetic operations 
required by the ellipsoid method is $\poly(n, d, b, \log(1/\eps))$,
where $\eps$ is the accuracy parameter of Definition~\ref{def:forster}, 
$n$ is the size of the dataset $X$, and $b$ is the bit complexity of $X$.

Starting from the convex programming formulation in~\cite{Barthe98},
Artstein-Avidan, Kaplan, and Sharir~\cite{AKS20} gave an SVD-based gradient-descent
method for computing approximate Forster transforms. 
This method incurs a $\poly(1/\eps)$ runtime dependence and is still weakly polynomial, 
i.e., the number of arithmetic operations scales polynomially in the bit complexity $b$.
Finally, it is interesting to remark that Forster's rescaling
is a special case of operator scaling and tensor scaling (see~\cite{GargO18} for a survey).
Efficient algorithms have been developed
for these more general tasks, see, e.g.,~\cite{AGLOW18, BFGOWW18},
albeit with weakly polynomial guarantees.

\vspace{-0.2cm}

\paragraph{Weakly versus Strongly Polynomial Time}
As is standard for computational purposes, we assume that every integer or rational number
appearing in the input is encoded
using its binary representation. 
Let $N \in \Z_+$ denote the number of integer numbers
given as input and $b \in \Z_+$ denote the bit complexity 
of the largest integer appearing in the input description.
An algorithm for the underlying computational problem is called {\em weakly polynomial}, 
if its worst-case running time is bounded by a fixed-degree polynomial in the Turing
machine model of computation.

The concept of {\em strongly polynomial} time 
was introduced by Megiddo~\cite{Meg83}, under the name ``genuinely polynomial''.
A strongly polynomial time algorithm satisfies the following properties
(see, e.g., Section~1.3 of~\cite{GLS:88}):
(i) it uses only elementary arithmetic operations
(specifically, integer addition, subtraction, multiplication, and division), 
(ii) the number of arithmetic operations is bounded above by a polynomial in $N$,
and (iii) the algorithm is a polynomial space algorithm: that is, all numbers appearing
in all intermediate computations are rational numbers with bit complexity bounded
above by a polynomial in the input size (i.e., $\poly(N, b)$).

The key difference between strongly and weakly polynomial time lies in property (ii) above.
In a weakly polynomial algorithm, the {\em number of arithmetic operations} is allowed to 
scale with the bit complexity of the numbers in the input. {\em In sharp contrast, 
in a strongly polynomial time algorithm no bit complexity dependence is allowed.}

\vspace{-0.2cm}

\paragraph{Forster Transforms in {\em Strongly} Polynomial Time?}

Motivated by the fundamental nature and the
varied applications of Forster transforms, here we ask
the following question:
\begin{center}
{\em Is there a {\em strongly} polynomial time algorithm to compute \\ 
an approximate Forster transform of a given dataset (assuming one exists)?}
\end{center}
Our main algorithmic result (Theorem~\ref{thm:forster-intro}) 
answers this question in the affirmative by giving
the first randomized strongly polynomial-time algorithm
for computing approximate Forster transforms ---
corresponding to $\eps = \Omega(\poly(1/(n, d)))$ in Definition~\ref{def:forster}.
Importantly, a constant value of $\eps$ suffices for our learning theory application 
to learning halfspaces. Obtaining a strongly polynomial time algorithm 
for inverse exponential values of $\eps$ is left 
as an interesting open problem (see Section~\ref{sec:conc} for a discussion).

\subsection{Halfspaces and Efficient PAC Learnability} \label{ssec:ltfs-intro}

One of the main motivations behind this work was leveraging Forster transforms
as a tool for the algorithmic problem of distribution-free PAC learning of halfspaces. 
We review the relevant background in the subsequent discussion.

\paragraph{Halfspaces} 
We are concerned with the efficient learnability
of halfspaces in Valiant's distribution-free
PAC model~\cite{val84}. A {\em halfspace} or Linear Threshold Function (LTF)
is any Boolean-valued function $f: \R^d \mapsto \{\pm 1\}$ of the form
$f(x) = \sgn(w \cdot x - t)$, for some $w \in\R^d$ (known as the weight vector)
and $t \in \R$ (known as the  threshold).
(The function $\sgn: \R \mapsto \{ \pm 1\}$ is defined as $\sgn(u) = 1$ if $u \geq 0$,
and $\sgn(u) = -1$ otherwise.)
Halfspaces are one of the most extensively studied classes of Boolean functions
due to their central role in several areas, including
complexity theory, learning theory, and optimization~\cite{Rosenblatt:58, Novikoff:62,
MinskyPapert:68, Yao:90, GHR:92, FreundSchapire:97, Vapnik:98, CristianiniShaweTaylor:00,
ODonnellbook}. 

\vspace{-0.2cm}

\paragraph{Background on PAC Learning}

The major goal of computational learning theory 
is to develop learning algorithms for expressive
concept classes that are both statistically and computationally efficient.
To facilitate the subsequent discussion, we formally define Valiant's PAC model.

\begin{definition}[PAC Learning] \label{def:PAC-learning}
Let $\mathcal{C}$ be a class of Boolean-valued functions over $X= \R^d$ and
$\D_{X}$ be a fixed but unknown distribution over $X$.
Let $f$ be an unknown target function in $\mathcal{C}$.
A {\em PAC example oracle}, $\mathrm{EX}(f, \D_{X})$,
works as follows: Each time $\mathrm{EX}(f, \D_{X})$ is invoked,
it returns a labeled example $(x, y)$, where $x \sim \D_{X}$ and
$y = f(x)$. Let $\D$ denote the joint distribution on $(x, y)$ generated by the above oracle.
Given an accuracy parameter $\gamma>0$ and access to
i.i.d.\ samples from $\D$, the learner wants to output a hypothesis
$h: \R^d \mapsto \{\pm 1\}$ such that with high probability
the misclassification error of $h$ is at most $\gamma$, i.e.,
we have that $\pr_{(x, y) \sim \D} [h(x) \neq y] \leq \gamma$.
\end{definition}

The hypothesis $h$ in Definition~\ref{def:PAC-learning} 
does not necessarily belong to the class $\mathcal{C}$. 
Namely, we focus on the standard notion of {\em improper} learning, where 
the learner can output any efficiently computable hypothesis.
The special case where $h$ is required to lie in $\mathcal{C}$ is known
as {\em proper} learning. While proper learning might be desirable for some applications  
(e.g., due to its interpretability), there exist natural concept classes 
for which proper learning is computationally hard and improper learning
is easy (see, e.g.,~\cite{KearnsVazirani:94}).
An improper hypothesis 
is as useful as a proper one for the purpose of predicting new function values.

\begin{remark}
{\em The PAC model of Definition~\ref{def:PAC-learning} is known as {\em realizable} 
because of the assumption that the labels are consistent with the target concept. While our main 
learning application is on the realizable learning of halfspaces in strongly polynomial time 
(Theorem~\ref{thm:lft-real-intro}), our positive result extends for learning halfspaces 
in the presence of random or semi-random label noise (Theorem~\ref{thm:lft-Massart-intro}).}
\end{remark}

\vspace{-0.2cm}

\paragraph{PAC Learning Halfspaces and Linear Programming}
With this terminology, we return to our discussion on halfspaces.
Suppose we are given a multiset of $n$ labeled examples, $(x^{(i)}, y^{(i)})$, with
$x^{(i)} \sim \D_{X}$  
and $y^{(i)} = f^{\ast} (x^{(i)})$,
where $f^{\ast}(x) = \sgn(w^{\ast} \cdot x - t^{\ast})$
is the target halfspace. Then we can find a consistent halfspace hypothesis
$h(x) = \sgn(\wh{w} \cdot x - \wh{t} \, )$ (i.e., a halfspace that agrees with the training set)
via a reduction to Linear Programming (LP); see, e.g.,~\cite{MT:94}.
Indeed, each example  $(x^{(i)}, y^{(i)})$ gives rise
to the linear inequality $(w \cdot x^{(i)}  - t) y^{(i)} \geq 0$ over variables $(w, t) \in \R^{d+1}$.
This gives us an LP with $d+1$ variables and $n$ constraints, which is feasible
(as $(w^{\ast}, t^{\ast})$ is a feasible solution by assumption).
We can thus use any polynomial-time LP algorithm to compute a feasible solution
$(\wh{w}, \wh{t} \,)$.
By standard VC-dimension generalization results (see, e.g.,~\cite{KearnsVazirani:94}),
if the sample size $n$ is sufficiently large, namely for some $n  = \tilde{O}(d/\gamma)$,
the halfspace hypothesis $h(x) = \sgn (\wh{w} \cdot x - \wh{t}\, )$ with high probability satisfies
$\pr_{(x, y) \sim \D} [h(x) \neq y] \leq \gamma$. This straightforward reduction gives
a PAC learning algorithm for halfspaces on $\R^d$
with sample complexity $\tilde{O}(d/\gamma)$ and running time polynomial in the input size.
Formally speaking, the running time of such an algorithm is {\em weakly polynomial}, i.e.,
its worst-case number of arithmetic operations scales with the bit complexity of the input examples.

Interestingly, the aforementioned reduction can be reversed.
That is, one can use any PAC learner that outputs a halfspace hypothesis as a black-box
to solve the linear feasibility problem $A w \geq 0$, $w \neq 0$, where $A \in \R^{n \times d}$
and $w \in \R^d$, by considering each linear constraint as an example.
Intuitively, the vector $w$ can be viewed as the weight vector defining the target halfspace.

\vspace{-0.2cm}

\paragraph{Learning Halfspaces in {\em Strongly} Polynomial Time?}
All known polynomial time algorithms for LP,
including the ellipsoid algorithm and interior-point methods, are weakly polynomial.
The existence of a strongly polynomial LP algorithm
is a major open question in computer science, famously highlighted
by Smale~\cite{Smale98}.
The straightforward reduction of PAC learning halfspaces
to LP leads to a {\em weakly} polynomial learner.
Interestingly, the reduction in the opposite direction
has lead various authors (see \cite{Cohen:97} and recently~\cite{DGT19, ChenKMY20})
to suggest that learning halfspaces in strongly polynomial time
is {\em equivalent} to strongly polynomial LP. 
The catch, of course, is that this equivalence only holds if we restrict ourselves to {\em proper} learners.

Several weakly polynomial time algorithms for 
PAC learning halfspaces have been developed over the past thirty years,
starting with the pioneering works~\cite{BlumFKV96, Cohen:97, DunaganV04}
and recently in~\cite{DGT19, ChenKMY20, DKT21}. 
(These works do not proceed 
by a black-box reduction to solving LPs.)
These learners succeed 
not only in the realizable setting, 
but also in the presence of (semi)-random label noise.
Importantly, all prior learners are weakly polynomial --- 
even restricted to the realizable setting.
This discussion serves as a motivation for the following question:
\begin{center}
{\em Is there a {\em strongly} polynomial time algorithm for PAC learning halfspaces?}
\end{center}
The main learning-theoretic result of this paper (Theorem~\ref{thm:lft-real-intro})
answers the above question in the affirmative.
This algorithmic result generalizes to yield strongly polynomial time algorithms for
learning halfspaces in ``benign'' noise models,
including Random Classification Noise (RCN)~\cite{AL88}
and, more generally, Massart noise~\cite{Massart2006} (Theorem~\ref{thm:lft-Massart-intro}).

\subsection{Our Results} \label{sec:results}

The main algorithmic result of this work is
the first randomized strongly polynomial time algorithm for computing
an approximate Forster transform of a given dataset, assuming that one exists.

\begin{theorem}[Approximate Forster Transforms in Strongly Polynomial Time] \label{thm:forster-intro}
There exists a randomized algorithm that given a set $X \subset \R_{\ast}^d$
of size $n$ and a parameter $\epsilon \in (0, 1)$,
runs in time strongly polynomial in $n d/\eps$, and
has the following high probability guarantee: 
either the algorithm computes an $\eps$-approximate Forster transform of $X$,
or it correctly detects that no Forster transform of $X$ exists by finding a proper subspace
$W \subset \R^d$ such that $|X \cap W| > (n/d) \, \dim(W)$.
\end{theorem}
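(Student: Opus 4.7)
The plan is to run a gradient-descent-style iteration on Barthe's convex potential
\[
\Phi(A) \;=\; \frac{1}{n}\sum_{x \in X}\log \|Ax\|_2^2 \;-\; \log |\det A|,
\]
whose stationary points correspond to exact Forster transforms of $X$ and whose gradient with respect to $A$ is a rational expression in $M_A(X) - (1/d) I$. Starting from $A = I$, I would iterate a linearized Forster update of the form $A \leftarrow (I - \eta\,(M_A(X) - (1/d)\,I))\,A$ for a step size $\eta = 1/\poly(n,d)$. A direct Taylor expansion of $\Phi$ around the current $A$ would show that whenever $\|M_A(X) - (1/d)\,I\|_F \geq \Omega(\eps/d)$, each step decreases $\Phi$ by at least $\poly(\eps/(nd))$. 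Since the total variation of $\Phi$ from its initial value at $A = I$ to its infimum (when a Forster transform exists) is bounded by a purely combinatorial quantity in $n$ and $d$ (roughly $O(\log n)$ by the existence proof plus normalization), this yields an iteration count of $\poly(n, d, 1/\eps)$ that is \emph{independent} of the input bit complexity $b$.

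To meet the strongly polynomial specification, I must then keep the bit length of $A$ from growing across iterations. I would therefore round each updated $A$ coordinate-wise to a rational matrix whose entries have denominators of bit-length $\poly(n, d, b, \log(1/\eps))$, and argue that the operator-norm perturbation this introduces in $M_A(X)$ is $o(\eps/(nd))$, so it is absorbed into the per-step progress bound on $\Phi$. Combined with the iteration bound, every intermediate quantity has bit-length $\poly(n,d,b,\log(1/\eps))$ and the total number of arithmetic operations is $\poly(n,d,1/\eps)$, as required.

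For the certificate-of-infeasibility direction, when the algorithm fails to reduce $\Phi$ within the allotted iterations, I would read the certifying subspace off from the eigendecomposition of $M_A(X)$. Concretely, if $\Phi$ has not dropped sufficiently, then $M_A(X)$ must place mass strictly above $k/d$ on some $k$-dimensional eigenspace $V$; the span of the top-$k$ eigenvectors of $M_A(X)$, pulled back through $A^{-1}$, gives a candidate subspace $W \subset \R^d$, and for each $k \in \{1,\dots,d-1\}$ I would count $|X \cap W|$ exactly and output any $W$ satisfying $|X\cap W| > (n/d)\dim(W)$. A small randomized perturbation of the input (or of the eigenbasis) ensures with high probability that boundary alignments do not prevent the correct $W$ from being detected.

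The main obstacle, I expect, is the bit-complexity control under the nonlinear map $x \mapsto Ax/\|Ax\|_2$: for points $x$ at which $\|Ax\|_2$ becomes very small, small rounding errors in $A$ can cause large changes in $f_A(x) f_A(x)^\top$, threatening the per-step progress bound. My analysis would treat such points separately, showing either that they can be clustered into a low-dimensional subspace that itself witnesses non-existence, or that their aggregate contribution to $M_A(X)$ is below the rounding tolerance; balancing these two cases against the required progress on $\Phi$ is the crux of the proof.
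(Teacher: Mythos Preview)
Your central claim---that the gap between $\Phi(I)$ and $\inf_A \Phi(A)$ is bounded by a combinatorial quantity like $O(\log n)$---is false, and this breaks the strongly-polynomial iteration bound. Here is a two-point counterexample in $\R^2$: take $x_1 = (1,0)$ and $x_2 = (\cos\theta,\sin\theta)$ with $\theta \approx 2^{-b}$. A Forster transform exists (no line contains both points), and the optimal $A^*$ with $\det A^* = 1$ sends $x_1,x_2$ to orthogonal directions; a direct computation gives $\|A^* x_i\|_2 = \sqrt{\sin\theta}$ for $i=1,2$, so $\Phi(A^*) = \log\sin\theta \approx -b$, while $\Phi(I) = 0$. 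The Barthe potential gap is $\Theta(b)$, not $O(\log n)$. Since your per-step decrease is at most $\poly(\eps/(nd))$, your iteration count is $\Omega(b)\cdot\poly(nd/\eps)$---weakly polynomial, not strongly polynomial. This is exactly why the paper cites the gradient-descent method of Artstein-Avidan--Kaplan--Sharir as weakly polynomial. Equivalently: the optimal $A^*$ can have condition number $2^{\Theta(b)}$, but your update $A \leftarrow (I - \eta(M_A - I/d))A$ changes the condition number by at most a factor $1 + O(\eta)$ per step, so reaching $A^*$ takes $\Omega(b/\eta)$ steps.

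The paper's fix is the essential new idea: it uses the \emph{bounded} potential $\|M_A(X)\|_F^2 \in [1/d,1]$ and, crucially, does \emph{not} restrict itself to small multiplicative updates. When all transformed points lie within $\gamma$ of one of two eigenspaces (Case~II), the algorithm applies a large rescaling $(I + \alpha I_V)$ with $\alpha \sim 1/\beta$, where $\beta$ can be exponentially small in $b$; this single step can blow up the condition number by a $2^{\Theta(b)}$ factor while still provably decreasing $\|M_A\|_F^2$ by $\poly(\eps/(nd))$. The two-case analysis (small step when some point straddles the eigenspaces, huge step when none does) is what decouples the iteration count from $b$. Your uniform small-step scheme has no analogue of this.

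Two secondary gaps. First, your certificate extraction is not sound: even if $M_A(X)$ places mass $> k/d$ on a $k$-dimensional eigenspace $V$, the pullback $A^{-1}V$ need not contain any point of $X$ exactly, so counting $|X \cap W|$ will typically return zero. The paper obtains its certificate only in the $\beta = 0$ subcase of Case~II, where the relevant points land \emph{exactly} in $V^\perp$ by construction. Second, you correctly identify bit-complexity control under rounding as ``the crux,'' but this is genuinely hard once $A$ has condition number $2^{\Theta(b)}$: naive entrywise rounding of $A$ can swing $f_A(x)$ by $\Omega(1)$. The paper devotes an entire section to a nontrivial rounding procedure that first reduces the condition number via carefully chosen subspace rescalings before truncating; your outline does not address this.
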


In more detail, the algorithm of Theorem~\ref{thm:forster-intro} performs $\poly(n,d,1/\epsilon)$
arithmetic operations on $\poly(n,d, 1/\epsilon,b)$-bit numbers,
where $b$ is the bit complexity of the points in $X$. As discussed in the introduction,
previous algorithms for this problem rely on the ellipsoid method and therefore are
weakly polynomial even for constant values of $\eps$.
The running time of our algorithm has a polynomial dependence in $1/\eps$;
hence, our algorithm does not run in polynomial time 
when $\eps$ is inverse super-polynomially small in $n, d$. 
Importantly, for our application in halfspace learning
(and several other applications of Forster transforms) 
constant values of the parameter $\eps$ suffice.


By using the algorithm of Theorem~\ref{thm:forster-intro} as a black-box (for $\eps = 1/2$),
we establish our main learning result
(see Theorem~\ref{thm:lft-real-full} for a more detailed statement).

\begin{theorem}[PAC Learning Halfspaces in Strongly Polynomial Time]\label{thm:lft-real-intro}
Let $\D$ be a distribution over labeled examples $(x, y) \in \R^d \times \{\pm 1\}$
such that the distribution over examples is arbitrary
and the label $y$ of example $x$ satisfies $y = f(x)$,
for an unknown halfspace $f:\R^d \mapsto \{\pm 1\}$.
There is an algorithm that, given $\gamma>0$, 
draws $n = \poly(d/\gamma)$ i.i.d.\ samples from $\D$,
runs in strongly polynomial time,
and returns a strongly polynomial time computable hypothesis
$h:\R^d \mapsto \{\pm 1\}$ such that with high probability we have that
$\pr_{(x,y)\sim \D}[h(x) \neq y] \leq \gamma$.
\end{theorem}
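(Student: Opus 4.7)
The plan is to follow the Forster-based halfspace learner of~\cite{DKT21}, invoking Theorem~\ref{thm:forster-intro} as the Forster subroutine, and to verify that every step can be performed in strongly polynomial time. I would first draw $n = \tilde O(d/\gamma)$ labeled samples $(x^{(i)}, y^{(i)})$ from $\D$. By the VC-dimension bound for halfspaces (and, more generally, for majority votes of $\poly(d)$ halfspaces), any hypothesis in this class that agrees with the sample generalizes to error at most $\gamma$ with high probability, so the remaining task is to produce such a sample-consistent hypothesis in strongly polynomial time.

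Next, invoke Theorem~\ref{thm:forster-intro} on the unlabeled $X = \{x^{(i)}\}$ with $\eps = 1/2$, receiving either an invertible rational matrix $A$ with $M_A(X) \succeq (1/(2d))\,I$, or a proper subspace $W\subsetneq\R^d$ with $|X \cap W| > (n/d)\dim(W)$. In the first case, set $z^{(i)} := Ax^{(i)}/\|Ax^{(i)}\|$. Since normalization preserves signs, the labels are still generated by a halfspace in the new coordinates, namely $y^{(i)} = \sgn(\tilde w^\ast \cdot z^{(i)})$ for $\tilde w^\ast := A^{-\top} w^\ast$. Using $\|z^{(i)}\|_2 = 1$, the lower bound on $M_A(X)$ upgrades to $\E_i[|u\cdot z^{(i)}|] \ge 1/(2d)$ for every unit vector $u$, so the average labeled margin of the target halfspace on the transformed sample is $\Omega(1/d)$. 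This suffices to derive a $1/\poly(d)$ weak learner that can be found in strongly polynomial time: for instance, at least one candidate halfspace among $\{\sgn(y^{(j)}z^{(j)}\cdot z)\}_{j\in[n]}$ must achieve correlation $\Omega(1/d)$ with the label distribution, and enumerating/selecting the best uses only standard arithmetic on the rationals produced by Theorem~\ref{thm:forster-intro}.

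I would then amplify this weak learner to a sample-consistent classifier through $\poly(d,\log n)$ rounds of boosting. The crucial requirement is that the boosting be strongly polynomial: AdaBoost-style multiplicative updates use factors $e^{\pm\eta}$, which are irrational, so I would replace them with a smoothed-boosting variant whose sample weights are kept as rationals of $\poly(d,\log n)$-bit complexity and re-invoke Theorem~\ref{thm:forster-intro} in each round on the reweighted sample to extract a fresh weak hypothesis. If instead Theorem~\ref{thm:forster-intro} returns a subspace $W$, I recurse on $X \cap W$ (a halfspace learning problem in dimension $\dim(W)<d$) and on $X \setminus W$ (in which $W$ is no longer an obstruction), with the final hypothesis first performing a rational linear-algebraic test for membership in $W$ and then invoking the appropriate sub-hypothesis; the recursion depth is at most $d$.

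The main obstacle is the boosting phase: standard boosters introduce irrational weight updates or require weights with unbounded bit complexity, either of which would break strong polynomiality. The technical task is thus to design a rational-weighted smoothed booster that simultaneously (i) retains the weak-to-strong amplification guarantee with only $\poly(d)$ round overhead, (ii) maintains only $\poly(d,\log n)$-bit rationals throughout all intermediate computations, and (iii) outputs a final hypothesis in a class of $\poly(d)$ VC dimension so that the generalization bound from Step~1 applies.
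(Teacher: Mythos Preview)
Your proposal diverges from the paper's proof in its amplification step, and the divergence is exactly where you yourself flag the obstacle. The paper does \emph{not} boost. Instead it builds a \emph{partial classifier} and iterates: after Forster-transforming the current unclassified sample, it runs the modified perceptron of \cite{DunaganV04} (Lemma~\ref{marginPerceptronLemma}), which returns a vector $v$ that correctly labels \emph{every} point whose margin with $v$ exceeds $1/(2\sqrt d)$. Approximate radial isotropy then guarantees at least a $1/(4d)$-fraction of points have such margin, so one round yields a partial hypothesis that is \emph{perfect} on an $\Omega(1/d)$-fraction of the remaining mass (Lemma~\ref{PartialClassifierLemma}). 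Removing those points and repeating $O(\sqrt d\log(1/\gamma))$ times produces a decision list of halfspaces. Crucially, each round re-runs Forster on the \emph{uniform} distribution over the surviving sample points --- no reweighting is ever needed --- and the Forster-subspace case is absorbed into a simple inner recursion (Proposition~\ref{ForsterDecompositionProp}) rather than a separate branch on $X\cap W$ versus $X\setminus W$.

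Your boosting route, by contrast, has two unfilled gaps. First, the weak learner you describe relies on Forster being applied to the \emph{reweighted} sample in each round; Theorem~\ref{thm:forster-intro} as stated is for unweighted multisets, and simulating rational weights by duplication can blow up the point count beyond polynomial. Second, you defer the construction of a rational-weight smooth booster with $\poly(d,\log n)$-bit intermediate weights to future work; this is precisely the step the paper's partial-classifier approach sidesteps. So while your outline is not obviously unsalvageable, it leaves the hardest strongly-polynomial ingredient unspecified, whereas the paper's route needs only the (strongly polynomial) margin perceptron plus Forster on uniform subsets.
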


Given the {\em equivalence} of {\em proper} halfspace learning 
and LP, we view this algorithmic result as fairly surprising.
Theorem~\ref{thm:lft-real-intro} gives the first strongly polynomial time PAC learning algorithm
for halfspaces. In more detail, if $b$ is the bit complexity of the examples
(i.e., the maximum number of bits required to represent each coordinate of each example vector),
our algorithm uses $\poly(n)$ arithmetic operations on $\poly(n ,b)$-bit numbers.
Finally, we note that the hypothesis $h$ computed by our algorithm is a decision-list
of $\poly(d/\gamma)$ many halfspaces. Importantly, for each point $x$, the value $h(x)$
is computable in strongly polynomial time (in $n$).

\begin{remark}
{\em The list of concept classes for which efficient learners 
have been developed in Valiant's distribution-free PAC model is fairly short. 
The class of halfspaces is of central importance in this list. Specifically, 
a strongly polynomial algorithm for PAC learning halfspaces immediately implies (via the kernel trick) 
strongly polynomial learners for broader concept classes, including
degree-$k$ polynomial threshold functions for any $k = O(1)$
(see, e.g.,~\cite{bluehrhauwar89})}.
\end{remark}

It is worth pointing out that the idea of using Forster transforms 
for halfspace learning was recently used in~\cite{DKT21} 
for the problem of PAC learning with Massart noise.
In the Massart model~\cite{Massart2006}, an adversary
independently flips the label of each point $x$
with unknown probability $\eta(x) \leq \eta<1/2$.
The learner of~\cite{DKT21} used a weakly polynomial Forster transform routine.
By instead using our algorithm of Theorem~\ref{thm:forster-intro},
we obtain the following generalization of Theorem~\ref{thm:lft-real-intro}.

\begin{theorem}[PAC Learning Massart Halfspaces in Strongly Polynomial Time] \label{thm:lft-Massart-intro}
Let $\D$ be a distribution over labeled examples $(x, y) \in \R^d \times \{\pm 1\}$
such that the distribution over examples is arbitrary
and the label $y$ of example $x$ satisfies (i) $y = f(x)$ with probability $1-\eta(x)$,
and (ii) $y = -f(x)$ with probability $\eta(x)$,
for an unknown halfspace $f:\R^d \mapsto \{\pm 1\}$.
Here $\eta(x)$ is an unknown function that satisfies $\eta(x) \leq \eta<1/2$ for all $x$.
There is an algorithm that, given $\gamma>0$, draws 
$n = \poly(d/\gamma)$ i.i.d.\ samples from $\D$,
runs in strongly polynomial time, and returns a strongly polynomial time computable hypothesis
$h:\R^d \mapsto \{\pm 1\}$ such that with high probability we have that
$\pr_{(x,y)\sim \D}[h(x) \neq y] \leq \eta+\gamma$.
\end{theorem}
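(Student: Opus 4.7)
The plan is to follow the halfspace learning framework of~\cite{DKT21}, which was already designed to handle Massart noise, and to substitute our strongly polynomial Forster routine (Theorem~\ref{thm:forster-intro}) for the weakly polynomial one used there as a subroutine. The framework of~\cite{DKT21} builds a decision-list hypothesis by iteratively constructing weak learners: at each stage one works with the sub-sample of examples ``still alive'' in the decision list, applies a Forster transform to place the alive points in approximate radial isotropic position, and then uses a simple moment-based averaging computation to extract a halfspace that confidently and correctly labels a noticeable fraction of those points. The examples handled by this node are removed, and the algorithm recurses on what remains. Under Massart noise, the averaging step correlates with the true weight vector because, after radial isotropy, the adversary's flipping function cannot cancel the signal; the surviving mass shrinks by a constant factor at each stage, so the list has depth $\poly(d/\gamma)$ before the remaining mass drops below $\gamma$.

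Concretely, I would proceed as follows. First, draw $n = \poly(d/\gamma)$ i.i.d.\ samples from $\D$ and treat all subsequent manipulations as operating on this finite multiset. Then, at each node of the decision list: (i) run the algorithm of Theorem~\ref{thm:forster-intro} on the alive samples with $\eps = 1/2$, which either returns an invertible $A$ placing the points in approximate radial isotropic position, or returns a dense proper subspace $W$ that can be handled directly by a recursive sub-problem restricted to $W$ and its orthogonal complement; (ii) compute the empirical Chow vector $\hat v = (1/n_{\mathrm{alive}}) \sum_i y^{(i)} f_A(x^{(i)})$, which is a single pass of strongly polynomial arithmetic; (iii) using the analysis in~\cite{DKT21}, select a threshold $\tau$ from a polynomial-size list of candidates so that the halfspace $\sgn(\hat v \cdot f_A(x) - \tau)$ agrees with the Massart target on a large-margin region with advantage bounded away from the noise rate; (iv) append $(A, \hat v, \tau)$ as a new list node, remove the examples it handles confidently, and recurse on the rest. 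Evaluation of $h$ on a new point $x$ consists of walking down the list and, at each node, computing one matrix-vector product $A x$, one inner product with $\hat v$, and one comparison with $\tau$, which is strongly polynomial in $n$.

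The main obstacle will be bookkeeping the bit complexity so that every operation is genuinely strongly polynomial rather than just weakly polynomial. The Forster routine already certifies that the matrices it produces have bit complexity polynomial in $n$, $d$, and the input bit length $b$, so applying them to input vectors keeps all coordinates at $\poly(n, d, b)$ bits. Empirical Chow averages, threshold searches over $O(n)$ candidate split points, and the subspace-restriction branch based on a certified $W$ all add only polynomial overhead and do not compound multiplicatively across recursive calls, because each recursion starts afresh from the original samples projected via a composition of previously stored rational transformations of bounded bit length. Once this is verified, correctness is inherited directly from~\cite{DKT21}: their analysis shows that the decision list misclassifies at most an $\eta + \gamma/2$ fraction of the empirical sample, and a standard VC generalization bound, applicable because each list node is a halfspace and the depth is $\poly(d/\gamma)$, transfers this to $\pr_{(x,y)\sim \D}[h(x) \neq y] \leq \eta + \gamma$ with high probability, yielding the theorem.
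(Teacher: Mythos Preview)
Your proposal is correct and matches the paper's approach exactly: the paper does not give a standalone proof of this theorem but simply states that plugging the strongly polynomial Forster routine of Theorem~\ref{thm:forster-intro} into the Massart-noise learner of~\cite{DKT21} yields the result, which is precisely what you outline. Your write-up is in fact more detailed than what the paper provides, since the paper focuses its self-contained exposition on the realizable case (Theorem~\ref{thm:lft-real-intro}) and defers the Massart case entirely to~\cite{DKT21}.
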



Theorem~\ref{thm:lft-Massart-intro} generalizes Theorem~\ref{thm:lft-real-intro}
(which corresponds to the case of $\eta=0$).
For the special case of uniform noise (i.e., when $\eta(x) = \eta <1/2$ for all $x$) 
--- this is known as Random Classification Noise~\cite{AL88} --- Theorem~\ref{thm:lft-Massart-intro}
achieves the information-theoretically optimal error and runs in {\em strongly} polynomial time.
It thus qualitatively improves on the classical work of~\cite{BlumFKV96} who gave a weakly
polynomial time algorithm with the same error guarantee.

Theorem~\ref{thm:lft-Massart-intro} similarly improves prior work on learning halfspaces
with Massart noise.
Prior algorithms for learning Massart halfspaces have weakly polynomial runtimes 
and achieve the same error as Theorem~\ref{thm:lft-Massart-intro},
which is believed to be the computational limit for the problem.
In more detail, 
the first (weakly) polynomial learner for Massart  halfspaces 
was given in~\cite{DGT19} and
achieves error $\eta+\gamma$, as our 
Theorem~\ref{thm:lft-Massart-intro}. 
While this error guarantee is not information-theoretically 
optimal in the Massart model (the optimal error is $\opt = \E_x [\eta(x)]$),
there exists strong evidence~\cite{DK20-hardness, Nasser22opt, DKMR22} that 
the bound of $\eta$ cannot be improved by any polynomial time algorithm.
Finally, we note that subsequent work to~\cite{DGT19}
gave a proper learner for Massart halfspaces~\cite{ChenKMY20}, 
which is inherently weakly polynomial.

\subsection{Our Techniques} \label{ssec:techniques}

\subsubsection{Strongly Polynomial Approximate Forster Transform}

\vspace{-0.2cm}

\paragraph{Overview of Algorithmic Approach}
Letting $f_A(x) \eqdef Ax/\|Ax\|_2$,
given a dataset $X$ of $n$ points in $\R^d_{\ast}$,
our goal is to efficiently compute an invertible
linear transformation $A \in \R^{d \times d}$
such that the matrix $M_A(X) \eqdef (1/n) \sum_{x \in X}f_A(x)f_A(x)^{\top}$
is approximately equal to $(1/d) \, I$; in particular,
we would like it to have eigenvalues
in $[\frac{1-\eps}{d},\frac{1+\eps}{d}]$.
Since the trace of $M_A(X)$, $\tr(M_A(X))$, is always equal to $1$,
this goal is equivalent to finding a matrix $A$ such that
the squared Frobenius norm of $M_A(X)$,
$\|M_A(X) \|_F^2$, is close to $1/d$ (Lemma~\ref{lem:phi-prop}).
This observation gives rise to the natural idea of using
an iterative algorithm to compute such an $A$.
In particular, given a linear transformation $A$
such that $\|M_A(X)\|^2_F$ is somewhat small,
our goal is then to find another
linear transformation $C \in \R^{d \times d}$
such that the corresponding second moment matrix
$M_{C A}(X) = (1/n) \sum_{x \in X} f_{C A}(x)f_{C A}(x)^{\top}$
has squared Frobenius norm, $\| M_{C A}(X) \|_F^2$,
somewhat smaller than $\|M_A(X)\|_F^2$.
Equivalently, since for any point $x \in \R^d_{\ast}$
it holds that $f_{C A}(x) = f_{C}(f_A(x))$, we
consider the set of transformed points
$X_A = f_A(X) \eqdef \{ f_A(x): x \in X \}$
and aim to make the second moment matrix of $f_{C}(X_A)$
smaller than the second moment matrix of $X_A$.
If for any invertible $A$ we can find such a $C$,
then by iteratively replacing $A$ by $C A$
we can achieve smaller and smaller values of $\|M_A(X) \|_F^2$,
until in the limit it approaches $1/d$.

Since $\tr(M_A(X)) = 1$, if $\|M_A(X) \|_F^2$ is bounded away from $1/d$,
some of the eigenvalues of $M_A(X)$
(which average to $1/d$) must differ substantially from $1/d$.
This in turn implies that $M_A(X)$ must have a reasonably-sized \emph{eigenvalue gap}.
In particular, this means that there 
exist subspaces $V$ and $V^{\perp}$,
that are each spanned by eigenvectors of $M_A(X)$,
such that the eigenvalues of $V^{\perp}$
exceed the eigenvalues on $V$ by at least some reasonably large $\delta >0$.
Roughly speaking, if we can find a matrix $C$ that decreases
the squared Frobenius norm of $M_A(X)$ on $V^{\perp} \times V^{\perp}$
and increases the squared Frobenius norm on $V \times V$,
this will improve the desired squared Frobenius norm.

A natural approach to achieve this goal
is to let $C$ be equal to $I_{V^\perp} + (1+\alpha) I_{V}$, the identity on $V^{\perp}$,
and $(1+\alpha)$ times the identity on $V$, for some suitable $\alpha>0$.
It is not hard to see that this choice of $C$ strictly decreases the
second moment matrix on $V^\perp$,
and strictly increases it on $V$.
Unfortunately, it might also create cross-terms
that will increase the Frobenius norm.
To understand the effect of the cross-terms,
it is important to consider how close vectors in $X_A$ are
to being in $V$ or in $V^\perp$.
In particular, let $\beta$ be the maximum distance
that any vector in $X_A$ is from being in either $V$ or $V^\perp$.
If $\alpha = O(1)$,
this moves approximately $\alpha\beta^2$ of the trace of $M_A(X)$
from $V^\perp$ to $V$, which improves (i.e., decreases)
the squared Frobenius norm by roughly
$\alpha \beta^2$
(times some inverse $\poly(dn/\eps)$ factors).
On the other hand, this also creates cross-terms in the order of $\alpha \beta$,
which increases the squared Frobenius norm by a quantity
on the order of $\alpha^2 \beta^2$.
Thus, as long as $\alpha$ is less than $\beta$
times a sufficiently small polynomial in $dn/\eps$,
we obtain an improvement in the squared Frobenius norm
on the order of $\alpha \beta^2/\poly(dn/\eps)$.

This improvement suffices for our purposes,
unless $\beta$ happens to be very small.
The latter occurs if all of the points in $X_A$ are
either very close to $V$ or very close to $V^\perp$.
In such a case, the simple choice of matrix $C$
described in the previous paragraph
may not be sufficient, as it will produce too many cross-terms.
In order to make progress here,
we require a different approach, which we describe next.
To describe our approach for this case, we introduce additional terminology.
We let $X^{\Bg}_A$
be the set of points in $X_A$ that are close to $V^\perp$.
Moreover, let $U$ be the span of the $|V|$ smallest eigenvectors of
the matrix $\sum_{x \in X^{\Bg}_A} x x^{\top}$,
and let $U^\perp$ be the orthogonal subspace.
We now define the new matrix $C$ to be $I_{U^\perp} + (1+\alpha) I_U$,
the identity on $U^\perp$ and some very large multiple $(1+\alpha)$
of the identity on $U$. We claim that this choice
actually does not create much in the way of cross-terms.
In particular, the matrix
$\sum_{x \in X^{\Bg}_A}(C x)(C x)^{\top}
= C^{\top} \, \sum_{x \in X^{\Bg}_A} xx^{\top} \, C$
will have no $U \times U^\perp$ term,
since $\sum_{x \in  X^{\Bg}_A} xx^{\top} $ does not
--- as $U$ is an eigenspace of $\sum_{x \in  X^{\Bg}_A}xx^{\top}$.
The second moment matrix $\sum_{x \in X^{\Bg}_A} f_{C}(x)f_{C}(x)^{\top}$
will have some contribution to $U \times U^\perp$ cross-terms 
coming from the renormalization; but these will only be on the order of $(\alpha \, \beta)^4$.
On the other hand, the matrix
$\sum_{x \in X_A \setminus X^{\Bg}_A} f_{C}(x) f_{C}(x)^{\top}$
will have small $U \times U^\perp$ terms, because each $f_{C}(x)$ will nearly lie in $U^\perp$.
If $\beta$ is sufficiently small, this leads to roughly $(\alpha \, \beta)^2$ mass being moved
from $U^\perp \times U^\perp$ to $U \times U$, while only creating off-diagonal terms
on the order of $(\alpha \, \beta)^4$. Thus, this alternate choice of $C$
can be used to decrease
the squared Frobenius norm by $\poly(\alpha/(dn))$.

The preceding outline provides a procedure that produces a sequence of matrices
$A_1, A_2, \ldots$ such that if $e_i = \| M_{A_i}(X) \|_F^2 -1/d$,
then $e_{i+1} < e_i - \poly(e_i/(dn))$.
Therefore, after polynomially many iterations,
we have that $\|M_{A_m}(X)\|_F^2 < 1/d+(\eps/d)^2$,
which implies we have obtained an $\eps$-approximate Forster transform.
This gives us an efficient algorithm for computing an approximate Forster transform
in the real RAM model, assuming the availability of an algorithm for
exact eigendecomposition computation.

\vspace{-0.2cm}

\paragraph{Additional Technical Obstacles}
The above iterative procedure forms the basis of our final strongly polynomial
time algorithm. Unfortunately, as is, this procedure does not directly
imply a strongly polynomial time algorithm for two reasons:
First, we need to control the bit complexities of the matrices $A_i$ (which
might become exponentially large). Second, we need to show that our algorithm
works with approximate eigendecompositions
(which can further be implemented in strongly polynomial time).
We elaborate on these issues in the following discussion.

\vspace{-0.2cm}

\paragraph{Controlling the Bit Complexity via Rounding}
Recall that, in a strongly polynomial time algorithm, all intermediate numbers
computed throughout the algorithm must fit in polynomial space.
To handle the bit complexity in our setting,
we establish the following statement.
If the points in the initial dataset $X \subset \R_{\ast}^d$ of size $n$
have bit complexity at most $b$, then the following holds: 
given a matrix $A \in \R^{d \times d}$ and any $\delta>0$,
we can approximate $A$ by another matrix $A'$
of bit complexity $\poly(b,d,n,\log(1/\delta))$ such that
$\|M_{A'}(X) \|_F^2 < \|M_A(X)\|_F^2 + \delta$ (see Theorem~\ref{thm:rounding}). 
This structural result suffices for our purposes for the following reason:
Replacing each intermediate matrix $A_i$
(in our iterative procedure)
by the corresponding $A'_i$ obtained by rounding
(for an appropriately small $\delta$) at each step of our algorithm
suffices to keep the bit-complexity under control.

To prove the desired structural result, we proceed as follows:
First, if $A$ has
condition number at most $\exp(\poly(n,b,d))$, it suffices to merely approximate
each entry of $A$ to some $\poly(bdn/\log(1/\delta))$ bits of precision.
The difficulty arises
if the condition number of $A$ is quite large --- in fact,
exponentially large in our other parameters.
If the condition number of $A$ is large,
it is because there are large multiplicative gaps in the singular values of $A$.
In such a case, there will be subspaces $V$ and $V^\perp$ such
that the $V^\perp$-component of any vector is multiplied by a huge
amount relative to the $V$-component. In particular, any vector that was not
exponentially close to $V$ to begin with,
after multiplying by $A$ ends up essentially in $V^\perp$.
Our basic strategy here is to decrease the size of this singular value gap of $A$
to be at most (merely) exponential,
without much affecting any of the normalized transformed vectors.
Our goal is to scale down the subspace $V^\perp$
to decrease the multiplicative eigenvalue gap.
However, we must ensure that the vectors of $X$
that are sufficiently close to $V^\perp$ after applying $A$
do not end up being essentially in $V$.
To achieve this, we
consider a subspace $W$ spanned by such problematic vectors and build
an improved matrix $A T$ such that $T$
does not affect vectors in $W$, but rescales significantly
vectors lying in a subspace $R$ that is very close to $V^\perp$.
Via this step, we can reduce the condition number
of $A$ to be appropriately bounded without affecting
the mapping $f_A$ significantly;
after that, we can make do with a suitably precise
rounding to obtain the output matrix $A'$.

\vspace{-0.2cm}

\paragraph{Approximate Eigendecomposition in Strongly Polynomial Time}
So far, we have assumed the availability of a routine for exact eigendecomposition.
In fact, there are several places in the above
intuitive overview of our algorithmic approach
where we need to compute an eigenvalue decomposition of a matrix.
This is required first when we need to find the initial eigenvalue gap 
in $ M_A(X) \propto \sum_{x \in X_A} xx^\top$,
and again later when we need to find the span of the large eigenvalues of $\sum_{x \in X_A^\Bg} xx^\top$.
Unfortunately, computing exact eigenvalues is impossible in our
model of computation (as doing so might require finding roots of high-degree polynomials).
Fortunately, it is sufficient for us to find merely an \emph{approximate} eigenvalue decomposition
of these matrices. A subtle and important point is that our required notion
of approximation is {\em significantly stronger than the typical
guarantees explicitly available in the literature}.
Interestingly, we show that the desired strongly polynomial guarantees
can be achieved in our model using some variation of the power iteration method.
This requires a novel proof of correctness, that we provide here.

We are now ready to describe our strongly polynomial 
approximate eigendecomposition routine 
in tandem with a sketch of its analysis (see Proposition~\ref{prop:svd}).
The standard power iteration method says that
in order to approximate the principal eigenvector of a symmetric,
PSD matrix $M$, it suffices to multiply a random vector $v$
by a large power $t$ of $M$. If we express $v$
as a linear combination of eigenvectors of $M$,
then multiplying by a large power of $M$
scales each of these components by an amount depending on the eigenvalue.
It is not hard to see that if there is a reasonable gap between
the largest and second largest eigenvalues, then the vector $M^t v$
will likely end up close to a multiple of the largest eigenvector.
Once an approximate principal eigenvector is computed,
one can attempt to repeat the same procedure, i.e.,
projecting onto the orthogonal subspace
to find the second largest eigenvalue; and so on.
This iterative procedure is known to succeed in
finding approximations to the eigenvectors and eigenvalues in question,
so long as the eigenvalues are not too close to each other.
On the other hand, if $M$ has (nearly) degenerate eigenspaces,
then this method may fail to separate eigenvectors with very similar eigenvalues.
However, in this (near-)degenerate case, such an approximation is usually not needed,
as the eigenvalues are close to begin with.
One can hope that the matrix $\hat M$
corresponding to the computed eigendecomposition
is close to $M$ in an appropriate sense.
In particular, standard results (see, e.g.,~\cite{Parlett98})
show how to compute such an $\hat M$ satisfying
$\|M-\hat M\|_2 \leq \eps \|M\|_2$.

{\em Unfortunately, this notion of approximation is not sufficient for our purposes.}
For example, in the case where the parameter $\beta$
is small in our Forster algorithm,
it is important for us to compute the spaces $V'$ and $W'$  
to {\em very good accuracy}.
This is because the linear transformation
that we apply will multiply elements of $V'$
by a large factor of roughly $1/\beta$.
This means that we need to compute $V'$ to error
on the order of $\beta$ in order to ensure
the accuracy of our result.
More generally, we will need a qualitatively stronger guarantee
for our approximate eigenvalue decomposition.
In particular, we need that for some small $\eps>0$,
for any vector $v$, it holds that
$|v^{\top}(M-\hat M)v| \leq \eps (v^{\top} M v)$.
This means that if $v$ lies in a space
spanned by eigenvectors of $M$ with
very small eigenvalues (as $V'$ is above),
then we need that $\hat M v$ to be correspondingly small.
Fortunately, we can obtain this much stronger ``multiplicative''
guarantee via power iteration.
The intuitive reason this works
is essentially because if we have a space $V'$
spanned by eigenvectors of $M$ with eigenvalues at most $\beta$,
then multiplying a random vector $v$ by powers of $M$
reduces the size of the projection of $v$ onto $V'$
by a power of $\beta$. This means that power iteration
produces vectors that are very nearly orthogonal to $V'$
with the error in this approximation scaling with $\beta$.

\subsubsection{Learning Halfspaces in Strongly Polynomial Time}

As already mentioned in the introduction,
we leverage our algorithm for approximate Forster transforms
to obtain the first strongly polynomial algorithm for PAC learning halfspaces.
It turns out that this approach goes through
both in the realizable case (Definition~\ref{def:PAC-learning})
and in the presence of (semi-random) Massart noise on the labels.
In fact, it is not difficult to verify that by plugging in
our new Forster algorithm into the learning algorithm of~\cite{DKT21},
one directly obtains a strongly polynomial halfspace learner
in the presence of Massart noise. For the sake of the completeness,
here we focus on the realizable case and
provide a simpler, self-contained algorithm and proof.

Note that it is without loss of generality
to assume that the threshold of the target
halfspace is zero (one can reduce the general case to the homogeneous case).
The main challenge in PAC learning halfspaces is that the target halfspace may have
very bad anti-concentration (aka ``margin'').
If the margin is not too small (i.e., at least inverse polynomial),
simple iterative algorithms  (e.g., perceptron) efficiently learn halfspaces
(in strongly polynomial time). A natural idea
is then to reduce the general case to the large margin case
by appropriately transforming the data. A number of such
reductions have been developed in the literature~\cite{BlumFKV96, DV:04, DunaganV04, DKT21}.
The methods developed in~\cite{BlumFKV96, DV:04, DunaganV04} are inherently not strongly polynomial.
Recently,~\cite{DKT21} pointed out that one can use Forster transforms
for this purpose. 

For our purposes, we require a stronger guarantee than
what is provided by the vanilla perceptron algorithm.
Specifically, we want a learning algorithm for halfspaces that correctly classifies
at least some reasonable fraction of points, if the points are guaranteed to be
well-conditioned (for example, in the sense of being unit vectors with $\E[xx^{\top}] \approx I$).
By using an approximate Forster algorithm, we can transform the input points
in order to make them well-conditioned,
while preserving the notion of halfspaces. 
We can then apply our learner to this set in order to learn a classifier
that works on some reasonable fraction of the points.
Repeating this procedure iteratively on the unclassified points
eventually gives a halfspace learning algorithm.

More precisely, the modified perceptron algorithm of \cite{DunaganV04}
is a strongly polynomial time algorithm with the following performance guarantee:
given labeled examples consistent with an unknown linear classifier,
the algorithm learns a classifier that correctly labels
all points whose margin is not too small.
It is not hard to see that, for points in approximate radial isotropic position,
at least a $1/d$-fraction of points have not-too-small margin.
Therefore, if we have a set of points in approximate radial isotropic position,
the modified perceptron algorithm finds (in strongly polynomial time)
an explicit halfspace that separates out a roughly $1/d$-fraction
of the points all of the same sign.
By standard generalization bounds, this gives us an algorithm that
in strongly polynomial time learns a {\em partial classifier},
i.e., outputs a partial function that correctly
classifies an $\Omega(1/d)$-fraction of the points
while misclassifying an $O(\gamma/d)$-fraction.
In other words, this procedure produces a partial classifier
that labels at least a $1/d$-fraction of points
and misclassifies at most a $\gamma$-fraction of these points.

To learn an arbitrary halfspace,
we use our approximate Forster transform to put the
points in approximate radial isotropic position
without changing the notion of a halfspace on them. 
We then apply the above partial learner to these new points
in order to obtain a non-trivial partial classifier that makes mistakes
on only a $\gamma$-fraction of its classified set.
We repeat this process on the unclassified points,
using a new approximate Forster transform,
to learn a non-trivial fraction of the unclassified points.
Repeating this procedure iteratively as necessary,
we eventually obtain a partial classifier that produces
an answer on essentially all points of the domain
and only makes mistakes on a $\gamma$-fraction of them.

\subsection{Related Work} \label{sec:related}

In this section, we summarize additional prior work
that was not covered in the introduction.

\vspace{-0.2cm}

\paragraph{Comparison to Strongly Polynomial Algorithm for Matrix Completion}
It is worthwhile to compare our techniques for the Forster transform to~\cite{LinialSW00},
who developed the first strongly polynomial time algorithm for the matrix scaling problem.
To put this problem in terms more analogous to ours,
one is given a set of $d$ vectors $x_1, x_2,\ldots,x_d$ in $\R^d$.
The goal is to find a \emph{diagonal} matrix $A$ such that
if $y_i := Ax_i / \|Ax_i\|_1$ is the $\ell_1$ normalization of $Ax_i$,
then the absolute deviation of the $j^{th}$ coordinates of the $y$'s around
$0$ are (approximately) the same for all $j$. In particular,
it should hold that $\sum_{i=1}^d | (y_i)_j | \approx 1$ for all $1\leq j \leq d$.
Note that for our problem, we have $n$ (possibly greater than $d$) vectors,
$A$ can be \emph{any} matrix, we take $y_i$ to be the $\ell_2$ normalization
and we want the mean square deviation of the $y$'s in {\em any} direction
(not just along coordinate axes) to be approximately the same.

The algorithm in~\cite{LinialSW00} works roughly as follows.
We construct $A$ through an iterative sequence of improvements.
Given a specific $A$, we compute the appropriate values of $y_i$
and then compute the absolute deviations of each coordinate.
If these are all close to each other, we are done.
Otherwise, by sorting the deviations and finding the largest gap,
we can split our coordinates into two sets, $B$ and $S$,
so that the deviation of any coordinate in $B$ is substantially larger
than the deviation of any coordinate in $S$. One then defines
the diagonal matrix $C$ to be $(1+\delta)$ on the coordinates in $S$
and $1$ on the coordinates in $B$, and replaces $A$ by $A':= CA$.
It is not hard to see that by doing this,
one increases the deviations along all coordinates in $S$
while decreasing it along all coordinates in $B$
(and keeping the total sum of deviations the same).
By picking $\delta$ carefully, \cite{LinialSW00} show that
the variance of these coordinate-wise deviations
can be decreased by some polynomial amount in each step.
Thus, by iterating this method a polynomial number of times,
one obtains a scaling where the coordinate-wise deviations are sufficiently close.

The starting point for our algorithm is somewhat similar.
Given a matrix $A$, we try to find a matrix $C$ such that
the matrix $A' := CA$ is closer to satisfying our condition
(in the sense that $\|M_{A'}(X)\|_F$ should be smaller than $\|M_A(X)\|_F$ by an additive inverse
polynomial term). To do this, we compute subspaces $V_S$ and $V_B$
(by finding an eigenvalue gap in $M_A(X)$) such that
the variance of the $y_i := Ax_i/\|Ax_i\|_2$ in any direction along $V_B$
is substantially larger than along any direction in $V_S$.
Ideally, we would like to take $C=I+\alpha I_{V_S}$ for some carefully selected $\alpha$.
While this does only increase the variance in directions along $V_S$
and decrease it along $V_B$, in our setting this \emph{also} creates off-diagonal terms
that increase our potential. While it is always possible to ensure that this error does not
overwhelm the progress we make by taking $\alpha$ small enough,
in some cases (particularly where all of the $y$'s are either very close to lying in $V_B$
or very close to lying in $V_S$), this is not compatible with making polynomial progress in each step.
In this other case, we need to use a subtly different method for finding $C$
in order to minimize the contribution of these off-diagonal terms.
Furthermore, unlike in~\cite{LinialSW00}, the matrices $C$ used might have
large numerical complexity (perhaps on the order of the complexity of $A$).
If we naively apply the iterative algorithm as is, it might lead to computations
involving matrices with exponentially large bit complexity. In order to fix this,
we also need to add a rounding step, whereby in each stage we reduce
the numerical complexity of $A$ down to some manageable level
but without substantially affecting our potential.

\vspace{-0.2cm}

\paragraph{Comparison to Other Data Transformations}
The Forster transform is one of
several data transformations that have been studied in the literature to
make a dataset ``well-conditioned''. Here we explain two similar in spirit
such transformations, namely the ``outlier removal'' technique~\cite{BlumFKV96, DV:04}
and the rescaling method of~\cite{DunaganV04}. Both of these techniques
have been used to obtain weakly polynomial learners for halfspaces with random noise.

The ``outlier-removal'' technique was introduced
in~\cite{BlumFKV96} and was significantly refined by Dunagan and Vempala~\cite{DV:04}.
Given a dataset $X$ and a parameter $\beta>0$,
a point in $X$ is called a $\beta$-outlier if there exists a direction
$v$ such that the squared length of $x$ along $v$ is more than $\beta$ times the average
squared length of $X$ along $v$. The goal of the method is to efficiently
find a large subset of $X' \subseteq X$ such that $X'$ has no $\beta$-outliers, for as small
$\beta$ as possible. This would give a reasonable sized sub-distribution
on which the desired anti-concentration holds. As shown in~\cite{DV:04}, the parameter $\beta$
(which affects the quality of the resulting anti-concentration) needs to scale
polynomially with the bit complexity $b$ of the dataset $X$. Consequently,
the resulting runtimes in applications of this method will be
{\em inherently weakly polynomial}. Interestingly, this is the reason that the (random noise tolerant)
halfspace learner of~\cite{BlumFKV96} is only weakly polynomial.

A different algorithm for learning halfspaces with random classification noise is implicit
in the {\em rescaled perceptron} algorithm of
Dunagan and Vempala~\cite{DunaganV04} for efficiently solving linear programs
(see also~\cite{Betke04}).
The key ingredient of their approach is a rescaling step that linearly transforms the data
so that, roughly speaking, the margin increases in each iteration 
by a factor of $1+1/d$. 
Since the initial margin scales with the bit complexity, so does the total number of iterations.
(Since this leads to a proper learning algorithm, a dependence on the bit complexity is expected; 
otherwise, one would obtain a strongly polynomial algorithm for LP!)

\vspace{-0.2cm}

\paragraph{Strongly Polynomial Special Cases of LP}
A line of work, starting in the 80s, has developed strongly polynomial time algorithms
for interesting special cases of LP, including
minimum cost circulations~\cite{Tardos85, GoldbergT89, Orlin93},
min cost flow and multi-commodity flow problems~\cite{Tardos86, VavasisY96},
and generalized flow maximization~\cite{Vegh14, OlverV17, OlverV20}
(see also~\cite{DadushHNV20, DadushNV20}).
Strongly polynomial time algorithms have also been developed
for certain structured convex programs, see, e.g.~\cite{Vegh16, GargV19}
in the context of equilibrium
computation, and~\cite{LinialSW00} for matrix scaling.


\subsection{Organization} 
The structure of this paper is as follows:
In Section~\ref{sec:prelims}, we record basic notation and facts that will be used throughout this paper.
Section~\ref{sec:forster-modular} presents our Forster decomposition algorithm, 
assuming exact eigendecomposition and ignoring bit complexity issues. 
Section~\ref{sec:eigen} establishes our strongly polynomial guarantees for approximate
eigendecomposition. Section~\ref{sec:rounding} shows that we can efficiently 
round the entries of the underlying matrix without losing much in the desired guarantees. 
Finally, Section~\ref{sec:full-forster} puts all the pieces together to obtain 
our strongly polynomial Forster algorithm.
Section~\ref{sec:ltfs} presents our strongly polynomial halfspace learning algorithm.
Finally, in Section~\ref{sec:conc} we summarize our results and 
provide directions for future work.

\subsection{Acknowledgements}
We would like to thank Ravi Kannan, Santosh Vempala, and Mihalis Yannakakis for encouragement 
and insightful conversations about this work. We are grateful to Daniel Dadush for sharing his expertise on 
optimization, and for detailed feedback that improved the presentation of this paper. 
We are indebted to Nikhil Srivastava for answering our questions about the complexity of
eigenvalue decomposition, and for technical correspondence regarding 
our strongly polynomial eigendecomposition routine.

\section{Preliminaries} \label{sec:prelims}

Here we introduce some terminology and establish
a basic technical fact (Fact~\ref{fact:transform-properties}) 
that will be used throughout this paper.

\paragraph{Basic Notation}
We use $\Z_+$ to denote the non-negative integers, 
$\R^d$ for the $d$-dimensional real coordinate space, 
$\R_{\ast}^d$ for $\R^d \setminus \{ \0\}$, and $\s_d$ 
for the unit $\ell_2$-sphere.
For a set $S \subset \R$, we will denote $\max (S) \eqdef \max_{x \in S} x $ and 
$\min (S) \eqdef \min_{x \in S} x$. 

For $x \in \R^d$, we use $\|x\|_2$ to denotes the $\ell_2$-norm of $x$.
We use $\tr(\cdot)$, $\| \cdot \|_F$, and $\| \cdot \|_2$ for the trace, Frobenius norm,
and spectral norm of a square matrix. For matrices $A, B \in \R^{d \times d}$, 
we write $A \succeq B$ (or $B \preceq A$) to denote that $A-B $ is positive semidefinite (PSD).
We use $I$ for the $d\times d$ identity matrix, where the dimension will be clear from the context.
If $M \in \R^{d \times d}$ is a PSD matrix, we denote by $\lambda_i(M)$ and $q_i(M)$
the $i$-th largest eigenvalue and corresponding eigenvector of $M$.
That is, $\lambda_1(M) \geq \lambda_2(M) \geq \ldots \geq \lambda_d(M) \geq 0$
and $M q_i(M) = \lambda_i(M) q_i(M)$ for all $i \in [d]$. 
We denote by $\Lambda(M)$ the set of eigenvalues of $M$
and $Q(M)$ the set of eigenvectors. That is, $\Lambda(M) = \{\lambda_i(M), i \in [d] \}$
and  $Q(M) = \{q_i(M), i \in [d] \}$. 
For $S \subseteq [d]$, we denote $\Lambda_S(M) = \{\lambda_i(M), i \in S \}$
and $Q_S(M) = \{q_i(M), i \in S \}$.

For a finite set of vectors $S \subset \R^d$, we use $\spn(S)$ for their span.
For a subspace $V \subset \R^d$, we use $\dim(V)$ for its dimension 
and $V^{\perp}$ for its orthogonal complement. 
For $x \in \R^d$ and a subspace $V$,
we will denote by $\proj_V \, x$  the projection of $x$ onto $V$.
If $V = \spn(S)$, we will sometimes use $\proj_S \, x$ to denote $\proj_V \, x$.
For conciseness, we sometimes use $x^{(V)}$ for $\proj_V \, x$.
We denote by $I_V$ the $d \times d$ matrix 
with eigenvalues $1$ in $V$ and $0$ in $V^{\perp}$ (the projection of $I$ onto $V$).

\vspace{-0.2cm}

\paragraph{Additional Notation and Basic Fact}
For a dataset $X \subset \R_{\ast}^d$ of size $|X| = n$ and 
a linear transformation $A \in \R^{d \times d}$,
let $f_A: \R_{\ast}^d \to \s_d$ be defined by
$f_A(x) \eqdef \frac {A x} {\|A x\|_2}$.
We aim to find an invertible $A \in \R^{d \times d}$ such that 
$f_A$ brings a given dataset $X$ in (approximate) radial isotropic position.
We denote  $f_{A}(X) \eqdef \left\{ \frac{Ax}{\|Ax\|_2} \mid x \in X \right\}$
We will use various ``covariance-like'' matrices
for the initial dataset $X$ and its subsets.
For $X' \subseteq X$, we denote $M_A(X') \eqdef (1/n) \sum_{x \in X'} f_A(x) f_A(x)^{\top}$.
For subspaces $V_1, V_2 \subset \R^d$ and $X' \subseteq X$, we denote by 
$M^{V_1,V_2}_A(X') \eqdef (1/n) \sum_{x \in X'} f^{(V_1)}_A(x) f^{(V_2)}_A(x)^{\top}$, 
where we used the shorthand notation $y^{(V)} = \proj_V \, y$.
Note that the normalization factor is fixed in both cases. 
We start by recording some useful properties of the transformation $f_A$.

\begin{fact}\label{fact:transform-properties}
Let $A, B \in \R^{d\times d}$ be full-rank matrices. 
For any $x \in \R_{\ast}^d$ and $a \in \R_{\ast}$, the following hold:
\begin{enumerate}[(a)]
        \item\label{fact:scale-inv} $f_{a A} (x) = f_{A} (a x) = f_{A} (x)$.
        \item \label{fact:comp} $f_{B A} (x) = f_{B} (f_{A} (x))$.
        \item\label{fact:notfar} For $B \succeq I$, we have that $\| f_{B A} (x) - f_{A}(x ) \|_2 \le \|B - I\|_2$.
        \item\label{fact:dom} Let $V \subseteq \R^d$ be a subspace and let $B = I + a I_V$ for some 
        $a > 0$. Then, $f_{BA}^{(V)}(x)  = \lambda(x) \, f_A^{(V)}(x)$, 
        where $1 \leq \lambda(x) \leq 1+a$, and 
        $f_{BA}^{(V^\perp)}(x)  = \mu(x) \, f_A^{(V^\perp)}(x)$, 
        where $\frac{1}{1+a} \leq \mu(x) \leq 1$. 
    \end{enumerate}
\end{fact}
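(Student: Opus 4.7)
All four assertions reduce to direct computations once one unfolds the definition $f_A(x) = Ax/\|Ax\|_2$; the only step that requires a genuine inequality is (c), which I expect to be the main (mild) obstacle. I take the four items in order.

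\emph{Parts (a) and (b).} These are straight expansions of $f_A(x) = Ax/\|Ax\|_2$. For (b) the key point is that after substituting $f_A(x) = Ax/\|Ax\|_2$ into $f_B(\cdot)$, the positive scalar $\|Ax\|_2$ cancels between numerator and denominator norm, leaving $BAx/\|BAx\|_2 = f_{BA}(x)$. Part (a) follows similarly: a nonzero scalar applied to either the argument or the matrix pulls out of $f_A$ and is absorbed by the normalization.

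\emph{Part (c).} Using (b) with $y := f_A(x) \in \s_d$, it suffices to prove $\|f_B(y) - y\|_2 \le \|B - I\|_2$ for any unit vector $y$. Set $E := B - I \succeq 0$ and $c := \|By\|_2 \ge 1$ (the lower bound coming from $B \succeq I$). The identity
\begin{equation*}
f_B(y) - y \;=\; \frac{By - c\,y}{c} \;=\; \frac{Ey - (c-1)\,y}{c}
\end{equation*}
reduces the problem to bounding the squared norm of the numerator. Expanding and using $c^2 = 1 + 2\, y^\top Ey + \|Ey\|_2^2$ to eliminate the cross term $2(c-1)\,y^\top Ey$, a short cancellation yields
\begin{equation*}
\|f_B(y) - y\|_2^2 \;=\; \frac{\|Ey\|_2^2 - (c-1)^2}{c} \;\le\; \|Ey\|_2^2 \;\le\; \|B - I\|_2^2,
\end{equation*}
where the first inequality uses $c \ge 1$ and $(c-1)^2 \ge 0$. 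This algebraic cancellation is the only non-definitional step in the entire fact, and beyond careful bookkeeping I do not anticipate an obstacle.

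\emph{Part (d).} Reduce again to $y := f_A(x) \in \s_d$ via (b) and decompose $y = y^{(V)} + y^{(V^\perp)}$ orthogonally. Since $B = I + a I_V$ acts as multiplication by $(1+a)$ on $V$ and as the identity on $V^\perp$,
\begin{equation*}
By \;=\; (1+a)\, y^{(V)} + y^{(V^\perp)}, \qquad \|By\|_2^2 \;=\; 1 + a(2 + a)\, \|y^{(V)}\|_2^2 \;\in\; [1,(1+a)^2],
\end{equation*}
where the range follows from $\|y^{(V)}\|_2^2 \in [0,1]$. Reading off the two orthogonal components of $f_B(y) = By/\|By\|_2$ gives $\lambda(x) = (1+a)/\|By\|_2 \in [1, 1+a]$ and $\mu(x) = 1/\|By\|_2 \in [1/(1+a), 1]$, matching the claimed bounds.
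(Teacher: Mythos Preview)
Your proof is correct. Parts (a), (b), and (d) are essentially identical to the paper's argument. For part (c) you take a slightly different route: the paper bounds $\|f_B(y)-y\|_2$ by the operator norm $\|B/\|By\|_2 - I\|_2$ and then observes that the rescaled matrix has eigenvalues in $[1/(1+\alpha),1+\alpha]$, whereas you expand $\|Ey-(c-1)y\|_2^2$ directly and use $c^2=1+2y^\top Ey+\|Ey\|_2^2$ to obtain the exact identity $\|f_B(y)-y\|_2^2=(\|Ey\|_2^2-(c-1)^2)/c$. Your computation is a bit more hands-on but yields a sharper equality before the final inequality; the paper's spectral argument is quicker but slightly less explicit. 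Both are valid and equally short.
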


\noindent See Appendix~\ref{app:transform-properties} for the simple proof.

\section{Approximate Forster Transform in Strongly Polynomial Time} \label{sec:forster-modular}

In this section, we describe and analyze our algorithm that either computes an approximate Forster transform
of a given dataset or certifies that no Forster transform exists. There are two technical caveats in the algorithm 
presented in this section: First, we assume the existence of exact routines for matrix eigendecomposition. 
Second, we do not bound the bit complexity of the associated numbers. Both of these technical issues 
are handled in subsequent sections.

\subsection{Algorithm Pseudocode} \label{ssec:pc}

The algorithm aims to find a matrix $A \in \R^{d \times d}$
such that the transformation $f_A: \R^d \to \R^d$
brings the set $X$ in (approximate) radial isotropic position.
Starting from the initial guess $A = I$, the algorithm iteratively
improves the current matrix $A$ 
until the desired approximation is obtained or
a proper subspace $W$ of $\R^d$ is found
such that $|X \cap W|/n \ge \dim(W)/ d$.




\begin{algorithm}[hbt!]
   \caption{Main algorithm for computing Forster Transform}
   \label{alg:forster}
\begin{algorithmic}[1]
    \Function{$\textsc{ForsterTransform}$ }{set $X \subset \R_{\ast}^d$ of $n$ points, accuracy parameter $\eps$}
    \State Let $A \leftarrow I$  \hspace{22pt} \(\triangleright\) Initialization  of transformation matrix $A$
        \State $M_A \leftarrow M_A(X) = (1/n) \sum_{x \in X} f_A(x) f_A(x)^\top$
    	\While{$\| M_A \|^2_F > \frac 1 d + \frac {\eps^2} {d^2}$}
                \State Set $A \leftarrow \textsc{ImproveTransform}(A,X)$
                \State Set $M_A \leftarrow M_A(X) = (1/n) \sum_{x \in X} f_A(x) f_A(x)^\top$
        \EndWhile

        \State \Return $A$
    \EndFunction
  \end{algorithmic}
\end{algorithm}


\begin{algorithm}[hbt!]
    \caption{Find Improved Transform Matrix}
    \label{alg:improvement}
 \begin{algorithmic}[1]
     \Function{$\textsc{ImproveTransform}$ }
     {current matrix $A \in \R^{d\times d}$, $X {\subset \R_{\ast}^d}$, {accuracy parameter $\eps$}}

          \State {Set $M_A \leftarrow M_A(X) = (1/n) \sum_{x \in X} f_A(x) f_A(x)^\top$}
          \State Compute the set of eigenvalues, {$\Lambda = \Lambda(M_A)$, and eigenvectors,
          $Q = Q(M_A)$,} of $M_A$.
          \State Set $\gamma \leftarrow O(\frac {\eps^2} {d^4 n^2 })$, where $n := |X|$
          \State\label{step:partition} Partition $(\Lambda, Q)$ into two sets {of
                eigenvalues and corresponding eigenvectors},
                $(\Lambda_\Bg, Q_\Bg)$ and $(\Lambda_\Sm, Q_\Sm)$,
                maximizing $\min(\Lambda_\Bg) - \max(\Lambda_\Sm).$

          \Statex  {\vspace{5pt} \hspace{22pt} \(\triangleright\) Consider the Following Two Cases \vspace{5pt}}

          \If{there exists $x \in X$ such that $\left\| \proj_{Q_\Bg} f_A(x) \right\|_2, \left\| \proj_{Q_\Sm} f_A(x) \right\|_2 \ge \gamma$}

                \State Set $U \leftarrow \spn({Q_\Sm})$.
                \State Set $\alpha \leftarrow \frac {\eps}{8 n d^3}$. \label{alpha-case1}
	  \Else

                \State Set $X^\Bg \leftarrow \{x \in X: \| \text{proj}_{Q_\Bg} f_A(x)\|_2 \ge \gamma \}$.
                \State Set $M_A^\Bg \leftarrow M_A(X^\Bg) \eqdef  (1/n)  \sum_{x \in X^\Bg} f_A(x) f_A(x)^\top$.
	        \State Let $Q_\bg$ and $Q_\sm$ be the sets of top $|Q_\Bg|$
	                  and bottom $|Q_\Sm|$ eigenvectors of $M_A^\Bg$ respectively.
		\State Set $U \leftarrow \spn({Q_\sm})$.
		\State Set $\beta  \leftarrow \max_{x \in X^\Bg} \| f^{(U)}_A(x) \|_2 $.
                		
                 \If{$\beta = 0$}
                       \Statex {\vspace{5pt} \hspace{22pt}  \(\triangleright\)   No Forster Transform Exists  \vspace{5pt}}
		         \State Output the subspace $\spn(Q_\bg)$.
		         \hspace{22pt}
		 \Else     {\vspace{5pt} \hspace{22pt}  \(\triangleright\)   Case where $\beta>0$  \vspace{5pt}}
		
			\State Set $\alpha \leftarrow \frac 1 \beta \eps/(3d^2 n) - 1$ \label{alpha-case2}
		\EndIf
         \EndIf
        \State \Return ${A' :=} \left( I + \alpha I_U \right) A$
         \EndFunction
   \end{algorithmic}
 \end{algorithm}

\subsection{Analysis of Algorithm~\ref{alg:forster}}  \label{ssec:analysis-main}

\vspace{-0.1cm}

\paragraph{Our Potential Function}
Our algorithm measures the improvements between consecutive iterations
using the potential function
\begin{equation} \label{eqn:potential}
\Phi_X(A) \eqdef \| M_A \|^2_F
\end{equation}
corresponding to the squared Frobenius norm of the matrix
$$M_A \eqdef M_A(X) \eqdef (1/n)  \sum_{x \in X} f_A(x) f_A(x)^\top \;.$$
Recall that approximate radial isotropy condition amounts to the condition
$\frac{1-\eps}{d} I \preceq M_A \preceq \frac {1+\eps}{d} I$.
{Equivalently,} we want that $\|M_A - \frac 1 d I\|_2 \le \frac \eps d$ or that the eigenvalues of $M_A$
lie in $[\frac {1-\eps}  d , \frac {1+\eps} d ]$.
This is guaranteed to hold when the potential function becomes
less than $1/d + \eps^2/d^2$, as shown in the following lemma.

\begin{lemma} \label{lem:phi-prop}
Consider any dataset $X \subseteq \R^d_\ast$ and any full-rank matrix $A \in \R^{d\times d}$.
The following properties hold for the potential $\Phi_X(A) {= \| M_A \|^2_F}$.
\begin{enumerate}
\item $ 1/d \leq \Phi_X(A) \leq 1$.

\item If $\Phi_X(A) \le 1/d + \eps^2/d^2$ for some $\eps \in (0,1)$,
then for every eigenvalue $\lambda$ of $M_A$
it holds that $|\lambda - 1/d | \le \eps/d$.

\item \label{case:phi-large}
If $\Phi_X(A) > 1/d + \eps^2/d^2$ for some $\eps \in (0,1)$,
then (a) there exists an eigenvalue $\lambda$ of $M_A$ such that
$|\lambda- 1/d| > \eps/d^2$ and
(b) there exists a pair of consecutive eigenvalues $\lambda_i$ and $\lambda_{i+1}$ of $M_A$
such that $\lambda_i - \lambda_{i+1} > \eps/d^3$.
\end{enumerate}
\end{lemma}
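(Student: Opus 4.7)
The plan is to reduce everything to elementary statements about the eigenvalues $\lambda_1 \geq \lambda_2 \geq \cdots \geq \lambda_d \geq 0$ of the PSD matrix $M_A$, using the two facts that (i) $\|f_A(x)\|_2 = 1$ for every $x \in X$, so $\tr(M_A) = (1/n)\sum_{x \in X} \|f_A(x)\|_2^2 = 1$, i.e.\ $\sum_i \lambda_i = 1$, and (ii) $\Phi_X(A) = \|M_A\|_F^2 = \sum_i \lambda_i^2$. Combining these gives the key identity
\[
\Phi_X(A) - 1/d \;=\; \sum_{i=1}^d \lambda_i^2 - \tfrac{1}{d}\Bigl(\sum_i \lambda_i\Bigr)^2 \;=\; \sum_{i=1}^d \bigl(\lambda_i - 1/d\bigr)^2,
\]
which underlies all three items.

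For Part~1, $\Phi_X(A) \geq 1/d$ follows from the Cauchy--Schwarz inequality $1 = (\sum_i \lambda_i)^2 \leq d \sum_i \lambda_i^2$; the upper bound $\Phi_X(A) \leq 1$ follows because each $\lambda_i \in [0,1]$ (being nonnegative with sum $1$), so $\sum_i \lambda_i^2 \leq \sum_i \lambda_i = 1$. For Part~2, the hypothesis $\Phi_X(A) \leq 1/d + \eps^2/d^2$ together with the identity above yields $\sum_i (\lambda_i - 1/d)^2 \leq \eps^2/d^2$, whence each individual term satisfies $(\lambda_i - 1/d)^2 \leq \eps^2/d^2$, i.e.\ $|\lambda_i - 1/d| \leq \eps/d$.

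For Part~3(a), the identity gives $\sum_i (\lambda_i - 1/d)^2 > \eps^2/d^2$, so by pigeonhole some index $j$ has $(\lambda_j - 1/d)^2 > \eps^2/d^3$, hence $|\lambda_j - 1/d| > \eps/d^{3/2} \geq \eps/d^2$. For Part~3(b), I would first argue that the total spread $\lambda_1 - \lambda_d$ is at least $\eps/d^2$: by Part~3(a) there exists an eigenvalue $\lambda_j$ with $|\lambda_j - 1/d| > \eps/d^2$; suppose without loss of generality $\lambda_j > 1/d + \eps/d^2$ (the symmetric case is analogous), so $\lambda_1 \geq \lambda_j > 1/d + \eps/d^2$, and since $\sum_i \lambda_i = 1$ the average of $\lambda_2,\dots,\lambda_d$ is strictly less than $1/d$, forcing $\lambda_d < 1/d$. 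Combining these bounds yields $\lambda_1 - \lambda_d > \eps/d^2$. Finally, writing $\lambda_1 - \lambda_d$ as a telescoping sum of the $d-1$ consecutive gaps $\lambda_i - \lambda_{i+1}$ and applying pigeonhole produces some gap exceeding $(\eps/d^2)/(d-1) > \eps/d^3$.

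None of these steps presents a serious obstacle; the whole lemma is essentially a bookkeeping exercise built on the single identity $\sum_i (\lambda_i - 1/d)^2 = \Phi_X(A) - 1/d$. The only mild care needed is in Part~3(b), where one must combine a pointwise deviation bound with the trace constraint $\sum_i \lambda_i = 1$ to control both endpoints $\lambda_1$ and $\lambda_d$ before invoking the pigeonhole on consecutive gaps.
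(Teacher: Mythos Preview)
Your proposal is correct and follows essentially the same approach as the paper: both rest on the identity $\sum_i(\lambda_i-1/d)^2 = \Phi_X(A)-1/d$ and use averaging/pigeonhole for Part~3. The only cosmetic difference is in Part~3(b), where the paper avoids your case split by observing directly that $\lambda_1 \geq \max\{\lambda_j,1/d\}$ and $\lambda_d \leq \min\{\lambda_j,1/d\}$ (since the eigenvalues average to $1/d$), giving $\lambda_1-\lambda_d \geq |\lambda_j-1/d|$ in one line.
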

\begin{proof}
Note that for any set $X$ the matrix $M_A$
is {PSD (as an autocorrelation matrix)}.
Let $\lambda_i = \lambda_i(M_A)$, $i \in [d]$,
with $\lambda_1 \geq \lambda_2 \geq \ldots \geq \lambda_d \geq 0$,
{be its eigenvalues}. Then we have that
$$\littlesum_{i=1}^d \lambda_i = \tr(M_A) = (1/n)  \littlesum_{x \in X} \tr \left( f_A(x) f_A(x)^{\top} \right) = 1 \;,$$
{where we used the linearity of the trace and the fact that}
all points $f_A(x)$ have unit $\ell_2$-norm.
Moreover, for the squared Frobenius norm of $M_A$
it holds that $\|M_A\|^2_F = \tr(M_A^2) = \sum_{i=1}^{d} \lambda_i^2$.
Below we prove each of the stated properties.
\begin{enumerate}[leftmargin=*]
\item Given that $\sum_{i=1}^d \lambda_i = 1$, the maximum possible value of
$\|M_A\|^2_F = \sum_{i=1}^d \lambda_i^2$ is equal to $1$,
and the minimum possible value is equal to $1/d$
(which is achieved when $\lambda_i = 1/d$, for all $i \in [d]$,
i.e., when $M_A = (1/d) I$, as desired).

\item Since $\sum_{i=1}^d \lambda_i = 1$, it holds that
$\sum_{i=1}^d \lambda_i^2 = 1/d + \sum_{i=1}^d (\lambda_i - 1/d)^2$.
By the {assumed upper} bound on $\Phi_X(A)$, we get that
$\sum_{i=1}^d (\lambda_i - 1/d)^2 \le \eps^2/d^2$,
and thus $\max_{i \in [d]} |\lambda_i - 1/d | \le \eps/d$.

\item By the {assumed lower bound on $\Phi_X(A)$, we get that}
$\sum_{i=1}^d (\lambda_i - 1/d)^2 > \eps^2/d^2$.
By an averaging argument,
there exists {$j \in [d]$ such that}
with $(\lambda_j - 1/d)^2 > \eps^2/d^3$,
and thus $|\lambda_j - 1/d | > \eps/d^2$.
{This proves (a).}
Since $\lambda_1 \geq {\max\{\lambda_j, 1/d\}}$ and
$\lambda_d \leq {\min\{\lambda_j, 1/d\}}$,
it follows that $\lambda_1  - \lambda_d > \eps/d^2$.
This implies that there is a gap between consecutive eigenvalues {of $M_A$, namely
there exists $i \in [d-1]$ such that} $\lambda_i - \lambda_{i+1} > \eps/d^3$.
{This proves (b).}
\end{enumerate}
\end{proof}

\paragraph{Bounding the Decrease in Potential}
We now proceed with the analysis.
We show that the algorithm \textsc{ImproveTransform}
{either correctly determines that no Forster transform exists or
computes} a transformation matrix with significantly reduced potential {value}.
{This statement implies correctness and simultaneously allows us to bound
the running time of our algorithm.

The main result of this section is the following proposition.}

\begin{proposition} \label{prop:main-iteration}
Let $A \in \R^{d\times d}$ be {a full-rank matrix}
and $X$ be a set of $n$ points in ${\R_{\ast}^d}$ such that
$\Phi_X(A) > \frac {1}{d} + \frac {\eps^2} {d^2}$ for some $\eps \in (0,1)$.
The algorithm \textup{\textsc{ImproveTransform}} returns a matrix $A'$ such that
\begin{equation}\label{eqn:potential-dec}
\Phi_X(A) - \Phi_X(A')  \ge  \Omega(\eps^5/(n^5d^{11}))
\end{equation}
{or correctly determines that no Forster Transform of $X$ exists,
in which case it returns} a subspace $W$ such that $|X \cap W| > {(n/d)} \,  \dim(W)$.
\end{proposition}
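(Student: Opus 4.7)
The plan is to block-decompose both $M_A$ and $M_{A'}$ relative to $\R^d = U \oplus U^\perp$ and bound the change in the potential $\Phi_X = \|M\|_F^2 = \|M_A^{U,U}\|_F^2 + \|M_A^{U^\perp,U^\perp}\|_F^2 + 2\|M_A^{U,U^\perp}\|_F^2$ separately in each branch of Algorithm~\ref{alg:improvement}. For each $x \in X$ I write $u_x := f_A^{(U)}(x)$, $v_x := f_A^{(U^\perp)}(x)$, $p_x := \|u_x\|_2^2$, and $q_x := \|v_x\|_2^2 = 1 - p_x$. From $A' = (I+\alpha I_U)A$ a direct calculation gives $f_{A'}(x) = ((1+\alpha)u_x + v_x)/N_x$ with $N_x^2 = 1 + (2\alpha+\alpha^2)p_x$, so all three blocks of $M_{A'}$ have closed-form expressions in $\alpha$. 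Lemma~\ref{lem:phi-prop}(3b) guarantees the eigenvalue gap $\delta := \min(\Lambda_\Bg) - \max(\Lambda_\Sm) \ge \eps/d^3$ for the partition chosen in Step~\ref{step:partition}, and this gap drives all of the progress.

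\textbf{Case 1 (a mixed point exists).} Since $U = \spn(Q_\Sm)$ is an eigenspace of $M_A$, we have $M_A^{U,U^\perp} = 0$. A first-order expansion in $\alpha$ gives $M_{A'}^{U,U} = M_A^{U,U} + \Delta_U$ and $M_{A'}^{U^\perp,U^\perp} = M_A^{U^\perp,U^\perp} - \Delta_{U^\perp}$ with $\Delta_U, \Delta_{U^\perp} \succeq 0$ of common trace $T := (2\alpha/n)\sum_x p_x q_x$. Combining the identity $\|M+\Delta\|_F^2 = \|M\|_F^2 + 2\tr(M\Delta) + \|\Delta\|_F^2$ with the PSD trace inequalities $\tr(M_A^{U,U}\Delta_U) \le \max(\Lambda_\Sm)\,T$ and $\tr(M_A^{U^\perp,U^\perp}\Delta_{U^\perp}) \ge \min(\Lambda_\Bg)\,T$ yields a diagonal-block decrease of at least $2T\delta - O(\alpha T)$. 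Using $\sum_x u_x v_x^\top = 0$, the cross block simplifies to $M_{A'}^{U,U^\perp} = -(1{+}\alpha)(2\alpha{+}\alpha^2)/n \cdot \sum_x p_x u_x v_x^\top/N_x^2$, and a Cauchy--Schwarz bound on $\sum_x p_x^{3/2}q_x^{1/2}$ gives $\|M_{A'}^{U,U^\perp}\|_F^2 = O(\alpha T)$. Because $p_{x^*}, q_{x^*} \ge \gamma^2$ implies $p_{x^*}q_{x^*} = \Omega(\gamma^2)$, we get $T = \Omega(\alpha \gamma^2/n)$; the algorithm's choice $\alpha = \eps/(8nd^3) \ll \delta$ then makes the diagonal gain dominate and yields $\Phi_X(A) - \Phi_X(A') = \Omega(T\delta)$, which is of the required polynomial order in $\eps,1/n,1/d$ after substitution.

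\textbf{Case 2 with $\beta = 0$.} Every $x \in X^\Bg$ satisfies $f_A(x) \in \spn(Q_\bg)$, so $X^\Bg \subseteq W := A^{-1}\spn(Q_\bg)$, a subspace of dimension $k := |Q_\Bg|$. Using $q_x \le 1$ on $X^\Bg$ and $q_x < \gamma^2$ off $X^\Bg$ yields $|X^\Bg|/n \ge \sum_{i \in Q_\Bg}\lambda_i(M_A) - \gamma^2$. For the matching lower bound on the right-hand side, I use that Step~\ref{step:partition} maximizes the eigenvalue gap, so the averages of $\Lambda_\Bg$ and $\Lambda_\Sm$ sandwich $1/d$ strictly; a case split on whether $\lambda_{k+1} \ge 1/d$ or $\lambda_{k+1} < 1/d$, combined with $\lambda_k - \lambda_{k+1} \ge \eps/d^3$ and the algorithm's choice $\gamma^2 \ll \eps/d^4$, shows $\sum_{i \in Q_\Bg}\lambda_i(M_A) - k/d > \gamma^2$. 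Combining these gives $|X \cap W| \ge |X^\Bg| > nk/d$, correctly certifying that no Forster transform of $X$ exists.

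\textbf{Case 2 with $\beta > 0$, and the main obstacle.} The crucial new fact is that $U = \spn(Q_\sm)$ is an eigenspace of $M_A^\Bg$ (not of $M_A$), so $\sum_{x \in X^\Bg}u_x v_x^\top = 0$ \emph{exactly}; the off-diagonal block $M_{A'}^{U,U^\perp}$ therefore decomposes into a renormalization tail from $X^\Bg$ contributing at most $O((\alpha\beta)^4)$ to $\|\cdot\|_F^2$ through the $1/N_x^2$ factor, plus a contribution from $x \notin X^\Bg$ controlled by $\|v_x\| < \gamma$. The diagonal gain, by the same block-perturbation argument applied on $X^\Bg$, is of order $(\alpha\beta)^2 \delta$. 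The algorithm's choice $1 + \alpha = \eps/(3\beta nd^2)$ fixes $\alpha\beta = \Theta(\eps/(nd^2))$, which is large enough for the diagonal gain to dominate both error sources and produce the required polynomial progress. The \emph{main technical obstacle} is precisely this cross-term bookkeeping: one must exploit the eigenspace cancellation on $X^\Bg$ cleanly, separately track the $(\alpha\beta)^4$ renormalization error and the $\alpha^2\gamma^2$ contribution from outside $X^\Bg$, and verify they are jointly dominated by the diagonal gain at the chosen $\alpha$.
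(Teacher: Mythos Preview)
Your overall architecture matches the paper's: block-decompose $M_A$ along $U\oplus U^\perp$, use Lemma~\ref{lem:impr} (which your ``$2T\delta - O(\alpha T)$'' computation re-derives), and split into the two cases. Case~1 is correct and essentially identical to the paper's Proposition~\ref{case-1-prop}; your Cauchy--Schwarz bound $(\sum_x p_x^{3/2}q_x^{1/2})^2\le n\sum_x p_xq_x$ cleanly gives $\|M_{A'}^{U,U^\perp}\|_F^2=O(\alpha T)$. Your $\beta=0$ argument is actually crisper than the paper's: you correctly return $A^{-1}\spn(Q_\bg)$ and bound $|X^\Bg|/n\ge\tr(M_A^{W^\perp,W^\perp})-\gamma^2$ directly, then use $\lambda_k-\lambda_{k+1}\ge\eps/d^3$ to force $\sum_{i\le k}\lambda_i>k/d+\Omega(\eps/d^3)$.

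There is, however, a genuine gap in Case~2, $\beta>0$. You silently identify the eigenspaces $Q_\bg,Q_\sm$ of $M_A^\Bg=M_A(X^\Bg)$ with the eigenspaces $Q_\Bg,Q_\Sm$ of $M_A(X)$. Concretely: when you write ``a contribution from $x\notin X^\Bg$ controlled by $\|v_x\|<\gamma$'', the quantity $\|v_x\|=\|f_A^{(U^\perp)}(x)\|$ is the projection onto $\spn(Q_\bg)$, whereas the defining property of $x\notin X^\Bg$ only bounds the projection onto $\spn(Q_\Bg)$. These are different $k$-dimensional subspaces and there is no a~priori reason they are close. Similarly, your ``diagonal gain of order $(\alpha\beta)^2\delta$'' requires the eigenvalue gap $\lambda_k(M_A^{U^\perp,U^\perp}(X))-\lambda_1(M_A^{U,U}(X))\ge\Omega(\delta)$ with $U=\spn(Q_\sm)$, but $\delta$ was only established for $U=\spn(Q_\Sm)$; the phrase ``block-perturbation argument applied on $X^\Bg$'' does not bridge this, since the potential involves all of $X$.

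The paper handles exactly this point with two extra ingredients you omit. First, Claims~\ref{claim:marginB} and~\ref{claim:marginS} show that the margin condition transfers: for $x\in X\setminus X^\Bg$ one has $\|f_A^{(\spn Q_\bg)}(x)\|_2\le\sqrt{2n}\,\gamma$ (the key trick is that $\spn(Q_\Bg)$ maximizes $\tr(M_A^{Z,Z}(X))$ over $k$-dimensional $Z$, while $\spn(Q_\bg)$ maximizes it for $X^\Bg$; subtracting gives the bound on $X\setminus X^\Bg$). Second, Lemmas~\ref{lem:deltambb} and~\ref{lem:deltamss} use these margin claims to show $\lambda_k(M_A^{U^\perp,U^\perp}(X))\ge\lambda_k(M_A)-O(\gamma)$ and $\lambda_1(M_A^{U,U}(X))\le\lambda_{k+1}(M_A)+O(\gamma)$, which recovers the gap $\Omega(\eps/d^3)$ for the new subspace $U$. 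Once these are in place, your sketch of balancing the $(\alpha\beta)^2\delta$ gain against the higher-order cross terms (the paper gets $((1{+}\alpha)\beta)^6$ rather than your $(\alpha\beta)^4$, but either suffices) goes through.
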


{In the rest of this section, we provide a proof of Proposition~\ref{prop:main-iteration}.}

Assuming that {a Forster transform of $X$ exists},
the algorithm \textsc{ImproveTransform} returns the
matrix $A' = \left( I + \alpha \, I_V \right) A$,
where $V \subset \R^d$ is an appropriate proper subspace of $\R^d$ and
$\alpha \in \R_{>0}$ is a carefully selected parameter (that depends on the structure of the dataset $X$).
The algorithm distinguishes two cases: In the first case, $\alpha$ is a small positive quantity,
equal to $\eps/(8nd^3)$, see Line~\ref{alpha-case1} in Algorithm~\ref{alg:improvement}. 
In the second case, $\alpha$ is set to $\frac 1 \beta \eps / (3 d^2 n) - 1$, 
and can be significantly larger than $1$ as it depends 
on a small parameter $\beta$ which is a function of the dataset $X$.
See Line~\ref{alpha-case2} in Algorithm~\ref{alg:improvement}.

\subsubsection{A Useful Structural Result} \label{sssec:impr-lem}

{We will use the notation $M_A = M_A(X)$ and $M_{A'} = M_{A'}(X)$.
To bound the desired quantity, $\Phi_X(A) - \Phi_X(A') = \|M_A\|^2_F - \|M_{A'}\|^2_F$,
we will make essential use of the following key lemma:}

\begin{lemma}\label{lem:impr}
For any $X \subset {\R_{\ast}^d}$ and any full-rank matrix $A \in \R^{d\times d}$
the following holds.
For any subspace $V \subset \R^d$ and any scalar $\alpha > 0$, {for $A' \eqdef (I+\alpha I_V) A$},
we have that
\begin{equation}\label{eqn:help-impr}
\Phi_X(A) - \Phi_X(A') \geq
2 \left(\lambda_k( M^{V^{\perp},V^{\perp}}_{A} ) - \lambda_1( M^{V,V}_{A} )- 2D_f  \right) D_f - 2 \|M^{V,V^\perp}_{A'}\|^2_F \;,
\end{equation}
where {$k = \dim(V^{\perp})$} and
$D_f \eqdef \frac 1 n \sum_{x \in X} \left(\|f^{{(V)}}_{A'}(x)\|_2^2 - \|f^{{(V)}}_{A}(x)\|_2^2 \right)$.
\end{lemma}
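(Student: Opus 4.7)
The plan is to decompose both $M_A$ and $M_{A'}$ according to the orthogonal direct sum $\R^d = V \oplus V^\perp$, control each diagonal block separately using Fact~\ref{fact:transform-properties}\eqref{fact:dom}, and then apply a Pythagorean identity for the Frobenius norm. Since $M_A$ is symmetric one has $M_A^{V^\perp,V} = (M_A^{V,V^\perp})^\top$, and so the block orthogonality of the Frobenius inner product gives
\begin{equation*}
\|M_A\|_F^2 \;=\; \|M_A^{V,V}\|_F^2 \,+\, 2\,\|M_A^{V,V^\perp}\|_F^2 \,+\, \|M_A^{V^\perp,V^\perp}\|_F^2,
\end{equation*}
and the same identity for $M_{A'}$. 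Subtracting, I write $\Phi_X(A) - \Phi_X(A') = T_V + T_{V^\perp} + 2\,T_{\mathrm{cr}}$, where $T_V$, $T_{V^\perp}$, $T_{\mathrm{cr}}$ denote the corresponding three block differences.

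Next I would use Fact~\ref{fact:transform-properties}\eqref{fact:dom} applied to $B = I + \alpha I_V$: for every $x \in X$ there exist scalars $1 \le \lambda(x) \le 1+\alpha$ and $\tfrac{1}{1+\alpha} \le \mu(x) \le 1$ with $f_{A'}^{(V)}(x) = \lambda(x)\, f_A^{(V)}(x)$ and $f_{A'}^{(V^\perp)}(x) = \mu(x)\, f_A^{(V^\perp)}(x)$. Because $\lambda(x)^2 - 1 \ge 0$ and $1 - \mu(x)^2 \ge 0$, the increments
\begin{equation*}
R \eqdef M_{A'}^{V,V} - M_A^{V,V} = \tfrac{1}{n}\sum_{x\in X}(\lambda(x)^2 - 1)\, f_A^{(V)}(x) f_A^{(V)}(x)^\top \;\succeq\; 0,
\end{equation*}
\begin{equation*}
Q \eqdef M_A^{V^\perp,V^\perp} - M_{A'}^{V^\perp,V^\perp} = \tfrac{1}{n}\sum_{x\in X}(1 - \mu(x)^2)\, f_A^{(V^\perp)}(x) f_A^{(V^\perp)}(x)^\top \;\succeq\; 0
\end{equation*}
are both positive semidefinite. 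Since $\|f_A(x)\|_2 = \|f_{A'}(x)\|_2 = 1$, the $V$ and $V^\perp$ trace pieces of each of $M_A, M_{A'}$ sum to $1$, so $\tr(R) = \tr(Q) = D_f$.

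To conclude, expand $\|M + R\|_F^2 - \|M\|_F^2 = 2\langle R, M\rangle_F + \|R\|_F^2$ on each diagonal block, and use the standard PSD bounds $\lambda_k(M)\,\tr(R) \le \langle R, M\rangle_F \le \lambda_1(M)\,\tr(R)$ (obtained by diagonalizing $M$) together with $\|R\|_F^2 \le \tr(R)^2$ (which holds for PSD $R$ since $\sum a_i^2 \le (\sum a_i)^2$ when the $a_i \ge 0$). This yields
\begin{equation*}
T_V \;\ge\; -2\,\lambda_1(M_A^{V,V})\, D_f - D_f^2, \qquad T_{V^\perp} \;\ge\; 2\,\lambda_k(M_A^{V^\perp,V^\perp})\, D_f - D_f^2,
\end{equation*}
while the cross term satisfies $T_{\mathrm{cr}} \ge -\|M_{A'}^{V,V^\perp}\|_F^2$ simply by discarding the nonnegative $\|M_A^{V,V^\perp}\|_F^2$ term. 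Summing and relaxing $-2D_f^2$ to $-4D_f^2$ gives exactly \eqref{eqn:help-impr}.

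The single conceptually critical point is that Fact~\ref{fact:transform-properties}\eqref{fact:dom} produces the scalar-multiple form $\lambda(x) f_A^{(V)}(x)$ rather than some more complicated change: this is precisely what makes $R$ and $Q$ PSD, which in turn enables both the trace-based lower/upper bounds on the inner products $\langle R,M\rangle_F$, $\langle Q,M\rangle_F$ and the PSD bound $\|R\|_F^2 \le \tr(R)^2$. The other small subtlety is to read $\lambda_k(M_A^{V^\perp,V^\perp})$ with $k = \dim(V^\perp)$ as the smallest eigenvalue of the restriction of $M_A^{V^\perp,V^\perp}$ to $V^\perp$ (the zero eigenvalues on $V$ are discarded), which is what makes the lower bound on $\langle Q, M_A^{V^\perp,V^\perp}\rangle_F$ useful in the regime of interest.
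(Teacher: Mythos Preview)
Your proof is correct and follows essentially the same strategy as the paper: decompose $\|M_A\|_F^2 - \|M_{A'}\|_F^2$ into the $V\times V$, $V^\perp\times V^\perp$, and cross blocks, use Fact~\ref{fact:transform-properties}\ref{fact:dom} to get the PSD orderings $M_{A'}^{V,V}\succeq M_A^{V,V}$ and $M_{A'}^{V^\perp,V^\perp}\preceq M_A^{V^\perp,V^\perp}$, and bound each block. The only technical difference is in how the diagonal blocks are handled. The paper packages the bound as Fact~\ref{fact:psd}, which yields eigenvalues of $M_{A'}$ (namely $\lambda_1(M_{A'}^{V,V})$ and $\lambda_k(M_{A'}^{V^\perp,V^\perp})$), and then passes to eigenvalues of $M_A$ via $\|M_A^{V,V}-M_{A'}^{V,V}\|_2\le D_f$, incurring a total loss of $4D_f^2$. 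You instead expand $\|M+R\|_F^2-\|M\|_F^2=2\langle R,M\rangle_F+\|R\|_F^2$ directly, bound $\langle R,M\rangle_F$ against eigenvalues of $M_A$ from the outset, and use $\|R\|_F^2\le\tr(R)^2$; this gives the tighter constant $-2D_f^2$, which you then relax to match the stated bound. Both routes are valid and the underlying ideas are the same.
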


Lemma~\ref{lem:impr} bounds the improvement in potential in terms of two opposing contributions. 
On the one hand, there is a decrease in the potential proportional to the amount of mass $D_f$ 
transferred from the subspace $V^\perp$ to the subspace $V$ 
times the eigenvalue gap between the subspaces $V$ and $V^\perp$. 
On the other hand, there is an increase in potential due to the cross terms $V\times V^\perp$ 
that get created after the transformation by $A'$.

\begin{proof}[Proof of Lemma~\ref{lem:impr}]
By definition, we have that
$\Phi_X(A) - \Phi_X(A') = \|M_A\|^2_F - \|M_{A'}\|^2_F$.
We decompose each matrix into  
block matrices specified by the subspaces $V$ and $V^{\perp}$.
We write $M_A$ as $M^{V,V}_A + M^{V,V^\perp}_A + M^{V^\perp,V}_A + M^{V^\perp,V^\perp}_A$, where 
we recall that for subspaces $S$ and $T$ we defined $M^{S,T}_A$ as $I_S M_A I_T$.
Since $V$ and $V^\perp$ are orthogonal subspaces, we also have that
$$\|M_A\|_F^2 = \|M^{V,V}_A\|_F^2 + \|M^{V,V^\perp}_A\|_F^2 + \|M^{V^\perp,V}_A\|_F^2 + 
\|M^{V^\perp,V^\perp}_A\|_F^2.$$
We can thus express $\|M_A\|^2_F - \|M_{A'}\|^2_F$ as the sum of the following three terms:
\begin{itemize}
\item[(i)] $\|M^{V,V}_A\|^2_F - \|M^{V,V}_{A'}\|^2_F$
\item[(ii)] $\|M^{V^\perp,V^\perp}_A\|^2_F - \|M^{V^\perp,V^\perp}_{A'}\|^2_F$
\item[(iii)] $2 \|M^{V,V^\perp}_A\|^2_F - 2 \|M^{V,V^\perp}_{A'}\|^2_F$
\end{itemize}
By Fact~\ref{fact:transform-properties} part~\ref{fact:dom},
for any $x \in {\R_{\ast}^d}$,  we have that
$f^{{(V)}}_{A'}(x) (f^{{(V)}}_{A'}(x))^\top \succeq f^{{(V)}}_{A}(x)  (f^{{(V)}}_{A}(x))^\top $ and
$f^{{(V^\perp)}}_{A'}(x) (f^{{(V^\perp)}}_{A'}(x))^\top
\preceq f^{{(V^\perp)}}_{A}(x) (f^{{(V^\perp)}}_{A}(x))^\top$.
Since $M^{V,V}_{A}$ is equal to $(1/n)$ times the sum of $f^{{(V^\perp)}}_{A}(x) (f^{{(V^\perp)}}_{A}(x))^\top$ over $x \in X$, 
we obtain
$M^{V,V}_{A'} \succeq M^{V,V}_{A}$ and $M^{V^\perp,V^\perp}_{A'} \preceq M^{V^\perp,V^\perp}_{A}$.

We will use the following linear-algebraic fact to bound from below the first two terms.

\begin{fact}\label{fact:psd}
Let $A, B \in \R^{d \times d}$ be symmetric PSD matrices.
If $A \succeq B$, then it holds that
$$2 \, \tr(A-B) \lambda_{k}(B)   \leq \|A\|_F^2 - \|B\|_F^2 \leq  2 \tr(A-B) \,  \lambda_{1}(A) \;,$$
where $k = \rank(A)$
and $\lambda_{k}(B)$ is the $k$-th largest eigenvalue of $B$.
\end{fact}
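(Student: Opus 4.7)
The plan is to set $C \eqdef A - B$, which is PSD by hypothesis, and to expand $\|A\|_F^2 - \|B\|_F^2$ as $\tr(A^2) - \tr(B^2) = \tr((B+C)^2) - \tr(B^2) = 2\,\tr(BC) + \tr(C^2)$, using the cyclic invariance of the trace to collapse $\tr(BC) + \tr(CB)$ to $2\tr(BC)$. This identity is the workhorse: both bounds will come out by squeezing $\tr(BC)$ between multiples of $\tr(C)$.

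For the upper bound, I would rewrite the expansion as $2\tr(AC) - \tr(C^2)$ using $\tr(AC) = \tr(BC) + \tr(C^2)$; dropping the nonnegative $\tr(C^2)$ gives $\|A\|_F^2 - \|B\|_F^2 \leq 2\tr(AC)$. Since $A$ is symmetric PSD with maximum eigenvalue $\lambda_1(A)$, we have $A \preceq \lambda_1(A)\, I$, and because $C \succeq 0$ this yields $\tr(AC) \leq \lambda_1(A)\,\tr(C) = \lambda_1(A)\,\tr(A-B)$, giving the stated upper bound.

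For the lower bound, drop $\tr(C^2) \geq 0$ to obtain $\|A\|_F^2 - \|B\|_F^2 \geq 2\tr(BC)$, so it suffices to show $\tr(BC) \geq \lambda_k(B)\,\tr(C)$ where $k = \rank(A)$. The key observation is that $\ker(A) \subseteq \ker(B)$: if $v \in \ker(A)$ then $0 = v^\top A v \geq v^\top B v \geq 0$ by $A \succeq B \succeq 0$. Consequently $\ker(A)$ is also contained in $\ker(C)$, so both $B$ and $C$ act nontrivially only on the $k$-dimensional subspace $V \eqdef \ker(A)^\perp$. On $V$, the restriction of $B$ has eigenvalues $\lambda_1(B) \geq \cdots \geq \lambda_k(B)$ (the remaining $d-k$ eigenvalues of $B$ are zero), so $B \succeq \lambda_k(B)\, I_V$, and since $C$ is supported on $V$ and PSD, $\tr(BC) = \tr(I_V B I_V C) \geq \lambda_k(B)\, \tr(I_V C I_V) = \lambda_k(B)\,\tr(C)$.

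The only delicate point is this eigenvalue-index accounting: one must be careful that $\lambda_k(B)$ refers to the $k$-th largest eigenvalue of $B$ (which may itself be zero if $\rank(B) < \rank(A)$), and justify restricting to the support of $A$ so that we can replace $\lambda_d(B)$ — which would be the naive bound — with the potentially larger $\lambda_k(B)$. Once that inclusion of kernels is in hand, both inequalities drop out of the identity $\|A\|_F^2 - \|B\|_F^2 = 2\tr(BC) + \tr(C^2)$ essentially by inspection, so I expect this kernel-containment argument to be the only substantive step.
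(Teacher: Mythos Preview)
Your proof is correct and follows essentially the same route as the paper: both expand $\|A\|_F^2 - \|B\|_F^2 = \tr((A+B)(A-B))$ (your $2\tr(BC)+\tr(C^2) = 2\tr(AC)-\tr(C^2)$ is the same expression) and then bound it via the eigenvalue inequalities $2B \preceq A+B \preceq 2A$. Your explicit kernel-containment argument for why $\lambda_k(B)$ rather than $\lambda_d(B)$ suffices is in fact more careful than the paper's one-line justification of the same step.
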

\begin{proof}
We use $\bullet$ to denote the entrywise inner product
between two matrices. Note that $A \bullet B = \tr(A^\top B)$ which is equal to $\tr(A B)$ if the matrices are symmetric.
We have that
$$\|A\|_F^2 - \|B\|_F^2 = (A-B) \bullet (A+B) = \tr( (A+B) (A-B) ).$$
Since both $A+B$ and $A-B$ are PSD, multiplication by $A+B$ increases the eigenvalues of
$A-B$ by at least a factor of $2\lambda_{k}(B)$ and at most $2\lambda_{1}(A)$.
Thus, $\tr( (A+B) (A-B) )$ is  bounded between
$2 \tr(A-B) \lambda_{k}(B)$ and $ 2 \tr(A-B) \lambda_{1}(A)$.
\end{proof}

\noindent
{We start by bounding term (i) from below.}
Using Fact~\ref{fact:psd} {applied to the matrices $M^{V,V}_{A'}$ and $M^{V,V}_{A}$},
we get that
\begin{align}
\|M^{V,V}_A\|^2_F - \|M^{V,V}_{A'}\|^2_F
&\geq 2 \lambda_1 ( M^{V,V}_{A'} ) \, \tr({M^{V,V}_A - M^{V,V}_{A'}}) \nonumber \\
&= \frac 2 n \lambda_1( M^{V,V}_{A'} ) \sum_{x \in X} \left( \|f^{{(V)}}_A(x)\|_2^2 - \|f^{{(V)}}_{A'}(x)\|_2^2 \right) \;. \label{lb:i}
\end{align}
Similarly, {we can bound below term (ii).}
Using Fact~\ref{fact:psd} {applied to the matrices
$M^{V^\perp,V^\perp}_A$ and $M^{V^\perp,V^\perp}_{A'}$},
we get that
\begin{align}
\|M^{V^\perp,V^\perp}_A\|^2_F - \|M^{V^\perp,V^\perp}_{A'}\|^2_F
&\geq 2 \lambda_k(M^{V^\perp,V^\perp}_{A'}) \tr({M^{V^\perp,V^\perp}_A - M^{V^\perp,V^\perp}_{A'}}) \nonumber \\
&= \frac{2}{n} \lambda_k(M^{V^\perp,V^\perp}_{A'}) \sum_{x \in X} \left(\|f^{{(V^\perp)}}_A(x)\|_2^2 - \|f^{{(V^\perp)}}_{A'}(x)\|_2^2\right) \nonumber \\
&= \frac{2}{n} \lambda_k( M^{V^\perp,V^\perp}_{A'} ) \sum_{x \in X} \left(\|f^{{(V)}}_{A'}(x)\|_2^2 - \|f^{{(V)}}_{A}(x)\|_2^2\right) \;, \label{lb:ii}
\end{align}
where $k = \rank (M^{V^\perp,V^\perp}_{A'})$ and the last equality follows since
$\|f^{{(V)}}_{A}(x)\|_2^2 + \|f^{{(V^\perp)}}_{A}(x)\|_2^2 = 1$.

\noindent Finally, we straightforwardly bound from below the third term as follows:
\begin{equation}\label{lb:iii}
2 \|M^{V,V^\perp}_A\|^2_F - 2 \|M^{V,V^\perp}_{A'}\|^2_F \geq - 2 \|M^{V,V^\perp}_{A'}\|^2_F \;.
\end{equation}
\noindent Overall, recalling that $D_f = \frac 1 n \sum_{x \in X} (\|f^{(V)}_{A'}(x)\|_2^2 - \|f^{(V)}_{A}(x)\|_2^2)$,
\eqref{lb:i}, \eqref{lb:ii}, \eqref{lb:iii} give that
$$\Phi_X(A) - \Phi_X(A') \geq
2 \left(\lambda_k( M^{V^\perp,V^\perp}_{A'} ) - \lambda_1( M^{V,V}_{A'} ) \right) D_f - 2 \|M^{V,V^\perp}_{A'}\|^2_F \;.$$
To complete the proof, we note that
$$\lambda_1( M^{V,V}_{A'} ) 
\leq \lambda_1( M^{V,V}_{A})+  \| M^{V,V}_{A} - M^{V,V}_{A'} \|_2$$
and that
$$\lambda_k( M^{V^\perp,V^\perp}_{A'} ) 
\leq \lambda_k( M^{V^\perp,V^\perp}_{A'} )  +  \| M^{V^\perp,V^\perp}_{A} - M^{V^\perp,V^\perp}_{A'} \|_2
= \lambda_k( M^{V^\perp,V^\perp}_{A'} )  +  \| M^{V,V}_{A} - M^{V,V}_{A'} \|_2$$

Finally, we have that
\begin{align*}
  \| M^{V,V}_{A} - M^{V,V}_{A'} \|_2
&\leq \frac{1}{n} \sum_{x \in X} \left\| f^{(V)}_{A'}(x) f^{(V)}_{A'}(x)^\top - f^{(V)}_{A}(x) f^{(V)}_{A}(x)^\top \right\|_2 \\
&= \frac{1}{n} \sum_{x \in X} \left(\|f^{(V)}_{A'}(x)\|_2^2 - \|f^{(V)}_{A}(x)\|_2^2 \right) \;,
\end{align*}
{where the last equality follows from Fact~\ref{fact:transform-properties}\ref{fact:dom}.}
{Combining the above completes the proof of Lemma~\ref{lem:impr}.}
\end{proof}

In the following two subsections, we analyze the two cases of \textsc{ImproveTransform} separately. 
{Note that \textsc{ImproveTransform} requires that we be able to do exact singular value decompositions
in order to compute the subspace $U$. Our final algorithm will not be able to do this exactly 
and will need to make do with an approximate singular value decomposition (see Section~\ref{sec:eigen}). 
In order to make our extension easier, we will show that the potential decrease holds even when $U$ 
is replaced by some $V$ which satisfies some approximation of the properties that $U$ does.}

\subsubsection{Case I: 
There exists \texorpdfstring{$x \in X$}{x} such that 
\texorpdfstring{$\|\proj_{Q_\Bg} f_A(x)\|_2, \|\proj_{Q_\Sm} f_A(x)\|_2 \ge \gamma$}{...}}

In order to analyze this case, we prove the following proposition:
\begin{proposition}\label{case-1-prop}
Suppose that $X$ is a set of $n$ points in $\R^d_\ast$ and $A$ an invertible $d\times d$ matrix. Suppose that $V\subset \R^d$ is a subspace so that for $\alpha, \rho >0$ with $\alpha \leq \eps/(64 nd^3)$:
\begin{enumerate}
\item The maximum over $x\in X$ of $\min(\|f_A^{(V)}(x)\|_2,\|f_A^{(V^\perp)}(x)\|_2)$ equals $\rho$.\label{large-deviation-property}
\item $\lambda_{\min} (M_A^{V^\perp V^\perp}(X)) - \lambda_{\max}(M_A^{VV}(X)) \geq \frac{\eps}{2d^3}.$\label{ev-gap-property}
\item $\|M_A^{VV^\perp}(X)\|_F \leq \alpha\rho$.\label{small-off-diagonal-property}
\end{enumerate}

Then for $C = (I+\alpha I_V)A$ we have that $\Phi_X(C) \leq \Phi_X(A) - \rho^2 \eps/(8nd^3)$.
\end{proposition}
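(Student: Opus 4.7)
}
The strategy is to apply Lemma~\ref{lem:impr} with the given subspace $V$ and the choice $C=(I+\alpha I_V)A$, and then to estimate each of the three quantities appearing on its right-hand side: the eigenvalue gap, the mass transfer $D_f$, and the residual cross terms $\|M_C^{VV^\perp}\|_F$. The three hypotheses of the proposition are tailored to control exactly these three quantities, so the proof reduces to verifying each estimate in turn and balancing them against the choice of $\alpha$.

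First, I would lower bound the mass-transfer quantity $D_f=(1/n)\sum_{x\in X}(\|f_C^{(V)}(x)\|_2^2-\|f_A^{(V)}(x)\|_2^2)$. Writing $p(x)=\|f_A^{(V)}(x)\|_2^2$ and using Fact~\ref{fact:transform-properties}\ref{fact:dom}, a direct calculation with the renormalization identity $\|Cy\|_2^2=1+((1+\alpha)^2-1)p(x)$ gives $\|f_C^{(V)}(x)\|_2^2-\|f_A^{(V)}(x)\|_2^2 = \frac{((1+\alpha)^2-1)\,p(x)(1-p(x))}{1+((1+\alpha)^2-1)p(x)}$, which is nonnegative. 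Property~\ref{large-deviation-property} supplies a point $x^\ast$ with $p(x^\ast)\in[\rho^2,1-\rho^2]$, contributing at least $\Omega(\alpha\rho^2)$ to the sum, so $D_f\ge \Omega(\alpha\rho^2/n)$.

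Second, I would extract the effective eigenvalue gap. Property~\ref{ev-gap-property} directly yields $\lambda_{\min}(M_A^{V^\perp V^\perp})-\lambda_{\max}(M_A^{VV})\ge \eps/(2d^3)$. Using the closed-form above, a uniform bound gives $D_f\le 2\alpha$, and the assumption $\alpha\le \eps/(64nd^3)$ makes $2D_f\le \eps/(4d^3)$, so the gap minus $2D_f$ remains at least $\eps/(4d^3)$. Hence the first term in Lemma~\ref{lem:impr} is bounded below by $\Omega(\alpha\rho^2\eps/(nd^3))$.

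Third, I would bound the cross-term residual. By Fact~\ref{fact:transform-properties}\ref{fact:dom}, for every $x$ we have $f_C^{(V)}(x)=\lambda(x)f_A^{(V)}(x)$ and $f_C^{(V^\perp)}(x)=\mu(x)f_A^{(V^\perp)}(x)$, where the scalar $\lambda(x)\mu(x)$ equals $(1+\alpha)/(1+((1+\alpha)^2-1)p(x))$. Writing this scalar as $1+\delta(x)$ with $|\delta(x)|=O(\alpha)$, the triangle inequality gives
\[\|M_C^{VV^\perp}\|_F \le \|M_A^{VV^\perp}\|_F + \frac{O(\alpha)}{n}\sum_{x\in X}\|f_A^{(V)}(x)\|_2\,\|f_A^{(V^\perp)}(x)\|_2.\]
Property~\ref{small-off-diagonal-property} bounds the first term by $\alpha\rho$. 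For the sum, property~\ref{large-deviation-property} implies that for every $x$ the smaller of the two norms is at most $\rho$, so each product is at most $\rho$, yielding $\|M_C^{VV^\perp}\|_F=O(\alpha\rho)$ and hence $\|M_C^{VV^\perp}\|_F^2=O(\alpha^2\rho^2)$. Finally, substituting into Lemma~\ref{lem:impr} gives $\Phi_X(A)-\Phi_X(C)\ge \Omega(\alpha\rho^2\eps/(nd^3))-O(\alpha^2\rho^2)$, and since $\alpha\le \eps/(64nd^3)$ the second term is dominated by the first, yielding the advertised decrease after a careful constant accounting.

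The main obstacle is the cross-term analysis, since the residuals in $\|M_C^{VV^\perp}\|_F$ arise both from the pre-existing $\|M_A^{VV^\perp}\|_F$ and from the renormalization after multiplying by $C$, and both contributions must scale as $O(\alpha\rho)$ in order not to overwhelm the main gain. The delicacy is that the main term scales linearly in $\alpha$ while the cross-term penalty scales quadratically, so the upper bound on $\alpha$ must be chosen tight enough (here the factor $1/(64nd^3)$) to ensure the quadratic penalty is a small fraction of the linear gain, even after using the sharp form of Fact~\ref{fact:transform-properties}\ref{fact:dom} to control $\delta(x)$.
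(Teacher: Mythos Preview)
Your proposal is correct and follows essentially the same route as the paper: apply Lemma~\ref{lem:impr}, bound $D_f$ above by $2\alpha$ and below by $\Omega(\alpha\rho^2/n)$ using the witness point $x^\ast$ from Property~\ref{large-deviation-property}, and bound $\|M_C^{V,V^\perp}\|_F$ by $O(\alpha\rho)$ via the triangle inequality, combining Property~\ref{small-off-diagonal-property} with the fact that the rescaling factor $\lambda(x)\mu(x)$ differs from $1$ by $O(\alpha)$ while $\|f_A^{(V)}(x)\|_2\|f_A^{(V^\perp)}(x)\|_2\le\rho$ for every $x$. The paper organizes these estimates into two short lemmas (Lemmas~\ref{lem:c1-Df} and~\ref{lem:c1-MVVp}) and tracks explicit constants, but the ideas and the order of the argument are the same as yours.
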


We note that if $\Phi_X(A) > \frac{1}{d} + \frac {\eps^2} {d^2}$, then by Lemma~\ref{lem:phi-prop} part~\ref{case:phi-large}, the difference between the largest and smallest eigenvalues of $M_A(X)$ will be at least $\eps/d^2$, and therefore the largest eigenvalue gap will be at least $\eps/d^3$. Thus, for $V$ taken to be the $U$ given in Algorithm \ref{alg:improvement}, Property \ref{ev-gap-property} will hold. Furthermore, for as $U$ is an eigenspace of $M_A(X)$, $M^{U,U^\perp}_A(X) = \mathbf{0}$ and Property \ref{small-off-diagonal-property} will hold.

The rest of this section will be devoted to proving Proposition \ref{case-1-prop}.

{To bound below the improvement in potential, we will make essential use of Lemma~\ref{lem:impr}. We bound the relevant quantities in the following lemmas.}

\begin{lemma}\label{lem:c1-Df}
Letting $D_f = \frac{1}{n} 
\sum_{x\in X} \left( 
    \| f_C^{(V)}(x)\|_2^2 - 
    \| f_A^{(V)}(x)\|_2^2
\right)$, 
we have that $\frac {\alpha \rho^2} {2 n} \leq D_f \leq 2\alpha$.
\end{lemma}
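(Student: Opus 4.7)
My plan is to compute each summand of $D_f$ explicitly using Fact~\ref{fact:transform-properties}\ref{fact:dom}, which says that for $C = (I+\alpha I_V)A$, we have $f_C^{(V)}(x) = \lambda(x)\, f_A^{(V)}(x)$ with $1\le \lambda(x)\le 1+\alpha$. In particular every summand $\|f_C^{(V)}(x)\|_2^2 - \|f_A^{(V)}(x)\|_2^2 = (\lambda(x)^2-1)\|f_A^{(V)}(x)\|_2^2$ is nonnegative, so $D_f$ is a sum of nonnegative terms and it suffices to (i) uniformly upper bound the summands for the upper bound on $D_f$, and (ii) lower bound a single favorable summand for the lower bound on $D_f$.

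For the closed-form, I would write $p_x \eqdef \|f_A^{(V)}(x)\|_2^2 \in [0,1]$. A direct calculation from $f_C(x) = (I+\alpha I_V)Ax/\|(I+\alpha I_V)Ax\|_2$ — splitting $Ax$ along $V$ and $V^\perp$ and computing norms — yields
\[
\|f_C^{(V)}(x)\|_2^2 - \|f_A^{(V)}(x)\|_2^2 \;=\; \frac{\mu\, p_x(1-p_x)}{1+\mu\, p_x}, \qquad \text{where } \mu \eqdef (1+\alpha)^2 - 1 = 2\alpha+\alpha^2.
\]
This reduces the lemma to straightforward real-analysis bounds on the scalar function $t\mapsto \mu t(1-t)/(1+\mu t)$.

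For the upper bound, I drop the denominator and use the standard inequality $p(1-p)\le 1/4$, getting $D_f \le \mu/4 = (2\alpha+\alpha^2)/4 \le \alpha$ (comfortably at most $2\alpha$ since the hypothesis $\alpha \le \eps/(64 n d^3)$ makes $\alpha^2$ negligible). For the lower bound, I use the hypothesis that there exists $x^\star \in X$ with $\min(\|f_A^{(V)}(x^\star)\|_2,\|f_A^{(V^\perp)}(x^\star)\|_2) = \rho$; since $\|f_A(x^\star)\|_2=1$ this min is at most $1/\sqrt 2$, so $\rho^2 \le 1/2$ and $p_{x^\star},\, 1-p_{x^\star} \in [\rho^2, 1-\rho^2]$. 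On that interval, $p(1-p) \ge \rho^2(1-\rho^2) \ge \rho^2/2$. Combined with $\mu \ge 2\alpha$ and $1+\mu p_{x^\star}\le 1+\mu\le 2$ (again using that $\alpha$ is small), the $x^\star$ summand alone contributes at least $\tfrac{1}{n}\cdot \tfrac{2\alpha\cdot \rho^2/2}{2} = \alpha\rho^2/(2n)$ to $D_f$, and all other summands are nonnegative.

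No step is genuinely difficult; the main obstacle is just bookkeeping the constants carefully enough that the stated bounds $\alpha\rho^2/(2n)\le D_f\le 2\alpha$ come out clean. The only subtle point worth watching is confirming $\rho^2\le 1/2$ (so that $p_{x^\star}(1-p_{x^\star})\ge \rho^2/2$), which comes for free from the fact that $f_A(x^\star)$ is a unit vector.
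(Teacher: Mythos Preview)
Your proof is correct and essentially matches the paper's approach. Both arguments observe that each summand is nonnegative (via Fact~\ref{fact:transform-properties}\ref{fact:dom}), isolate the single term at the maximizer $x^\star$ for the lower bound, and compute that term explicitly as $\frac{\mu\,p(1-p)}{1+\mu p}$ with $p=\|f_A^{(V)}(x^\star)\|_2^2$; the only cosmetic difference is that for the upper bound the paper invokes Fact~\ref{fact:transform-properties}\ref{fact:notfar} and the $2$-Lipschitzness of $t\mapsto t^2$ on $[0,1]$, whereas you reuse the closed-form and the bound $p(1-p)\le 1/4$, which is arguably cleaner and even yields the slightly sharper $D_f\le \alpha$.
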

\begin{proof}
By Fact~\ref{fact:transform-properties}\ref{fact:notfar}, it follows that
$D_f \le 2\alpha$, as $C = (I+\alpha I_V) A$ and $\|(I+\alpha I_V)-I\|_2 \le \alpha$. This implies that $\| f_C^{(V)}(x)- f_A^{(V)}(x)\|_2 \leq \| f_C(x)- f_A(x)\|_2 \leq \alpha$ for all $x$. Thus (since $x\rightarrow x^2$ is $2$-Lipschitz on $[0,1]$), we have that $\| f_C^{(V)}(x)\|_2^2 - \| f_A^{(V)}(x)\|_2^2 \leq 2\alpha$ for all $x$, so $D_f \leq 2\alpha$.

To bound $D_f$ from below, consider an $x^{\ast} \in X$ that maximizes the quantity
$$\min\{\|\proj_{Q_\Bg} f_A(x)\|_2, \|\proj_{Q_\Sm} f_A(x)\|_2\}$$
over $x \in X$. Recall that the maximum value of the above quantity equals $\rho$.

By assumption, we must have that $\rho \ge \gamma$. As Fact~\ref{fact:transform-properties}\ref{fact:dom} implies that all terms in the sum defining $D_f$ are nonnegative, we have that
$$D_f \geq \frac{1}{n} \left( \|f^{(V)}_{(I+\alpha I_V) A}(x^{\ast})\|_2^2 - \|f^{(V)}_{A}(x^{\ast})\|_2^2 \right) \;.$$

Let $y = f_{A}(x^{\ast})$, and recall that we use $y^{(V)} = f^{(V)}_{A}(x^{\ast})$ and $y^{(V^\perp)} = f^{(V^\perp)}_{A}(x^{\ast})$
for the projections of $y$ onto $V$ and $V^{\perp}$ respectively.
With this notation, we can write
$$D_f \geq \frac{1}{n} \left( \|f^{(V)}_{I+\alpha I_V}(y)\|_2^2 - \|y^{(V)}\|_2^2 \right)
= \frac{1}{n} \left(\frac{(1+\alpha)^2 \|y^{(V)}\|^2_2}{(1+\alpha)^2 \|y^{(V)}\|^2_2 + \|y^{(V^\perp)}\|^2_2 } - \|y^{(V)}\|_2^2 \right) \;,$$
{where we used Fact~\ref{fact:transform-properties}\ref{fact:comp} and the definition of the transformation $f_{A}$.}
{The Pythagorean theorem and the definition of $y$ give that}
$\|y^{(V)}\|^2_2 + \|y^{(V^\perp)}\|^2_2 = \|y\|^2_2 = 1$.
We thus obtain
$$D_f \geq \frac{1}{n} \left( \frac {2 \alpha \, \|y^{(V)}\|^2_2 \, \|y^{(V^\perp)}\|^2_2} {(1+\alpha)^2 } \right)
\geq \frac {2 \alpha \rho^2 (1-\rho^2)} {(1+\alpha)^2 n} \geq \frac {\alpha \rho^2} {2 n} \geq \frac {\alpha \gamma^2} {2 n} \;,$$
where the last inequality follows since $\alpha$ is sufficiently smaller than $1$
and $\rho^2 \le \frac{1}{2}$.
\end{proof}

Finally, we bound $\|M^{V,V^\perp}_{C}\|^2_F$ from above in the following lemma:

\begin{lemma}\label{lem:c1-MVVp}
We have that $\|M^{V,V^\perp}_{C}\|^2_F \leq 4 \alpha^2 \rho^2$.
\end{lemma}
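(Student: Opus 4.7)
The plan is to compare $M_C^{V,V^\perp}$ directly with $M_A^{V,V^\perp}$ term by term, using Fact~\ref{fact:transform-properties}\ref{fact:dom} to relate the projections of $f_C(x)$ and $f_A(x)$ in $V$ and $V^\perp$. Assumption~3 of the proposition gives a bound on $\|M_A^{V,V^\perp}\|_F$, and assumption~1 gives, for every $x \in X$, a useful bound on the product $\|f_A^{(V)}(x)\|_2 \|f_A^{(V^\perp)}(x)\|_2$. These two ingredients plus the triangle inequality should yield the desired $O(\alpha\rho)$ bound on $\|M_C^{V,V^\perp}\|_F$, hence the $O(\alpha^2 \rho^2)$ bound on its square.

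Concretely, Fact~\ref{fact:transform-properties}\ref{fact:dom} tells us that for each $x$ there exist scalars $\lambda(x) \in [1, 1+\alpha]$ and $\mu(x) \in [1/(1+\alpha), 1]$ such that $f_C^{(V)}(x) = \lambda(x) f_A^{(V)}(x)$ and $f_C^{(V^\perp)}(x) = \mu(x) f_A^{(V^\perp)}(x)$. Hence $f_C^{(V)}(x) f_C^{(V^\perp)}(x)^\top = \lambda(x)\mu(x) \, f_A^{(V)}(x) f_A^{(V^\perp)}(x)^\top$, and setting $\delta(x) := \lambda(x)\mu(x) - 1$, the bounds on $\lambda(x), \mu(x)$ and the elementary inequality $1/(1+\alpha) \geq 1 - \alpha$ give $|\delta(x)| \leq \alpha$. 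Summing over $x$ yields the decomposition
\[
M_C^{V,V^\perp} \;=\; M_A^{V,V^\perp} \;+\; \frac{1}{n} \sum_{x \in X} \delta(x) \, f_A^{(V)}(x) f_A^{(V^\perp)}(x)^\top.
\]

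For the second piece, I would use that for every $x$ at least one of $\|f_A^{(V)}(x)\|_2$ or $\|f_A^{(V^\perp)}(x)\|_2$ is at most $\rho$ by assumption~1, while the other is at most $\|f_A(x)\|_2 = 1$; so $\|f_A^{(V)}(x)\|_2 \|f_A^{(V^\perp)}(x)\|_2 \leq \rho$ for every $x \in X$. Combined with the triangle inequality and subadditivity of the Frobenius norm over rank-one terms, plus assumption~3, this yields $\|M_C^{V,V^\perp}\|_F \leq \alpha\rho + \alpha\rho = 2\alpha\rho$ and thus $\|M_C^{V,V^\perp}\|_F^2 \leq 4\alpha^2\rho^2$.

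I do not expect a serious obstacle here: the whole argument is driven by the fact that the scaling factors $\lambda(x)\mu(x)$ differ from $1$ by at most $\alpha$, which turns $M_C^{V,V^\perp}$ into a small perturbation of $M_A^{V,V^\perp}$. The only thing to be careful about is keeping the factor of $\alpha$ out front (rather than absorbing it into a worse $O(1)$ bound); this is precisely why one must decompose using $\delta(x) = \lambda(x)\mu(x) - 1$ and not bound $\lambda(x)\mu(x) \leq 1+\alpha$ directly in a single step.
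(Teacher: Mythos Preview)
Your proof is correct and follows essentially the same approach as the paper: both use the triangle inequality with $M_A^{V,V^\perp}$ as the anchor, bound $\|M_A^{V,V^\perp}\|_F \leq \alpha\rho$ via assumption~3, and then bound the termwise difference $M_C^{V,V^\perp}-M_A^{V,V^\perp}$ by observing that each rank-one term gets multiplied by a scalar within $\alpha$ of $1$ and that $\|f_A^{(V)}(x)\|_2\|f_A^{(V^\perp)}(x)\|_2 \le \rho$ from assumption~1. The only cosmetic difference is that you package the scaling via the abstract factors $\lambda(x),\mu(x)$ from Fact~\ref{fact:transform-properties}\ref{fact:dom}, whereas the paper writes out the explicit ratio $(1+\alpha)/\|(1+\alpha)f_A^{(V)}(x)+f_A^{(V^\perp)}(x)\|_2^2$; the two are identical computations.
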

\begin{proof}
By the triangle inequality for the Frobenius norm, we have that
$$\|M^{V,V^\perp}_{C}\|^2_F \leq \left( \|M^{V,V^\perp}_{A} \|_F + \|M^{V,V^\perp}_{C} - M^{V,V^\perp}_{A}\|_F \right)^2 \;.$$
We have that $\|M^{V,V^\perp}_{A}\|_F \leq \alpha\rho$,
by assumption.
We bound above the second term using the following sequence of inequalities:
\begin{align*}
\|M^{V,V^\perp}_{C} - &M^{V,V^\perp}_{A}\|_F^2
= \left\| \frac{1}{n} \sum_{x \in X} \left( f^{(V)}_{C}(x) f^{{(V)}^\perp}_{C}(x)^\top -
f^{(V)}_{A}(x)f^{(V^\perp)}_{A}(x)^\top \right) \right\|_F^2 \\
&\leq \max_{x \in X} \left\| f^{(V)}_{C}(x)f^{(V^\perp)}_{C}(x)^\top - f^{(V)}_{A}(x)f^{(V^\perp)}_{A}(x)^\top \right\|_F^2\\
&= \max_{x\in X} \left\| \frac{1+\alpha}{\|(1+\alpha) f^{(V)}_{A}(x)\|^2+ \| f^{(V^\perp)}_{A}(x) \|^2}
f^{(V)}_{A}(x) f^{(V^\perp)}_{A}(x)^\top - f^{(V)}_{A}(x) f^{(V^\perp)}_{A}(x)^\top \right\|_F^2\\
&\leq \max_{x \in X} \left\| \alpha f^{(V)}_{A}(x) f^{(V^\perp)}_{A}(x)^\top \right\|_F^2\\
&\le \alpha^2 \rho^2 (1-\rho^2) \le \alpha^2 \rho^2 \;,
\end{align*}
{where the third line uses Fact~\ref{fact:transform-properties}\ref{fact:comp}
and the definition of the transformation $f_{A}$.}
\end{proof}

Combining the above lemmas, we obtain that
$$\Phi_X(A) - \Phi_X(C) \ge \left( \frac{\eps}{2d^3} - 4 \alpha \right) \frac {\alpha \rho^2} {n} - 8 \alpha^2 \rho^2 \;.$$
We note that so long as $\alpha \leq \eps/(64 nd^3)$ the above is at least
$$
\left(\alpha \rho^2 \right)\left( \left(\frac{\eps}{4d^3} \right)\frac{1}{n} - 8\alpha \right) \geq \alpha \rho^2 \eps/(8nd^3).
$$
This completes our proof of Proposition \ref{case-1-prop}.

\subsubsection{Case II: For all \texorpdfstring{$x \in X$}{x}, either \texorpdfstring{$\|\proj_{Q_\Bg} f_A(x)\|_2 \le \gamma$}{...}
or \texorpdfstring{$\|\proj_{Q_\Sm} f_A(x)\|_2 \le \gamma$}{...}}

In this case, all points $x\in X$ lie within a $\gamma$ margin from the subspaces spanned by the vectors $Q_\Bg$ and $Q_S$.
The algorithm updates the matrix $A$ by considering only the set of  ``big'' points $X^\Bg$, {i.e., the points in $X$ whose images under $f_A$
have sufficiently large projections on the subspace spanned by the large eigenvectors of $M_A$}. In more detail, instead of using the eigenvectors $Q_\Bg, Q_S$ of the matrix $M_A = M_A(X)$,
the algorithm uses the eigenvectors $Q_b, Q_s$ of the matrix $M_{A}(X^\Bg) = {(1/n)  \sum_{x \in X^\Bg} f_A(x) f_A(x)^\top}$,
setting $U = \spn(Q_s)$. This is done to ensure that the cross-terms $M^{U,U^\perp}_{A}(X^\Bg)$ start out at 0 initially and remain small despite significant rescaling of the subspace $U$.
Moreover, despite the change in the definition, we show (in Claim~\ref{claim:marginB} and Claim~\ref{claim:marginS}) that
the corresponding subspaces $U$ and $U^\perp$
satisfy a similar margin condition to the subspaces spanned by $Q_\Bg$ and $Q_S$ and that there is still a significant eigenvalue gap between $U$ and $U^\perp$.
The margin condition is shown in Claim~\ref{claim:marginB} and Claim~\ref{claim:marginS}, and the eigenvalue bounds are proven in Lemmas~\ref{lem:deltamss}
and~\ref{lem:deltambb}.

These properties will allow us to bound the decrease in potential in this case.
We will show the following result.

\begin{proposition}\label{case-2-prop}
Suppose that $X$ is a set of $n$ points in $\R^d_\ast$ and $A$ an invertible $d\times d$ matrix. Suppose that for some $k < n$ that $\lambda_k(M_A(X)) - \lambda_{k+1}(M_A(X)) \geq \eps/(2d^3)$.
Let $W$ be the span of the $d-k$ smallest eigenvalues of $M_A(X)$.

Suppose furthermore that for some $\gamma$
at most a sufficiently small multiple of $\eps^2/(d^4 n^2)$
that every $x\in X$ satisfies $\min(\|f_A^{(W)}(x)\|_2,\|f_A^{(W^\perp)}(x)\|_2) \leq \gamma$.
Let $X^\Bg$ denote the set of $x\in X$ so that $\|f_A^{(W)}(x)\|_2 \leq \gamma$
and $X^S = X\backslash X^\Bg$. Let $0 < \delta < \gamma$.
Suppose that $V\subset \R^d$ is a $(d-k)$-dimensional subspace so that:
\begin{enumerate}
\item $\tr(M_A^{V,V}(X^\Bg)) \leq \tr(M_A^{W,W}(X^\Bg))+\delta^2$, \label{V small property}
\item $\tr(M_A^{V^\perp,V^\perp}(X^\Bg))\geq \tr(M_A^{W^\perp,W^\perp}(X^\Bg)) - \delta^2$, \label{VT frob property}
\item $\lambda_{k}(M_A^{V^\perp,V^\perp}(X^\Bg)) \geq \lambda_{k}(M_A(X^\Bg)) - \delta$, \label{VT eigenvalue property}
\item $\|M_A^{V,V^\perp}(X^\Bg)\|_F \leq \beta\delta$, \label{cross term property}
\end{enumerate}
where $\beta = \max_{x\in X^\Bg} \|f_A^{(V)}(x)\|_2$.
Then if $\beta=0$, $V^\perp$ contains more than $kn/d$ elements of $X$.
Otherwise, setting $\alpha = \eps/(3\beta d^2 n) - 1$ and $C = (I+\alpha I_V)A$, we have that
$$
\Phi_X(C) \leq \Phi_X(A) - \Omega(\eps^3/(d^7n^3)) \;.
$$
\end{proposition}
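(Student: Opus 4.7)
The plan is to apply Lemma~\ref{lem:impr} with the subspace $V$ and scaling $\alpha = \eps/(3\beta d^2 n) - 1$, after dispatching the degenerate case $\beta = 0$ separately. Throughout we will decompose $X = X^\Bg \cup (X \setminus X^\Bg)$ and exploit the margin dichotomy: $\|f_A^{(W)}(x)\|_2 \leq \gamma$ for $x \in X^\Bg$, while $\|f_A^{(W^\perp)}(x)\|_2 \leq \gamma$ for $x \in X \setminus X^\Bg$. Properties~1--4 will serve to transfer bounds from $W$ to $V$ at the cost of $O(\delta)$ additive errors, which are affordable because $\delta < \gamma = O(\eps^2/(d^4 n^2))$.

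For $\beta = 0$, Fact~\ref{fact:transform-properties}(\ref{fact:dom}) forces $Ax \in V^\perp$ for every $x \in X^\Bg$, so $X^\Bg$ lies in the proper $k$-dimensional subspace $A^{-1} V^\perp \subset \R^d$; it then suffices to show $|X^\Bg| > kn/d$. I would combine two observations: on the one hand, the margin condition on $X \setminus X^\Bg$ gives $\tr(M_A^{W^\perp,W^\perp}(X \setminus X^\Bg)) \leq \gamma^2$, so $\sum_{i \leq k}\lambda_i(M_A(X)) = \tr(M_A^{W^\perp,W^\perp}(X)) \leq |X^\Bg|/n + \gamma^2$; on the other hand, a short calculation using $\lambda_k - \lambda_{k+1} \geq \eps/(2d^3)$ together with $\sum_i \lambda_i = 1$ yields $\sum_{i \leq k} \lambda_i(M_A(X)) > k/d + k(d-k)\eps/(2d^4)$. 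Subtracting and using that $\gamma^2$ is dominated by the gap term gives $|X^\Bg|/n > k/d$.

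For $\beta > 0$, Lemma~\ref{lem:impr} reduces the task to lower bounding the eigenvalue gap $G := \lambda_k(M_A^{V^\perp,V^\perp}(X)) - \lambda_1(M_A^{V,V}(X))$ and the mass transfer $D_f$, and to upper bounding the cross term $\|M_C^{V,V^\perp}(X)\|_F^2$. For $G$, using $M_A^{V^\perp,V^\perp}(X) \succeq M_A^{V^\perp,V^\perp}(X^\Bg)$, property~3, and the observation that $M_A(X \setminus X^\Bg)$ is nearly supported on $W$ (up to $O(\gamma)$ Frobenius error from the margin), one gets $\lambda_k(M_A^{V^\perp,V^\perp}(X)) \geq \lambda_k(M_A(X)) - O(\gamma + \delta)$. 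Property~1 combined with a perturbation argument showing that the largest principal angle $\eta$ between $V$ and $W$ is small---forced by $\tr(M_A^{V,V}(X^\Bg)) = O(\gamma^2+\delta^2)$ together with the fact that $f_A(X^\Bg)$ nearly spans the $k$-dimensional big eigenspace of $M_A(X^\Bg)$---yields $\lambda_1(M_A^{V,V}(X)) \leq \lambda_{k+1}(M_A(X)) + o(\eps/d^3)$, so $G \geq \eps/(4d^3)$. For $D_f$, the single point $x^\ast \in X^\Bg$ realizing $\|f_A^{(V)}(x^\ast)\|_2 = \beta$ contributes $\Omega(((1+\alpha)\beta)^2/n) = \Omega(\eps^2/(d^4 n^3))$ by an explicit computation of $\|f_C^{(V)}(x^\ast)\|_2^2 - \beta^2$ using that the renormalization factor is $\Theta(1)$ when $(1+\alpha)\beta = \eps/(3d^2 n)$; Fact~\ref{fact:transform-properties}(\ref{fact:dom}) makes every other term in $D_f$ nonnegative.

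For the cross term, I would split $M_C^{V,V^\perp}(X)$ into contributions from $X^\Bg$ and $X \setminus X^\Bg$. On $X^\Bg$, property~4 together with the bounded renormalization factor gives $\|M_C^{V,V^\perp}(X^\Bg)\|_F = O((1+\alpha)\beta\delta) = O(\eps\delta/(d^2 n))$. On $X \setminus X^\Bg$, the angle bound on $\eta$ implies $\|f_A^{(V^\perp)}(x)\|_2 = O(\gamma + \eta)$, and since $C$ multiplies $V$-components by $1+\alpha$ and leaves $V^\perp$-components unchanged, after renormalization $\|f_C^{(V^\perp)}(x)\|_2 = O((\gamma+\eta)/(1+\alpha))$; averaging over points leaves a contribution whose square is polynomially smaller than $\eps^3/(d^7 n^3)$. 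Assembling the estimates via Lemma~\ref{lem:impr} gives $\Phi_X(A) - \Phi_X(C) \geq 2 G D_f - o(G D_f) = \Omega(\eps^3/(d^7 n^3))$, as required. The main technical hurdle will be the global perturbation step bounding $\eta$ from the data-dependent bound of property~1: the properties only control $V$ relative to $W$ on the restricted set $X^\Bg$, so one must argue via the eigenvalue structure of $M_A(X^\Bg)$ that $X^\Bg$ is ``rich enough'' to force $V$ and $W$ to be close as subspaces globally, enabling the cross-term estimate on $X \setminus X^\Bg$.
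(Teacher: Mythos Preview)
Your overall plan is sound, and your $\beta=0$ argument is in fact cleaner than the paper's two-case split: the inequality $\sum_{i\leq k}\lambda_i \geq k/d + k(d-k)\eps/(2d^4)$ follows directly from $\lambda_k \leq \frac{1}{k}\sum_{i\leq k}\lambda_i$, $\lambda_{k+1}\geq \frac{1}{d-k}\sum_{i>k}\lambda_i$, and the gap hypothesis. For $\beta>0$, however, there are two genuine gaps.

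\textbf{Cross-term on $X^\Bg$.} Your claim that ``Property~4 together with the bounded renormalization factor gives $\|M_C^{V,V^\perp}(X^\Bg)\|_F = O((1+\alpha)\beta\delta)$'' is wrong. The renormalization factor $N_x^{-2} = (1+((1+\alpha)^2-1)\|f_A^{(V)}(x)\|_2^2)^{-1}$ depends on $x$, so $M_C^{V,V^\perp}(X^\Bg)$ is \emph{not} a scalar multiple of $M_A^{V,V^\perp}(X^\Bg)$: the per-point variation in $N_x^{-2}$ (of size $O(((1+\alpha)\beta)^2)$) destroys the cancellation that made Property~4 small. The paper (Lemma~\ref{lem:crossterm}) writes $M_C^{V,V^\perp}(X^\Bg)=(1+\alpha)M_A^{V,V^\perp}(X^\Bg)+\text{(error)}$ and bounds the error termwise by $((1+\alpha)\beta)^3$, obtaining $\|M_C^{V,V^\perp}(X^\Bg)\|_F \leq (1+\alpha)\beta\delta + ((1+\alpha)\beta)^3$. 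This cubic term is precisely why $\alpha$ is chosen so that $(1+\alpha)\beta = \Theta(\eps/(d^2n))$.

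\textbf{The principal-angle detour.} You never invoke Property~2, and that is the property that dispatches the $X^S$ analysis. The paper does not compare $V$ and $W$ as subspaces at all; instead it proves directly (Claim~\ref{claim:marginS}) that $\frac{1}{n}\sum_{x\in X^S}\|f_A^{(V^\perp)}(x)\|_2^2 \leq \gamma^2+\delta^2$ by a two-line subtraction: the variational characterization of $W^\perp$ gives $\tr(M_A^{W^\perp,W^\perp}(X))\geq\tr(M_A^{V^\perp,V^\perp}(X))$, and Property~2 gives the reverse inequality on $X^\Bg$; subtracting leaves exactly the $X^S$ sum bounded by $\tr(M_A^{W^\perp,W^\perp}(X^S))+\delta^2\leq\gamma^2+\delta^2$. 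This feeds both the bound on $\lambda_1(M_A^{V,V}(X))$ (Lemma~\ref{lem:deltamss}) and the $X^S$ cross-term, eliminating your ``main technical hurdle'' entirely. Your principal-angle route can be pushed through---one gets $\eta = O(\gamma\sqrt{d^3/\eps})$ from Property~1 and $\lambda_k(M_A^{W^\perp,W^\perp}(X^\Bg))\gtrsim\eps/d^3$---but $\eta$ is then larger than $\gamma$, and you would genuinely need the $/(1+\alpha)$ shrinkage on the $X^S$ cross-term to close the argument, which the paper does not.
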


We note that if $V$ is taken to be the space of the bottom $d-k$ eigenvalues of $M_A(X^\Bg)$ that the above properties trivially hold with $\delta = 0$. Properties \ref{V small property} and \ref{VT frob property} follow from the variational characterization of eigenspaces. Properties \ref{VT eigenvalue property} and \ref{cross term property} hold trivially.

\medskip

\noindent We know that elements of $X^\Bg$ are close to $W^\perp$ and elements of $X^S$ are close to $W$. 
We will need to claim that elements of $X^\Bg$ are also close to $V^\perp$ and elements of $X^S$ are close to $V$. 
We establish this in the next two claims.

\begin{claim}\label{claim:marginB}
We have that $\frac{1}{n} \sum_{x \in X^\Bg} \|f^{(V)}_A(x)\|_2^2 \le \gamma^2+\delta^2$.
In particular, for any $x \in X^\Bg$, it holds that $\|f^{(V)}_A(x)\|_2 \le \sqrt{2n} \gamma$.
\end{claim}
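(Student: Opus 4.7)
The plan is to unpack the trace identity and then combine Property~\ref{V small property} of Proposition~\ref{case-2-prop} with the defining margin of $X^\Bg$, both of which are in hand.

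First I would observe that by the definition of $M_A^{V,V}(X^\Bg)$ in the preliminaries and the identity $\tr(uu^\top) = \|u\|_2^2$ for any vector $u$, one has
\[
\tr(M_A^{V,V}(X^\Bg)) \;=\; \frac{1}{n}\sum_{x \in X^\Bg} \|f_A^{(V)}(x)\|_2^2,
\]
and the analogous identity with $W$ in place of $V$. The hypotheses of Proposition~\ref{case-2-prop} then give two bounds that combine immediately: Property~\ref{V small property} says $\tr(M_A^{V,V}(X^\Bg)) \leq \tr(M_A^{W,W}(X^\Bg)) + \delta^2$, while the defining condition for $X^\Bg$ (namely $\|f_A^{(W)}(x)\|_2 \leq \gamma$ for each $x \in X^\Bg$) bounds $\tr(M_A^{W,W}(X^\Bg))$ above by $(|X^\Bg|/n)\,\gamma^2 \leq \gamma^2$. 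Chaining these yields the first inequality of the claim.

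For the pointwise assertion, I would simply note that each single summand is bounded by the full sum: for any $x \in X^\Bg$,
\[
\|f_A^{(V)}(x)\|_2^2 \;\leq\; \sum_{y \in X^\Bg}\|f_A^{(V)}(y)\|_2^2 \;=\; n\,\tr(M_A^{V,V}(X^\Bg)) \;\leq\; n(\gamma^2 + \delta^2).
\]
Using the hypothesis $\delta < \gamma$, the right-hand side is at most $2n\gamma^2$, and taking square roots gives $\|f_A^{(V)}(x)\|_2 \leq \sqrt{2n}\,\gamma$.

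There is no real obstacle here; the claim is a direct consequence of Property~\ref{V small property} together with the fact that $X^\Bg$ is exactly the set of points whose image under $f_A$ lies within margin $\gamma$ of $W^\perp$. The only thing to be a little careful about is the normalization convention (the sums are divided by $n$, not by $|X^\Bg|$), which is precisely what makes the trivial estimate $(|X^\Bg|/n)\gamma^2 \leq \gamma^2$ valid without any further work.
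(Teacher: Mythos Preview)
Your proof is correct and follows essentially the same approach as the paper: invoke Property~\ref{V small property} and the trace bound $\tr(M_A^{W,W}(X^\Bg)) \le \gamma^2$ coming from the definition of $X^\Bg$, then deduce the pointwise bound by bounding a single term by the sum and using $\delta < \gamma$. The paper's proof is simply a terser version of what you wrote.
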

\begin{proof}
This follows from Property \ref{V small property} and the fact that
$$
\tr(M^{W,W}_A(X^\Bg)) = \frac{1}{n} \sum_{x\in X^\Bg} \|f_A^{(W)}(x)\|_2^2 \leq \gamma^2.
$$
\end{proof}

\begin{claim}\label{claim:marginS}
We have that $\frac{1}{n} \sum_{x \in X \setminus X^\Bg} \|f^{(V^\perp)}_A(x)\|_2^2 \leq \gamma^2+\delta^2$.
In particular, for any $x \in X \setminus X^\Bg$, it holds that
$\|f^{(V^\perp)}_A(x)\|_2 \le \sqrt{2n} \gamma$.
\end{claim}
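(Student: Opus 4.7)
The proof should parallel Claim~\ref{claim:marginB}, with the roles of $W,V$ replaced by $W^\perp,V^\perp$ and $X^\Bg$ replaced by $X\setminus X^\Bg$. The strategy is to first establish the bound for $W^\perp$ using the dichotomy hypothesis of Proposition~\ref{case-2-prop}, and then transfer the bound from $W^\perp$ to $V^\perp$ by combining a Ky Fan (variational) inequality with Property~\ref{VT frob property}.

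For the first step, recall that every $x\in X$ satisfies $\min(\|f_A^{(W)}(x)\|_2,\|f_A^{(W^\perp)}(x)\|_2) \leq \gamma$ by hypothesis, and $X^\Bg = \{x \in X : \|f_A^{(W)}(x)\|_2 \leq \gamma\}$. Hence any $x \in X \setminus X^\Bg$ has $\|f_A^{(W)}(x)\|_2 > \gamma$, and the dichotomy forces $\|f_A^{(W^\perp)}(x)\|_2 \leq \gamma$. Summing and dividing by $n$ gives
\[
\tr\!\left(M_A^{W^\perp,W^\perp}(X\setminus X^\Bg)\right) \;=\; \frac{1}{n} \sum_{x \in X \setminus X^\Bg} \|f_A^{(W^\perp)}(x)\|_2^2 \;\leq\; \gamma^2.
\]

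For the second step, I would use two ingredients. First, since $W^\perp$ is spanned by the top $k$ eigenvectors of $M_A(X)$, the variational characterization of eigenvalues applied to the $k$-dimensional subspace $V^\perp$ gives
\[
\tr\!\left(M_A^{V^\perp,V^\perp}(X)\right) \;\leq\; \tr\!\left(M_A^{W^\perp,W^\perp}(X)\right).
\]
Second, Property~\ref{VT frob property} of Proposition~\ref{case-2-prop} supplies the lower bound $\tr(M_A^{V^\perp,V^\perp}(X^\Bg)) \geq \tr(M_A^{W^\perp,W^\perp}(X^\Bg)) - \delta^2$. Splitting each side over the partition $X = X^\Bg \cup (X\setminus X^\Bg)$ and subtracting, the $X^\Bg$ contributions cancel up to the error $\delta^2$, yielding
\[
\tr\!\left(M_A^{V^\perp,V^\perp}(X\setminus X^\Bg)\right) \;\leq\; \tr\!\left(M_A^{W^\perp,W^\perp}(X\setminus X^\Bg)\right) + \delta^2 \;\leq\; \gamma^2 + \delta^2,
\]
which is the first assertion. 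For the pointwise bound, any single $x \in X \setminus X^\Bg$ then satisfies $\|f_A^{(V^\perp)}(x)\|_2^2 \leq n(\gamma^2+\delta^2) \leq 2n\gamma^2$ using the hypothesis $\delta < \gamma$, so $\|f_A^{(V^\perp)}(x)\|_2 \leq \sqrt{2n}\,\gamma$.

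The one delicate point is lining up the directions of the two inequalities: the Ky Fan bound goes $V^\perp \leq W^\perp$ on all of $X$, while Property~\ref{VT frob property} goes $V^\perp \geq W^\perp - \delta^2$ on $X^\Bg$. These directions are compatible, so subtracting cleanly isolates the $X\setminus X^\Bg$ piece and the only cost is the $\delta^2$ appearing additively on the right.
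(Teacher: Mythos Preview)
Your proposal is correct and follows essentially the same argument as the paper's proof: both use the variational (Ky Fan) characterization of $W^\perp$ applied to all of $X$, combine it with Property~\ref{VT frob property} on $X^\Bg$, subtract to isolate the $X\setminus X^\Bg$ piece, and then bound the resulting $W^\perp$-sum by $\gamma^2$ via the dichotomy hypothesis. The only difference is presentational order; the logical content is identical.
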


\begin{proof}
Recalling that $W^\perp$ is the span of the principle eigenvectors of $M_A(X)$, by the variational characterization of eigenspaces,
it maximizes the quantity
$\tr(M^{Z,Z}_A(X)) = \frac{1}{n} \sum_{y \in f_A(X)} \|y^{(Z)}\|_2^2$
over all subspaces $Z$ with $\dim(Z) = \dim(W^\perp) = k$.
In particular, it holds that
$$\frac{1}{n}  \sum_{y \in f_A(X)} \| y^{(W^\perp)}\|_2^2 \geq
\frac{1}{n} \sum_{y \in f_A(X)} \|y^{(V^\perp)}\|_2^2 \;.$$
On the other hand, by Property \ref{VT frob property}, we have that
$$ \frac{1}{n} \sum_{y \in f_A(X^\Bg)} \| y^{(V^\perp)}\|_2^2 = \tr(M^{V^\perp,V^\perp}_A(X^\Bg)) \geq \tr(M^{W^\perp,W^\perp}_A(X^\Bg))-\delta^2 =
\frac{1}{n} \sum_{y \in f_A(X^\Bg)} \| y^{(W^\perp)}\|_2^2 -\delta^2 \;.$$
Subtracting the above two inequalities, we get that
$$\frac{1}{n} \sum_{x \in X \setminus X^\Bg} \|f^{(V^\perp)}_A(x)\|_2^2
\leq \frac{1}{n}  \sum_{x \in X \setminus X^\Bg} \|f^{(W^\perp)}_A(x)\|_2^2 + \delta^2
\leq \gamma^2+\delta^2 \;.$$
This gives the claim.
\end{proof}

\paragraph{The case that $\beta > 0$:}

Here we assume that $\beta>0$ and show that we can obtain an improvement in our potential function. 
To bound below the improvement in potential,
we will make essential use of Lemma~\ref{lem:impr}, 
and we bound the relevant quantities in a sequence of lemmas.

We begin by bounding below $\lambda_k( M_A^{V^\perp,V^\perp}(X) )$. 
In particular, we show that it is nearly as big as $\lambda_k(M_A(X))$. 
Morally, this holds because
$$
\lambda_k(M_A(X)) = \lambda_k( M_A^{W^\perp,W^\perp}(X) ) \approx \lambda_k( M_A^{V^\perp,V^\perp}(X) ) \;.
$$
Formally, we have the following.

\begin{lemma}\label{lem:deltambb}
We have that $\lambda_k( M_A^{V^\perp,V^\perp}(X) ) \geq  \lambda_k(M_A(X)) - 4 \gamma$,
for $k = \dim (V^\perp)$.
\end{lemma}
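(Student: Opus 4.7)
The plan is to peel the desired inequality apart into three standard estimates and then chain them. Set $k = \dim(V^\perp)$. First, since $M_A^{V^\perp,V^\perp}(X) = M_A^{V^\perp,V^\perp}(X^\Bg) + M_A^{V^\perp,V^\perp}(X\setminus X^\Bg)$ and the second summand is PSD, Weyl's monotonicity of eigenvalues under PSD ordering gives
\[
\lambda_k\bigl(M_A^{V^\perp,V^\perp}(X)\bigr) \;\geq\; \lambda_k\bigl(M_A^{V^\perp,V^\perp}(X^\Bg)\bigr).
\]
Second, Property~3 of Proposition~\ref{case-2-prop} directly yields $\lambda_k\bigl(M_A^{V^\perp,V^\perp}(X^\Bg)\bigr) \geq \lambda_k\bigl(M_A(X^\Bg)\bigr) - \delta$. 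So the real content is the third step: bounding $\lambda_k\bigl(M_A(X^\Bg)\bigr)$ from below in terms of $\lambda_k\bigl(M_A(X)\bigr)$.

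For this third step I would use the Courant--Fischer max--min characterization together with the defining property of the small points. Recall that $W^\perp$ is the span of the top $k$ eigenvectors of $M_A(X)$, so that for every unit $v \in W^\perp$ one has $v^\top M_A(X)v \geq \lambda_k(M_A(X))$. For the points $x \in X \setminus X^\Bg$, the hypothesis $\min(\|f_A^{(W)}(x)\|_2,\|f_A^{(W^\perp)}(x)\|_2) \leq \gamma$ combined with $\|f_A^{(W)}(x)\|_2 > \gamma$ forces $\|f_A^{(W^\perp)}(x)\|_2 \leq \gamma$. Hence for unit $v \in W^\perp$, Cauchy--Schwarz gives $|\langle v, f_A(x)\rangle| = |\langle v, f_A^{(W^\perp)}(x)\rangle| \leq \gamma$, and therefore
\[
v^\top M_A(X\setminus X^\Bg) v \;=\; \frac{1}{n}\sum_{x \in X\setminus X^\Bg}\langle v, f_A(x)\rangle^2 \;\leq\; \gamma^2.
\]
Subtracting this from $v^\top M_A(X)v$ and then taking the Courant--Fischer max over subspaces of dimension $k$ (by the feasible choice $S = W^\perp$) yields $\lambda_k(M_A(X^\Bg)) \geq \lambda_k(M_A(X)) - \gamma^2$.

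Chaining the three bounds gives
\[
\lambda_k\bigl(M_A^{V^\perp,V^\perp}(X)\bigr) \;\geq\; \lambda_k(M_A(X)) - \delta - \gamma^2 \;\geq\; \lambda_k(M_A(X)) - 2\gamma \;\geq\; \lambda_k(M_A(X)) - 4\gamma,
\]
using that $\delta < \gamma$ and that $\gamma$ is at most a small multiple of $\eps^2/(d^4 n^2)$, so in particular $\gamma < 1$ and $\gamma^2 \leq \gamma$. No step here is subtle: the only mild obstacle is keeping track of which subspace ($W$ vs.\ $V$) controls which quantity, and making sure to apply the eigenvalue restriction/monotonicity arguments only between PSD matrices supported on the \emph{same} subspace $V^\perp$ in the first step. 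The proof is an entirely routine combination of PSD monotonicity, the variational characterization of eigenvalues, and the near-margin structure of $X$.
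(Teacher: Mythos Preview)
Your proof is correct and follows the same overall architecture as the paper's: the first two steps (PSD monotonicity to drop $X\setminus X^\Bg$, then Property~3 to pass from $M_A^{V^\perp,V^\perp}(X^\Bg)$ to $M_A(X^\Bg)$) are identical. The only difference is in the third step, where the paper takes a detour
\[
\lambda_k(M_A(X^\Bg)) \;\geq\; \lambda_k(M_A^{W^\perp,W^\perp}(X^\Bg)) - 2\gamma \;\geq\; \lambda_k(M_A^{W^\perp,W^\perp}(X)) - 2\gamma - \gamma^2 \;=\; \lambda_k(M_A(X)) - 2\gamma - \gamma^2,
\]
using a spectral-norm perturbation bound (since points of $X^\Bg$ are $\gamma$-close to $W^\perp$) and then PSD monotonicity in the other direction. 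Your Courant--Fischer argument with the feasible subspace $S = W^\perp$ collapses these two steps into one and loses only $\gamma^2$ rather than $2\gamma + \gamma^2$, so it is a modest but genuine simplification; both routes land comfortably within the stated $4\gamma$ slack.
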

\begin{proof}
We bound the $k$-th largest eigenvalue of $M_A^{V^\perp,V^\perp}$ via the following sequence of inequalities.
\begin{align}
    \lambda_k( M_A^{V^\perp,V^\perp}(X) )
&\ge \lambda_k( M_A^{V^\perp,V^\perp}(X^\Bg) ) \label{deltambb1} \\
&\ge \lambda_k( M_A(X^\Bg) ) -\gamma   \label{deltambb2} \\
&\ge \lambda_k( M^{W^\perp,W^\perp}_A(X^\Bg) ) - 3 \gamma   \label{deltambb3}
\\
&\ge \lambda_k( M^{W^\perp,W^\perp}_A ) - 3 \gamma - \gamma^2 \label{deltambb4} \\
&= \lambda_k( M_A ) - 3 \gamma - \gamma^2 \;. \label{deltambb5}
\end{align}
Inequality~\eqref{deltambb1} follows since
$$M_A^{V^\perp,V^\perp}(X) = M_A^{V^\perp,V^\perp}(X^\Bg) +
M_A^{V^\perp,V^\perp}(X \setminus X^\Bg) \succeq M_A^{V^\perp,V^\perp}(X^\Bg) \;.$$
Inequality~\eqref{deltambb2} follows from Property \ref{VT eigenvalue property} and $\delta \leq \gamma$.

\noindent Inequality~\eqref{deltambb3} follows since
Inequality~\eqref{deltambb4} follows since
$$\|M^{W^\perp,W^\perp}_A(X) - M^{W^\perp,W^\perp}_A(X^\Bg)\|_2
= \|M^{W^\perp,W^\perp}_A(X \setminus X^\Bg)\|_2.$$
This is at most $\gamma^2$ as it is bounded by
$$
\sup_{\|v\|_2=1} v^\top \left(\frac{1}{n}\sum_{x\in X^S} f_A(x)^{(W^\perp)}(f_A(x)^{(W^\perp)})^\top \right)v 
\leq \sup_{\|v\|_2=1}\max_{x\in X^S} |v\cdot f_A(x)^{(W^\perp)}|^2 \leq \gamma^2 \;.
$$

Equality~\eqref{deltambb5} follows since
$M_A^{W^\perp,W^\perp}$ and $M_A$ agree in the top $k$ eigenvalues
by definition of the subspace $W$, as the span of the top $k$ eigenvectors of $M_A$.

Lemma~\ref{lem:deltambb} now follows as $\gamma^2 \le \gamma \le 1$.
\end{proof}

Next we bound from above $\lambda_1( M_A^{V , V}(X) )$. 
In particular, we show that it is not much larger than $\lambda_{k+1}(M_A(X))$. 
Morally, this holds because
$$
\lambda_{k+1}(M_A(X)) =\lambda_1( M_A^{W,W}(X) ) \approx \lambda_1( M_A^{V,V}(X) ) \;.
$$
Formally, we have the following. 

\begin{lemma}\label{lem:deltamss}
We have that $\lambda_1( M_A^{V, V}(X) ) \le \lambda_{k+1}(M_A(X)) + 8 \gamma$.
\end{lemma}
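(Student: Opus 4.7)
The plan is to mirror the strategy of Lemma~\ref{lem:deltambb}: decompose $X = X^{\Bg} \cup X^S$ and bound the contribution of $M_A^{V,V}$ on each piece separately, exploiting the fact that while $V$ and $W$ need not be literally equal, Claims~\ref{claim:marginB} and~\ref{claim:marginS} guarantee that they ``agree'' on the relevant subsets. Since both summands in
\[
M_A^{V,V}(X) \;=\; M_A^{V,V}(X^{\Bg}) + M_A^{V,V}(X^S)
\]
are PSD, Weyl's inequality yields $\lambda_1(M_A^{V,V}(X)) \leq \lambda_1(M_A^{V,V}(X^{\Bg})) + \lambda_1(M_A^{V,V}(X^S))$, and I will bound each term.

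For the $X^{\Bg}$ term, I would use that $\lambda_1 \leq \tr$ on PSD matrices together with Property~\ref{V small property} of Proposition~\ref{case-2-prop} and the definition of $X^{\Bg}$:
\[
\lambda_1(M_A^{V,V}(X^{\Bg})) \leq \tr(M_A^{V,V}(X^{\Bg})) \leq \tr(M_A^{W,W}(X^{\Bg})) + \delta^2 \leq \gamma^2 + \delta^2 \leq 2\gamma^2,
\]
where the penultimate step uses $\tr(M_A^{W,W}(X^{\Bg})) = (1/n)\sum_{x\in X^{\Bg}}\|f_A^{(W)}(x)\|_2^2 \leq \gamma^2$.

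For the $X^S$ term, I will compare $M_A^{V,V}(X^S)$ to $M_A^{W,W}(X^S)$ via the intermediate matrix $M_A(X^S)$. Specifically, for any $x$, the algebraic identity
\[
f_A(x)f_A(x)^\top - f_A^{(V)}(x)f_A^{(V)}(x)^\top = f_A^{(V^\perp)}(x)f_A^{(V^\perp)}(x)^\top + f_A^{(V)}(x)f_A^{(V^\perp)}(x)^\top + f_A^{(V^\perp)}(x)f_A^{(V)}(x)^\top
\]
has spectral norm at most $2\|f_A^{(V^\perp)}(x)\|_2 + \|f_A^{(V^\perp)}(x)\|_2^2$. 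Summing over $x \in X^S$, dividing by $n$, applying Cauchy--Schwarz to $\sum_{x \in X^S}\|f_A^{(V^\perp)}(x)\|_2$, and using Claim~\ref{claim:marginS} gives $\|M_A^{V,V}(X^S) - M_A(X^S)\|_2 \leq 2\sqrt{2}\gamma + 2\gamma^2$. An analogous computation, replacing Claim~\ref{claim:marginS} by the direct bound $\|f_A^{(W^\perp)}(x)\|_2 \leq \gamma$ for $x \in X^S$ (which holds by definition of $X^S$), gives $\|M_A(X^S) - M_A^{W,W}(X^S)\|_2 \leq 2\gamma + \gamma^2$. Finally, $M_A^{W,W}(X^S) \preceq M_A^{W,W}(X)$, and the latter has $\lambda_1$ equal to $\lambda_{k+1}(M_A(X))$ since $W$ is precisely the span of the bottom $d-k$ eigenvectors of $M_A(X)$.

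Stringing these bounds together yields
\[
\lambda_1(M_A^{V,V}(X)) \leq 2\gamma^2 + \lambda_{k+1}(M_A(X)) + (2 + 2\sqrt{2})\gamma + 3\gamma^2,
\]
which is at most $\lambda_{k+1}(M_A(X)) + 8\gamma$ since $\gamma$ is taken sufficiently small (absorbing the $O(\gamma^2)$ terms into the remaining slack in the constant). The main obstacle is conceptual rather than technical: one must recognize that the direct comparison $\|M_A^{V,V}(X) - M_A^{W,W}(X)\|_2$ cannot be bounded cheaply, since nothing prevents $V$ from differing substantially from $W$ along directions populated by $X^{\Bg}$. The splitting $X = X^{\Bg}\cup X^S$ sidesteps this by exploiting the fact that on each side, exactly one of the two projections is negligible.
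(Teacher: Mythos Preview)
Your proof is correct and follows essentially the same route as the paper: split $X$ into $X^{\Bg}$ and $X^S$, bound the $X^{\Bg}$ contribution to $M_A^{V,V}$ by $O(\gamma^2)$ via Claim~\ref{claim:marginB}/Property~\ref{V small property}, then pass from $M_A^{V,V}(X^S)$ to $M_A(X^S)$ to $M_A^{W,W}(X^S)\preceq M_A^{W,W}(X)$ and use $\lambda_1(M_A^{W,W}(X))=\lambda_{k+1}(M_A(X))$. The only cosmetic differences are that the paper chains the $\lambda_1$ inequalities directly rather than first invoking Weyl, and it bounds $\|yy^\top - y^{(V)}(y^{(V)})^\top\|_2$ by $2\|y-y^{(V)}\|_2$ instead of your explicit decomposition, yielding slightly different (but equivalent) constants.
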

\begin{proof}
We bound the largest eigenvalue of $M^{V,V}_{A}$
via the following sequence of inequalities.
\begin{align}
\lambda_1( M^{V,V}_{A} )
&\le \lambda_1( M^{V,V}_{A}(X \setminus X^\Bg) ) + 2\gamma^2 \label{deltamss1} \\
&\le \lambda_1( M_{A}(X \setminus X^\Bg) ) + 4 \gamma + 2\gamma^2    \label{deltamss2} \\
&\le \lambda_1( M^{W,W}_{A}(X \setminus X^\Bg) ) + 6 \gamma + 2\gamma^2  \label{deltamss3}  \\
&\le \lambda_1( M^{W,W}_{A}(X) ) + 6 \gamma + 2\gamma^2 \label{deltamss4} \\
&\le \lambda_{k+1}(M_A(X)) + 8 \gamma  \;. \label{deltamss5}
\end{align}
Inequality~\eqref{deltamss1} follows since
$\lambda_1( M^{V,V}_{A} ) \leq \lambda_1( M^{V,V}_{A}(X \setminus X^\Bg) )
+ \lambda_1( M^{V,V}_{A}(X^\Bg))$
and
$$\lambda_1( M^{V,V}_{A}(X^\Bg) ) = \| M^{V,V}_{A}(X^\Bg) \|_2
\leq \frac{1}{n} \sum_{x \in X^\Bg} \|f_A^{(V)}(x)\|^2_2
\leq 2 \gamma^2\;,$$
{where the last inequality follows from Claim~\ref{claim:marginB}} and since $\delta \le \gamma$.

\noindent
Inequality~\eqref{deltamss2} follows since
$\|M_{A}(X \setminus X^\Bg) - M^{V,V}_{A}(X \setminus X^\Bg)\|_2 \le 4\gamma$.
Indeed, note that
\begin{align*}
\|M_{A}(X \setminus X^\Bg) &- M^{V,V}_{A}(X \setminus X^\Bg)\|_2
\leq \frac{1}{n} \sum_{x \in X \setminus X^\Bg} \| f^{(V)}_A(x) f^{(V)}_A(x)^\top - f_A(x) f_A(x)^\top\|_2 \\
&\leq \frac{2}{n}  \sum_{x \in X \setminus X^\Bg} \| f^{(V)}_A(x) - f_A(x)\|_2
= \frac{2}{n}  \sum_{x \in X \setminus X^\Bg} \| f^{(V^\perp)}_A(x)\|_2
\leq 4  \gamma \;,
\end{align*}
{where the last inequality follows from Claim~\ref{claim:marginS}}.

\noindent
Inequality~\eqref{deltamss3} follows since
\begin{align*}
\| M^{W,W}_{A}(X \setminus X^\Bg) &- M_{A}(X \setminus X^\Bg) \|_2
\leq \frac{1}{n} \sum_{x \in X \setminus X^\Bg} \| f^{(W)}_A(x) f^{(W)}_A(x)^\top - f_A(x) f_A(x)^\top\|_2 \\
&\leq \frac{2}{n}  \sum_{x \in X \setminus X^\Bg} \| f^{(W)}_A(x) - f_A(x)\|_2
= \frac{2}{n}  \sum_{x \in X \setminus X^\Bg} \| f^{(W^\perp)}_A(x)\|_2
\leq 2 \gamma \;,
\end{align*}
{where the last inequality follows from the definition of $X \setminus X^\Bg$}.

\noindent
Inequality~\eqref{deltamss4} follows  since
$M^{W,W}_{A}(X) \succeq M^{W,W}_{A}(X \setminus X^\Bg).$

\noindent
Inequality~\eqref{deltamss5} follows since
$\lambda_1(M^{W,W}_{A} ) = \lambda_{k+1}(M_A(X))$ and
$\gamma^2 \le \gamma \le 1$.

This completes the proof of Lemma~\ref{lem:deltamss}.
\end{proof}

Note that together Lemmas~\ref{lem:deltambb} and~\ref{lem:deltamss} show that 
$\lambda_k(M_A^{V^\perp,V^\perp}(X))-\lambda_1(M_A^{V,V}(X))$
is nearly as large as $\lambda_k(M_A(X))-\lambda_{k+1}(M_A(X))\geq \eps/(2d^3).$

Next we bound the off-diagonal terms of the transformed vectors.

\begin{lemma}\label{lem:crossterm}
We have that $\|M^{V,V^\perp}_{C}\|^2_F \leq ((1+\alpha)^3\beta^3 + (1+\alpha) \beta \delta + 2\gamma)^2$.
\end{lemma}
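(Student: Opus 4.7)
The natural approach is to split $X = X^\Bg \sqcup (X \setminus X^\Bg)$, bound $\|M_C^{V,V^\perp}(X^\Bg)\|_F$ and $\|M_C^{V,V^\perp}(X \setminus X^\Bg)\|_F$ separately, and combine via the triangle inequality for the Frobenius norm. The three summands in the target bound should correspond, respectively, to (i) the ``renormalization error'' on $X^\Bg$, (ii) the residual contribution of the already small off-diagonal block $M_A^{V,V^\perp}(X^\Bg)$ magnified by the rescaling, and (iii) the contribution from $X \setminus X^\Bg$, whose $V^\perp$-projections are small by Claim~\ref{claim:marginS}.

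For each $x$, set $u = f_A^{(V)}(x)$ and $v = f_A^{(V^\perp)}(x)$, so that $\|u\|_2^2 + \|v\|_2^2 = 1$. Since $C = (I+\alpha I_V)A$ with $\alpha \geq 0$, a direct computation gives
\[
f_C^{(V)}(x)\, f_C^{(V^\perp)}(x)^\top \;=\; \frac{(1+\alpha)\, u v^\top}{(1+\alpha)^2 \|u\|_2^2 + \|v\|_2^2},
\]
and the denominator is always $\geq 1$. For $x \in X \setminus X^\Bg$ I will use that this implies $\|f_C^{(V^\perp)}(x)\|_2 \leq \|v\|_2 = \|f_A^{(V^\perp)}(x)\|_2$ and $\|f_C^{(V)}(x)\|_2 \leq 1$, so that
\[
\|M_C^{V,V^\perp}(X \setminus X^\Bg)\|_F \;\leq\; \tfrac{1}{n}\!\!\sum_{x\in X\setminus X^\Bg}\!\!\|f_A^{(V^\perp)}(x)\|_2 \;\leq\; \sqrt{\tfrac{1}{n}\!\!\sum_{x\in X\setminus X^\Bg}\!\!\|f_A^{(V^\perp)}(x)\|_2^2} \;\leq\; \sqrt{\gamma^2+\delta^2} \;\leq\; 2\gamma,
\]
using Cauchy--Schwarz and Claim~\ref{claim:marginS}, together with $\delta \leq \gamma$.

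For the $X^\Bg$ piece, the main idea is to compare $f_C^{(V)}(x) f_C^{(V^\perp)}(x)^\top$ to the scaled unrenormalized tensor $(1+\alpha)\, u v^\top$. A short algebraic manipulation using $\|u\|_2^2+\|v\|_2^2 = 1$ yields
\[
f_C^{(V)}(x)\, f_C^{(V^\perp)}(x)^\top - (1+\alpha)\, u v^\top \;=\; -(1+\alpha)\, u v^\top \cdot \frac{((1+\alpha)^2-1)\|u\|_2^2}{(1+\alpha)^2\|u\|_2^2 + \|v\|_2^2},
\]
whose Frobenius norm I will bound by $(1+\alpha)^3 \|u\|_2^3 \leq (1+\alpha)^3 \beta^3$, since $\|u\|_2 \leq \beta$ for $x \in X^\Bg$ (by definition of $\beta$) and the denominator is at least $1$. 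Averaging over $X^\Bg$ and applying the triangle inequality gives
\[
\|M_C^{V,V^\perp}(X^\Bg)\|_F \;\leq\; (1+\alpha)^3 \beta^3 \;+\; (1+\alpha)\,\|M_A^{V,V^\perp}(X^\Bg)\|_F \;\leq\; (1+\alpha)^3 \beta^3 + (1+\alpha)\beta\delta,
\]
where the last step invokes Property \ref{cross term property} of Proposition~\ref{case-2-prop}. Combining with the bound for $X \setminus X^\Bg$ and squaring yields the claim.

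The step I expect to require the most care is the bound $\|f_C^{(V)}(x) f_C^{(V^\perp)}(x)^\top - (1+\alpha) u v^\top\|_F \leq (1+\alpha)^3 \beta^3$: one must resist the temptation to pull out factors like $1/\|v\|_2^2$ in the denominator (which is not bounded below by anything useful when $v$ is small), and instead exploit the cancellation $1 - (1+\alpha)^2\|u\|_2^2 - \|v\|_2^2 = -((1+\alpha)^2-1)\|u\|_2^2$, which is why three powers of $\|u\|_2 \leq \beta$ emerge. Everything else is routine triangle-inequality and Cauchy--Schwarz bookkeeping.
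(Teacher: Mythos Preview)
Your proposal is correct and follows essentially the same approach as the paper: the same $X^\Bg / (X\setminus X^\Bg)$ split, the same comparison of $M_C^{V,V^\perp}(X^\Bg)$ to $(1+\alpha)\,M_A^{V,V^\perp}(X^\Bg)$ (bounded via Property~\ref{cross term property}), the same renormalization-error computation yielding the $(1+\alpha)^3\beta^3$ term, and the same use of Claim~\ref{claim:marginS} for the $X\setminus X^\Bg$ piece. If anything you are slightly more explicit than the paper (e.g.\ spelling out the Cauchy--Schwarz step for the $2\gamma$ bound and the algebraic identity producing the factor $((1+\alpha)^2-1)\|u\|_2^2$).
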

\begin{proof}
By the triangle inequality, we have that
$$\|M^{V,V^\perp}_{C}\|_F \leq \|M^{V,V^\perp}_{C}(X^\Bg)\|_F
+ \|M^{V,V^\perp}_{C}(X \setminus X^\Bg)\|_F \;.$$
We use this to bound the contributions from the vectors in $X^\Bg$ 
separately from the contributions from $X\setminus X^\Bg$.

For the contribution from $X^\Bg$, we note that if $x\in X^\Bg$ and $y=f_A(x)$, 
then one obtains $f_C(x)$ by multiplying the $V$-part of $y$ by $(1+\alpha)$ and rescaling slightly. 
Thus, the $V \setminus V^\perp$ component of $f_C(x)f_C(x)^\top$ is roughly 
$y^{(V)}(1+\alpha)y^{(V^\perp)T}.$ Summing over all $y\in f_A(X^\Bg)$ 
gives roughly $M_A^{V,V^\perp}(X^\Bg)$, which is small by assumption.
 
In particular, we have that
\begin{align*}
\| M^{V,V^\perp}_{C}(&X^\Bg) \|_F
\leq  \| M^{V,V^\perp}_{C}(X^\Bg) - (1+\alpha) M^{V,V^\perp}_{A}(X^\Bg) \|_F + (1+\alpha) \beta \delta\\
&= \left\| \frac 1 n \sum_{y \in f_A(X^\Bg)} \frac {(1+\alpha) y^{(V)} y^{(V^\perp)T}} {\|y^{(V^\perp)}+ (1+\alpha) y^{(V)}\|_2^2}
- (1+\alpha) \frac 1 n \sum_{y \in f_A(X^\Bg)} y^{(V)} y^{(V^\perp)T} \right\|_F +(1+\alpha) \beta \delta \\
&= \left\| \frac {(1+\alpha)} n \sum_{y \in f_A(X^\Bg)}
        \left( \frac 1 {\|y^{(V^\perp)}+ (1+\alpha) y^{(V)}\|_2^2} - 1 \right) y^{(V)} y^{(V^\perp)T} \right\|_F +(1+\alpha) \beta \delta \\
        &\le \frac {(1+\alpha)} n \sum_{y \in f_A(X^\Bg)}
        \left( 1 - \frac 1 {\|y^{(V^\perp)}+ (1+\alpha) y^{(V)}\|_2^2} \right) \|  y^{(V)} y^{(V^\perp)T} \|_F +(1+\alpha) \beta \delta \\
        &\le ( (1+\alpha) \beta )^3 +(1+\alpha) \beta \delta  \;.
    \end{align*}
The first inequality follows from Property \ref{cross term property}.
The second equality follows from the fact that
$f_{C}(x) = f_{I+\alpha I_V}(f_{A}(x))$, {by Fact~\ref{fact:transform-properties}\ref{fact:comp}},
and thus for $y = f_A(x)$ this is equal to
$\frac{y^{(V)}+ \alpha y^{(V^\perp)}}{\|y^{(V)}+ (1+\alpha) y^{(V^\perp)}\|_2}$.
The last inequality follows since
$\|  y^{(V)} {(y^{(V^\perp)})^\top} \|_F \le \beta$ and
$\|y^{(V)}+ (1+\alpha) y^{(V^\perp)}\|_2^2 \le 1+(1+\alpha)^2 \beta^2$.

To bound the contribution from $X\setminus X^\Bg$, 
we note that applying $f_{I+\alpha I_V}$ can only decrease 
the size of the $V^\perp$-component of a vector. 
Thus, for each $x\in X\setminus X^\Bg$, $f_C(x)$ 
will have a small $V^\perp$-component. In particular, we have
\begin{align*}
\| M^{V,V^\perp}_{C}({X\setminus} X^\Bg) \|_F
    &= \left\| \frac 1 n \sum_{y \in f_{C}(X \setminus X^\Bg)} y^{(V)} {(y^{(V^\perp)})^\top}  \right\|_F \\
    &\le \frac 1 n \sum_{y \in f_{C}(X \setminus X^\Bg)} \|  y^{(V)} {(y^{(V^\perp)})^\top}  \|_F \\
    &\le \frac 1 n \sum_{y \in f_{C}(X \setminus X^\Bg)} \| {(y^{(V^\perp)})^\top}  \|_2 \le 2\gamma \;,
\end{align*}
where the last inequality follows since for any
$x \in X \setminus X^\Bg$, it holds that
$\|f_{C}^{(V^\perp)}(x) \|_2 \le \|f_{A}^{(V^\perp)}(x)\|_2$,
{as follows from Fact~\ref{fact:transform-properties}\ref{fact:dom}}.

Combining this with the above proves our lemma.
\end{proof}

Finally, we need to bound $D_f$, showing that it is neither too big nor too small. 
This follows by noting that the greatest amount that any vector 
was modified is on the order of $\alpha\beta$. 
In particular, we have that:
\begin{lemma}\label{lem:massmove}
We have that
$\frac 1 n \left( \frac {(1+\alpha)^2\beta^2} {1 + (1+\alpha)^2\beta^2} - \gamma^2\right) \le D_f \le (1+\alpha)^2\beta^2 + 2 \gamma^2$, 
where
$$D_f = \frac 1 n \sum_{y \in f_{C}(X)} \|y^{(V)}\|_2^2 - \frac 1 n \sum_{y \in f_{A}(X)} \|y^{(V)}\|_2^2 \;.$$
\end{lemma}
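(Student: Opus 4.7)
The plan is to derive both bounds by splitting $X = X^\Bg \cup (X\setminus X^\Bg)$ and exploiting the closed-form of $f_C = f_{I+\alpha I_V}\circ f_A$ together with Fact~\ref{fact:transform-properties}. Writing $y=f_A(x)$ so that $\|y\|_2=1$ and applying Fact~\ref{fact:transform-properties}\ref{fact:comp} yields the explicit formula
\[
\|f_C^{(V)}(x)\|_2^2 \;=\; \frac{(1+\alpha)^2 t}{(1+\alpha)^2 t + 1 - t},\qquad t := \|f_A^{(V)}(x)\|_2^2,
\]
which in particular shows that each summand in the definition of $D_f$ is nonnegative (consistent with Fact~\ref{fact:transform-properties}\ref{fact:dom} applied to $B=I+\alpha I_V$). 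All subsequent estimates will amount to analyzing this one-variable function on the two regimes where $t$ is small (the points in $X^\Bg$) versus $t$ close to $1$ (the points in $X\setminus X^\Bg$).

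For the upper bound, I would treat each subset separately. On $X^\Bg$, the elementary inequality $\frac{a}{a+b}\le a$ for $a+b\ge 1$ (applied with $a=(1+\alpha)^2 t$, $b=1-t$) gives $\|f_C^{(V)}(x)\|_2^2 \le (1+\alpha)^2 \|f_A^{(V)}(x)\|_2^2 \le (1+\alpha)^2\beta^2$, so this contribution to $D_f$ is at most $\tfrac{|X^\Bg|}{n}(1+\alpha)^2\beta^2 \le (1+\alpha)^2\beta^2$. On $X\setminus X^\Bg$, I would instead use the Pythagorean identity $\|f_C^{(V)}(x)\|_2^2 - \|f_A^{(V)}(x)\|_2^2 = \|f_A^{(V^\perp)}(x)\|_2^2 - \|f_C^{(V^\perp)}(x)\|_2^2 \le \|f_A^{(V^\perp)}(x)\|_2^2$ (dropping the nonnegative second term), and then invoke Claim~\ref{claim:marginS} to bound the average of $\|f_A^{(V^\perp)}(x)\|_2^2$ over this set by $\gamma^2+\delta^2 \le 2\gamma^2$. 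Adding the two contributions yields $D_f \le (1+\alpha)^2\beta^2 + 2\gamma^2$.

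For the lower bound, since each summand is nonnegative I would pick $x^\ast \in X^\Bg$ realizing $\|f_A^{(V)}(x^\ast)\|_2 = \beta$ and keep only this single term, giving $D_f \ge \tfrac{1}{n}\bigl(\|f_C^{(V)}(x^\ast)\|_2^2 - \beta^2\bigr)$. Plugging $t=\beta^2$ into the explicit formula and dropping the $-\beta^2$ in the denominator (which only decreases the fraction) yields $\|f_C^{(V)}(x^\ast)\|_2^2 \ge \frac{(1+\alpha)^2\beta^2}{1+(1+\alpha)^2\beta^2}$. The remaining step is to replace the resulting $-\beta^2$ correction by $-\gamma^2$; this is where I would appeal to Claim~\ref{claim:marginB}, which controls $\beta$ by $\gamma$ (up to the allowed slack). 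The main obstacle is precisely this last step: the natural one-term estimate gives a $-\beta^2$ correction rather than $-\gamma^2$, so reconciling the two requires using the margin bound from Claim~\ref{claim:marginB} and exploiting the fact that in the parameter regime where the lemma is invoked (with $(1+\alpha)^2\beta^2 = \eps^2/(9d^4n^2)$ and $\gamma \le \eps^2/(d^4n^2)$) both corrections are negligible compared to the main term $\tfrac{(1+\alpha)^2\beta^2}{n(1+(1+\alpha)^2\beta^2)}$, so the bound is tight enough to feed into Lemma~\ref{lem:impr} and obtain the $\Omega(\eps^3/(d^7n^3))$ potential decrease claimed in Proposition~\ref{case-2-prop}.
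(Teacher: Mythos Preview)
Your argument is essentially the paper's own: the lower bound is obtained by retaining only the single point $x^\ast \in X^\Bg$ with $\|f_A^{(V)}(x^\ast)\|_2 = \beta$ and evaluating the closed form, while the upper bound splits $X=X^\Bg\cup(X\setminus X^\Bg)$, bounds each $X^\Bg$ summand by $(1+\alpha)^2\beta^2$, and controls the $X\setminus X^\Bg$ contribution through Claim~\ref{claim:marginS}. The $-\beta^2$ versus $-\gamma^2$ discrepancy you flag is handled no more carefully in the paper (it simply writes $-\gamma^2$ in place of $\|f_A^{(V)}(x^\ast)\|_2^2=\beta^2$); as you correctly observe, in the regime $(1+\alpha)\beta = \eps/(3d^2n)$ both corrections are lower order, so the bound feeds into Lemma~\ref{lem:impr} unchanged.
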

\begin{proof}
By the definition of $\beta$, there exists $x^{\ast} \in X^\Bg$ with $\|f_A^{(V)}(x^{\ast})\|_2 = \beta$.
Such a point satisfies
$$\|f^{(V)}_{C}(x^{\ast})\|_2^2 = \| f^{(V)}_{I+\alpha I_V}(f_A(x^{\ast}))\|_2^2 = \frac {(1+\alpha)^2 \|f_A^{(V)}(x^{\ast})\|_2^2} {\|f_A^{(V^\perp)}(x^{\ast})\|_2^2 + (1+\alpha)^2\|f_A^{(V)}(x^{\ast})\|_2^2} = \frac {(1+\alpha)^2\beta^2} {1-\beta^2 + (1+\alpha)^2\beta^2}.$$
This point will contribute at least 
$$\frac 1 n \left(\|f^{(V)}_{C}(x^{\ast})\|_2^2 - \|f^{(V)}_{A}(x^{\ast})\|_2^2 \right) 
\geq \frac 1 n \left( \frac {(1+\alpha)^2\beta^2} {1 + (1+\alpha)^2\beta^2} - \gamma^2\right) $$
to $D_f$.

Moreover, for any other $x \in X^\Bg$, $\|f^{(V)}_{C}(x)\|_2^2 \le \|f^{(V)}_{C}(x^{\ast})\|_2^2$.

Thus, for points $x \in X^\Bg$, we have that
$$
0 \leq \|f_{C}^{(V)}(x)\|_2^2 - \|f_{A}^{(V)}(x)\|_2^2 \leq  {(1+\alpha)^2\beta^2} \;.$$
Finally, for all points $x \in X \setminus X^\Bg$, we have that
$0\leq \|f_{C}^{(V)}(x)\|_2^2 - \|y_{A}^{(V)}(x)\|_2^2 \le 1 - (1-2\gamma^2) \le 2\gamma^2$.
This implies the required bounds.
\end{proof}

Combining the Lemmas~\ref{lem:deltambb}, \ref{lem:deltamss}, \ref{lem:crossterm}, \ref{lem:massmove} with Lemma~\ref{lem:impr} and setting $\eta=(1+\alpha)\beta = \eps/(3d^2 n)$, we get that
\begin{align*}
&\Phi_X(A) - \Phi_X(C) \geq\\
&\ge
    2 \left(\lambda_k(M_A(X)) - 4\gamma - \lambda_{k+1}(M_A(X)) - 8 \gamma - 2\eta^2 - 4 \gamma^2 \right)
    \frac 1 n \left(\frac {\eta^2} {1 + \eta^2} - \gamma^2 \right) - 2(\eta^3 + \eta\delta + 2\gamma)^2 \\
&\ge
    \left( \frac \eps {2d^3} - 16 \gamma -2 \eta^2 \right) \left(
        \frac{\eta^2 - 2\gamma^2}{n} \right) -
        2(\eta^3 +\eta\delta + 2\gamma)^2 \;\\
& \geq \left( \frac \eps {2d^3} - 16 \gamma - 2\eta^2 \right) \left( \frac{\eta^2- 2\gamma^2}{n} \right) - 6\eta^6 - 6(\eta \delta)^2 - 24\gamma^2.
\end{align*}
Given that both $\gamma$ and $\eta^2$ are less than a sufficiently small multiple of $\frac{\eps}{d^3}$, the above is at least
$$
\frac{\eps\eta^2}{3d^3 n} - 6\eta^6 - 6(\eta \delta)^2 - 24\gamma^2.
$$
Given that $\delta$ is less than a sufficiently small multiple of $\frac{\eps}{dn}$, and $\gamma$ a small multiple of $\eps^2/(d^4n^2)$, this is
$$
\Omega( \eps \eta^2/(d^3n )) = \Omega(\eps^3/(d^7n^3)).
$$

\paragraph{The case of $\beta=0$:}
We now argue that in the case that $\beta = 0$,
no Forster transform exists, as
the algorithm correctly identifies a subspace
of dimension $k$ containing more than a $k/d$
fraction of the points of $X$.

Since we have that $\lambda_k(M_A(X)) - \lambda_{k+1}(M_A(X)) \ge \eps /(2d^3)$ by assumption,
either $\lambda_k(M_A(X)) \ge \frac 1 d + \frac {\eps} {4 d^3}$
or $\lambda_{k+1}(M_A(X)) \le \frac{1}{d} - \frac {\eps} {4 d^3}$.
The algorithm returns the subspace $V^\perp$ of dimension $k$,
which contains all points $X^\Bg$ as $\beta = 0$. 
We claim that $|X^\Bg|/n > k/d$, which would complete our analysis. 
This is essentially because the large eigenvalues of $M_A(X)$ 
on $V^\perp$ imply that $X^\Bg$ must have many points.

We consider two subcases below.
\begin{itemize}[leftmargin=*]
    \item[] Case 1: $\lambda_k(M_A(X)) \ge \frac 1 d + \frac {\eps} {4 d^3}$.
    In this case, we have that
    \begin{align*}
    \frac {|X^\Bg|} {n} &= \tr( M_A(X^\Bg) ) \geq  k \, \lambda_k ( M_A(X^\Bg) ) \\
    & \geq k\lambda_k( M_A^{V^\perp,V^\perp}(X^\Bg) )- k\delta\\
    & \geq k\lambda_k( M_A^{V^\perp,V^\perp}(X) )- k\delta-2k\gamma^2\\
    & \geq k\lambda_k(M_A(X))-7k\gamma\\
    & \geq k/d +k( \eps/(4d^3) - 7\gamma) > k/d \;,
    \end{align*}
where the second line above follows from Property \ref{VT eigenvalue property}, 
the third line from Claim \ref{claim:marginS}. 
The fourth line follows from Lemma \ref{lem:deltambb}, and the rest from $\eps/d^3 \gg \gamma > \delta$.

    \item[]Case 2: $\lambda_{k+1}(M_A(X)) \le \frac{1}{d} - \frac {\eps} {4 d^3}$.
    In this case,  we have that
    $$
    \frac{|X^S|}{n} = \tr(M_A(X^S)) = \tr(M_A^{V,V}(X^S)) + \tr(M_A^{V^\perp,V^\perp}(X^S)).
    $$
    By Claim \ref{claim:marginS} we have that $\tr(M_A^{V^\perp,V^\perp}(X^S)) \leq 2\gamma^2$. On the other hand since $\beta=0$, all elements of $X^\Bg$ are orthogonal to $V$ and thus
    $$
    \tr(M_A^{V,V}(X^S)) = \tr(M_A^{V,V}(X)).
    $$
    This is at most $(k-d)\lambda_1(M_A^{V,V}(A))$, 
    which by Lemma \ref{lem:deltamss} is at most $(k-d)\lambda_{k+1}(M_A(X))+8d\gamma$. 
    Combining with the above, we get that
    $$\frac{|X^S|}{n} \leq (k-d)\lambda_{k+1}(M_A(X))+10d\gamma \leq (k-d)/d - (k-d)\eps/(4d^3) + 10d\gamma < (k-d)/d \;.$$
    Hence in this case as well $|X^\Bg|/n = 1-|X^S|/n > k/d.$
\end{itemize}
This completes the proof of Proposition \ref{case-2-prop}.

{This completes the proof of Proposition~\ref{prop:main-iteration}. \qed}


\section{Approximate Eigendecomposition in Strongly Polynomial time} \label{sec:eigen}

In this section, we give a simple algorithm that computes an approximate eigendecomposition
with multiplicative error guarantees in strongly polynomial time.

\begin{proposition}\label{eigendecomposition proposition} \label{prop:svd}
Given a $d \times d$ PSD matrix $M$, an accuracy parameter $\eps > 0$ 
and a failure probability $\delta >0$, there is an algorithm that computes 
orthogonal vectors $q_1, \ldots, q_d$ and scalars $a_i$ 
such that the matrix $\hat M = \sum_{i=1}^d a_i q_i q_i^\top$ satisfies
the following: for all $v \in \R^d$, it holds that
$$|v^\top (M-\hat M) v| \le \eps \,  (v^\top M v) \;.$$
The algorithm performs $\poly(d/\eps,\log(1/\delta ))$ arithmetic operations 
on $\poly(d/\eps,\log(1/\delta),b)$-bit numbers, 
where $b$ is the bit complexity of the entries of $M$.
\end{proposition}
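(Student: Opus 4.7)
The plan is to construct the approximation $\hat M$ via a variant of the classical subspace/power iteration method with deflation. At iteration $k$, I would draw a Haar-random unit vector $v_k$ in the orthogonal complement of the previously produced directions $q_1, \ldots, q_{k-1}$, apply $M^T$ for a power $T = \poly(d/\eps, \log(1/\delta))$, re-orthogonalize (Gram--Schmidt) against $q_1, \ldots, q_{k-1}$ to kill numerical drift, and normalize to obtain $q_k$. After $d$ such iterations, I would set $a_k = q_k^\top M q_k$ and output $\hat M = \sum_{k=1}^d a_k q_k q_k^\top$. The total number of arithmetic operations is $\poly(d, T) = \poly(d/\eps, \log(1/\delta))$.

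The main analytic step is to show that the $q_k$'s align with the (inaccessible) true eigenspaces of $M$ in a strong, multiplicative sense. Writing $M = \sum_i \lambda_i u_i u_i^\top$ with $\lambda_1 \geq \cdots \geq \lambda_d \geq 0$, I would establish that, with probability at least $1-\delta$ over the initializations, every $q_k$ has exponentially small component along any true eigenvector $u_l$ with $\lambda_l \ll a_k$: quantitatively, $|u_l^\top q_k| \leq (\lambda_l / a_k)^{\Omega(T)} \cdot \poly(d/\delta)$, where the $\poly(d/\delta)$ factor comes from the anti-concentration bound on the Haar projection of $v_k$ onto the relevant top subspace. This refined guarantee, which is stronger than what standard power-iteration analyses deliver, implies that in the basis $\{q_i\}$ the off-diagonal entries of $M$ obey $|q_i^\top M q_j| \leq (\eps/d^2)\sqrt{a_i a_j}$ for all $i \neq j$. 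The conclusion then follows by writing
$$|v^\top (M - \hat M) v| = \Bigl|\sum_{i \neq j} (q_i^\top v)(q_j^\top v)\, q_i^\top M q_j\Bigr| \leq \frac{\eps}{d^2} \sum_{i \neq j} |q_i^\top v|\,|q_j^\top v|\sqrt{a_i a_j},$$
which by Cauchy--Schwarz is at most $\eps \sum_i a_i (q_i^\top v)^2 = \eps\, v^\top \hat M v$, from which the desired multiplicative bound against $v^\top M v$ follows after absorbing a $(1 \pm \eps)$ factor.

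The main obstacle will be controlling bit complexity to stay strongly polynomial. Iterating $M$ naively $T$ times grows bit length multiplicatively, which is not acceptable. My plan is to round the current iterate to $\poly(d/\eps, b, \log(1/\delta))$ bits of precision after every multiply-and-normalize step, making the bit growth additive in $T$ instead of multiplicative. The required technical lemma is that a per-step rounding error of magnitude $2^{-\poly(d/\eps)}$, amplified through $T$ power iterations of $M$, stays well below the signal that the Haar-random initialization contributes to the current top eigenspace (which is $\Omega(1/\poly(d/\delta))$ with probability $1 - \delta$); thus the rounding does not spoil the convergence rate. Randomness itself is cheap: $\poly(d, \log(1/\delta))$ random bits suffice for each $v_k$, preserving strong polynomiality.

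A subtle point that must be dispatched is the case of (near-)degenerate eigenvalues, where power iteration genuinely cannot separate individual eigenvectors inside a cluster. Here the multiplicative guarantee is forgiving: if $a_i \approx a_j$ and the corresponding $q_i, q_j$ essentially lie in a common eigenspace of $M$, then $q_i^\top M q_j$ is automatically small because $M$ acts almost as a scalar on that subspace and scalars are diagonal in every orthonormal basis. I will therefore group the produced directions into \emph{levels} by their Rayleigh quotients and argue the off-diagonal bound separately within each level (by near-scalarity of $M$ on the level's span) and across levels (by the exponential decay from the strong power-iteration alignment statement above). Combining these two regimes will yield the uniform $|q_i^\top M q_j| \leq (\eps/d^2)\sqrt{a_i a_j}$ bound, completing the proof.
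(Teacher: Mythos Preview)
Your plan matches the paper's: power iteration followed by Gram--Schmidt, with a block/level analysis to handle near-degenerate eigenvalues and deliver the multiplicative guarantee. Two implementation differences are worth flagging.

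First, your bit-complexity worry rests on a false premise. With $A$ integer-valued (the paper samples entries uniformly from $\{1,\dots,N\}$, not Haar) and $M$ having $b$-bit integer entries, the exact product $M^t A$ has $O(t(b+\log d))$-bit entries---growth is linear in $t$, not multiplicative. The paper therefore computes $M^t A$ exactly and applies Gram--Schmidt once at the end, never normalizing: the $q_i$ are left unnormalized and $a_i := q_i^\top M q_i/(q_i\cdot q_i)$ compensates, so no square roots and hence no rounding are needed. Your per-step normalize-and-round loop may be made to work, but it is unnecessary, and the error analysis you sketch (rounding amplified through $T$ steps) is genuinely delicate precisely in the multiplicative-error regime where tiny-eigenvalue directions must be preserved to high relative accuracy.

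Second, your alignment statement is only one-sided ($q_k$ has small mass on $u_\ell$ with $\lambda_\ell\ll a_k$); to bound $|q_i^\top M q_j|$ you also need the reverse direction, which in your sequential-deflation setup must be extracted inductively from orthogonality to the only-approximate earlier $q_1,\dots,q_{k-1}$. The paper sidesteps that induction by a change of basis: it replaces the random columns $\alpha_i$ by $\beta_i\in\spn(\alpha_1,\dots,\alpha_i)$ chosen \emph{exactly} orthogonal to the true $v_1,\dots,v_{i-1}$ (this leaves the Gram--Schmidt output unchanged) and analyzes $\gamma_i=M^t\beta_i/\lambda_i^t$, which has zero component on the top $i-1$ eigenvectors by construction. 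The two-sided bound and the final estimate $|v_i^\top(M-\hat M)v_j|\le(\eps/d^2)\sqrt{\lambda_i\lambda_j}$ are then proved directly in the true eigenbasis rather than the computed one.
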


\begin{proof}
We assume throughout that $d$ is sufficiently large and $\eps$ sufficiently small.

Our algorithm is based on the power method. 
In more detail, by taking a large power of $M$ times a random vector, 
we can obtain an approximation of the principal eigenvalue. 
Taking a large power of $M$ times another random vector and projecting 
onto the orthogonal complement of the first, gives us an approximation 
to the second eigenvector. Repeating this process, we obtain 
an approximation to the full eigendecomposition. Unfortunately, 
we cannot quite hope to learn the eigenvectors themselves in general. 
Specifically, in the case where some of the eigenvalues are close, 
we would require very large powers of $M$ in order to distinguish them. 
However, in this case, it is not necessary to learn the eigenvalues exactly 
in order to obtain a good approximation. 

It is well-known that the aforementioned standard approach 
can be used to get an approximate eigendecomposition 
such that $\|M-\hat M\|_2 \leq \eps \|M\|_2$. 
Unfortunately, this guarantee is weaker than the result that we require. 
For our application, we require that if $M$ has a large eigenvalue gap somewhere, 
then $\hat M$ very precisely finds this gap. Fortunately, if there is a large eigenvalue gap, 
this makes the power method that much stronger. Indeed, multiplying by a suitable power of $M$
will cause the components of the small eigenvector 
to shrink by an amount proportional to a power of the gap size.

Our algorithm is presented in pseudocode below.

\begin{algorithm}[hbt!]
   \caption{Computing the approximate eigendecomposition of a matrix $M$}
   \label{alg:eigen}
\begin{algorithmic}[1]
    \Function{$\textsc{EigenDecomposition}$}{Matrix $M_{d\times d}$, accuracy parameter $\eps$, error probability $\delta$}
    \State Let $A$ be a random $d\times d$ matrix where the entries are i.i.d.\ uniform samples 
    from $\{1,2,\ldots,N\}$, for $N$ at least a sufficiently large constant multiple of $d/\delta$.
    \State Let $w_1,w_2,\ldots,w_d$ be the column vectors of $M^t A$, 
    for $t$ a sufficiently large constant multiple of $d^6/\eps^2 \log(d/\delta)$.
    \For{$i=1$ to $d$}
         \State Let $q_i$ be the projection of $w_i$ onto the orthogonal complement of $w_1,w_2,\ldots,w_{i-1}$.
         \State Let $a_i=0$ if $q_i=0$ and $a_i = q_i^\top M q_i/(q_i \cdot q_i)$ otherwise.
    \EndFor
    \State \Return $\{a_i,q_i\}$
    \EndFunction
  \end{algorithmic}
\end{algorithm}

It is easy to see that this algorithm runs in the appropriate time and bit-complexity bounds. 
The difficulty is in showing that the resulting $\hat M$ satisfies the desired error bounds. 
We begin by giving this analysis under the assumption that $M$ is non-singular.

We start by noting that if we take $q_i'$ to be the normalization of $q_i$ 
and take $a_i'$ to be $(q_i')^\top M (q_i')$, 
then we have that $a_i' (q_i')(q_i')^\top = a_i q_i q_i^\top$, 
leading to the same matrix $\hat M$. 
(Note that we cannot use $a_i'$ and $q_i'$ in our algorithm 
only because normalizing $q_i$ requires taking square roots; 
an operation that is not efficiently implementable in our model). 
We note that the $q_i'$ are obtained from $w_i$ by applying Gram-Schmidt. 
From here on, we will consider the equivalent algorithm, 
where the $q_i$ are obtained from the $w_i$ by applying Gram-Schmidt.

Let the eigendecomposition of $M$ be given by $M=\sum_{i=1}^d \lambda_i v_i v_i^\top$, 
where the $v_i$ are an orthonormal basis and 
where $\lambda_1\geq \lambda_2\geq \ldots \geq \lambda_d \geq 0$. 
Let $\eta = \eps^2/d^3$. 

We say that two consecutive eigenvectors $v_i$ and $v_{i+1}$ are in the same block 
if $\lambda_{i+1} \geq \lambda_i/(1+\eta)$. We say that $\lambda_i$ and $\lambda_j$ 
are in the same block for $i\leq j$, 
if $\lambda_k$ and $\lambda_{k+1}$ are in the same block for all $i\leq k < j$. 
Note that if $\lambda_i$ and $\lambda_j$ are in the same block, 
their ratio is at most $(1+\eta)^d$; 
and that if they are in different blocks, their ratio is at least $(1+\eta)$.

Let $\alpha_1,\ldots,\alpha_d$ be the columns of $A$. 
Let $\beta_i$ be the unique vector in $\spn(\alpha_1,\alpha_2,\ldots,\alpha_i)$ 
such that $v_j \cdot \beta_i = 0$ if $j<i$, 
and $v_i\cdot \beta_i = 1$. We note that if $B$ is the matrix with columns $\beta_i$, 
applying Gram-Schmidt to the columns of $M^t B$ yields the same result 
as applying it to the columns of $M^t A$. We will need the following claim:

\begin{claim}\label{non singularity claim}
With probability at least $1-\delta$ over the choice of $A$, 
we have that for all $1\leq i \leq d$ it holds that 
$\|\beta_i\|_2 \leq O(d^2/\delta)^i$.
\end{claim}
\begin{proof}
We note that $\beta_m = \sum_{i=1}^m t_i \alpha_i$, 
so that $v_j \cdot \left(\sum_{i=1}^m t_i \alpha_i \right) = \sum_{i=1}^m t_i (v_j\cdot \alpha_i)$ 
is equal to $0$ for $j<m$, and equal to $1$ for $j=m$. 
In particular, the $t_i$ form the unique vector such that 
$D_m [t_1,t_2,\ldots,t_m]^\top = [0,0,\ldots,0,1]^\top$, 
where $D_m$ is the $m\times m$ matrix defined by 
$(D_m)_{j,i}=[v_j \cdot \alpha_i]_{1\leq i,j\leq m}$. 
Using Cramer's rule, we find that each $t_i$ is a sub-determinant of $D_m$ 
divided by $\det(D_m)$. Since the sub-determinant has absolute value 
at most the product of the norms of its rows or 
$O(N\sqrt{m})^{m-1} = O(N\sqrt{d})^m$, 
it suffices to show that $|\det(D_m)| = \Omega(N\delta/d^{3/2})^m$.

In particular, we will prove that with probability at least $1-\delta$ over the choice of $A$ 
we have that $|\det(D_k)| = \Omega(N \delta/(d^{3/2}))^k$ for all $k$. 
In fact we show that conditioning on the values of $\alpha_1,\alpha_2,\ldots,\alpha_{m-1}$, 
the probability that $|\det(D_m)| > \Omega(N\delta/d^{3/2})|\det(D_{m-1})|$ 
is at least $1-\delta/d$. If this holds for all $m$, our result will follow.

To show this, we note that conditioning on $\alpha_1,\alpha_2,\ldots,\alpha_{m-1}$ 
fixes all but the last column of $D_m$, which is linear in $\alpha_m$. 
The determinant of $D_m$ equals a sum over the last column of the relevant entry 
times an appropriate sub-determinant. In particular, 
$\det(D_m) = \sum_{i=1}^m C_i (v_i \cdot \alpha_m) = u \cdot \alpha_m$, 
where $v = \sum_{i=1}^m C_i v_i$ and $C_i$ are the appropriate sub-determinants. 
Note that $v_m\cdot u = C_m = \det(D_{m-1})$. Therefore, $|u| \geq |\det(D_{m-1})|$. 
This implies that $u$ must have an entry of size at least $|\det(D_{m-1})|/\sqrt{d}$. 
Fixing all other entries of $\alpha_m$, we note that there is a probability 
of at least $1-(\delta/d)$ that $|u\cdot \alpha_m| \gg N \delta/d^{3/2}$. 
If this holds, then $|\det(D_m)| \geq \Omega(N\delta/d^{3/2}) |\det(D_{m-1})|$, as desired.
This completes our proof.
\end{proof}

We will show that so long as the conclusion of Claim \ref{non singularity claim} holds, 
our algorithm will produce an appropriate error guarantee.

We begin by defining
$$
\gamma_i := M^t \beta_i / \lambda_i^t \;,
$$
where $\gamma_i=0$ if $\lambda_i=0$. 
Note that in this case $\lambda_j$ will be $0$ for all $j>i$, 
and thus $M^t \beta_i = 0$. 
Applying Gram-Schmidt to $M^t \beta_i$ or to $\gamma_i$ gives the same result, 
and thus $q_i$ can be thought of as the result of applying Gram-Schmidt to the $\gamma_i$. 
Furthermore, we note that
$$
v_j \cdot \gamma_i = 
\begin{cases}
0 &\textrm{, if } j < i \\ 
1 &\textrm{, if } j = i \\ 
O(d^2/\delta)^d (\lambda_j/\lambda_i)^t & \textrm{, otherwise}. 
\end{cases}
$$
This implies that if $\lambda_i$ and $\lambda_j$ are in different blocks, then
$$
|v_j \cdot \gamma_i| \leq s \, \min\left(\frac{\lambda_j}{\lambda_i},\frac{\lambda_i}{\lambda_j} \right) \;,
$$
where
$$
s = (d/\delta)^{-3d^3} \;.
$$

We require the following claim.

\begin{claim}\label{off block entry claim}
Letting $q_i$ be obtained from $\gamma_i$ by Gram-Schmidt, 
we have that if $1\leq m \leq d$ and if $\lambda_i$ and $\lambda_m$ are in different blocks, then
$$
|v_i \cdot q_m| \leq s \, \min\left(\frac{\lambda_m}{\lambda_i},\frac{\lambda_i}{\lambda_m} \right) M^m \;,
$$
where $M$ is $(C d^2/\delta)^{d+d^2}$, for a sufficiently large universal constant $C>0$.
\end{claim}
\begin{proof}
We proceed by induction on $m$. 
The case of $m=1$ follows immediately from the above observation 
and the fact that $q_1 = \gamma_1/\|\gamma_1\|_2$ 
and that $\|\gamma_1\|_2 \geq v_1 \cdot \gamma_1 = 1$.

For the inductive step, we begin by assuming that our claim is true 
for all smaller values of $m$. We note that $q_m = r_m/\|r_m\|_2$, where
$$
r_m := \gamma_m - \sum_{j=1}^{m-1} (\gamma_m \cdot q_j) q_j \;.
$$
We note that if $m > j$ and $\lambda_m$ and $\lambda_j$ 
are in different blocks, then
\begin{align*}
|\gamma_m \cdot q_j| & \leq \sum_{k=1}^d |v_k \cdot \gamma_m| |v_k \cdot q_j|\\
& \leq d \, s \, \max_k \min\left(\frac{\lambda_m}{\lambda_k},\frac{\lambda_k}{\lambda_m} \right) 
\min\left(\frac{\lambda_j}{\lambda_k},\frac{\lambda_k}{\lambda_j} \right) \, O(d^2/\delta)^d M^{m-1}\\
& \leq s \, O(d^2/\delta)^d \min\left(\frac{\lambda_m}{\lambda_j},\frac{\lambda_j}{\lambda_m} \right) M^{m-1} \;,
\end{align*}
where the second line above follows from the inductive hypothesis, 
the fact that $\lambda_k$ cannot be in the same block 
as both $\lambda_m$ and $\lambda_j$, 
and the fact that $\|\gamma_m\|_2 \leq \|\beta_m\|_2 = O(d^2/\delta)^d$.

From this we conclude that if $\lambda_m$ and $\lambda_k$ are in different blocks, then
\begin{align*}
|r_m \cdot v_k| & \leq |\gamma_m\cdot v_k| + \sum_{j=1}^{m-1} |q_j \cdot v_k| |q_j \cdot \gamma_m| \;.
\end{align*}
We know that
$$
|\gamma_m\cdot v_k| \leq s \min\left(\frac{\lambda_m}{\lambda_k},\frac{\lambda_k}{\lambda_m} \right) \;.
$$
For $\lambda_j$ in the same block as $\lambda_m$, 
we have that $|q_j \cdot v_k| |q_j \cdot \gamma_m|$ is at most
$$
|\gamma_m| |q_j \cdot v_k| \leq 
s \, O(d^2/\delta)^d \min\left(\frac{\lambda_m}{\lambda_k},\frac{\lambda_k}{\lambda_m} \right) M^{m-1} \;.
$$
For $\lambda_j$ and $\lambda_m$ in different blocks, 
using the above bound on $|\gamma_m \cdot q_j|$, it is at most
$$
s \, O(d^2/\delta)^d \min\left(\frac{\lambda_m}{\lambda_j},\frac{\lambda_j}{\lambda_m} \right) 
\min\left(\frac{\lambda_k}{\lambda_j},\frac{\lambda_j}{\lambda_k} \right) M^{m-1} \;,
$$
which means that
$$
|q_j \cdot v_k| |q_j \cdot \gamma_m| \leq 
s \, O(d^2/\delta)^d  \, \min\left(\frac{\lambda_m}{\lambda_k},\frac{\lambda_k}{\lambda_m} \right) M^{m-1} \;.
$$
Summing over $j$, we find that for $\lambda_m$ and $\lambda_k$ in different blocks, 
we have that
\begin{equation}\label{r bound equation}
|r_m \cdot v_k| \leq s \, O(d^2/\delta)^d \min\left(\frac{\lambda_m}{\lambda_k},\frac{\lambda_k}{\lambda_m} \right) M^{m-1} \;.
\end{equation}
We now just need to show that $\|r_m\|_2$ is not too small. 
To achieve this, we will show that $q_m$ has a reasonably large projection 
onto the space orthogonal to $q_1,\ldots,q_{m-1}$. 
To show this, let $\ell$ be the smallest number such that 
$\lambda_\ell$ and $\lambda_m$ are in the same block. 
For $\ell \leq i \leq m$, let $r_i'$ denote the projection of $\gamma_i$ 
onto the space orthogonal to $q_1,q_2,\ldots,q_{\ell-1}$. Namely, we define
$$
r_i' := \gamma_i - \sum_{j=1}^{\ell-1} (\gamma_i \cdot q_j) q_j \;.
$$
We note that since each 
$|\gamma_i \cdot q_j| \leq s \, O(d^2/\delta)^d M^d$, 
we have that $\|r_i' -\gamma_i\|_2 \leq s \, O(d^2/\delta)^d M^d$. 
We note that $r_m$ is the component of $r_m'$ orthogonal to 
$r'_\ell,\ldots, r'_{m-1}$. This equals the ratio of the volumes 
of the parallelepiped with sides $r'_\ell,\ldots, r'_{m}$ 
to the volume of the one with sides $r'_\ell,\ldots, r'_{m-1}$. 
The latter volume is at most 
$\|r'_\ell\|_2 \, \|r'_{\ell+1}\|_2 \cdots \|r'_{m-1}\|_2 \leq O(d^2/\delta)^{d^2}.$ 
We can bound the former from below by the determinant of the matrix 
with entries $v_i\cdot r'_j$, for $\ell \leq i,j \leq m$. However, we note that
$$
|v_i \cdot r_j' - v_i\cdot r_j| \leq \|r_j - r_j'\|_2 \leq s \, O(d^2/\delta)^d M^d \;.
$$
On the other hand, the matrix with entries $v_i\cdot r_j$ 
is a lower diagonal matrix with $1$'s on the diagonal 
and entries of size at most $O(d^2/\delta)^d$. 
This matrix has determinant $1$, 
and the difference between its determinant and that of the matrix 
with entries $v_i\cdot r_j'$ is at most 
$O(d^2/\delta)^{d^2} \, s \, M^d \leq 1/2$. 
Thus, the matrix with entries $v_i \cdot r_j'$ has determinant at least $1/2$.
This implies that $\|r_m\|_{2} \geq O(d^2/\delta)^{-d^2}$. 
Combining this with Equation~\eqref{r bound equation} yields
$$
|q_m \cdot v_k| \leq s \, O(d^2/\delta)^{d+d^2} 
\min\left(\frac{\lambda_m}{\lambda_k},\frac{\lambda_k}{\lambda_m} \right) M^{m-1} 
\leq s \, \min\left(\frac{\lambda_m}{\lambda_k},\frac{\lambda_k}{\lambda_m} \right) M^m \;,
$$
whenever $\lambda_m$ and $\lambda_k$ are in different blocks. 
This completes our inductive step.
\end{proof}

\noindent Claim \ref{off block entry claim} implies that if $\lambda_i$ and $\lambda_j$ are in different blocks 
we have that
$$
|v_i \cdot q_j| \leq \min\left(\frac{\lambda_m}{\lambda_i},\frac{\lambda_i}{\lambda_m} \right)(\eps/d)^3 \;.
$$
We now will try to understand the size of the $a_i$'s. 
We have the following sequence of (in)equalities:
\begin{align*}
a_i & = q_i^\top M q_i \\
& = \sum_{j=1}^d \lambda_j |v_j \cdot q_i|^2\\
& = \sum_{\lambda_j\textrm{ in same block as }\lambda_i} \lambda_i(1+O(d\eta))|v_j \cdot q_i|^2 + O\left(\sum_{\lambda_j\textrm{ in different block from }\lambda_i}(\eps/d)^6 \lambda_j (\lambda_i/\lambda_j)\right)\\
& = \lambda_i O(d\eta+\eps^6/d^6) + \lambda_i\sum_{\lambda_j\textrm{ in same block as }\lambda_i} |v_j \cdot q_i|^2\\
& = \lambda_i O(d\eta+\eps^6/d^6) + \lambda_i - \lambda_i\sum_{\lambda_j\textrm{ not in same block as }\lambda_i} |v_j \cdot q_i|^2\\
& = \lambda_i ( 1 + O(\eps^2/d^2)) \;,
\end{align*}
where we used the fact that the $v_j$'s form an orthonormal basis, 
and therefore $1=\|q_i\|_2^2 = \sum_j |v_j \cdot q_i|^2$.

Our result will now follow from the proceeding claim:
\begin{claim}
For any $1\leq i,j\leq d$, we have that
$
|v_i^\top (M - \hat M) v_j| < (\eps/d^2) \sqrt{\lambda_i \lambda_j} \;.
$
\end{claim}
\begin{proof}
We begin with the case where $\lambda_i$ and $\lambda_j$ are not in the same block. 
We have that $v_i^\top M v_j = 0$ and that
$$
|v_i^\top \hat M v_j| \leq \sum_{k=1}^d a_k |v_i\cdot q_k||v_j \cdot q_k| 
= \sum_{k=1}^d O(\lambda_k)(\eps/d)^3 
\min\left(\frac{\lambda_k}{\lambda_i},\frac{\lambda_i}{\lambda_k} \right) 
\min\left(\frac{\lambda_k}{\lambda_j},\frac{\lambda_j}{\lambda_k} \right) \;.
$$
This in turn is at most
$$
\sum_{k=1}^d O(\lambda_k)(\eps/d)^3 (\sqrt{\lambda_i/\lambda_k})(\sqrt{\lambda_j/\lambda_k}) 
< (\eps/d^2)\sqrt{\lambda_i \lambda_j} \;.
$$
If $\lambda_i$ and $\lambda_j$ are in the same block, then we have that
$$
v_i^\top \hat M v_j = \sum_{k=1}^d a_k (v_i\cdot q_k)(v_j \cdot q_k) \;.
$$
The contribution from $\lambda_k$ not in the same block 
is once again $O(\lambda_i \eps/d^2)$. 
The contribution from $\lambda_k$ in the same block 
can be bounded above as follows:
\begin{align*}
\sum_{\lambda_k\textrm{ in the same block as }\lambda_i} a_k (v_i\cdot q_k)(v_j \cdot q_k) 
&= \sum_{\lambda_k\textrm{ in the same block as }\lambda_i} \lambda_i(1+O(\eps^2/d^2)) (v_i\cdot q_k)(v_j \cdot q_k)\\
& = O(\lambda_i \eps^2/d^2) + \lambda_i \sum_{\lambda_k\textrm{ in the same block as }\lambda_i}(v_i\cdot q_k)(v_j \cdot q_k)\\
& = O(\lambda_i \eps^2/d^2) + \lambda_i (v_i \cdot v_j) - \lambda_i \sum_{\lambda_k\textrm{ not in the same block as }\lambda_i}(v_i\cdot q_k)(v_j \cdot q_k)\\
& = O(\sqrt{\lambda_i\lambda_j} \eps^2/d^2) + \lambda_i \delta_{i,j}\\
& = O(\sqrt{\lambda_i\lambda_j} \eps^2/d^2) + v_i^\top M v_j \;.
\end{align*}
This completes the proof of the claim.
\end{proof}

\noindent To complete our analysis, let $v = \sum_{i=1}^d c_i v_i$. 
Then, we have that $v^\top M v = \sum_{i=1}^d c_i^2 \lambda_i \geq \max_i(|c_i| \sqrt{\lambda_i})^2$. 
On the other hand, we have that
$$
|v^\top(M-\hat M)v| \leq \sum_{i,j=1}^d |c_i| \, |c_j| \, |v_i^\top (M-\hat M) v_j| 
\leq \sum_{i,j=1}^d (\eps/d^2) \, |c_i| \, |c_j| \, \sqrt{\lambda_i \lambda_j} 
\leq \eps \max_i(|c_i| \sqrt{\lambda_i})^2 \leq \eps (v^\top M v) \;.
$$
This completes our analysis in the case where $M$ is non-singular. 
When $M$ is singular and rank $k$, 
then assuming that the conclusion of Claim \ref{non singularity claim} holds, 
we note that applying the same analysis to the vectors 
$M^t \alpha_1,\ldots, M^t \alpha_k$ 
as elements of the $k$-dimensional vector space $\mathrm{Image}(M)$, 
we get the desired result.
\end{proof}

\section{Matrix Rounding} \label{sec:rounding}

In this section, we establish our efficient rounding procedure, establishing the following:

\begin{theorem}[Matrix Rounding]\label{thm:rounding}
There is an algorithm that given 
(i) a set of $n$ points $X \subseteq \{-2^b, \ldots, 2^b\}^d \setminus \{\0\}$
with $b \in \Z_+$, so that $X$ spans $\R^d$,
(ii) a full-rank $d\times d$ matrix $A \in \{-2^{rb}, \ldots, 2^{rb}\}^{d\times d}$, with $r \in \Z_+$,
and (iii) an accuracy parameter $\eps \in (0,1)$,
outputs a matrix $A'$ with integer entries of magnitude at most $(\frac d \eps)^{O(d^3b)}$ such that
for all points $x \in X$ it holds $\|f_A(x) - f_{A'}(x)\|_2 \le \eps$.
The algorithm performs $\poly(d, n, r)$ arithmetic operations on $\poly(d, n, r, b, \log(1/\eps))$-bit numbers.
\end{theorem}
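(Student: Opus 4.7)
The plan is a two-stage approach: reduce to the case that $A$ is well-conditioned, then round entries coordinate-wise. Call $A$ \emph{good} if, after rescaling $A$ by the scalar $1/\sigma_{\max}(A)$ (which does not change $f_A$), we have $\sigma_{\min}(A) \ge 2^{-Cdb\log(d/\eps)}$ for a suitable constant $C$. For good $A$, coordinate rounding works: for every $x \in X$, $\|x\|_2 \in [1,\sqrt d\,2^b]$ because $x$ is a nonzero integer vector, and $\|Ax\|_2 \ge \sigma_{\min}(A)$. Rounding each entry of $A$ to the nearest multiple of $\eta$ (at a sufficiently small multiple of $\eps\sigma_{\min}(A)/(d^{3/2}2^b)$) yields $A'$ with $\|(A-A')x\|_2 \le (\eps/2)\|Ax\|_2$; the standard perturbation bound for normalized vectors then gives $\|f_A(x)-f_{A'}(x)\|_2 \le \eps$. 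Multiplying $A'$ through by $1/\eta$ clears denominators, producing an integer matrix of magnitude at most $(d/\eps)^{O(db)}$.

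The core content is reducing an arbitrary $A$ to a good one. I claim a \emph{reduction step}: given $A$ with $\kappa(A) := \sigma_{\max}(A)/\sigma_{\min}(A) > 2^{Cdb\log(d/\eps)}$, one can produce $\wt A$ with $\log\kappa(\wt A) \le (1-1/(2d))\log\kappa(A)$ while $\|f_A(x) - f_{\wt A}(x)\|_2$ is bounded by a sufficiently small polynomial in $\eps/(nd)$ for every $x \in X$. Iterating this step $O(d\log r)$ times drives $\kappa$ into the good regime with cumulative error at most $\eps/2$, after which the warm-up completes the proof. To implement the step, compute an approximate SVD $A \approx U\Sigma V^\top$ by applying Proposition~\ref{prop:svd} to $A^\top A$ and $AA^\top$; its multiplicative guarantee is enough to recover singular subspaces up to relative error $\eps/d^2$. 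By pigeonhole applied to $\log\kappa(A) = \sum_{i=1}^{d-1}\log(\sigma_i/\sigma_{i+1})$, there is an index $k$ with $\sigma_k/\sigma_{k+1} \ge \kappa(A)^{1/d}$. Set $P := \spn(v_1,\ldots,v_k)$, $\rho := \kappa(A)^{-1/(2d)}$, and fix $\tau$ at a sufficiently small multiple of $\eps/(nd)$.

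Identify \emph{problematic} points $X_{\mathrm{bad}} := \{x \in X : \|\proj_P x\|_2 \le \tau\|x\|_2\}$: these are the $x$ for which $Ax$ is essentially in the small-singular-value image subspace, so naively scaling $P$ would rotate $Ax$ substantially. Form $W := \spn\{\proj_{P^\perp} x : x \in X_{\mathrm{bad}}\}$, a subspace of $P^\perp$ of dimension at most $d-k$, and define $T := I_W + \rho I_{W^\perp}$ and $\wt A := AT$. For $x \in X_{\mathrm{bad}}$, write $x = \proj_{P^\perp} x + \proj_P x$ with $\proj_{P^\perp} x \in W$ (by construction) and $\proj_P x \in W^\perp$ (since $W \subseteq P^\perp$ forces $P \subseteq W^\perp$); thus $Tx = \proj_{P^\perp} x + \rho \proj_P x$ and $\|Tx - x\|_2 \le \tau\|x\|_2$, which translates to an $O(\eps/d^2)$ change in $f_A(x)$ given $\|Ax\|_2 \gtrsim \sigma_{k+1}\|x\|_2$ and the controlled gap ratio. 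For non-problematic $x$ (with $\|\proj_P x\|_2 \ge \tau\|x\|_2$), both $Ax$ and $\wt A x$ are dominated by the image of $\proj_P x$: the rescaled top-singular-value contribution has magnitude at least $\rho\sigma_k\tau\|x\|_2$, which still exceeds the $P^\perp$-image contribution $\sigma_{k+1}\|x\|_2$ by a factor $\rho\tau\sigma_k/\sigma_{k+1} \ge \tau\kappa(A)^{1/(2d)} \gg 1$ precisely in the regime where a reduction is required. The new singular spectrum is (approximately) $\rho\sigma_1,\ldots,\rho\sigma_k,\sigma_{k+1},\ldots,\sigma_d$, yielding $\kappa(\wt A) \le \rho\kappa(A) = \kappa(A)^{1-1/(2d)}$.

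The hard part will be verifying that the reduction step remains correct when the SVD, the classification of problematic points, and the subspace $W$ are computed with finite precision; the threshold $\tau$ is chosen with polynomial slack so that bounded perturbations do not flip the classification of any $x \in X$, and the multiplicative guarantee of Proposition~\ref{prop:svd} lets us treat the computed subspaces as if they were exact up to negligible error. The bit complexity of intermediate matrices grows polynomially across the iterations because each $T$ is specified by the subspace $W$ (at SVD-level precision) and the scalar $\rho$; at the end, a single application of the warm-up rounding compresses the composed matrix to the advertised integer magnitude bound $(d/\eps)^{O(d^3b)}$.
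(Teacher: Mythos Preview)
Your high-level plan (iteratively reduce $\kappa(A)$, then round) matches the paper, but the reduction step as written does not work. There are two genuine gaps.

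\textbf{The condition-number claim fails when $W \neq P^\perp$.} Your $T = I_W + \rho I_{W^\perp}$ has $W \subseteq P^\perp$, but in general $W \subsetneq P^\perp$. Then $P^\perp \cap W^\perp \neq \{0\}$, and $T$ scales this subspace by $\rho$ as well. Concretely: take $d=3$, $A = \mathrm{diag}(\sigma_1,\sigma_2,\sigma_3)$ with $\sigma_1/\sigma_2 = \sigma_2/\sigma_3 = \kappa^{1/2}$, pick $k=1$ so $P = \spn(e_1)$, and let $X = \{e_1,\,(1,100,0),\,(100,0,1)\}$. Only $(1,100,0)$ is bad, so $W = \spn(e_2)$ and $T = \mathrm{diag}(\rho,1,\rho)$. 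Then $AT = \mathrm{diag}(\rho\sigma_1,\sigma_2,\rho\sigma_3)$ and $\kappa(AT) = \rho\sigma_1/(\rho\sigma_3) = \kappa(A)$: no progress at all. Your claimed spectrum $\rho\sigma_1,\ldots,\rho\sigma_k,\sigma_{k+1},\ldots,\sigma_d$ is only correct in the special case $W = P^\perp$, and nothing in your construction forces that.

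\textbf{The error bound for bad points is unjustified.} For $x \in X_{\mathrm{bad}}$ you have $Tx - x = (\rho-1)\proj_P x$, so $A(Tx-x)$ lives in the \emph{large} singular image and can have norm up to $\sigma_1\tau\|x\|_2$. Your claim $\|Ax\|_2 \gtrsim \sigma_{k+1}\|x\|_2$ is simply false: since $\proj_{P^\perp}x \in P^\perp$, the correct lower bound is $\|A\proj_{P^\perp}x\|_2 \ge \sigma_d\|\proj_{P^\perp}x\|_2$, not $\sigma_{k+1}$. If the spectrum has further gaps below index $k$ (as in the example above), $\|Ax\|_2$ can be of order $\sigma_d\|x\|_2 \ll \sigma_{k+1}\|x\|_2$, and the relative error $\|A(Tx-x)\|_2/\|Ax\|_2$ can be of order $\kappa\tau$, which is enormous. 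Even granting your bound, you would still need $\sigma_1/\sigma_{k+1}$ controlled, and pigeonhole gives you no such control.

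The paper avoids both problems by a different choice of $W$: it runs a greedy procedure over $X$ itself (Algorithm~\ref{alg:eigendecompositionwrtset}) to pick $W = \spn\{w_1,\ldots,w_m\}$ with $w_i \in X$, so that (i) every $x \in W$ satisfies $Tx = x$ \emph{exactly}, eliminating the bad-point error entirely, and (ii) every $x \notin W$ has $\|Ax^{(W^\perp)}\|_2/\|x^{(W^\perp)}\|_2$ large by construction. It then rescales not $W^\perp$ but the subspace $R = \spn(W\cup V^\perp)\cap W^\perp$, which is close to $V^\perp$ yet still fixes $W$; the condition-number drop (Proposition~\ref{prop:conditionnumber}) and the error bound (Proposition~\ref{prop:approxtransform}) both need careful arguments. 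Crucially, because $W$ is spanned by integer vectors from $X$, one gets $\|x^{(W^\perp)}\|_2 \ge d^{-d}2^{-db}$ for $x \in X\setminus W$ from a volume argument; this integrality lower bound is what ultimately makes the error controllable, and it has no analogue for your $W$, which is spanned by projections onto an SVD subspace.
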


This theorem will allow us to avoid having the matrices $A$ in our main algorithm 
blow up in bit complexity since every round we can replace $A$ by $A'$ 
to reduce the bit complexity with at most a small loss of potential.

\medskip

\noindent {\bf Notation} For a matrix $A$ and a subspace $W$,
we let $A^{(W)} = A I_W$ be the matrix whose $i$-th row is the $i$-th row
of $A$ projected onto the subspace $W$.
We let $\sigma_{\max}(A)$ to be the maximum singular value of a matrix $A$ and
$\sigma_{\min}(A)$ to be the minimum non-negative singular value.

We also use $\nearest{x}$ to denote the integer closest to the real number $x$, 
and use $\nearest{A}$ to denote the matrix obtained by applying $\nearest{\cdot}$ 
to each entry of the matrix $A$.

The rounding process is presented in Algorithm~\ref{alg:forsterrounding}.
The main idea of the algorithm is to iteratively reduce the condition
number of matrix $A$ without significantly affecting the transformation on the pointset $X$.
To achieve this, the algorithm identifies a subspace $V$ such that $V$ and $V^\perp$ have
large multiplicative singular value gap $\sigma_{\min} (A^{(V^\perp)}) / \sigma_{\max} (A^{(V)})$.
It then aims to rescale the subspace $V^\perp$ so that the condition number decreases.
As this could significantly affect the transformation for some points in $X$,
it rescales instead a different subspace $R$ that is very close to $V^\perp$,
but at the same time leaves unaltered the transformation 
for the set of problematic points in $X$ lying in a subspace $W$.
Applying this technique iteratively, we reduce to the case where $A$ 
has bounded condition number. At this point, 
we can simply replace $A$ by the matrix obtained by an appropriate rounding of $A$'s entries.

\begin{algorithm}[hbt!]
   \caption{Rounding the matrix}
   \label{alg:forsterrounding}
\begin{algorithmic}[1]
    \Function{$\textsc{Round}$}{Matrix $A\in \{-2^{rb}, \ldots, 2^{rb}\}^{d\times d}$, Set $X$ of $n$ points in $\{0,\ldots,2^b\}^d$, accuracy parameter $\eps$}
    \State Let $N \leftarrow (\frac d \eps)^{O(d^3b)}$

    		\While{not terminated}
        \State Compute estimates $\bar \sigma_1$ and $\bar \sigma_d$ such that $\sigma_1(A) \le \bar \sigma_1 \le 2 \sigma_1(A)$ and $\frac 1 2 \sigma_d(A) \le \bar \sigma_d \le \sigma_d(A)$.
        \If{the condition number $\bar \sigma_1(A) / \bar \sigma_d(A) \ge N$} \label{step:return} \Return $\lceil \frac d \eps A / \bar \sigma_d \rfloor$,
        \EndIf
        \State Round the entries of $A$ setting $A \leftarrow \lceil \frac { 2^{O(rdb)} } \eps A / \bar \sigma_d \rfloor$
        \State Using approximate eigendecomposition find a subspace $V$ and a parameter $G$, such that
                $$
                \frac 1 2 \max_{1\leq i\leq d-1}  \frac {\sigma_{i} (A)} { \sigma_{i+1} (A) } \le G \le \frac {\sigma_{\min} (A^{(V^\perp)})} {\sigma_{\max} (A^{(V)})} \le \max_{1\leq i\leq d-1}  \frac {\sigma_{i} (A)} { \sigma_{i+1} (A) }. $$
                
        \State Obtain $(w_i,p_i)_{i=1}^d$ by running $\textsc{EigendecompositionFromSet}(A,X)$

		        \State Find the smallest index $m \ge 0$ such that $p_{m+1} \ge \frac 1 d G^{(m+1)/d} \sigma_{\max} (A^{(V)})$\label{m-line}
        \State Let $W \leftarrow \spn \{ w_1,\ldots, w_m\}$,  $g \leftarrow \frac { \min \{p_{m+1}, \sigma_{\min}(A^{(V^\perp)}) \}}{ \max \{\sigma_{\max}(A^{(W)}), \sigma_{\max}(A^{(V)}) \}}$
        \State Consider the subspace $R \leftarrow \spn(W\cup V^\perp) \cap W^\perp$,  
        \State \,\,\, and note that by Claim~\ref{claim:subspace partition} it holds that $I = I_R + I_W + I_{W^\perp \cap V}$
		    \State Define a matrix $T$ that rescales the subspace $R$ by $\delta = \Theta(2^{-b})
        $ ,
        \State \,\,\, i.e. $T = \delta I_{R} + I_W + I_{W^\perp \cap V}$
		    \State Set $A \leftarrow A T$
		     	\EndWhile
	    \EndFunction
  \end{algorithmic}
\end{algorithm}

\begin{algorithm}[hbt!]
  \caption{Eigendecomposition with respect to a Set}
  \label{alg:eigendecompositionwrtset}
\begin{algorithmic}[1]
   \Function{$\textsc{EigendecompositionFromSet}$}{Matrix $A$, Set $X$ of $n$ points}
      \State Let $w_1 = \arg\min_{x \in X} \frac {\|A x\|_2} {\|x\|_2}$
      and $p_1 = \min_{x \in X} \frac {\|A x\|_2} {\|x\|_2}$
         \For{$i=2$ to $d$}
           \State Set $W_{i-1} = \spn\{w_1,\ldots, w_{i-1}\}$
           \State Set $w_i = \arg\min_{x \in X \setminus W_{i-1} } \frac {\|A x^{(W_{i-1}^\perp)}\|_2} {\| x^{(W_{i-1}^\perp)}\|_2}$
           \State Set $p_i = \min_{x \in X \setminus W_{i-1} } \frac {\|A x^{(W_{i-1}^\perp)}\|_2} {\| x^{(W_{i-1}^\perp)}\|_2}$
         \EndFor
         \State \Return $(w_i,p_i)_{i=1}^d$
     \EndFunction
 \end{algorithmic}
\end{algorithm}

This works by applying an iterative process to decrease the condition number of $A$ 
followed by appropriate rounding. In this iteration, we take $V$ to be a subspace 
spanned by the small singular vectors of $A$ right before a large eigenvalue gap. 
We then define a sequence of vectors $w_1,w_2,\ldots,w_n$, 
where $w_i$ is (roughly) the element of $X$ not in the span of $w_1,w_2,\ldots,w_{i-1}$ 
with $\|A w_i\|_2$ minimal; and take $W$ to be the span of $w_1,w_2,\ldots,w_m$, 
where $\|A w_{m+1}\|_2$ is substantially larger than any of the previous values. 
This gives us a subspace $W$ where any $x\in X$ is either in $W$ or has $\|Ax\|_2$ 
substantially larger than the corresponding value for any small element of $W$. 
We then replace $A$ by $AT$, where $T$ acts as the identity on $W$ 
and $W^\perp \cap V$, but shrinks things considerably in orthogonal directions. 
As elements in $x\in W$ are preserved by $T$ and $x\in X \setminus W$ 
have most of their contributions to $Ax$ coming from parts orthogonal to $W$ and $V$, 
we show (see Proposition \ref{prop:approxtransform}) that $f_A(x)\approx f_{AT}(x)$ for all $x\in X$, 
and thus this operation does not substantially change the potential. 
Furthermore, since $T$ scales down the directions orthogonal to $V$ 
(which are the large singular directions of $A$), 
we show (see Proposition \ref{prop:conditionnumber}) that the condition number of $AT$ 
is substantially smaller than the condition number of $A$, which implies that 
repeating this process enough times will eventually terminate with a matrix 
with not-too-large condition number. Finally, in Proposition \ref{prop:rounding}, 
we show that once we have reduced the condition number, 
rounding the appropriate matrix entries will not substantially 
change the Forster transform on any given vector.

The analysis of Algorithm~\ref{alg:forsterrounding} is based on three key propositions.

\begin{proposition}\label{prop:rounding}
Let $A$ be any matrix with smallest singular value $\sigma_d(A) > 0$,
Given $\eps \in (0,1)$ and $\bar \sigma$ such that $\frac 1 2 \sigma_d(A) \le \bar \sigma \le \sigma_d(A)$, the integer matrix $\hat A = \lceil \frac d {\bar \sigma \eps} A \rfloor$ satisfies
\begin{itemize}
  \item[-] $\kappa(\hat A) \le \kappa(A) (1+4 \eps)$,
  \item[-] for all $x \in \R^d$, $\|f_A(x) - f_{\hat A}(x)\|_2 \le 2 \eps$, and
  \item[-] the entries of $\hat A$ have magnitude $O(d \kappa(A) / \eps).$
\end{itemize}
\end{proposition}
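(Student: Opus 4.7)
Set $c \eqdef d/(\bar\sigma\eps)$, so that $\hat A = \nearest{cA}$, and let $E \eqdef \hat A - cA$. Since each of the $d^2$ entries of $E$ has absolute value at most $1/2$, we have the crude Frobenius bound $\|E\|_F \le d/2$, and hence $\|E\|_2 \le d/2$. The key quantitative observation that drives the whole proof is that $c$ has been chosen large enough to make $E$ a small multiplicative perturbation of $cA$: since $\bar\sigma \le \sigma_d(A)$,
\[
c\, \sigma_d(A) \;=\; \frac{d\,\sigma_d(A)}{\bar\sigma\,\eps} \;\ge\; \frac{d}{\eps}, \qquad \text{so} \qquad \|E\|_2 \;\le\; \tfrac{d}{2} \;\le\; \tfrac{\eps}{2}\, c\,\sigma_d(A).
\]

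For the condition number bound, first invoke Weyl's inequality for singular values applied to $\hat A = cA + E$: this yields $\sigma_1(\hat A) \le c\,\sigma_1(A) + \|E\|_2$ and $\sigma_d(\hat A) \ge c\,\sigma_d(A) - \|E\|_2$. Plugging in the estimate $\|E\|_2 \le (\eps/2)\, c\,\sigma_d(A) \le (\eps/2)\,c\,\sigma_1(A)$ gives
\[
\kappa(\hat A) \;\le\; \frac{c\,\sigma_1(A)(1+\eps/2)}{c\,\sigma_d(A)(1-\eps/2)} \;=\; \kappa(A)\cdot\frac{1+\eps/2}{1-\eps/2} \;\le\; \kappa(A)(1+4\eps),
\]
using $\eps \in (0,1)$ for the final inequality.

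For the Forster-map error, use scale-invariance (Fact~\ref{fact:transform-properties}\ref{fact:scale-inv}) to write $f_A(x) = f_{cA}(x) = cAx/\|cAx\|_2$. Fix $x \in \R^d_{\ast}$ and set $u = cAx$, $v = \hat A x = u + Ex$. Then $\|v-u\|_2 = \|Ex\|_2 \le \|E\|_2 \|x\|_2 \le (d/2)\|x\|_2$, while $\|u\|_2 \ge c\,\sigma_d(A)\|x\|_2 \ge (d/\eps)\|x\|_2$; hence $\|v-u\|_2/\|u\|_2 \le \eps/2$. Applying the standard inequality $\bigl\|u/\|u\|_2 - v/\|v\|_2\bigr\|_2 \le 2\|v-u\|_2/\|u\|_2$ yields $\|f_A(x) - f_{\hat A}(x)\|_2 \le \eps \le 2\eps$.

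Finally, the magnitude bound on the entries is immediate: each entry of $cA$ has magnitude at most $c\,\|A\|_2 = c\,\sigma_1(A)$, and rounding adds at most $1/2$. Using $\bar\sigma \ge \sigma_d(A)/2$, we get $c\,\sigma_1(A) = \frac{d\,\sigma_1(A)}{\bar\sigma\,\eps} \le \frac{2d\,\sigma_1(A)}{\sigma_d(A)\,\eps} = \frac{2d\,\kappa(A)}{\eps}$, so every entry of $\hat A$ has magnitude $O(d\,\kappa(A)/\eps)$, completing the proof. The only real ``subtlety'' is in ensuring $\bar\sigma$'s two-sided bracketing of $\sigma_d(A)$ propagates consistently through the three bullets; no genuine obstacle arises.
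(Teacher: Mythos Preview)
Your proof is correct and follows essentially the same approach as the paper: scale $A$ by $c=d/(\bar\sigma\eps)$, bound the rounding error $E$ in operator norm by $d/2\le(\eps/2)\,c\,\sigma_d(A)$, then read off the singular-value perturbation, the normalized-vector perturbation, and the entrywise bound. The paper spells out the triangle-inequality step for $\|f_{cA}(x)-f_{\hat A}(x)\|_2$ explicitly rather than citing the ``standard inequality,'' and it bounds entries via $\sigma_1(\hat A)$ directly, but the arithmetic is identical.
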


Proposition~\ref{prop:rounding} shows that so long as $A$ has bounded condition number, 
letting $A'$ be the rounding of an appropriate multiple of $A$ yields a bounded precision matrix 
that nearly preserves all of the transformed points.

\begin{proposition}\label{prop:conditionnumber}
  Let $A \in \R^{d\times d}$ be a full-rank matrix. Let $V, W$ be subspaces of $\R^d$ so that
  $$\sigma_{\min}(A^{(V^\perp)})  \ge g \, { \max \{\sigma_{\max}(A^{(W)}), \sigma_{\max}(A^{(V)}) \}}$$
  for some $g > 10$.
  Define $T = \delta I_{R} + I_{R^\perp}$ for $\delta \ge 8 g^{-1}$ and $R = \spn(W\cup V^\perp) \cap W^\perp$.
  It holds that
  $$\kappa(AT) \le 30 \delta \kappa(A) \;.$$
\end{proposition}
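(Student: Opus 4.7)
The plan is to use the orthogonal decomposition $\R^d = W \oplus R \oplus V'$ with $V' := W^\perp \cap V$ provided by Claim~\ref{claim:subspace partition}, and to bound $\sigma_{\max}(AT)$ and $\sigma_{\min}(AT)$ separately, exploiting the fact that $T$ acts as $\delta I$ on $R$ and as the identity on $W \oplus V'$. The key preliminary is to show that the principal angles between $W$ and $V^\perp$ are close to $\pi/2$. For any unit $w \in W$, writing $w = p + q$ with $p \in V$ and $q \in V^\perp$, I get $\sigma_{\min}(A^{(V^\perp)})\|q\| \leq \|Aq\| \leq \|Aw\| + \|Ap\| \leq \sigma_{\max}(A^{(W)}) + \sigma_{\max}(A^{(V)}) \leq 2\sigma_{\min}(A^{(V^\perp)})/g$, whence $\|\proj_{V^\perp}(w)\| \leq 2/g$; the standard symmetry $\|\proj_W \proj_{V^\perp}\|_2 = \|\proj_{V^\perp}\proj_W\|_2$ then gives $\|\proj_W(z)\| \leq (2/g)\|z\|$ for every $z \in V^\perp$ as well.

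Since $V^\perp \cap W = \{0\}$ follows immediately from the singular value gap, the map $z \mapsto z - \proj_W(z)$ is a bijection $V^\perp \to R$, and the angle bound yields $\|x_R\| \leq \|z\| \leq (1-4/g^2)^{-1/2}\|x_R\|$. Combining this with $\|Ax_R\| = \|Az - A\proj_W(z)\|$ and $\|A\proj_W(z)\| \leq \sigma_{\max}(A^{(W)})\|\proj_W(z)\| \leq (3/g^2)\sigma_{\min}(A^{(V^\perp)})\|x_R\|$, the triangle and reverse-triangle inequalities give the two-sided estimate $\tfrac{1}{2}\sigma_{\min}(A^{(V^\perp)})\|x_R\| \leq \|Ax_R\| \leq 2\sigma_{\max}(A^{(V^\perp)})\|x_R\|$.

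For the upper bound on $\sigma_{\max}(AT)$, decomposing a unit vector as $x = x_W + x_R + x_{V'}$ and applying the triangle inequality gives
\[
\|ATx\| \leq \sigma_{\max}(A^{(W)})\|x_W\| + 2\delta\,\sigma_{\max}(A^{(V^\perp)})\|x_R\| + \sigma_{\max}(A^{(V)})\|x_{V'}\|.
\]
The hypothesis $\sigma_{\max}(A^{(W)}), \sigma_{\max}(A^{(V)}) \leq \sigma_{\min}(A^{(V^\perp)})/g \leq \delta\,\sigma_{\max}(A^{(V^\perp)})/8$ (using $\delta g \geq 8$) makes the first and third terms dominated by the middle one, so each term is at most a constant multiple of $\delta\,\sigma_{\max}(A)\|x\|$, yielding $\sigma_{\max}(AT) = O(\delta)\,\sigma_{\max}(A)$.

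The hard part will be the lower bound on $\sigma_{\min}(AT)$: the naive estimate $\sigma_{\min}(AT) \geq \sigma_{\min}(T)\,\sigma_{\min}(A) = \delta\,\sigma_{\min}(A)$ loses a full factor of $\delta$ and is insufficient for the target $\kappa(AT) \leq 30\delta\,\kappa(A)$. The remedy is a case split on a unit $y = y_W + y_R + y_{V'}$. If $\|y_R\|^2 \leq 1/2$, then $\|Ty\|^2 \geq \|y_W\|^2 + \|y_{V'}\|^2 \geq 1/2$, so $\|ATy\| \geq \sigma_{\min}(A)\|Ty\| \geq \sigma_{\min}(A)/\sqrt{2}$. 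Otherwise $\|y_R\| > 1/\sqrt{2}$, and the reverse triangle inequality yields
\[
\|ATy\| \geq \delta\|Ay_R\| - \|Ay_W\| - \|Ay_{V'}\| \geq \frac{\delta\,\sigma_{\min}(A^{(V^\perp)})}{2\sqrt{2}} - \frac{2\,\sigma_{\min}(A^{(V^\perp)})}{g},
\]
which is $\Omega(\delta\,\sigma_{\min}(A^{(V^\perp)}))$ since $\delta g \geq 8$; and this in turn dominates $\sigma_{\min}(A)$ because $\delta\,\sigma_{\min}(A^{(V^\perp)}) \geq 8\,\sigma_{\max}(A^{(V)}) \geq 8\,\sigma_{\min}(A)$. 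Combining both cases gives $\sigma_{\min}(AT) = \Omega(\sigma_{\min}(A))$, and together with the upper bound this yields the claimed $\kappa(AT) \leq 30\delta\,\kappa(A)$.
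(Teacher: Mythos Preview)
Your proof is correct and takes a genuinely different route from the paper. The paper introduces an auxiliary transformation $\hat T := \delta I_{V^\perp} + I_V$ and uses the submultiplicativity $\kappa(AT) \leq \kappa(A\hat T)\,\kappa(\hat T^{-1}T)$, then proves two separate claims: $\kappa(A\hat T) \leq 10\delta\,\kappa(A)$ (via direct bounds on $\sigma_{\max}(A\hat T)$ and $\sigma_{\min}(A\hat T)$, exploiting that $\hat T$ scales along the eigenspace-aligned split $V \oplus V^\perp$), and $\kappa(\hat T^{-1}T) \leq 3$ (by writing $\hat T^{-1}T = I + (1/\delta - 1)I_{V^\perp}I_W - (1-\delta)I_V I_R$ and bounding the two perturbation terms using the same angle estimate $\|I_{V^\perp}I_W\|_2 \leq 2/g$ that you prove). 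You instead work directly with the orthogonal decomposition $W \oplus R \oplus (W^\perp \cap V)$ on which $T$ is block-diagonal, prove the two-sided estimate $\tfrac{1}{2}\sigma_{\min}(A^{(V^\perp)})\|x_R\| \leq \|Ax_R\| \leq 2\sigma_{\max}(A^{(V^\perp)})\|x_R\|$ via the near-isometry $\proj_{W^\perp}:V^\perp \to R$, and bound $\sigma_{\max}(AT)$ and $\sigma_{\min}(AT)$ by a triangle-inequality argument and a case split on $\|y_R\|$. Your approach is more self-contained and avoids the intermediate $\hat T$; the paper's factorization cleanly separates ``scale down the big directions'' from ``$T$ versus $\hat T$'' and makes the role of the angle bound explicit in each piece. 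One minor point: you only write $O(\delta)$ and $\Omega(1)$ at the end, so the stated constant $30$ is not explicitly verified in your write-up; tracking your own constants gives roughly $\sigma_{\max}(AT) \leq \tfrac{9}{4}\delta\,\sigma_{\max}(A)$ and $\sigma_{\min}(AT) \geq \tfrac{1}{\sqrt{2}}\sigma_{\min}(A)$, well within $30\delta$.
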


Proposition~\ref{prop:conditionnumber} shows that after every iteration 
the condition number of the matrix $A$ is significantly reduced by a factor of $O(\delta)$.

\begin{proposition}\label{prop:approxtransform}
Let $A \in \R^{d\times d}$ be a full-rank matrix and $X$ be a set of points. 
Let $V, W$ be subspaces of $\R^d$ so that 
$$\frac  { \min_{x \in X \setminus W} \|A x^{(W^\perp)}\|_2 / \|x^{(W^\perp)}\|_2  } { \max \{ \sigma_{\max} (A^{(W)}), \sigma_{\max} (A^{(V)}) \} } \ge g
\text{ and }
\min_{x \in X \setminus W} \|x^{(W^\perp)}\|_2 \ge \rho$$ 
for some $\rho \in (0,1)$ and $g > 1$.
Define $T = \delta I_{R} + I_{R^\perp}$ for $\delta \in (0,1)$ and $R = \spn(W\cup V^\perp) \cap W^\perp$.
For all $x \in X$, it holds that
$$ \| f_{A}(x) - f_{AT}(x) \|_2 \le \frac {16} {(g-1)\rho\delta} \;. $$
\end{proposition}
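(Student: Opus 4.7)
}
My plan is to split into two cases according to whether $x\in W$ or not, and then, in the nontrivial case, to decompose $Ax$ and $ATx$ using the orthogonal splitting $\R^d = W \oplus (W^\perp\cap V) \oplus R$ guaranteed by Claim~\ref{claim:subspace partition}. The hypothesis ``$\|Ax^{(W^\perp)}\|_2$ is much larger than either $\sigma_{\max}(A^{(W)})$ or $\sigma_{\max}(A^{(V)})$'' is what forces $Ax$ and $ATx$ to share nearly the same direction, and the proof will be arranged so that this becomes explicit.

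\medskip
\textbf{Trivial case $x\in W$.} Since $R\subseteq W^\perp$, we have $W\subseteq R^\perp$, so $I_R x = 0$ and $Tx = I_{R^\perp}x = x$. Hence $ATx = Ax$ and $f_{AT}(x) = f_A(x)$, so the claimed bound holds (with $0$ on the left-hand side).

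\medskip
\textbf{Main case $x\in X\setminus W$.} Using scale invariance $f_A(x) = f_A(x/\|x\|)$ (Fact~\ref{fact:transform-properties}\ref{fact:scale-inv}), I may assume $\|x\|_2=1$, so the hypothesis reads $\|x^{(W^\perp)}\|_2\ge \rho$. By Claim~\ref{claim:subspace partition}, write
\[
x = x^{(W)} + x^{(W^\perp\cap V)} + x^{(R)}, \qquad Tx = x^{(W)} + x^{(W^\perp\cap V)} + \delta\, x^{(R)},
\]
and set $u := A\bigl(x^{(W)} + x^{(W^\perp\cap V)}\bigr)$ and $v := Ax^{(R)}$, so that $Ax = u+v$ and $ATx = u+\delta v$. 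Let $M := \max\{\sigma_{\max}(A^{(W)}), \sigma_{\max}(A^{(V)})\}$. Since $\|x^{(W)}\|_2,\|x^{(W^\perp\cap V)}\|_2 \le 1$ and the pieces lie in $W$ and $V$ respectively, the submultiplicative spectral bounds give $\|u\|_2 \le 2M$. For the lower bound on $\|v\|_2$, I use the identity $x^{(R)} = x^{(W^\perp)} - x^{(W^\perp\cap V)}$ together with $\|Ax^{(W^\perp)}\|_2 \ge gM\,\|x^{(W^\perp)}\|_2$ (the main hypothesis) and $\|Ax^{(W^\perp\cap V)}\|_2 \le M\,\|x^{(W^\perp\cap V)}\|_2 \le M\,\|x^{(W^\perp)}\|_2$ to obtain
\[
\|v\|_2 \;\ge\; \|Ax^{(W^\perp)}\|_2 - \|Ax^{(W^\perp\cap V)}\|_2 \;\ge\; (g-1)\,M\,\|x^{(W^\perp)}\|_2 \;\ge\; (g-1)\,M\,\rho.
\]

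\medskip
\textbf{Direction approximation.} If $(g-1)\rho\delta \le 16$ the claimed bound exceeds $1$ and is trivial, because $f_A(x), f_{AT}(x)$ are unit vectors. So assume $(g-1)\rho\delta > 16$; then $\|u\|_2/\|v\|_2 \le 2/((g-1)\rho)$ and in particular $\|u\|_2 \le \tfrac{\delta}{8}\|v\|_2 \le \tfrac{1}{2}\|v\|_2$. Apply the standard unit-vector perturbation estimate $\bigl\|\tfrac{p+q}{\|p+q\|} - \tfrac{p}{\|p\|}\bigr\| \le \tfrac{4\|q\|}{\|p\|}$ (valid whenever $\|q\|\le \|p\|/2$) twice: once with $p=v,\ q=u$, and once with $p=\delta v,\ q=u$ (noting $\delta v/\|\delta v\|_2 = v/\|v\|_2$). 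This yields
\[
\Bigl\|f_A(x) - \tfrac{v}{\|v\|_2}\Bigr\|_2 \le \tfrac{4\|u\|_2}{\|v\|_2}, \qquad \Bigl\|f_{AT}(x) - \tfrac{v}{\|v\|_2}\Bigr\|_2 \le \tfrac{4\|u\|_2}{\delta\,\|v\|_2}.
\]
The triangle inequality then gives $\|f_A(x) - f_{AT}(x)\|_2 \le \tfrac{8\|u\|_2}{\delta\,\|v\|_2} \le \tfrac{16}{(g-1)\rho\delta}$, as required.

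\medskip
\textbf{Main obstacle.} The only delicate point is the two-sided comparison between $\|u\|_2$, $\|v\|_2$, and $\delta\|v\|_2$: one needs $u$ to be genuinely subordinate to the ``big'' direction $v$ even after $v$ has been scaled by the small factor $\delta$. This is precisely what forces the factor $1/\delta$ (rather than $1$) in the final bound, and is why the hypothesis $g \gg 1/(\rho\delta)$ is needed. The remaining work is bookkeeping with the orthogonal decomposition, which is justified by Claim~\ref{claim:subspace partition}.
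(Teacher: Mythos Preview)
Your approach is essentially the same as the paper's: split off the case $x\in W$, then use the orthogonal decomposition $\R^d = W \oplus (W^\perp\cap V)\oplus R$, show that $Ax^{(R)}$ dominates the other two pieces, and compare both $f_A(x)$ and $f_{AT}(x)$ to $Ax^{(R)}/\|Ax^{(R)}\|_2$ via a unit-vector perturbation bound.

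There is one small but genuine bookkeeping slip in your ``trivial case''. You write that if $(g-1)\rho\delta\le 16$ then the claimed bound exceeds $1$ and is trivial ``because $f_A(x),f_{AT}(x)$ are unit vectors''. But the difference of two unit vectors can be as large as $2$, not $1$; so you only get a free pass when $16/((g-1)\rho\delta)\ge 2$, i.e.\ when $(g-1)\rho\delta\le 8$. As written, the range $8<(g-1)\rho\delta\le 16$ is uncovered. The fix is immediate: either (i) change the threshold to $8$, so the trivial case gives $\ge 2$, and observe that $(g-1)\rho\delta>8$ already yields $\|u\|_2\le \tfrac{\delta}{4}\|v\|_2$, enough to run your perturbation argument; or (ii) keep the threshold and note that your perturbation estimate actually holds with constant $2$ rather than $4$ (the two-term triangle-inequality proof gives $\bigl\|\tfrac{p+q}{\|p+q\|}-\tfrac{p}{\|p\|}\bigr\|\le \tfrac{2\|q\|}{\|p\|}$ whenever $\|q\|<\|p\|$), which halves your final constant and closes the gap. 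Either way the argument goes through.
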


Proposition~\ref{prop:approxtransform} shows that for every iteration of the algorithm, 
the update of $A$ has a negligible effect on the transformation $f_A$. 


We defer the proofs of the propositions to Sections~\ref{sec:proofrounding}, 
\ref{sec:proofconditionnumber}, \ref{sec:proofapproxtransform}, 
and proceed with the proof of Theorem~\ref{thm:rounding}.

\subsection{Proof of Theorem~\ref{thm:rounding}}

Before we proceed with the proof, we argue that the steps of the algorithm are well-defined.
In particular, we must show that the choice of $m$ is feasible.
Indeed, the following claim shows that such an index $m \in \{0, \ldots, d-1\}$ as required in Line \ref{m-line}
always exists, as there is at least one $p_i$ with $p_i \ge (G / d) \sigma_{\max} (A^{(V)})$.

\begin{claim}
There is an index $i \in \{1, \ldots, d\}$ such that $p_i \ge (G / d) \sigma_{\max} (A^{(V)})$.
\end{claim}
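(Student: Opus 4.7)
The plan is a pigeonhole argument on the orthogonal basis produced by \textsc{EigendecompositionFromSet}, combined with the singular value gap condition defining $G$.

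Define $y_i := w_i^{(W_{i-1}^\perp)}$ with the convention $W_0 = \{0\}$, so that $p_i = \|A y_i\|_2/\|y_i\|_2$. The $y_i$ are pairwise orthogonal (since $y_i \in W_{i-1}^\perp$ while $y_j \in W_{i-1}$ for $j<i$), and each is nonzero because the hypothesis that $X$ spans $\R^d$ guarantees $X \setminus W_{i-1} \ne \emptyset$ at every step $i \le d$. Hence $\hat y_i := y_i/\|y_i\|_2$ form an orthonormal basis of $\R^d$. Setting $\hat z_i := \hat y_i^{(V^\perp)}$ and taking the trace of the projector $I_{V^\perp}$ in this basis gives
$$
\sum_{i=1}^d \|\hat z_i\|_2^2 \;=\; \operatorname{tr}(I_{V^\perp}) \;=\; \dim V^\perp \;\ge\; 1,
$$
so there exists an index $i^\ast$ with $\|\hat z_{i^\ast}\|_2 \ge 1/\sqrt{d}$.

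For this $i^\ast$, the reverse triangle inequality combined with the hypothesis $\sigma_{\min}(A^{(V^\perp)}) \ge G\,\sigma_{\max}(A^{(V)})$ and the bound $\|\hat y_{i^\ast}^{(V)}\|_2 \le 1$ yields
$$
p_{i^\ast} \;=\; \|A\hat y_{i^\ast}\|_2 \;\ge\; \sigma_{\min}(A^{(V^\perp)})\,\|\hat z_{i^\ast}\|_2 \;-\; \sigma_{\max}(A^{(V)})\,\|\hat y_{i^\ast}^{(V)}\|_2 \;\ge\; \sigma_{\max}(A^{(V)})\Bigl(\tfrac{G}{\sqrt d} - 1\Bigr).
$$
The claimed inequality $p_{i^\ast} \ge (G/d)\sigma_{\max}(A^{(V)})$ then follows provided $G/\sqrt d - 1 \ge G/d$, i.e., $G \ge d/(\sqrt d - 1) = \Theta(\sqrt d)$, which is easily satisfied in the regime where this code path executes: the loop body is entered only when $\kappa(A) \ge N = (d/\eps)^{\Omega(d^3 b)}$, so pigeonhole on the $d-1$ consecutive singular value ratios forces $\max_i \sigma_i/\sigma_{i+1} \ge N^{1/(d-1)}$, and hence $G \ge \tfrac12 N^{1/(d-1)}$, astronomically larger than $\sqrt d$.

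I expect no substantive obstacle here: the key (and only nontrivial) step is the trace-based pigeonhole, which is possible precisely because the greedy procedure produces an \emph{orthogonal} (not merely linearly independent) basis; the remainder is the routine triangle-inequality manipulation, with the mild quantitative check that $G$ is at least $\Theta(\sqrt d)$, automatic from the algorithm's outer invariants.
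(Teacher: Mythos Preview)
Your proof is correct, but the paper takes a more direct route that avoids your side condition on $G$. The paper observes that since the $\hat y_i$ form an orthonormal basis, $\sum_{i=1}^d p_i^2 = \|A\|_F^2$; then $\|A\|_F \ge \sigma_{\min}(A^{(V^\perp)}) \ge G\,\sigma_{\max}(A^{(V)})$, and a straight pigeonhole gives $\max_i p_i \ge (G/\sqrt d)\,\sigma_{\max}(A^{(V)})$, which already exceeds the claimed $G/d$ threshold with no further hypothesis. Your approach---trace-pigeonhole to locate an index with $V^\perp$-component at least $1/\sqrt d$, then reverse triangle inequality---is also valid, but the subtraction produces a $G/\sqrt d - 1$ factor, forcing you to invoke the outer loop invariant $\kappa(A)\ge N$ to guarantee $G \gtrsim \sqrt d$. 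That invocation is fine in context, but it makes the claim depend on ambient algorithm state that the paper's proof does not need; the Frobenius-norm identity buys you a cleaner, self-contained bound and in fact a stronger constant ($G/\sqrt d$ versus your $G/\sqrt d - 1$).
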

\begin{proof}
Letting $u_i = w_i^{(W_{i-1}^\perp)}/\|w_i^{(W_{i-1}^\perp)}\|_2$, 
we note that the $u_i$'s form an orthonormal basis of $\R^d$ 
and that $p_i = \|A u_i\|_2$. Therefore, we have that
$\sqrt{ \sum_{i=1}^d p_i^2 } = \|A\|_F \ge \sigma_{\min} (A^{(V^\perp)}) \ge G \sigma_{\max} (A^{(V)})$.
Since $\sqrt{ \sum_{i=1}^d p_i^2 } \le \sqrt{d} \max_{i=1}^d p_i$,
this means there is at least one $p_i$ with value at least $(G / d) \sigma_{\max} (A^{(V)})$.
\end{proof}

We now proceed to bound the improvement on the condition number of matrix $A$ at every iteration.
Starting with a matrix $A$ with condition number $\kappa(A)$,
the algorithm finds a subspace $V$ with a large multiplicative singular value gap $G$,
i.e., $\frac { \sigma_{\min} (A^{(V^\perp)}) } { \sigma_{\max} (A^{(V)}) } \ge G$.
This gap $G$ is at least $\kappa(A)^{1/d}$, as the following claim shows.

\begin{lemma}
At any iteration of the algorithm, $G \ge \frac 1 2 \kappa(A)^{1/d}$, and
$g \ge \frac {1} {2 d} \kappa(A)^{1/d^2}$.
\end{lemma}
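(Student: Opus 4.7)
The plan is to prove the two bounds in sequence, deriving the bound on $g$ from the bound on $G$.

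For the bound $G \ge \tfrac{1}{2}\kappa(A)^{1/d}$, I would use the telescoping identity
\[
\prod_{i=1}^{d-1} \frac{\sigma_i(A)}{\sigma_{i+1}(A)} \;=\; \frac{\sigma_1(A)}{\sigma_d(A)} \;=\; \kappa(A),
\]
so by pigeonhole at least one consecutive ratio is at least $\kappa(A)^{1/(d-1)} \ge \kappa(A)^{1/d}$. Since the algorithm defines $G \ge \tfrac{1}{2}\max_i \sigma_i(A)/\sigma_{i+1}(A)$, the claim follows.

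For $g$, I would bound numerator and denominator of the defining fraction $g = \min\{p_{m+1},\sigma_{\min}(A^{(V^\perp)})\} / \max\{\sigma_{\max}(A^{(W)}),\sigma_{\max}(A^{(V)})\}$ separately. The minimality of $m$ together with the specification of $G$ immediately gives $p_{m+1} \ge \tfrac{1}{d} G^{(m+1)/d}\sigma_{\max}(A^{(V)})$ and $\sigma_{\min}(A^{(V^\perp)}) \ge G\,\sigma_{\max}(A^{(V)}) \ge \tfrac{1}{d} G^{(m+1)/d}\sigma_{\max}(A^{(V)})$, so the numerator is at least $\tfrac{1}{d} G^{(m+1)/d}\sigma_{\max}(A^{(V)})$.

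The key technical step is bounding $\sigma_{\max}(A^{(W)})$ from above. I would observe that the vectors $u_i := w_i^{(W_{i-1}^\perp)}/\|w_i^{(W_{i-1}^\perp)}\|_2$ produced by \textsc{EigendecompositionFromSet} form an orthonormal basis of $W = \spn\{w_1,\dots,w_m\}$ with $\|Au_i\|_2 = p_i$, and by the minimality of $m$ we have $p_i < \tfrac{1}{d}G^{i/d}\sigma_{\max}(A^{(V)}) \le \tfrac{1}{d} G^{m/d}\sigma_{\max}(A^{(V)})$ for every $i \le m$. For any unit $v = \sum c_i u_i \in W$, Cauchy--Schwarz yields
\[
\|Av\|_2 \;\le\; \sum_i |c_i|\,p_i \;\le\; \sqrt{m}\,\max_i p_i \;\le\; \tfrac{1}{\sqrt{d}}\,G^{m/d}\,\sigma_{\max}(A^{(V)}),
\]
so the denominator is at most $\max\{\tfrac{1}{\sqrt{d}}G^{m/d},\,1\}\cdot\sigma_{\max}(A^{(V)})$.

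Dividing the two bounds, a short case split on whether $G^{m/d} \ge \sqrt{d}$ shows that $g \ge G^{1/d}/d$ in both cases (in the first case we get $G^{1/d}/\sqrt{d}$; in the second the $\sigma_{\max}(A^{(W)})$ term is dominated and we directly get $\tfrac{1}{d}G^{(m+1)/d} \ge \tfrac{1}{d}G^{1/d}$). Plugging in $G \ge \tfrac{1}{2}\kappa(A)^{1/d}$ and using $2^{1/d}\le 2$ gives $g \ge \tfrac{1}{d\cdot 2^{1/d}}\kappa(A)^{1/d^2} \ge \tfrac{1}{2d}\kappa(A)^{1/d^2}$, as required. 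The main obstacle I anticipate is the $\sigma_{\max}(A^{(W)})$ bound, since it requires correctly identifying the greedy construction as producing an orthonormal basis on which $A$ acts with norms $p_i$; once this identification is made, the rest is routine.
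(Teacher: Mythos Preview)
Your proposal is correct and follows essentially the same approach as the paper. Both arguments use the telescoping product for $G$, identify the $u_i$'s as an orthonormal basis of $W$ with $\|Au_i\|_2=p_i$ to control $\sigma_{\max}(A^{(W)})$, and combine the minimality of $m$ with the defining inequality for $G$ to conclude $g \ge G^{1/d}/d$; the only cosmetic differences are that the paper routes the operator-norm bound through $\|A^{(W)}\|_F$ and checks all four numerator/denominator combinations separately, whereas you bound numerator and denominator once each and do a short case split.
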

\begin{proof}
  We prove each of the bounds separately.
  
  \textbf{We first bound $G$:}
  Indeed, we have that
  $$G = \frac {\sigma_{\min} (A^{(V^\perp)})} { \sigma_{\max} (A^{(V)}) }
  \ge \frac 1 2 \max_{1\leq i\leq d-1}  \frac {\sigma_{i} (A)} { \sigma_{i+1} (A) }
  \ge \frac 1 2 \left( \frac {\sigma_{\max} (A)} { \sigma_{\min} (A) }  \right)^{1/d} = \frac 1 2 \kappa(A)^{1/d} \;.$$

  \textbf{We now bound $g$:}
  Recall that $g = \frac { \min \{p_{m+1}, \sigma_{\min}(A^{(V^\perp)}) \}}{ \max \{\sigma_{\max}(A^{(W)}), \sigma_{\max}(A^{(V)}) \}} \;.$
  To show the statement, we lower bound all four combinations
  of numerators and denominators separately.
  \begin{itemize}
  \item Term $\sigma_{\min} (A^{(V^\perp)}) / \sigma_{\max} (A^{(V)})$:
  
  By the definition of $G$, we have that
  $$ \sigma_{\min} (A^{(V^\perp)}) \ge G \, \sigma_{\max} (A^{(V)}).$$

  \item Term $p_{m+1} / \sigma_{\max} (A^{(V)})$:
  
  Recall that the subspace $W$ is defined by computing an eigendecomposition of $A$
  with respect to the set of points $X$ to obtain $(w_i,p_i)_{i=1}^d$.
  It sets $W = \spn \{ w_1,\ldots,w_m\}$ by choosing the smallest $m \ge 0$
  so that $p_{m+1} \ge \frac 1 d G^{(m+1)/d} \sigma_{\max} (A^{(V)})$. This implies that 
  $$p_{m+1} \ge \frac 1 d G^{1/d}  \sigma_{\max} (A^{(V)}) \;.$$

  \item Term $p_{m+1} / \sigma_{\max} (A^{(W)})$:
  
  We also have that $\max_{1\leq i\leq m} p_i \le  \frac 1 d G^{m/d} \le G^{1/d} p_{m+1}$

  Moreover, we have that
  $$d \max_{1\leq i\leq m} p_i \ge \sqrt{ \sum_{i=1}^m p_i^2} = \|A^{(W)}\|_F \ge \|A^{(W)}\|_2 = \sigma_{\max}(A^{(W)}) \;.$$
  This implies that $\sigma_{\max}(A^{(W)}) \le d \max_{i=1}^m p_i$ which in turn gives that $\sigma_{\max}(A^{(W)}) \le d G^{1/d} p_{m+1}$.

  \item Term $\sigma_{\min} (A^{(V^\perp)}) / \sigma_{\max} (A^{(W)})$:

  Moreover, the definition of $m$ implies that for all $i \le m$, we have
  $p_i \le \frac 1 d G^{i/d} \sigma_{\max} (A^{(V)})$, and since $m < d$, we have
  $$\max_{1\leq i\leq m} p_i \le \frac 1 d G^{1-1/d} \sigma_{\max} (A^{(V)})
  \le \frac 1 d G^{-1/d}  \sigma_{\min} (A^{(V^\perp)}) \;.$$
  This implies that
  $$ \sigma_{\min} (A^{(V^\perp)}) \ge G^{1/d} \sigma_{\max} (A^{(W)}) \;.$$

  \end{itemize}
  Overall, we have that $g \ge \frac 1 d G^{1/d} \ge \frac 1 {2d} \kappa(A)^{1/d^2}$.
\end{proof}

We can now apply Proposition~\ref{prop:conditionnumber}
to bound from above the condition number $\kappa(AT)$
of the matrix $AT$ for $T = \delta I_R + I_{R^\perp}$.
For $\delta \ge 8 g^{-1}$, we get that
\begin{equation}
  \kappa(AT) \le 30 \delta \kappa(A) \le O( 2^{-b} ) \kappa(A).
\end{equation}

Initially, the condition number is bounded by $(2d)^{rdb}$,
as the following claim shows:
\begin{claim}
For any full-rank matrix $A \in \{-2^{rb},\ldots,2^{rb}\}^{d\times d}$,
we have that $\kappa(A) \le  d^d 2^{rdb}$.
\end{claim}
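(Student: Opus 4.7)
The plan is to separately bound $\sigma_{\max}(A)$ from above and $\sigma_{\min}(A)$ from below, and then combine the two bounds. This is a standard argument: the key insight is that an integer matrix with bounded entries cannot be arbitrarily ill-conditioned because its determinant is a nonzero integer, hence at least $1$ in absolute value.

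For the upper bound on $\sigma_{\max}(A)$, I would use that $\sigma_{\max}(A) \leq \|A\|_F$. Since each entry of $A$ has magnitude at most $2^{rb}$ and there are $d^2$ entries, we get
\[
\sigma_{\max}(A) \leq \|A\|_F = \sqrt{\sum_{i,j} A_{ij}^2} \leq \sqrt{d^2 \cdot 2^{2rb}} = d \cdot 2^{rb}.
\]

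For the lower bound on $\sigma_{\min}(A)$, I would use the singular-value factorization of the determinant: $|\det(A)| = \prod_{i=1}^d \sigma_i(A)$. Since $A$ has integer entries and is full-rank, $|\det(A)|$ is a nonzero integer, hence $|\det(A)| \geq 1$. Combining this with the trivial upper bound $\prod_{i=1}^{d-1}\sigma_i(A) \leq \sigma_{\max}(A)^{d-1}$ yields
\[
\sigma_{\min}(A) = \frac{|\det(A)|}{\prod_{i=1}^{d-1}\sigma_i(A)} \geq \frac{1}{\sigma_{\max}(A)^{d-1}}.
\]
Dividing the two bounds gives
\[
\kappa(A) = \frac{\sigma_{\max}(A)}{\sigma_{\min}(A)} \leq \sigma_{\max}(A)^d \leq (d \cdot 2^{rb})^d = d^d \cdot 2^{rdb},
\]
which is exactly what is claimed. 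There is no real obstacle here; the argument is a two-line application of Hadamard-type reasoning combined with the integrality of $\det(A)$.
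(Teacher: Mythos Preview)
Your proof is correct and matches the paper's argument essentially line for line: both bound $\sigma_{\max}(A)$ by $\|A\|_F \le d\,2^{rb}$, use the integrality of the determinant to get $\sigma_{\max}(A)^{d-1}\sigma_{\min}(A) \ge |\det(A)| \ge 1$, and conclude $\kappa(A) \le \sigma_{\max}(A)^d$. The only cosmetic difference is that the paper writes $\sqrt{\det(A^\top A)}$ where you write $|\det(A)|$.
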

\begin{proof}
Note that $\kappa(A) = \frac {\sigma_{\max}(A)} {\sigma_{\min}(A)}$.
We have that $\sqrt{ \det(A^\top A) } = \prod_{i=1}^d \sigma_i(A)$, and thus
we obtain $\sigma_{\max}(A)^{d-1} \sigma_{\min}(A) \ge \sqrt{ \det(A^\top A) } \ge 1$,
where the last inequality follows since $A$ is full-rank and has integer entries.
This implies that $\kappa(A) \le \sigma_{\max}(A)^{d}$.
The statement follows since
$\sigma_{\max}(A) \le \|A\|_F \le \sqrt{ d^2 2^{2rb}} \le d 2^{rb}$.
\end{proof}

Thus, since at any iteration we have that $g \ge \Omega(2^b)$,
as $\kappa(A) \ge N \ge 2^{\Omega(d^2 b)}$,
the condition number significantly improves
at every iteration by a factor of $O(2^{-b})$;
after $O(d r)$ iterations, it will become $\max \{ 2^{\Omega(d^2 b)}, N \}$.

We now proceed to bound the change in the transformation
$\|f_A(x) - f_{ AT}(x)\|_2$ for all $x \in X$,
using Proposition~\ref{prop:approxtransform}.
To do this, we need to lower bound $\rho$. 
In particular, we show that for all $x \in X \setminus W$, 
it holds that $\|x^{(W^\perp)}\|_2 \ge d^{-d} 2^{-db}$.

\begin{claim}
For any iteration of $\textsc{Round}$, we have that $\min_{x \in X\setminus W} \|x^{(W^\perp)}\|_2 \ge d^{-d} 2^{-db}$.
\end{claim}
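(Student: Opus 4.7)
The plan is to use a standard integer lattice volume argument. Recall that the vectors $w_1,\ldots,w_m$ defining $W$ are taken from $X$, hence have integer entries bounded in absolute value by $2^b$; the same holds for any $x\in X\setminus W$. Since $x\notin W$, the vectors $w_1,\ldots,w_m,x$ are linearly independent, and I want to lower bound the distance from $x$ to $\spn\{w_1,\ldots,w_m\}$.

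The first step is to write $\|x^{(W^\perp)}\|_2$ as a ratio of parallelepiped volumes. Let $V_m$ denote the $m$-dimensional volume of the parallelepiped spanned by $w_1,\ldots,w_m$, and $V_{m+1}$ the $(m+1)$-dimensional volume of the parallelepiped spanned by $w_1,\ldots,w_m,x$. Then the standard base-times-height identity gives $V_{m+1}=V_m\cdot\|x^{(W^\perp)}\|_2$, so it suffices to show $V_{m+1}\ge 1$ and $V_m\le d^{d}2^{db}$.

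The second step bounds $V_{m+1}$ from below. Letting $M\in\Z^{d\times(m+1)}$ be the matrix whose columns are $w_1,\ldots,w_m,x$, we have $V_{m+1}=\sqrt{\det(M^\top M)}$. Since $M$ is an integer matrix of full column rank, $M^\top M$ is a positive-definite integer matrix, so $\det(M^\top M)$ is a positive integer, i.e., at least $1$. Hence $V_{m+1}\ge 1$. The third step upper bounds $V_m$ using Hadamard's inequality: $V_m\le\prod_{i=1}^m\|w_i\|_2$. Each $w_i\in\{-2^b,\ldots,2^b\}^d$, so $\|w_i\|_2\le\sqrt{d}\,2^b$, and since $m\le d-1<d$ we get $V_m\le(\sqrt{d}\,2^b)^d=d^{d/2}2^{db}\le d^{d}2^{db}$.

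Combining these yields $\|x^{(W^\perp)}\|_2\ge V_{m+1}/V_m\ge d^{-d/2}2^{-db}\ge d^{-d}2^{-db}$, as claimed. There is no real technical obstacle here; the argument is a routine application of Hadamard's inequality together with the fact that a nonzero Gram determinant of integer vectors is at least $1$. The only thing to verify is that $w_1,\ldots,w_m$ are indeed in $X$ (hence integral), which is immediate from the definition of $\textsc{EigendecompositionFromSet}$ in Algorithm~\ref{alg:eigendecompositionwrtset}.
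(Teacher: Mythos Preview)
Your proof is correct and is essentially the same as the paper's: both express $\|x^{(W^\perp)}\|_2$ as the ratio $\mathrm{Vol}(w_1,\ldots,w_m,x)/\mathrm{Vol}(w_1,\ldots,w_m)$, lower-bound the numerator by $1$ via the positivity of the integer Gram determinant, and upper-bound the denominator using the entry bounds on the integer vectors $w_i\in X$. Your explicit invocation of Hadamard's inequality even gives the slightly sharper intermediate bound $d^{-d/2}2^{-db}$, but the argument is otherwise identical.
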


\begin{proof}
  Note that at any iteration, the bit complexity of $X$ stays the same as only the matrix $A$ gets updated.
  Fix any $x\in X\setminus W$.
Since $W$ is the span of the linearly independent vectors $w_1,\ldots,w_m$,
for any $x\in X\setminus W$,
$\|x^{(W^\perp)}\|_2 = \frac {\mathrm{Vol}(w_1,\ldots,w_m,x)} { \mathrm{Vol}(w_1,\ldots,w_m)}$,
where $\mathrm{Vol}(z_1, \ldots, z_k)$ is the $k$-dimensional volume
of the parallelepiped defined by the vectors $z_1,\ldots,z_k$.

Note that $\mathrm{Vol}(z_1, \ldots, z_k)$ is given by $\sqrt{\det(Z^\top Z)}$,
where $Z$ is the matrix $[z_1| \ldots |z_k]$.
Since the vectors $w_1,\ldots,w_m,x \in \{-2^{b},\ldots,2^b\}^d$
are integer vectors and linearly independent,
the corresponding volumes of the parallelepipeds they define
have volume at least $1$ and at most  $d^d 2^{db}$.
\end{proof}

We thus get that we can apply Proposition~\ref{prop:approxtransform} with 
$\rho = d^{-d} 2^{-db}$. In particular, we conclude that for all $x\in X$ we have
$$ \| f_{A}(x) - f_{AT}(x) \|_2 \le \frac {16} {\left(g-1\right) \rho \delta} 
\le \frac {d^{O(db)}} {\kappa(A)^{1/d^2}} \;.$$
The total incurred error for the transformation across all iterations  is at most
$$\| f_{A}(x) - f_{A'}(x) \|_2 \le \frac {d^{O(db)}} {N^{1/d^2}} \sum_{i=0}^{\infty} 2^{-b i /d^2} 
\le \frac {d^{O(db)}} {N^{1/d^2}}  \;,$$
where the first inequality is because $\kappa(A)$ ends at $N$
and decreases by a factor of $\Omega(2^{b})$ every time.
This is at most $\eps/2$,
if $N \ge \left( \frac d \eps \right)^{\Omega(d^3 b)}$.

Overall, we obtain a matrix $A'$ with condition number
at most $N \ge \left( \frac d \eps \right)^{\Omega(d^3 b)}$
such that $\| f_{A}(x) - f_{A'}(x) \|_2 \le \eps/2$. Applying Proposition~\ref{prop:rounding},
we show that by rounding $A'$ to have integer entries
of magnitude at most $\left( \frac d \eps \right)^{O(d^3 b)}$,
we get that $\| f_{A}(x) - f_{A'}(x) \|_2 \le \eps$

To ensure that the numerical operations of every iteration are $\poly(d,n)$,
we need to efficiently identify the subspace $V$.
We note that we only need to compute the singular vectors approximately,
so that the singular values are approximated within a small constant.
We can achieve this in $\poly(d)$ operations
using the algorithm of Proposition~\ref{eigendecomposition proposition}.
Under this approximation, the multiplicative gap $G$ satisfies
$G \ge \frac 1 2 \kappa(A)^{1/d}$, which results in the same order
of improvement, when the condition number of $A$ goes from $\kappa(A)$
to $\Theta(2^{-b}) \kappa(A)$ at every iteration.
Thus, overall, the number of arithmetic operations is $\poly(d,n,r)$,
as the number of iterations is at most $O(d r)$.

To ensure that the bit complexity of the operations remains bounded,
we need some additional care. It is easy to see that if the bit complexity
of the points is $\poly(d,n,r,\log(1/\eps))$,
it remains $\poly(d,n,r,\log(1/\eps))$ at the end of a single iteration.
Yet, the increase may be significant over multiple iterations
and the bit complexity may blow up exponentially.

To keep the bit complexity bounded, we can apply
Proposition~\ref{prop:rounding} with accuracy parameter $\eps / 2^{O(rdb)}$,
to round the resulting matrix $A$, so that it has entries with bit complexity
$O( \log(\kappa(A) ) +  d r b \log(1/\eps) )$.
This has negligible effect on the decrease of the condition number
at every iteration (as it only increases it by at most a fixed constant),
and ensures that the bit complexity of $A$ at every iteration
is bounded by a fixed polynomial in $d, n, r, b$ and $\log(1/\eps)$.
Moreover, the introduced error in the transformation $f_A$
for every iteration is at most $\eps / 2^{O(drb)}$,
and thus over all $O( d r)$ iterations, it is at most $O(\eps)$, as desired.

We finally remark that all constants used in the algorithm 
can be computed as a function of the entries of $X$ and $A$. 
In particular, $2^{O(b)} = \max_{x \in X} \|x\|_\infty^{O(1)}$ 
and $2^{O(r b)} = \max_{i,j} \|A_{ij}\|^{O(1)}$.

This completes the proof of Theorem \ref{thm:rounding}. 
We now proceed with the proofs of Propositions \ref{prop:rounding}, 
\ref{prop:conditionnumber}, and \ref{prop:approxtransform}.

\subsection{Proof of Proposition~\ref{prop:rounding}}
\label{sec:proofrounding}
Let $A_r \triangleq \frac d {\bar \sigma \eps} A$.
To prove the proposition, we first bound
the condition number of $\hat A$
and then show that the transformation $f_{\hat A}$ is close to $f_A$.

Matrix $A_r$ has singular values $\sigma_1(A_r) = \frac d {\bar \sigma \eps} \sigma_1(A) \le 2 \frac d \eps \kappa(A)$
and $\sigma_d(A_r) = \frac d {\bar \sigma \eps} \sigma_d(A) \ge \frac d {\eps}$. Moreover, after rounding
the entries of matrix $A_r$ to the nearest integer to obtain $\hat A$,
for any unit vector $v$, we have that
$$ \|(\hat A - A_r ) v\|_2 \le \frac d 2 \;,$$
since $\hat A - A_r$ has entries of magnitude at most $1/2$.

Therefore, we get that $\hat A$ has singular values
$\sigma_1(\hat A) \le \sigma_1(A_r) + \frac d 2 \le \sigma_1(A_r) (1 + \eps / 2)$
and 
$\sigma_d(\hat A) \ge \sigma_d(A_r) - \frac d 2 \ge \sigma_d(A_r) (1 - \eps / 2)$.
This implies that the condition number is at most $\kappa(A_r) (1+4\eps) = \kappa(A) (1+4\eps)$.
Moreover, the magnitude of every entry is at most $\sigma_1(\hat A) \le 2 \frac d \eps \kappa(A) (1+\eps/2)$.

To bound the effect of the rounding on the transformation,
we note that by Fact~\ref{fact:transform-properties},
$f_{A_r} = f_A$ and thus for any $x$ we have that
\begin{align*}
\|f_A(x) - f_{\hat A}(x)\|_2
&= \|f_{A_r}(x) - f_{\hat A}(x)\|_2 = \left\| \frac {A_r x} {\|A_r x\|_2 } - \frac {\hat A x} {\|\hat A x\|_2 }\right\|_2 \\
&\le \left\| \frac{A_r x - \hat A x } { \|A_r x\|_2 }  \right\|_2 + \left\| \frac{  \hat A x} {\|A_r x\|_2}
- \frac{  \hat A x} {\|\hat A x\|_2}  \right\|_2 \\
&= \frac{\|(A_r - \hat A) x\|_2 } { \|A_r x\|_2 } + \left| 1 - \frac {\|\hat A x\|_2} {\| A_r x\|_2} \right| \\
&\le 2 \frac{\|(A_r - \hat A) x\|_2 } { \|A_r x\|_2 } \le \frac { d} {\sigma_d(A_r)}
 \le 2 \eps \;.
\end{align*}


\subsection{Proof of Proposition~\ref{prop:conditionnumber}}\label{sec:proofconditionnumber}
For any matrix $\hat T$, we can bound the condition number of $\kappa(AT)$
by $\kappa(A\hat T)\kappa( \hat T^{-1} T )$.
We define $\hat T = \delta I_{V^\perp} + I_{V}$ as the matrix that rescales
the subspace $V^\perp$ by $\delta$. We argue that $\kappa(A\hat T) \leq 10 \delta \kappa(A)$.

  \begin{claim}
    For any $\delta \ge g^{-1}$, it holds that $\kappa(A\hat T) \le 10 \delta \kappa(A)$.
  \end{claim}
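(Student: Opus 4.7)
The plan is to bound $\sigma_{\max}(A\hat T)$ from above and $\sigma_{\min}(A\hat T)$ from below separately, each relative to the corresponding quantity of $A$ with the appropriate factor of $\delta$, and then take the ratio. Both bounds will hinge on the hypothesis $\sigma_{\min}(A^{(V^\perp)}) \ge g\,\sigma_{\max}(A^{(V)})$ together with $\delta \ge g^{-1}$.

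For the upper bound, I would use the decomposition $A\hat T = \delta\, A I_{V^\perp} + A I_V$ and the triangle inequality to get $\sigma_{\max}(A\hat T) \le \delta\,\sigma_{\max}(A^{(V^\perp)}) + \sigma_{\max}(A^{(V)})$. The hypothesis gives $\sigma_{\max}(A^{(V)}) \le g^{-1}\sigma_{\min}(A^{(V^\perp)}) \le g^{-1}\sigma_{\max}(A^{(V^\perp)}) \le \delta\,\sigma_{\max}(A^{(V^\perp)})$, since $\delta \ge g^{-1}$. This yields $\sigma_{\max}(A\hat T) \le 2\delta\,\sigma_{\max}(A^{(V^\perp)}) \le 2\delta\,\sigma_{\max}(A)$. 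This step is routine.

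The main obstacle is the matching lower bound $\sigma_{\min}(A\hat T) \ge c\,\sigma_{\min}(A)$ for an absolute constant $c$: naive submultiplicativity gives only $\sigma_{\min}(A\hat T) \ge \delta\,\sigma_{\min}(A)$ in the regime $\delta < 1$, which yields the insufficient bound $\kappa(A\hat T) \le 2\kappa(A)$. Instead, I would analyze $(A\hat T)^{-1} = \hat T^{-1} A^{-1}$ directly. For any unit vector $w$, set $z := A^{-1}w$ and decompose $z = z^{(V^\perp)} + z^{(V)}$ with $p := \|z^{(V^\perp)}\|$, $q := \|z^{(V)}\|$; then $\|(A\hat T)^{-1} w\|^2 = \delta^{-2}p^2 + q^2$. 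The trivial estimate $q \le \|z\| \le 1/\sigma_{\min}(A)$ handles the $V$-component. For the $V^\perp$-component, the identity $A z^{(V^\perp)} = w - A z^{(V)}$ gives $\sigma_{\min}(A^{(V^\perp)})\,p \le 1 + \sigma_{\max}(A^{(V)})\,q$. Combined with the Courant-Fischer observation that $\sigma_{\min}(A) \le \sigma_{\max}(A^{(V)})$ (valid whenever $\dim V \ge 1$, the only nontrivial case), this gives $p \le 2/(g\,\sigma_{\min}(A))$, and hence $\delta^{-2}p^2 \le 4/(\delta g \sigma_{\min}(A))^2 \le 4/\sigma_{\min}(A)^2$ since $\delta g \ge 1$. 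Summing yields $\sigma_{\min}(A\hat T) \ge \sigma_{\min}(A)/\sqrt{5}$. The conceptual reason this works is that the hypothesis forces every small right-singular vector of $A$ to lie essentially in $V$, where $\hat T^{-1}$ acts as the identity and can cause no amplification.

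Taking the ratio of the two bounds gives $\kappa(A\hat T) \le 2\sqrt{5}\,\delta\,\kappa(A) < 10\,\delta\,\kappa(A)$, as required.
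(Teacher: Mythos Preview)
Your proof is correct. The upper bound on $\sigma_{\max}(A\hat T)$ matches the paper's exactly. For the lower bound on $\sigma_{\min}(A\hat T)$, however, you take a genuinely different route: the paper bounds $\|A\hat T v\|_2$ from below for a unit vector $v$ via a reverse triangle inequality and then splits into the two cases $\|v^{(V)}\|_2 < 3/5$ and $\|v^{(V)}\|_2 \ge 3/5$, obtaining $\sigma_{\min}(A\hat T)\ge \tfrac15\sigma_{\min}(A)$. You instead bound $\|(A\hat T)^{-1}w\|_2$ directly by decomposing $z=A^{-1}w$ along $V$ and $V^\perp$ and controlling the $V^\perp$-part through the identity $Az^{(V^\perp)}=w-Az^{(V)}$. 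Your approach avoids the case split entirely and gives the slightly sharper constant $\sigma_{\min}(A\hat T)\ge \sigma_{\min}(A)/\sqrt{5}$, yielding $\kappa(A\hat T)\le 2\sqrt{5}\,\delta\,\kappa(A)$ rather than $10\,\delta\,\kappa(A)$. Both arguments rely on the same key inequality $\sigma_{\min}(A)\le \sigma_{\max}(A^{(V)})$, and both implicitly assume $\dim V\ge 1$; you are right that this is the only nontrivial case in context.
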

  \begin{proof}
  We first show that $\sigma_{\max}(A\hat T) \le 2\delta  \sigma_{\max}(A)$.
  Indeed, for any unit vector $v \in \R^d$, we have that
  $$\|A\hat T v\|_2 = \|\delta A v^{(V^\perp)} + A v^{(V)}\|_2
  \le \delta \|A v^{(V^\perp)}\|_2 + \|A v^{(V)}\|_2 \le \delta \sigma_{\max}(A) + g^{-1} \sigma_{\max}(A) \;.$$
  and since $\delta \ge g^{-1}$, this gives the required bound on $\sigma_{\max}(A\hat T)$.

  We now argue that $\sigma_{\min}(A\hat T) \ge \frac 1 5 \sigma_{\min}(A)$.
  For any unit vector $v \in \R^d$, we have that
    \begin{align*}
  \|A\hat T v\|_2 &= \|\delta A v^{(V^\perp)} + A v^{(V)}\|_2 \ge \delta \|A v^{(V^\perp)}\|_2 - \|A v^{(V)}\|_2 \\
  &\ge \delta \sigma_{\min} (A^{(V^\perp)}) \|v^{(V^\perp)}\|_2 - \sigma_{\max} (A^{(V)}) \|v^{(V)}\|_2 \\
  &\ge \delta g \sigma_{\max} (A^{(V)}) \|v^{(V^\perp)}\|_2 - \sigma_{\max} (A^{(V)}) \|v^{(V)}\|_2 \\
  &\ge \sigma_{\max} (A^{(V)}) (\|v^{(V^\perp)}\|_2 - \|v^{(V)}\|_2 ) \;,
    \end{align*}
    where the last inequality follows since $\delta \ge g^{-1}$.
    Since, $\sigma_{\min}(A) \le \min_{v \in V: \|v\|_2=1} \|A v\|_2 \le  \sigma_{\max}(A^{(V)})$, we get that the above is at least $\sigma_{\min}(A) (\|v^{(V^\perp)}\|_2 - \|v^{(V)}\|_2) $.
  This is greater than $\frac 1 5 \sigma_{\min}(A)$ if $\|v^{(V)}\|_2 < \frac 3 5$.
  Moreover, if $\|v^{(V)}\|_2 \ge \frac 3 5$, we have that
  $$\|A\hat T v\|_2 \geq \sigma_{\min}(A) \|\hat T v\|_2 \ge \sigma_{\min}(A) \|v^{(V)}\|_2 \ge \frac 3 5 \sigma_{\min}(A) \;,$$
  which again gives the required bound on $\sigma_{\min}(A\hat T)$.
  \end{proof}

  To complete the proof of
  Proposition~\ref{prop:conditionnumber},
  we now bound $\kappa(\hat T^{-1} T)$.

  \begin{claim}
    We have that $\kappa(\hat T^{-1} T) \le 3 $.
  \end{claim}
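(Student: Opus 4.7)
The plan is to show that $\hat T^{-1}T$ is within spectral-norm distance $1/2$ of the identity, whence $\kappa(\hat T^{-1}T) \leq (1+1/2)/(1-1/2) = 3$. The key identity I would first establish is
\[
\hat T^{-1}T \;-\; I \;=\; (\delta - 1)\,\pi_{V}\pi_{R} \;+\; (\delta^{-1} - 1)\,\pi_{V^{\perp}}\pi_{W},
\]
where $\pi_S$ denotes orthogonal projection onto the subspace $S$. To derive this, I would decompose an arbitrary vector $v \in \R^d$ as $v = \pi_R v + \pi_W v + \pi_{W^{\perp}\cap V}\, v$ using the orthogonal resolution $I = I_R + I_W + I_{W^{\perp}\cap V}$ already noted in the proof. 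Then $Tv = \delta\,\pi_R v + \pi_W v + \pi_{W^{\perp}\cap V}\, v$, and applying $\hat T^{-1} = I_V + \delta^{-1}I_{V^{\perp}}$ by splitting each summand into its $V$ and $V^{\perp}$ parts yields the stated identity; the $\pi_{W^{\perp}\cap V}$ term passes through unchanged because $W^{\perp}\cap V \subseteq V$.

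Next I would bound the two cross-projection norms using the proposition's hypothesis $\sigma_{\min}(A^{(V^{\perp})}) \geq g\,\max\{\sigma_{\max}(A^{(W)}),\sigma_{\max}(A^{(V)})\}$ with $g > 10$. For any unit $w \in W$, the triangle inequality gives $\|A\, w^{(V^{\perp})}\|_2 \leq \|Aw\|_2 + \|A\, w^{(V)}\|_2 \leq 2g^{-1}\sigma_{\min}(A^{(V^{\perp})})$, and combining with $\|A\, w^{(V^{\perp})}\|_2 \geq \sigma_{\min}(A^{(V^{\perp})})\|w^{(V^{\perp})}\|_2$ yields $\|w^{(V^{\perp})}\|_2 \leq 2/g$; hence $\|\pi_{V^{\perp}}\pi_W\|_2 \leq 2/g$ and, by self-adjointness of the projections, also $\|\pi_W\pi_{V^{\perp}}\|_2 \leq 2/g$. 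For the second norm, I would use the structural fact that every $r \in R = \spn(W \cup V^{\perp})\cap W^{\perp}$ admits the unique representation $r = u - \pi_W u$ with $u \in V^{\perp}$ (write $r = w + u$ with $w \in W$, $u \in V^{\perp}$; the condition $r \perp W$ forces $w = -\pi_W u$). Then $\pi_V r = -\pi_V\pi_W u$, so $\|\pi_V r\|_2 \leq \|\pi_W u\|_2 \leq (2/g)\|u\|_2$, while the Pythagorean identity $\|u\|_2^2 = \|r\|_2^2 + \|\pi_W u\|_2^2$ gives $\|u\|_2 \leq \|r\|_2/\sqrt{1 - 4/g^2}$. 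Together these yield $\|\pi_V\pi_R\|_2 \leq (2/g)/\sqrt{1 - 4/g^2}$.

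Plugging the two bounds into the identity and using $|\delta - 1| \leq 1$ (recall that in the algorithm $\delta \leq 1$) together with $\delta \geq 8/g$ and $g > 10$, I would obtain
\[
\|\hat T^{-1}T - I\|_2 \;\leq\; \frac{2}{g\sqrt{1 - 4/g^2}} \;+\; \frac{2}{g\delta} \;\leq\; \frac{2}{g\sqrt{1 - 4/g^2}} \;+\; \frac{1}{4} \;<\; \frac{1}{2}.
\]
Every singular value of $\hat T^{-1}T$ therefore lies in $[1/2,\,3/2]$, so $\kappa(\hat T^{-1}T) \leq 3$.

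The main subtlety, and the reason the claim is even true, is that the hypothesis on $A$ forces $W$ to sit nearly inside $V$ (the sine of every principal angle between $W$ and $V$ is at most $2/g$) and consequently $R$ to sit nearly inside $V^{\perp}$. Without this geometric closeness, the $\delta^{-1}$ blow-up from $\hat T^{-1}$ would not be absorbed by the $\delta$ contraction of $T$ on $R$, and a small example shows $\kappa(\hat T^{-1}T)$ diverges as $\delta \to 0$ when $W$ is at a constant angle to $V$. The assumption on $A$ is precisely what keeps $\|\pi_{V^{\perp}}\pi_W\|_2$ and $\|\pi_V\pi_R\|_2$ of order $1/g$, so that the potentially bad term $(\delta^{-1} - 1)\pi_{V^{\perp}}\pi_W$ is suppressed by the lower bound $\delta \geq 8/g$.
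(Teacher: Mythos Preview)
Your proof is correct and follows essentially the same route as the paper: you derive the identical identity $\hat T^{-1}T - I = (\delta-1)\,I_V I_R + (\delta^{-1}-1)\,I_{V^\perp}I_W$, bound $\|I_{V^\perp}I_W\|_2 \le 2/g$ by the same argument, and then control $\|I_V I_R\|_2$ by decomposing an element of $R$ into $W$- and $V^\perp$-parts. The only cosmetic difference is that you use Pythagoras to relate $\|u\|_2$ and $\|r\|_2$ (yielding the bound $(2/g)/\sqrt{1-4/g^2}$), whereas the paper uses the triangle inequality (yielding $(g/2-1)^{-1}$); both estimates are comfortably below $1/4$ for $g>10$.
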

  \begin{proof}
  Note that $\hat T^{-1} = I_V + 1/\delta I_{V^\perp}$ and that $T = \delta I_R + I_W + I_{W^\perp \cap V}$.
    Since,
    \begin{align*}
      \delta \hat T^{-1}I_R &= I_R - (1-\delta) I_V I_R, 
      \\
      \hat T^{-1} I_W &= I_W + (1-1/\delta) I_{V^\perp} I_W,\\  
      \hat T^{-1} I_{W^\perp \cap V} &= I_{W^\perp \cap V},
      \end{align*}
  and $I =  I_R + I_W + I_{W^\perp \cap V}$, this implies that
  \begin{align*}
  \hat T^{-1} T =  I + (1/\delta-1)I_{V^\perp}I_W - (1-\delta) I_V I_R \;.
  \end{align*}
  Our result will follow by bounding each of the terms $\|(1/\delta-1)I_{V^\perp}I_W\|_2, \|(1-\delta)I_{V}I_R\|_2$ individually below $1/4$.

  To bound the first term, we argue that for every unit vector $w \in W$, we have that
  $$\| w^{(V^\perp)} \|_2 \le 2/g \;.$$
  This is because $\|A w \|_2 \le \sigma_{\max} (A^{(W)}) \le g^{-1} \sigma_{\min} (A^{(V^\perp)})$,
  but
  \begin{align*}
    \|A w \|_2 &\ge \|A w^{(V^\perp)} \|_2  - \|A w^{(V)} \|_2
    \ge  \sigma_{\min} (A^{(V^\perp)}) \|w^{(V^\perp)} \|_2 - \sigma_{\max} (A^{(V)}) \|w^{(V^\perp)} \|_2 \\
    &\ge \sigma_{\min} (A^{(V^\perp)}) (\|w^{(V^\perp)} \|_2 - g^{-1} \|w^{(V)} \|_2)
       \ge \sigma_{\min} (A^{(V^\perp)}) (\|w^{(V^\perp)} \|_2 - g^{-1} ) \;.
  \end{align*}
  We thus get that $\| I_{V^\perp} I_{W} \|_2 \le 2g^{-1}$. Thus,
  $$
  \|(1/\delta-1)I_{V^\perp}I_W\|_2 \leq 2/(\delta g) le 1/4 \;.
  $$
  To bound $\|I_VI_R\|_2$, it suffices to show that for any unit vector $x\in R$, $\|I_Vx\|_2$ is small.
  As $x\in \mathrm{span}(W\cup V^\perp)$, we can write $x=x_W+x_{V^\perp}$
  with $x_W\in W$ and $x_{V^\perp}\in V^\perp$. Since $x\in W^\perp$, we have that
  $$ 0 = x_W \cdot x = \|x_W\|_2^2 + x_W\cdot x_{V^\perp} \;.$$
  Thus,
  $$\|x_W\|^2_2 = |x_W\cdot x_{V^\perp}| \leq \|x_{V^\perp}\|_2 \|I_{V^\perp}I_W x_W\|_2 \leq (2/g)\|x_{V^\perp}\|_2\|x_W\|_2 \;,$$
  which implies that $\|x_W\|_2 \leq (2/g)\|x_{V^\perp}\|_2.$ and that $\|x\|_2 \ge \|x_{V^{\perp}}\|_2 - \|x_{W}\|_2 \ge (g/2 - 1)\|x_{W}\|_2$.
  Therefore,
  $$
  \|I_Vx\|_2 = \|I_V x_W\|_2 \leq \|x_W\|_2 \leq (g/2-1)^{-1} \|x\|_2 \;.
  $$
  Thus for $g > 10$,
  $ \|(1-\delta)I_{V}I_R\|_2 \leq 1/4$.
  \end{proof}

\subsection{Proof of Proposition~\ref{prop:approxtransform}}\label{sec:proofapproxtransform}

We first show the following claim that relates the subspace $R$ to $V$ and $W$.

\begin{claim}\label{claim:subspace partition}
  For any subspaces $V,W$ of $\R^d$, and $R = \spn(W\cup V^\perp) \cap W^\perp$, it holds that $I = I_R + I_W + I_{W^\perp\cap V}$.
\end{claim}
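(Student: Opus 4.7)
The plan is to show that $R$, $W$, and $W^\perp \cap V$ form a mutually orthogonal decomposition of $\R^d$, which is precisely equivalent to the identity $I = I_R + I_W + I_{W^\perp \cap V}$. This reduces to verifying (i) pairwise orthogonality and (ii) that the three dimensions sum to $d$.

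For pairwise orthogonality, the two containments $R \subseteq W^\perp$ (by definition of $R$) and $W^\perp \cap V \subseteq W^\perp$ immediately yield $R \perp W$ and $(W^\perp \cap V) \perp W$. For $R \perp (W^\perp \cap V)$, I would take any $r \in R$ and write $r = w + v$ with $w \in W$, $v \in V^\perp$; then for every $y \in W^\perp \cap V$ we have $\langle r, y \rangle = \langle w, y \rangle + \langle v, y \rangle = 0$, since $y \in W^\perp$ annihilates the first inner product and $y \in V$ annihilates the second.

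For the dimension count, I would first observe that $R$ equals the image of $V^\perp$ under orthogonal projection onto $W^\perp$: every $\proj_{W^\perp}(v)$ with $v \in V^\perp$ lies in $\spn(W \cup V^\perp) \cap W^\perp = R$, and conversely if $r = w + v \in R$ with $w\in W,\, v\in V^\perp$, then $r = \proj_{W^\perp}(r) = \proj_{W^\perp}(v)$. Rank-nullity applied to $\proj_{W^\perp}|_{V^\perp}$ (whose kernel is $V^\perp \cap W$) gives
\[
\dim(R) = \dim(V^\perp) - \dim(V^\perp \cap W).
\]
Next, since $W^\perp \cap V = \bigl(\spn(W \cup V^\perp)\bigr)^\perp$, the inclusion-exclusion formula $\dim(W + V^\perp) = \dim(W) + \dim(V^\perp) - \dim(W \cap V^\perp)$ yields
\[
\dim(W^\perp \cap V) = d - \dim(W) - \dim(V^\perp) + \dim(W \cap V^\perp).
\]
Adding $\dim(R) + \dim(W) + \dim(W^\perp \cap V)$ then telescopes to $d$.

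Three mutually orthogonal subspaces whose dimensions sum to $d$ provide an orthogonal decomposition $\R^d = R \oplus W \oplus (W^\perp \cap V)$, and the identity projector therefore splits as $I = I_R + I_W + I_{W^\perp \cap V}$. The only potentially delicate step is the identification $R = \proj_{W^\perp}(V^\perp)$; everything else is routine linear-algebra bookkeeping, and I do not expect any substantive obstacle.
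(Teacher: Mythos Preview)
Your proof is correct and follows essentially the same route as the paper. The orthogonality argument is identical (including the decomposition $r=w+v$ to show $R\perp(W^\perp\cap V)$), and both proofs hinge on the identification $R=\proj_{W^\perp}(V^\perp)$ together with $W^\perp\cap V=(\spn(W\cup V^\perp))^\perp$; the only cosmetic difference is that you finish with a dimension count while the paper finishes by directly arguing that $\spn(W,R,W^\perp\cap V)=\R^d$.
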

\begin{proof}
  We first argue that vectors in $W$, $R$ and $W^\perp\cap V$ are pairwise orthogonal. Indeed, any vector in the latter two subspaces belongs in $W^\perp$ and thus is orthogonal to $W$. Moreover, note that any vector $r \in R$ belongs in $\spn(W\cup V^\perp)$ and can be written as $r = w + v$ for some $w \in W$ and $v \in V^\perp$.
  For any $u \in W^\perp \cap V$, it holds that $u\cdot r = u\cdot w + u\cdot v = 0$, because $u\cdot w = 0$ as $u \in W^\perp$ and $w \in W$ and $u\cdot v = 0$ as $u \in V$ and $v \in V^\perp$.
  We now argue that $\spn(W,R,W^\perp \cap V) = \R^d$.
  Indeed, $\spn(W,R) = \spn(W,\spn(W\cup V^\perp) \cap W^\perp) = \spn(W,\spn(W, \proj_{W^\perp} V^\perp) \cap W^\perp) = \spn(W, \proj_{W^\perp} V^\perp) = \spn(W, V^\perp)$. This is true, as for any subspaces $A,B$, $\spn(A,B) = \spn(A,\proj_{A^\perp}B)$.  Thus, $\spn(W,R,W^\perp \cap V) = \spn(W, V^\perp, W^\perp \cap V)  = \R^d$.
\end{proof}

We now proceed to show Proposition~\ref{prop:approxtransform}
assuming $x \in X \setminus W$ as in the case that $x \in W$, we have that $Tx =x$ and so $f_A(x) = f_{AT}(x)$.
We argue that for any such point $x \in X \setminus W$, both $f_A(x)$ and $f_{AT}(x)$
are close to $ \frac {A x^{(R)} } {\|A x^{(R)}\|_2}$.
To do this, we use Claim~\ref{claim:smallcontributions},
which shows that the contributions to $A x$
of the projections of $x$ to $W$ and $W^\perp \cap V$ are small.

\begin{claim}\label{claim:smallcontributions}
For any $x \in X \setminus W$,
  $$\|A x^{(R)} \|_2 \ge (g-1) \rho \max \{ \|A x^{(W)} \|_2 , \|A x^{(W^\perp \cap V)} \|_2 \} \;.$$
\end{claim}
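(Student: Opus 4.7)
The plan is to leverage the orthogonal decomposition from Claim~\ref{claim:subspace partition} (in which $W^\perp$ splits orthogonally as $R\oplus(W^\perp\cap V)$, since both $R$ and $W^\perp\cap V$ lie in $W^\perp$) to rewrite $x^{(W^\perp)}=x^{(R)}+x^{(W^\perp\cap V)}$, and then to show that under the strong expansion hypothesis on $W^\perp$, the $R$-component must absorb essentially all of that expansion, since the $W^\perp\cap V$-component is pinned down by the small-side singular value $\sigma_{\max}(A^{(V)})$.

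Set $M:=\max\{\sigma_{\max}(A^{(W)}),\sigma_{\max}(A^{(V)})\}$. The hypothesis gives $\|Ax^{(W^\perp)}\|_2\ge gM\,\|x^{(W^\perp)}\|_2$ for every $x\in X\setminus W$. On the other hand, since $W^\perp\cap V\subseteq V$ and $W^\perp\cap V\subseteq W^\perp$,
\[
\|Ax^{(W^\perp\cap V)}\|_2 \;\le\; \sigma_{\max}(A^{(V)})\,\|x^{(W^\perp\cap V)}\|_2 \;\le\; M\,\|x^{(W^\perp)}\|_2.
\]
Applying the triangle inequality to $x^{(W^\perp)}=x^{(R)}+x^{(W^\perp\cap V)}$ and then invoking $\|x^{(W^\perp)}\|_2\ge \rho$,
\[
\|Ax^{(R)}\|_2 \;\ge\; \|Ax^{(W^\perp)}\|_2-\|Ax^{(W^\perp\cap V)}\|_2 \;\ge\; (g-1)M\,\|x^{(W^\perp)}\|_2 \;\ge\; (g-1)\rho\, M.
\]

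To finish, I will compare this lower bound with each of the two quantities on the right-hand side of the claim. The inequality in the claim is homogeneous of degree one in $x$, so I may assume $\|x\|_2\le 1$ throughout (the multiplicative expansion ratio $\|Ax^{(W^\perp)}\|_2/\|x^{(W^\perp)}\|_2$ in the hypothesis is itself scale-invariant). Then $\|Ax^{(W)}\|_2\le \sigma_{\max}(A^{(W)})\|x^{(W)}\|_2\le M$ and similarly $\|Ax^{(W^\perp\cap V)}\|_2\le M$, giving $\max\{\|Ax^{(W)}\|_2,\|Ax^{(W^\perp\cap V)}\|_2\}\le M$. Combining with the lower bound proven above yields the claim.

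The only mildly delicate point is the book-keeping of norms in the last comparison: the hypothesis contains both a scale-invariant piece (the expansion ratio $g$) and a purely additive piece ($\|x^{(W^\perp)}\|_2\ge \rho$), and the factor $\rho$ in the conclusion is precisely the price for converting the additive guarantee into a uniform multiplicative bound against the common scale $M$. Once this scaling is sorted out, the rest is a routine triangle-inequality argument driven by the observation that $W^\perp\cap V$ sits on the "small" side of the singular-value gap.
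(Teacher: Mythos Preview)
Your homogeneity argument has a genuine gap. You correctly observe that the claimed inequality is homogeneous in $x$ and that the expansion ratio is scale-invariant, but the second hypothesis, $\|x^{(W^\perp)}\|_2\ge\rho$, is \emph{not} scale-invariant. After rescaling to $\|x\|_2\le 1$ you can no longer invoke it for the rescaled point --- yet your lower bound $\|Ax^{(R)}\|_2\ge(g-1)\rho M$ relied on exactly that. Concretely, if $\|x\|_2=100$, $\|x^{(W^\perp)}\|_2=1$, and $\rho=1/2$, then for $x'=x/100$ the chain $\|Ax'^{(R)}\|_2\ge(g-1)M\|x'^{(W^\perp)}\|_2\ge(g-1)\rho M$ breaks at the last step. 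You therefore cannot combine $\|Ax^{(R)}\|_2\ge(g-1)\rho M$ (proved for the original $x$) with $\|Ax^{(W)}\|_2\le M$ (which needs the rescaled $x$). Incidentally, for the $W^\perp\cap V$ term no rescaling is needed at all: your own inequalities already give $\|Ax^{(R)}\|_2\ge(g-1)M\|x^{(W^\perp)}\|_2\ge(g-1)\|Ax^{(W^\perp\cap V)}\|_2$ directly.

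The paper handles both terms without rescaling. For $W^\perp\cap V$ it obtains $\|Ax^{(R)}\|_2\ge(g-1)\|Ax^{(W^\perp\cap V)}\|_2$ as above. For the $W$ term it combines $\|Ax^{(R)}\|_2\ge(1-1/g)\|Ax^{(W^\perp)}\|_2$ with $\|Ax^{(W^\perp)}\|_2/\|x^{(W^\perp)}\|_2\ge g\,\sigma_{\max}(A^{(W)})\ge g\,\|Ax^{(W)}\|_2/\|x^{(W)}\|_2$ to reach the scale-invariant bound $\|Ax^{(R)}\|_2\ge(g-1)\,\frac{\|x^{(W^\perp)}\|_2}{\|x^{(W)}\|_2}\,\|Ax^{(W)}\|_2$, and only then invokes $\frac{\|x^{(W^\perp)}\|_2}{\|x^{(W)}\|_2}\ge\rho$ at the very end. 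The key difference is that the paper keeps everything as a ratio involving $x$ until the final step, whereas you prematurely detach the bound from $x$ into the absolute constant $(g-1)\rho M$ and then cannot legitimately reattach it to the (possibly large) $\|Ax^{(W)}\|_2$.
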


\begin{proof}
  We analyze two cases separately.

  \paragraph{We first bound $\| A x^{(W^\perp \cap V)}\|_2$.}
  If $x^{(W^\perp \cap V)} = 0$, we have that $\| A x^{(W^\perp \cap V)}\|_2 = 0$.
  Otherwise, we have that
  $$\|A x^{(W^\perp)}\|_2 \ge g \| x^{(W^\perp)} \|_2 \sigma_{\max} (A^{(V)})
  \ge g \| x^{(W^\perp)} \|_2\frac {\|A x^{(W^\perp \cap V)} \|_2}{  \| x^{(W^\perp \cap V)} \|_2} \ge g \|A x^{(W^\perp \cap V)} \|_2 \;.$$
  By the triangle inequality and since $x^{(W^\perp)} = x^{(R)} + x^{(W^\perp \cap V)}$,
  we get that $\|A x^{(R)}\|_2 \ge \|A x^{(W^\perp)}\|_2 - \|A x^{(W^\perp \cap V)}\|_2$.
  This implies that $\|A x^{(R)}\|_2 \ge (g - 1) \|A x^{(W^\perp \cap V)}\|_2$ and
  $\|A x^{(R)}\|_2 \ge (1 - 1/g) \|A x^{(W^\perp)}\|_2$.
  \paragraph{
  We now bound $\|A x^{(W)}\|_2$.}
  If $x^{(W)} = 0$, we have that $\|A x^{(W)}\|_2=0$.
  Otherwise, we have that
  $$\frac { \|A x^{(W^\perp)}\|_2 } {\|x^{(W^\perp)}\|_2}
  \ge g \sigma_{\max} (A^{(W)}) \ge g \frac {\|A x^{(W)} \|_2} {\|x^{(W)}\|_2} \;.$$
  Now since $\|A x^{(R)}\|_2 \ge (1-1/g) {\|A x^{(W^\perp)}\|_2}$, we get that
  $$ {\|A x^{(R)}\|_2} \ge (g-1) \frac { \|x^{(W^\perp)}\|_2} {\|x^{(W)}\|_2} {\|A x^{(W)} \|_2} \ge (g-1) \rho {\|A x^{(W)} \|_2} \;.$$
\end{proof}

Using Claim~\ref{claim:smallcontributions}, we now have that for $\mu = \frac 1 {(g-1) \rho}$:
\begin{align*}
  \|f_A(x) - \frac {A x^{(R)} } {\|Ax^{(R)}\|_2}\|_2
  &\le
  \|f_A(x) - \frac {A x^{(R)} } {\|A x\|_2}\|_2 + \|\frac {A x^{(R)} } {\|A x\|_2} - \frac {A x^{(R)} } {\|Ax^{(R)}\|_2}\|_2 \\
  &\le
  \frac {\|A x^{(W^\perp \cap V)}\|_2 + \|A x^{(W)}\|_2 } {\|A x\|_2} +
   \left| \frac { \|A x^{(R)}\|_2} {\|A x\|_2} - 1\right| \\
   &\le
   2 \mu  \frac {\|A x^{(R)}\|_2 } {\|A x\|_2} +
   \left| \frac { \|A x^{(R)}\|_2} {\|A x\|_2} - 1\right| \;
\end{align*}
Since $\frac  {\|A x\|_2} { \|A x^{(R)}\|_2} \in [1-2\mu, 1+2\mu]$,
we get that:
$$\|f_A(x) - \frac {A x^{(R)} } {\|Ax^{(R)}\|_2}\|_2 \le \frac {4\mu} { 1 - 2\mu} \;.$$

Similarly we get that
$\|f_{AT}(x) - \frac {A x^{(R)} } {\|Ax^{(R)}\|_2}\|_2 \le \frac {4\mu/\delta} { 1 - 2\mu/\delta}$,
by noting that
$A T x^{(R)} = \delta A x^{(R)}$,
$A T x^{(W^\perp \cap V)} = A x^{(W^\perp \cap V)}$ and
$A T x^{(W)} = A x^{(W)}$.

Combining the above we get that,
$$\|f_A(x) - f_{AT}(x)\|_2 \le \frac {4\mu} { 1 - 2\mu} + \frac {4\mu/\delta} { 1 - 2\mu/\delta} \le 16 \mu/\delta \;,$$
for $\delta \in [0,1]$ and $\mu /\delta \in [0,1/4]$.

\section{Full Algorithm: Proof of Theorem~\ref{thm:forster-intro}} \label{sec:full-forster}

In this section, we will put together the basic algorithm from Section~\ref{sec:forster-modular}
with the approximate eigendecomposition algorithm from Section~\ref{sec:eigen}
and the bit complexity reduction routine from Section~\ref{sec:rounding}
to prove Theorem~\ref{thm:forster-intro}.
The final algorithm is given in pseudocode below.

\begin{algorithm}[hbt!]
   \caption{Full Forster Transform Algorithm}
   \label{alg:forster-full}
\begin{algorithmic}[1]
    \Function{$\textsc{ForsterTransform}$ }{set $X {\subset \R_{\ast}^d}$ of $n$ points, accuracy parameter $\eps$}
    \State Let $A \leftarrow I$  {\hspace{22pt} \(\triangleright\) Initialization  of transformation matrix $A$}
    	\While{$\| M_A(X) \|^2_F > \frac 1 d + \frac {\eps^2} {d^2}$}
                \State Set $A \leftarrow \textsc{ImproveTransform}(A,X,\eps,\delta)$,
                for $\delta$ a sufficiently small polynomial in $\eps/dn$.
                \If{ \textsc{ImproveTransform} returned a subspace $V$}
                     \State \Return $V$.
                \EndIf
                \State Set $A \leftarrow \textsc{Round}(A,X,\zeta)$,
                for $\zeta$ a sufficiently small multiple of $\eps^5/(d^{10} n^5)$.
        \EndWhile

        \State \Return $A$
    \EndFunction
  \end{algorithmic}
\end{algorithm}

\noindent The full version of our \textsc{ImproveTransform} function is given in pseudocode below.

\begin{algorithm}[hbt!]
    \caption{Find Improved Transform Matrix}
    \label{alg:improvement-full}
 \begin{algorithmic}[1]
     \Function{$\textsc{ImproveTransform}$ }
     {current matrix $A \in \R^{d\times d}$, $X \subset \R_{\ast}^d$, {accuracy parameter $\eps$}, {error parameter $\delta$}}
          \State Let $C$ be a sufficiently large constant.
          \State Set $n\leftarrow |X|$.
          \State Set $a_1, \ldots,a_d, q_1,\ldots,q_d \leftarrow\textsc{EigenDecomposition}(M_A(X),\eta,\delta)$,
          for $\eta = \eps^4/(C^3 d^8 n^4)$.
          \State Sort $a_i \|q_i\|_2^2$ in descending order of size.
          \State Find $k$ maximizing $a_k \|q_k\|_2^2 - a_{k+1} \|q_{k+1}\|_2^2$.
          \State Let $W$ be the span of $q_{k+1},\ldots,q_d$.
          \State Let $\gamma$ be $\eps^2/(Cd^4n^2)$

          \Statex  {\vspace{5pt} \hspace{22pt} \(\triangleright\) Consider the Following Two Cases \vspace{5pt}}

          \If{there exists $x \in X$ such that $\left\| \proj_{W} f_A(x) \right\|_2, \left\| \proj_{W^\perp} f_A(x) \right\|_2 \ge \gamma$}

                \State Set $V \leftarrow W$.
                \State Set $\alpha \leftarrow \eps/(64 n d^3)$.

	  \Else

                \State {Set} $X^\Bg {\leftarrow} \{x \in X: \| \text{proj}_{W^\perp} f_A(x)\|_2 \ge \gamma \}$.
                \State Let $c_1,c_2,\ldots,c_d,r_1,r_2,\ldots,r_d \leftarrow \textsc{EigenDecomposition}(M_A(X^\Bg),\eta)$
                \State Let $V$ be the span of the $d-k$ vectors $r_i$ with the smallest values of $c_i \|r_i \|_2^2$.
        		\State Set $\beta  \leftarrow \max_{x \in X^\Bg} \| f^{(V)}_A(x) \|_2 $.
                \If{$\beta = 0$}
                       \Statex {\vspace{5pt} \hspace{22pt}  \(\triangleright\)   No Forster Transform Exists  \vspace{5pt}}
		         \State \Return The subspace $V^\perp$.
		         \hspace{22pt}
		 \Else     {\vspace{5pt} \hspace{22pt}  \(\triangleright\)   Case where $\beta>0$  \vspace{5pt}}
		
			\State Set $\alpha \leftarrow \eps/(3\beta d^2 n) - 1$
		\EndIf
         \EndIf
        \State \Return $\left( I + \alpha I_V \right) A$
         \EndFunction
   \end{algorithmic}
 \end{algorithm}

The rest of this section will be devoted to proving the correctness of this algorithm.

To begin with, we note that if the algorithm returns a matrix $A$,
it must be the case that $\|M_A(X)\|_F^2 \leq 1/d + \eps^2/d^2$,
and so by Lemma~\ref{lem:phi-prop},
$A$ will be an $\eps$-Forster transform matrix.
Also note that upon applying \textsc{Round},
we replace $A$ with a matrix whose entries have bit complexity $\poly(bdn/\eps)$.
From there it is not hard to see that all arithmetic computations performed
by this algorithm are done to only polynomial precision.
Finally, we note that in each iteration of the main while loop,
our algorithm performs a polynomial number of arithmetic operations.
Therefore, in order to prove correctness, we need to establish the following:
\begin{enumerate}
\item If our algorithm returns a subspace $V$, then $|X\cap V| > |X|\dim(V)/d$.
\item In each iteration of our while loop, the potential function
$\Phi_X(A) := \|M_A(X)\|_F^2$ decreases by at least an inverse-polynomial amount.
\end{enumerate}
We note that if this is the case,
we will only need to call \textsc{EigenDecomposition} a polynomial number of times,
and thus we may assume that all such calls succeed, which we will assume hereafter.

We begin with a basic consequence of our eigendecomposition lemma:
\begin{lemma}\label{Frob-Eigendecomp-Error-Lemma}
We have that $\|M_A(X) - \sum_{i=1}^d a_i q_iq_i^\top\|_F \leq \sqrt{d}\eta$.
\end{lemma}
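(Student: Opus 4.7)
The plan is to derive the Frobenius bound from the multiplicative spectral guarantee supplied by \textsc{EigenDecomposition} (Proposition~\ref{prop:svd}), combined with two basic observations about $M := M_A(X)$: that it is symmetric positive semidefinite, and that its trace equals $1$ (by Lemma~\ref{lem:phi-prop}, the eigenvalues of $M$ sum to $1$, hence $\|M\|_2 \leq 1$).

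First, I would invoke Proposition~\ref{prop:svd} with the choice of accuracy $\eta$ specified in \textsc{ImproveTransform}, obtaining $\hat M := \sum_{i=1}^d a_i q_i q_i^\top$ satisfying
\[
|v^\top (M - \hat M) v| \leq \eta \, (v^\top M v) \qquad \text{for all } v \in \R^d.
\]
Combining with $v^\top M v \leq \|M\|_2 \|v\|_2^2 \leq \|v\|_2^2$, this yields $|v^\top (M - \hat M) v| \leq \eta \|v\|_2^2$ for every $v$.

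Next, I would observe that $M - \hat M$ is symmetric: $M$ is symmetric by construction, and each term $a_i q_i q_i^\top$ in $\hat M$ is symmetric as well. For a symmetric matrix $S$, we have $\|S\|_2 = \max_{v \neq 0} |v^\top S v|/\|v\|_2^2$, so the preceding bound gives $\|M - \hat M\|_2 \leq \eta$.

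Finally, I would convert the spectral norm bound to a Frobenius norm bound. Since $M - \hat M$ is a $d \times d$ matrix, its Frobenius norm is at most $\sqrt{d}$ times its spectral norm, yielding $\|M - \hat M\|_F \leq \sqrt{d}\, \eta$, as desired. The argument is essentially a one-line consequence of Proposition~\ref{prop:svd}; no genuine obstacle arises, since the symmetry of $\hat M$ follows immediately from the outer-product form of its summands.
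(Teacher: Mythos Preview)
Your proof is correct and follows essentially the same approach as the paper: use the multiplicative guarantee from Proposition~\ref{prop:svd} together with $v^\top M v \leq 1$ for unit $v$ to bound $\|M-\hat M\|_2 \leq \eta$, then pass to the Frobenius norm at the cost of a $\sqrt{d}$ factor. The paper's version is slightly terser but the argument is the same.
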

\begin{proof}
Letting $M=M_A(X)$ and $\hat M=\sum_{i=1}^d a_iq_iq_i^\top$, we have that
for any unit vector $v$ it holds that $|v^\top (M-\hat M)v| \leq \eta \, (v^\top M v) \leq \eta$.
This means that $\|M - \hat M\|_2 \leq \eta$. We note that since this is the maximum eigenvalue of $M-\hat M$,
and since the Frobenius norm of $M-\hat M$ is the square root of the sum of squares of the eigenvalues,
we have that $\|M-\hat M\|_F \leq \sqrt{d}\eta$, as desired.
\end{proof}

We next show that our approximate eigendecomposition exhibits an eigenvalue gap. In particular,
we establish the following lemma.
\begin{lemma}
We have that
$ a_k \, \|q_k \|^2 - a_{k+1} \, \|q_{k+1}\|^2 \geq (3/4)(\eps/d^3)$.
\end{lemma}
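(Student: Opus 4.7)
The plan is to derive the eigenvalue gap for the exact matrix $M_A(X)$ from the potential-function hypothesis, and then transfer this gap to the approximate eigenvalues $a_i \|q_i\|_2^2$ by an operator-norm perturbation argument.

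First, since the \textsc{ForsterTransform} routine in Algorithm~\ref{alg:forster-full} has not terminated, we have $\Phi_X(A) = \|M_A(X)\|_F^2 > 1/d + \epsilon^2/d^2$. By Lemma~\ref{lem:phi-prop} part~\ref{case:phi-large}(b) applied to $M := M_A(X)$, there exist two consecutive eigenvalues of $M$ (in descending order) whose difference exceeds $\epsilon/d^3$.

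Next I would translate this into a statement about $\hat M := \sum_{i=1}^d a_i q_i q_i^\top$. Because $M$ is PSD with $\operatorname{tr}(M) = 1$, the multiplicative guarantee of Proposition~\ref{prop:svd} gives $\|M-\hat M\|_2 \le \eta$, since for any unit $v$, $|v^\top(M-\hat M)v| \le \eta (v^\top M v) \le \eta \|M\|_2 \le \eta$. The key observation is that since the $q_i$ are pairwise orthogonal, the normalized vectors $q_i/\|q_i\|_2$ form an orthonormal eigenbasis of $\hat M$ with corresponding eigenvalues $a_i\|q_i\|_2^2$; hence the multiset $\{a_i\|q_i\|_2^2\}_{i=1}^d$ is precisely the spectrum of $\hat M$. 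By Weyl's inequality, sorting both spectra in descending order gives $|\lambda_i(M) - \lambda_i(\hat M)| \le \eta$ for each $i$.

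Therefore, letting $i^\ast$ be an index realizing the gap of at least $\epsilon/d^3$ in $M$'s spectrum, the corresponding gap in $\hat M$'s (sorted) spectrum is at least $\epsilon/d^3 - 2\eta$. Since the algorithm chooses $k$ to maximize $a_k\|q_k\|_2^2 - a_{k+1}\|q_{k+1}\|_2^2$ over the sorted values, we conclude
\[
a_k\|q_k\|_2^2 - a_{k+1}\|q_{k+1}\|_2^2 \;\ge\; \frac{\epsilon}{d^3} - 2\eta.
\]
Plugging in $\eta = \epsilon^4/(C^3 d^8 n^4)$ for a sufficiently large constant $C$, we have $2\eta \le (1/4)(\epsilon/d^3)$, which yields the claimed bound of $(3/4)(\epsilon/d^3)$. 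The only mild subtlety is verifying that $a_i\|q_i\|_2^2$ really are the eigenvalues of $\hat M$ (rather than some rescaling), which is immediate from the orthogonality of the $q_i$; no further obstacles are expected.
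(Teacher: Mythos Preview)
Your proof is correct and takes a genuinely different route from the paper's. The paper works entirely with $\hat M$: it bounds $|\tr(\hat M)-1|$ and $\|\hat M\|_F$ via Lemma~\ref{Frob-Eigendecomp-Error-Lemma}, then reruns the variance argument of Lemma~\ref{lem:phi-prop} from scratch on the numbers $a_i\|q_i\|_2^2$ to produce the gap $(3/4)(\eps/d^3)$ directly. You instead invoke Lemma~\ref{lem:phi-prop} on the exact matrix $M_A(X)$ to get a consecutive eigenvalue gap of at least $\eps/d^3$, and then transfer it to $\hat M$ via Weyl's inequality and the operator-norm bound $\|M-\hat M\|_2\le\eta$. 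Your approach is more modular and slightly sharper (you lose only $2\eta$ rather than the $O(d\eta)$ slack incurred in the paper's trace and Frobenius estimates), at the cost of importing Weyl's inequality as an external tool; the paper's argument is more self-contained but essentially duplicates the proof of Lemma~\ref{lem:phi-prop}.
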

\begin{proof}
Note that $\tr(M_A(X))=1$.
Therefore, by Lemma \ref{Frob-Eigendecomp-Error-Lemma} and
letting $\hat M = \sum_{i=1}^d a_iq_iq_i^\top$,
we have that that $|\tr(\hat M) - 1| \leq \sqrt{d} \, \|M_A(X)-\hat M\|_F \leq d \, \eta$.
Moreover, we have that $\|\hat M\|_F \geq \|M_A(X)\|_F -\sqrt{d}\eta$.
Thus, $\|\hat M\|_F^2 \geq 1/d+\eps^2/d^2+O(\eta).$ On the other hand, we can write
$$\|\hat M\|_F^2 = \sum_{i=1}^d (a_i \, \|q_i\|_2^2)^2
= \sum_{i=1}^d (a_i \|q_i\|_2^2-1/d)^2 + 2 \, \tr(\hat M)/d - 1/d
= \sum_{i=1}^d (a_i \|q_i\|_2^2-1/d)^2+1/d +O(d\eta) \;.$$
This implies that
$$
\sum_{i=1}^d (a_i \, \|q_i\|_2^2-1/d)^2 \geq \eps^2/d^2 + O(d\eta) \;.
$$
Thus, there must be some $i$ with $\left| a_i \|q_i\|_2^2-1/d \right| \geq (99/100) \eps/d^2.$
Since the average value of $a_i \|q_i \|_2^2-1/d$ is
$(\tr(\hat M)-1)/d = O(d\eta)$, the difference between the biggest and smallest values
of $a_i \, \|q_i\|_2^2-1/d$ must differ by at least $(3/4)(\eps/d^2)$.
Therefore, the biggest single gap between consecutive values of $\|q_i\|_2^2$
must be at least $(3/4)(\eps/d^3)$. This completes our proof.
\end{proof}

It is now easy to show that \textsc{ImproveTransform}
decreases our potential in the case where there exists $x \in X$
such that $\left\| \proj_{W} f_A(x) \right\|_2, \left\| \proj_{W^\perp} f_A(x) \right\|_2 \ge \gamma$.
To prove this, we would like to apply Proposition \ref{case-1-prop}.
In particular, in this case, we let
$\rho = \max_{x\in X}(\min(\left\| \proj_{W} f_A(x) \right\|_2, \left\| \proj_{W^\perp} f_A(x) \right\|_2))$.
By assumption, we have that
$\rho > \gamma$ and $\rho = \max_{x\in X}(\min(\left\| \proj_{W} f_A(x) \right\|_2, \left\| \proj_{W^\perp} f_A(x) \right\|_2))$,
as the first property in Proposition \ref{case-1-prop} requires.

Next we let $M=M_A(X)$, and $\hat M = \sum_{i=1}^d a_i q_i q_i^\top$. We note that
$$
\lambda_{\min}(\hat M^{V^\perp,V^\perp}) - \lambda_{\max}(\hat M^{V,V}) = a_k \, \|q_k\|_2^2 - a_{k+1} \, \|q_{k+1}\|_2^2
\geq (3/4)(\eps/d^3) \;.
$$
Therefore, by Lemma \ref{Frob-Eigendecomp-Error-Lemma}, we have that
$$
\lambda_{\min}( M^{V^\perp,V^\perp}) - \lambda_{\max}( M^{V,V}) =
a_k \, \|q_k\|_2^2 - a_{k+1} \, \|q_{k+1}\|_2^2 \geq \eps/(2d^3) \;,
$$
showing that Property 2 holds.

Finally, we note that
$$
\hat M^{V,V^\perp} = \0 \;.
$$
Thus, by Lemma \ref{Frob-Eigendecomp-Error-Lemma}, we have that
$$
\|M^{V,V^\perp}\|_F \leq \sqrt{d}\eta \leq \alpha \leq \alpha \rho \;,
$$
thus showing that the third property applies.

Therefore, applying Proposition \ref{case-1-prop},
if there is an $x\in X$ such that
$\left\| \proj_{W} f_A(x) \right\|_2, \left\| \proj_{W^\perp} f_A(x) \right\|_2 \ge \gamma$,
then setting $C=\textsc{ImproveTransform}(A,X,\eps)$, we have that
$$
\Phi_X(C) \leq \Phi_X(A) - \rho^2 \eps / (8nd^2) \leq \Phi_X(A) - \gamma^2 \eps / (8nd^2) \;.
$$

For the case where all $x\in X$ have
$\min(\left\| \proj_{W} f_A(x) \right\|_2, \left\| \proj_{W^\perp} f_A(x) \right\|_2 ) \leq \gamma$,
we would like to apply Proposition \ref{case-2-prop}.
We begin by showing that all of the necessary properties apply.

For starters, letting $\hat M = \sum_{i=1}^d a_i q_i q_i^\top$ and $M=M_A(X)$, we have that
$$
\lambda_k(\hat M) - \lambda_{k+1}(\hat M) = a_k \|q_k\|_2^2 - a_{k+1} \|q_{k+1} \|_2^2 \geq (3/4)(\eps/d^3) \;.
$$
Since $\|M_A(X) -\hat M\|_F \leq \sqrt{d}\eta$, we have that
$$
\lambda_k(M_A(X)) - \lambda_{k+1}(M_A(X)) \geq (3/4)(\eps/d^3) - 2\sqrt{d}\eta \geq \eps/(2d^3) \;.
$$
The requirement that for each $x\in X$ that
$\min(\left\| \proj_{W} f_A(x) \right\|_2, \left\| \proj_{W^\perp} f_A(x) \right\|_2 ) \leq \gamma$ is a bit subtle,
since the $W$ used in Proposition \ref{case-2-prop} is the relevant eigenspace of $M_A(X)$,
while our $W$ is merely an approximation of it.
Fortunately, it is not hard to show that these spaces are relatively close to each other.

\begin{lemma}\label{eigenvectors close lemma}
If $v$ is a unit eigenvector of $M$, then
$
\|\proj_W(v)\|_2 \leq  3d^{7/2}\eta/\eps  \leq \gamma/\sqrt{d}
$
if it is one of the top $k$ eigenvectors, and
$
\|\proj_{W^\perp}(v)\|_2  \leq (3 d^{7/2}\eta/\eps ) \leq \gamma/\sqrt{d}
$
otherwise.
\end{lemma}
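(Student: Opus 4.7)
The plan is to prove this by a perturbation argument of Davis--Kahan type, exploiting the eigenvalue gap established in the preceding lemma together with the approximate identity $\hat M \approx M$ from Lemma~\ref{Frob-Eigendecomp-Error-Lemma}. The key observation is that $W$ is exactly the span of the bottom $d-k$ eigenvectors of $\hat M$ (since $\hat M = \sum_i a_i q_i q_i^\top$ with orthogonal $q_i$'s, so the $q_i$'s are, up to normalization, eigenvectors of $\hat M$ with eigenvalues $a_i \|q_i\|_2^2$). Thus $W$ and $W^\perp$ are $\hat M$-invariant, and on $W$ the spectrum of $\hat M$ lies in $[0, a_{k+1}\|q_{k+1}\|_2^2]$, while on $W^\perp$ it lies in $[a_k\|q_k\|_2^2, \lambda_1(\hat M)]$, with a gap of at least $(3/4)(\eps/d^3)$ between these intervals.

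The central computation goes as follows. Suppose $v$ is a unit eigenvector of $M$ with eigenvalue $\lambda \geq \lambda_k(M)$, and set $w = \proj_W(v)$. Since $W$ is $\hat M$-invariant, we have $\proj_W(\hat M v) = \hat M w$. On the other hand, $Mv = \lambda v$, so $\proj_W(Mv) = \lambda w$. Subtracting yields the identity
\[
(\lambda I - \hat M)\,w \;=\; \proj_W\big((M-\hat M)v\big).
\]
The right-hand side has norm at most $\|M - \hat M\|_F \, \|v\|_2 \leq \sqrt{d}\,\eta$ by Lemma~\ref{Frob-Eigendecomp-Error-Lemma}. For the left-hand side, the operator $\lambda I - \hat M$ restricted to $W$ has all eigenvalues at least $\lambda - a_{k+1}\|q_{k+1}\|_2^2$. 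By Weyl's inequality applied to $M-\hat M$ (whose spectral norm is at most $\sqrt{d}\eta$), one has $\lambda \geq \lambda_k(M) \geq a_k\|q_k\|_2^2 - \sqrt{d}\eta$, so this eigenvalue lower bound becomes at least
\[
\big(a_k\|q_k\|_2^2 - a_{k+1}\|q_{k+1}\|_2^2\big) - \sqrt{d}\eta \;\geq\; \tfrac{3}{4}(\eps/d^3) - \sqrt{d}\eta \;\geq\; \eps/(2d^3),
\]
for our choice of $\eta$. Thus $\|(\lambda I - \hat M)w\|_2 \geq (\eps/(2d^3))\|w\|_2$, and combining gives $\|w\|_2 \leq 2d^{7/2}\eta/\eps \leq 3d^{7/2}\eta/\eps$. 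Plugging in $\eta = \eps^4/(C^3 d^8 n^4)$ and $\gamma = \eps^2/(Cd^4 n^2)$, this upper bound is $3\eps^3/(C^3 d^{9/2} n^4) \leq \gamma/\sqrt{d}$ for $C$ a sufficiently large constant.

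The case where $v$ is one of the bottom $d-k$ eigenvectors of $M$ (so $\lambda \leq \lambda_{k+1}(M)$) is entirely symmetric: letting $u = \proj_{W^\perp}(v)$ and using that $W^\perp$ is $\hat M$-invariant, the same manipulation yields $(\lambda I - \hat M)u = \proj_{W^\perp}((M-\hat M)v)$, and now the restriction of $\hat M - \lambda I$ to $W^\perp$ has all eigenvalues at least $a_k\|q_k\|_2^2 - \lambda_{k+1}(M) - O(\sqrt{d}\eta) \geq \eps/(2d^3)$, giving the identical bound $\|u\|_2 \leq 3d^{7/2}\eta/\eps$.

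I do not expect any serious obstacle: the whole argument is a one-line Davis--Kahan computation, and the only thing to check carefully is that the parameters $\eta$ and $\gamma$ were chosen so that the final inequality $3d^{7/2}\eta/\eps \leq \gamma/\sqrt{d}$ actually holds. The only place requiring a small amount of care is ensuring that the ``multiplicative'' error bound $|v^\top(M - \hat M)v| \leq \eta \,v^\top M v$ from Proposition~\ref{prop:svd} is used correctly to obtain the Frobenius-norm estimate $\|M - \hat M\|_F \leq \sqrt{d}\eta$ via the spectral inequality $\|M - \hat M\|_2 \leq \eta$ and the fact that $M - \hat M$ is $d \times d$; this was already handled in the proof of Lemma~\ref{Frob-Eigendecomp-Error-Lemma}, so we may invoke it as a black box.
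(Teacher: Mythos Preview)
Your proposal is correct and follows essentially the same approach as the paper: both argue via a Davis--Kahan-style perturbation, using that $W$ and $W^\perp$ are $\hat M$-invariant to project the identity $\hat M v - \lambda v = (\hat M - M)v$ onto $W$ (respectively $W^\perp$) and then invoke the eigenvalue gap of $\hat M$ on that subspace against $\lambda$. The only cosmetic differences are that the paper phrases the gap as $\lambda \geq \lambda_{k+1}(\hat M) + (3/8)(\eps/d^3)$ (getting constant $8/3$ rather than your $2$), and does not name Davis--Kahan explicitly; the arithmetic and logic are otherwise identical.
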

\begin{proof}
We have by definition that $Mv= \lambda v$ for some $\lambda$.
We have that either $\lambda \leq \lambda_k(\hat M) - (3/8)(\eps/d^3)$
or $\lambda \geq \lambda_{k+1}(\hat M) + (3/8)(\eps/d^3)$.
Without loss of generality, we assume the latter.
We note that by Lemma \ref{Frob-Eigendecomp-Error-Lemma}
that $\|Mv-\hat M v\|_2 \leq \sqrt{d}\eta$, and thus
$\|\hat M v - \lambda v\|_2 \leq \sqrt{d}\eta$.

Letting $v^{(W)}$ and $v^{(W^\perp)}$ denote the projections of $v$ onto $W$ and $W^\perp$,
and noting that $\hat M v^{(W)} \in W$ and $\hat M v^{(W^\perp)} \in W^\perp$,
we have that
$$
\| \hat M v^{(W)} - \lambda v^{(W)}\|_2 \leq \sqrt{d}\eta \;.
$$
On the other hand, we have that $\lambda I_W - \hat M^{W,W} \geq (3/8)(\eps/d^3)I_W$.
Therefore, we have that
$$
(3/8)(\eps/d^3)\|v^{(W)}\|_2 \leq \sqrt{d}\eta \;.
$$
From this, we conclude that $\|v^{(W)}\|_2 \leq 3 d^{7/2}\eta/\eps \leq \gamma/\sqrt{d}$.
This completes our proof.
\end{proof}

From the preceding, we note that for any unit vector $u$
that is a linear combination of either the top-$k$ or bottom $d-k$
eigenvectors of $M$ that $u$ is $\gamma$-close
to either $W^\perp$ or $W$ respectively (since it is a sum of relevant eigenvectors).
We have that for any $x\in X$ that either $\|f_A^{(W)}(x)\|_2 \leq \gamma$ or
$\|f_A^{(W^\perp)}(x)\|_2 \leq \gamma$. In the former case,
if $u$ is a unit vector that is a linear combination of the bottom $d-k$ eigenvectors,
then $u$ is $\gamma$-close to $W$,
so $u\cdot f_A(x) \leq 2 \gamma$. This implies that the projection of $f_A(x)$
onto the eigenspace of the bottom $d-k$ eigenvectors has norm at most $2\gamma$.
Similarly, if $\|f_A^{(W^\perp)}(x)\|_2 \leq \gamma$, then the projection of $f_A(x)$
onto the eigenspace of the top-$k$ eigenvectors is at most $2\gamma$.
This shows that the hypothesis of Proposition \ref{case-2-prop}
involving the projections of these vectors onto what it calls $W$
is satisfied with $\gamma$ replaced by $2\gamma$.

For Property 1 
we let $\tilde M:= \sum_{i=1}^d c_i r_i r_i^\top$,
and note that $\|\tilde M - M_A(X^\Bg)\|_F \leq \sqrt{d} \eta$.
We note that $V$ is the $(d-k)$-dimensional subspace minimizing $\tr(\tilde M^{V,V})$.
In particular, this implies that for $U$ the span of the bottom $d-k$ eigenvectors of $M_A(X)$,
we have that
$$
\tr(M_A^{V,V}(X^\Bg)) \leq \tr(\tilde M^{V,V}) + d \eta
\leq \tr(\tilde M^{U,U}) + d \, \eta
\leq \tr(M_A^{U,U}(X^\Bg)) + 2 d \eta
\leq \tr(M_A^{U,U}(X^\Bg))+\gamma^2/4 \;.
$$
This shows that Property 1 holds for $\delta = \gamma/2.$

Property 2 
holds similarly.
Property 3 
holds since
$$
\lambda_k(M_A^{V^\perp,V^\perp}(X^\Bg))
\geq \lambda_k(\tilde M^{V^\perp,V^\perp}) -\sqrt{d}\eta
= \lambda_k(\tilde M) -\sqrt{d}\eta
\geq \lambda_k(M_A(X^\Bg)) -2\sqrt{d}\eta
\geq \lambda_k(M_A(X^\Bg))-\gamma/2 \;.
$$
For Property 4,  
recall that $\beta = \max_{x\in X^\Bg}\|f_A^{(V)}(x)\|_2$.
This implies that $\|M_A^{V,V}(X^\Bg)\|_2 \leq \beta^2$.
By the relative error property of Proposition \ref{eigendecomposition proposition},
this implies that for $\tilde M := \sum_{i=1}^d r_i r_i^\top$
that $\|\tilde M^{V,V}\|_2 \leq 2\beta^2$.
Also note that by definition $\tilde M^{V,V^\perp}= \0$.
Next, let $v$ be a unit vector in $V$ and $w$ a unit vector in $V^\perp$.
We have that
$$
(v\pm\beta w)^\top \tilde M (v\pm\beta w) = v^\top \tilde M v +\beta^2 w^\top \tilde M w \leq 3\beta^2 \;.
$$
Thus, by the relative error property of Proposition \ref{eigendecomposition proposition},
we have that
$$
(v\pm\beta w)^\top M_A(X^\Bg) (v\pm\beta w) = (v\pm\beta w)^\top \tilde M (v\pm\beta w) + O(\eta \beta^2) \;.
$$
Taking the difference, we get that
\begin{align*}
2\beta v^\top M_A(X^\Bg) w
& = (v+\beta w)^\top M_A(X^\Bg) (v+\beta w)-(v-\beta w)^\top M_A(X^\Bg) (v-\beta w)\\
& = (v+\beta w)^\top \tilde M(X^\Bg) (v+\beta w)-(v-\beta w)^\top \tilde M (v-\beta w) + O(\eta \beta^2)\\
& = (v^\top \tilde M v + \beta^2 w^\top \tilde M w) - (v^\top \tilde M v + \beta^2 w^\top \tilde M w)+ O(\eta \beta^2)\\
& = O(\eta\beta^2) \;.
\end{align*}
Thus,
$$
v^\top M_A(X^\Bg) w = O(\eta\beta) \;.
$$
Summing over a basis of $v\in V$ and $w\in V^\perp$, we get that
$$
\|M_A^{V,V^\perp}(X^\Bg)\|_F
= O(d\eta\beta) \leq (\gamma/2)\beta \;.
$$
This shows that Property 4 
holds.

Thus, we can apply Proposition \ref{case-2-prop}
and find that if we return a subspace,
it has the desired property;
and otherwise that setting $C=\textsc{ImproveTransform}(A,X,\eps)$, we have that
$$
\Phi_X(C) \leq \Phi_X(A) - \Omega(\eps^3/(d^7 n)) \;.
$$
Thus, in either case, if $\textsc{ImproveTransform}(A,X,\eps)$ returns a matrix,
the value of $\Phi_X(A)$ decreases by $\Omega(\eps^5/(d^{10} n^5))$.
Since $\zeta$ is less than half of this, each iteration of \textsc{ForsterTransform}'s
main while loop decreases $\Phi_X(A)$ by at least $\Omega(\eps^5/(d^{10} n^5))$.
Therefore, our algorithm terminates in at most polynomially many iterations.

\section{PAC Learning Halfspaces in Strongly Polynomial Time} \label{sec:ltfs}

In this section, we give our strongly polynomial improper PAC learner for halfspaces,
thereby establishing Theorem~\ref{thm:lft-real-intro}.

\subsection{Approximate Forster Decomposition} \label{ssec:forster-decomp}


Theorem~\ref{thm:forster-intro} is often difficult to use directly
as it does not always guarantee a Forster ransform.
This is necessary because if many points are concentrated on a subspace,
it may be the case that no such transform exists.
However, in this case we can at least find a dense subspace
and hopefully can find a Forster transform on that subspace.
In general, we have the following result:

\begin{proposition}[Forster Decomposition] \label{ForsterDecompositionProp}
There is an algorithm that given a multiset $X$ of $n$ points in $\R_{\ast}^d$
and $\eps > 0$, runs in time strongly-polynomial in $d \, n /\eps$,
and with high probability returns a subspace $V \subseteq \R^d$ with $V \neq \0$
and a linear transformation $A: V \rightarrow \R^{\dim(V)}$, such that
\begin{enumerate}

\item $|X \cap V| \geq (n/d) \, \dim(V)$.

\item The eigenvalues of $\frac{1}{|X \cap V|} \sum_{x \in X \cap V} f_A(x) (f_A(x))^\top$
are in $[(1-\eps)/\dim(V), (1+\eps)/\dim(V)].$
\end{enumerate}
\end{proposition}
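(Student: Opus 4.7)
The plan is to apply Theorem~\ref{thm:forster-intro} iteratively, descending into smaller and smaller subspaces until a genuine Forster transform is obtained. Starting with $V_0 = \R^d$ and the full dataset, we invoke the strongly polynomial Forster algorithm on $X \cap V_0$, viewed as a subset of $V_0 \cong \R^{\dim(V_0)}$. Either the algorithm returns an $\eps$-approximate Forster transform $A$ on $V_0$ (in which case we output $(V_0, A)$), or it certifies that none exists by returning a proper subspace $V_1 \subsetneq V_0$ with $|X \cap V_1| > (|X \cap V_0|/\dim(V_0))\,\dim(V_1)$. In the latter case we recurse on $X \cap V_1$ inside $V_1$.

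The central invariant to maintain throughout this iteration is
\[
\frac{|X \cap V_i|}{\dim(V_i)} \;\geq\; \frac{n}{d} \;,
\]
where $V_i$ is the working subspace at stage $i$. This holds at stage $0$ with equality. Inductively, if the Forster algorithm on $V_i$ fails and returns $V_{i+1} \subsetneq V_i$, the subspace guarantee of Theorem~\ref{thm:forster-intro} gives $|X \cap V_{i+1}|/\dim(V_{i+1}) > |X \cap V_i|/\dim(V_i) \geq n/d$, preserving the invariant. When the algorithm eventually succeeds on some $V = V_k$, we obtain a transformation $A : V \to \R^{\dim(V)}$ satisfying the eigenvalue condition of the proposition, and the invariant gives $|X \cap V| \geq (n/d)\,\dim(V)$ as required.

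Termination and runtime follow because $\dim(V_i)$ decreases strictly at each iteration, so the recursion depth is at most $d$. The base case $\dim(V) = 1$ is trivial: any nonzero scalar $A = 1$ is an exact Forster transform, since every nonzero $x \in V$ normalizes to $\pm 1$ and $M_A(X \cap V) = 1 = 1/\dim(V)$. Thus the recursion cannot continue past dimension one, and the process must terminate in a success. Each stage performs one call to the algorithm of Theorem~\ref{thm:forster-intro}, which runs in strongly polynomial time in $nd/\eps$; summing over at most $d$ stages preserves strongly polynomial runtime.

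The main subtlety, rather than a genuine obstacle, is handling the change of ambient space when we restrict to $V_{i+1}$: the Forster algorithm expects a dataset in $\R^{\dim(V_{i+1})}$, so we must choose a basis of $V_{i+1}$ and express the points of $X \cap V_{i+1}$ in that basis. To keep bit complexity under control (so that the recursive call remains strongly polynomial), we can select the basis from the certifying structure returned by Theorem~\ref{thm:forster-intro} and, if necessary, round as in Theorem~\ref{thm:rounding}, exploiting the fact that this restriction preserves membership in the subspace and only affects coordinate representations. Since the Forster transform is invariant under invertible changes of basis in the domain (by Fact~\ref{fact:transform-properties}(\ref{fact:comp})), any nonsingular choice of basis yields a valid recursive input and the final transformation $A$ pulls back to the desired map on $V$.
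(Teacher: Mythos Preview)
Your proposal is correct and follows essentially the same approach as the paper: iteratively apply Theorem~\ref{thm:forster-intro} inside the current subspace, maintain the invariant $|X\cap V_i|/\dim(V_i)\ge n/d$ via the subspace certificate, and terminate after at most $d$ rounds. The paper handles the bit-complexity of the intermediate subspaces slightly differently---by observing that the returned subspace is spanned by data points and hence has bounded complexity, rather than invoking rounding---but the overall argument is the same.
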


\begin{proof}
The algorithm here is quite simple, presented in pseudocode below.

\begin{algorithm}[hbt!]
   \caption{Extended Forster Transform Algorithm}
\begin{algorithmic}[1]
   \Function{$\textsc{ForsterSubspace}$ }{set $X {\subset \R_{\ast}^d}$ of $n$ points, accuracy parameter $\eps$}
   \State Let $V = \R^d$.
   \State \label{return-step}  Let $d' = \dim(V)$ and let $L$ be a linear isomorphism
between $V$ and $\R^{d'}$ of bit complexity comparable to the bit complexity of $V$.
   \State Let $X ' := \{ L(x): x \in X  \cap(V) \}$.
   \State Run Algorithm \ref{alg:forster-full} on $X' \subseteq \R^{d'}.$
   \State If it returns a subspace $W$, set $V \leftarrow L^{-1} W$ and return to Step \ref{return-step}.
   \State Otherwise, if it returns a matrix $A$, return $(V, A L)$.
   \EndFunction
  \end{algorithmic}
\end{algorithm}

The essential guarantee of this algorithm is that $V$
is always a subspace of bounded bit complexity, 
such that $|X \cap V| \geq |X| \dim(V)/d$. This is clearly true initially.
If it was true for $V$, and our algorithm finds a subspace $W$,
it will also be true of $V' = L^{-1} W$. To see this, we note that
\begin{align*}
|V' \cap X | &= |\{x \in X \cap V : L(x) \in W \}| =
|X' \cap W | \geq |X'| \dim(W) / d' \\
&\geq |X| \dim(V) \dim(W) / (\dim (V) d) = |X | \dim(W)/ d \;.
\end{align*}
On the other hand, we note that $W$ is generated by points in $X'$,
and thus $V'$ is generated by points in $X$, which in turn implies the bounded complexity claim.
In particular, this allows us to define an $L$ with polynomial bit-complexity,
which (along with the observation that $\dim(V)$ shrinks by at least one each iteration)
makes the algorithm clearly strongly polynomial.

The correctness follows from the above proof that $|X \cap V| \geq |X| \dim(V)/d$,
and the fact that $A$ gives an $\eps$-approximate Forster transform of $X'$ on $W$.
This completes the proof of Proposition~\ref{ForsterDecompositionProp}.
\end{proof}


\subsection{PAC Learning Halfspaces} \label{ssec:ltfs-realizable}

Since we work in the distribution-independent setting,
will assume without loss of generality that the target halfspace
is homogeneous, i.e., has zero threshold. 
We can straightforwardly reduce the general case to the homogeneous case 
by increasing the dimension by $1$. In particular, 
if we associate point $x\in \R^d$ with $x'=(x,-1)\in \R^{d+1}_\ast$, 
then we note that $w\cdot x-t =(w,t)\cdot(x,-1)$, 
and thus a general halfspace over the $x$ vectors 
is equivalent to a homogeneous halfspace over the $x'$.

The basic idea of our PAC learning algorithm is that if we are given
a set of points in approximate radial isotropic position,
we can use a variant of the perceptron algorithm to efficiently
compute a hypothesis that correctly classifies a reasonable fraction of these points.
In particular, we will be using the following lemma,
a version of which appears in~\cite{BFK+:97, DunaganV04}:

\begin{lemma}\label{marginPerceptronLemma}
Let $S$ be a set of $n$ labeled examples $(x,y) \in \R^d \times \{\pm 1\}$
such that there exists an unknown vector $w \in \R_{\ast}^d$
with $y = \sgn(w\cdot x)$ for each $(x,y)\in S$, and let $\gamma > 0$ be a parameter.
There exists an algorithm that given $S$ and $\gamma$ has running time
strongly polynomial in $n \, d/\gamma$,
and returns a vector $v \in \R_{\ast}^d$ that
for all $(x,y)\in S$ with $|v\cdot x| \geq \gamma \|v\|_2 \|x\|_2$
satisfies $y = \sgn(v\cdot x)$.
\end{lemma}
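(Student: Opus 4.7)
My plan is to prove the lemma via the modified perceptron algorithm of Dunagan--Vempala~\cite{DunaganV04} (closely related to the adaptive perceptron of~\cite{BFK+:97}). First, I would normalize the examples by defining the unit vectors $\tilde x_i := y_i x_i / \|x_i\|_2$; the realizability hypothesis then becomes $w^* \cdot \tilde x_i > 0$ for all $i$, where $w^* := w/\|w\|_2$, and the desired output becomes any unit vector $v$ with $v \cdot \tilde x_i \geq -\gamma$ for every $i$ (the scale-invariance of the condition in the lemma lets us assume $\|v\|_2 = 1$).

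The algorithm would maintain a unit vector $v$, initialized, say, to $\tilde x_1$. At each step, if there exists an index $i$ with $v \cdot \tilde x_i < -\gamma$, I would apply the reflection update $v \leftarrow v - 2(v \cdot \tilde x_i)\tilde x_i$; this preserves $\|v\|_2 = 1$ and flips the sign of the component of $v$ along $\tilde x_i$, in particular restoring $v \cdot \tilde x_i > 0$ afterward. If no violator exists, $v$ satisfies the required condition and we return it. For the progress analysis, I would track the potential $\Phi_t := 1 - (v_t \cdot w^*)^2$ (equivalently $\sin^2 \theta_t$, where $\theta_t$ is the angle between $v_t$ and $w^*$). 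A direct computation gives $v_{t+1}\cdot w^* - v_t \cdot w^* = -2(v_t \cdot \tilde x_{i_t})(w^* \cdot \tilde x_{i_t}) > 0$, since $v_t \cdot \tilde x_{i_t} < 0$ and $w^* \cdot \tilde x_{i_t} \geq 0$; hence $\Phi_t$ is monotonically decreasing with per-step decrement at least $\Omega\bigl(\gamma^2 (w^* \cdot \tilde x_{i_t})^2\bigr)$. Telescoping against $\Phi_0 \leq 1$ yields $\sum_t (w^* \cdot \tilde x_{i_t})^2 = O(1/\gamma^2)$.

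The main obstacle is converting this telescoping bound into a strongly polynomial iteration count: a per-step argument naively gives $O\bigl(1/(\gamma^2 \min_i(w^* \cdot \tilde x_i)^2)\bigr)$ iterations, which depends on the geometric margin of $w^*$ against the data and can be exponentially small in the input bit complexity. To circumvent this I would invoke the amortized/charging argument of~\cite{BFK+:97, DunaganV04}, which bounds the total number of reflection updates by $\poly(n, d, 1/\gamma)$ independently of the hidden margin of $w^*$. Each iteration performs $O(nd)$ arithmetic operations for the violator search and the reflection. Since every intermediate $v$ is obtained by a sequence of reflections (each a linear combination of $v$ with a rational input vector $\tilde x_i$ with bounded coefficients), the bit complexity of the intermediate quantities remains polynomially bounded; combined with the iteration bound, this yields the claimed strongly polynomial runtime in $nd/\gamma$.
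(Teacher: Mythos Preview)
Your proposal has a genuine gap at precisely the step you flag as the ``main obstacle.'' The telescoping bound $\sum_t (w^*\cdot\tilde x_{i_t})^2 = O(1/\gamma^2)$ does \emph{not} bound the number of reflection updates: each summand can be exponentially small in the input bit complexity, or even zero when $w\cdot x_{i_t}=0$, in which case the reflection leaves $v\cdot w^*$ unchanged and the algorithm can cycle forever (take $d=2$, $w=(1,0)$, $x_1=(0,1)$, $x_2=(0,-1)$, both with label $+1$; your iterates alternate between $(0,1)$ and $(0,-1)$). The ``amortized/charging argument of~\cite{BFK+:97,DunaganV04}'' you invoke does not deliver what you need: in~\cite{DunaganV04} the perceptron phase is not shown to terminate on its own; it is run for a fixed budget and then interleaved with a rescaling phase, and the overall iteration count depends on bit complexity. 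Moreover, their update is a projection-then-renormalize, which yields a \emph{multiplicative} increase $v'\cdot w^* \ge (v\cdot w^*)/\sqrt{1-\gamma^2}$; your reflection only gives an additive increase of $2|v\cdot\tilde x_{i_t}|\,(w^*\cdot\tilde x_{i_t})$, which is exactly the source of the problem.

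The paper's proof takes a different, self-contained route that sidesteps the hidden-margin dependence entirely. It keeps $v$ \emph{unnormalized}, uses the additive perceptron update $v\leftarrow v+y\hat x$ with $\|\hat x\|_2\in[1,2]$, and tracks $\|v\|_2^2$ rather than the angle to $w^*$. The key point is that a violator here satisfies $y(v\cdot x)\le -\gamma\|v\|_2\|x\|_2$---the margin is measured against the \emph{current} $v$, not against $w^*$---so one gets $\|v'\|_2^2\le\|v\|_2^2+4-2\gamma\|v\|_2$, which drops by at least $2$ whenever $\|v\|_2\ge 3/\gamma$. Meanwhile $v\cdot w$ is non-decreasing, so starting from $v\cdot w\ge 3\|w\|_2/\gamma$ and $\|v\|_2^2=O(d/\gamma^2)$ (achieved by running $2d$ copies in parallel from the scaled signed standard basis vectors) forces $\|v\|_2\ge 3/\gamma$ throughout, and the algorithm halts in $O(d/\gamma^2)$ iterations with no dependence whatsoever on the margin of $w$. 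The additive update on bounded-norm rational rescalings of the inputs also keeps bit complexity polynomial without the square roots your normalization $\tilde x_i = y_i x_i/\|x_i\|_2$ would introduce.
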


\begin{proof}
We begin with the assumption that we know a vector $v$
such that $v \cdot w \geq 3 \|w\|_2/\gamma$ and $\|v\|_2^2 = O(d/\gamma^2)$.
The algorithm is the following:
\begin{enumerate}
\item While there exists an $(x,y) \in S$ with $|v\cdot x| \geq \gamma \|v\|_2 \|x\|_2$
and $y \neq \sgn(v\cdot x)$, do:

\begin{enumerate}
\item Let $\hat x$ be a positive multiple of $x$ with $\ell_2$-norm between $1$ and $2$.

\item $v \leftarrow v + y(\hat x)$.

\end{enumerate}

\item Return $v$.
\end{enumerate}
It is clear that the returned value of $v$ has the desired property
and that each operation can be performed with limited precision.
It remains to show that, under the given assumptions on $v$,
this algorithm will terminate in a polynomial number of iterations.

For this, we note that in each iteration if we let $v'$ be the new value of $v$, we have that
$$
\|v' \|_2^2 = \|v\|_2^2 + 2 y v \cdot (\hat x) + \|\hat x \|_2^2 \leq \|v\|_2^2 + 4 + 2 y (v \cdot x) \|\hat x\|_2/\|x\|_2 \;.
$$
Noting that $y$ and $(v\cdot x)$ have opposite signs, the RHS above is at most
$$
\|v\|_2^2 + 4 - 2 |v\cdot x| / \|x\|_2 \leq \|v\|_2^2 + 4 - 2 \|v\|_2 \gamma \;.
$$
Therefore, so long as $\|v\|_2 \geq 3/\gamma$, we have that $\|v'\|_2^2 \leq \|v\|_2^2 - 2$.

On the other hand, we have that
$$
v' \cdot w = v \cdot w + y( \hat x \cdot w ).
$$
Since $y$ has the same sign as $x\cdot w$, which has the same sign as $\hat x \cdot w$,
the above quantity is at least $v\cdot w$. This means that $v\cdot w$ only increases
over the course of our algorithm, and therefore throughout the algorithm
$\|v\|_2 \geq |v\cdot w|/\|w\|_2 \geq 3/\gamma$. Give the above,
this implies that $\|v\|_2^2$ must decrease by at least $2$ each iteration.
This cannot happen more than $\|v\|_2^2$ times, and therefore
the algorithm will terminate after at most $O(d/\gamma^2)$ iterations.

It remains to show how to efficiently find a $v$ with
$v\cdot w \geq 3 \|w\|_2/\gamma$ and $\|v\|_2^2 = O(d/\gamma^2)$.
We claim that it is always possible to take $v$ to be
an appropriately large constant multiple of $\sqrt{d}/\gamma$
times plus or minus a standard basis vector.
This is because some coordinate of $w$ must have absolute value at least $\|w\|_2/\sqrt{d}$.
Thus, we can run the above algorithm in parallel for each such initial value of $v$
and run until one of them returns an answer.
\end{proof}

Combining the modified perceptron algorithm 
of Lemma~\ref{marginPerceptronLemma}
with an approximate Forster transform,
gives us a way to learn a reasonable fraction
of the points for any linearly separable dataset.

\begin{lemma}\label{PartialClassifierLemma}
Let $S$ be a multiset of labeled examples $(x,y) \in \R^d_\ast \times \{\pm 1\}$
such that there exists an unknown vector $w \in \R_{\ast}^d$
with $y = \sgn(w\cdot x)$ for each $(x,y)\in S$.
There exists a strongly polynomial time algorithm
that with high probability returns a subspace $V$ of $\R^d$,
a linear transformation $A: V \rightarrow \R^{\dim(V)}$,
and a vector $v \in V$ such that for every $(x,y)\in S$
with $x\in V$ and $|v\cdot (Ax)| \geq \|v\|_2 \, \|Ax\|_2 / (2\sqrt{d})$
we have that $y = \sgn(v\cdot x)$.
Furthermore, this holds for at least a $1/(4d)$-fraction of points $(x,y)\in S$.
\end{lemma}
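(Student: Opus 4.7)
The plan is to compose our Forster decomposition with the margin perceptron routine of Lemma~\ref{marginPerceptronLemma}. First, I will feed the point multiset $X = \{x : (x,y) \in S\}$ into the strongly polynomial algorithm of Proposition~\ref{ForsterDecompositionProp} with accuracy parameter $\eps = 1/2$. This yields a nonzero subspace $V \subseteq \R^d$ with $|X \cap V| \geq (n/d)\dim(V)$ together with a linear map $A: V \to \R^{d'}$, $d' = \dim(V)$, such that the normalized transformed points $y_i \eqdef Ax_i/\|Ax_i\|_2$ (for $x_i \in X \cap V$) have empirical second moment matrix with eigenvalues in $[\frac{1}{2d'}, \frac{3}{2d'}]$.

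Next, I consider the induced labeled problem on $V$. Since the original labels satisfy $y = \sgn(w \cdot x)$, for $x \in V$ we have $w \cdot x = (\proj_V w) \cdot x = (A^{-\top}\proj_V w) \cdot Ax$, so the transformed set $\{(Ax, y) : (x,y) \in S, x \in V\}$ is linearly separable by the homogeneous halfspace with weight vector $A^{-\top}\proj_V w$ (the degenerate case $\proj_V w = \0$ forces a constant label and can be handled trivially). I then invoke Lemma~\ref{marginPerceptronLemma} on this transformed dataset with margin parameter $\gamma = 1/(2\sqrt{d})$; this runs in strongly polynomial time and returns a nonzero vector $v \in \R^{d'}$ with the property that every transformed example $(Ax, y)$ satisfying $|v \cdot Ax| \geq \gamma \|v\|_2 \|Ax\|_2$ is correctly classified by $\sgn(v \cdot Ax)$.

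It then remains to show that the margin condition is satisfied by at least a $1/(4d)$-fraction of points of $S$, which is where approximate radial isotropy comes in. Let $u = v/\|v\|_2$ and $Z_i \eqdef (u \cdot y_i)^2 \in [0,1]$ for $x_i \in X \cap V$. The Forster guarantee gives $\frac{1}{|X\cap V|}\sum_i Z_i = u^\top M_A(X\cap V)\, u \geq \frac{1-\eps}{d'} = \frac{1}{2d'}$. Applying the easy one-sided Paley--Zygmund bound $\E[Z] \leq \Pr[Z \geq t] + t$ with $t = 1/(4d')$ yields
\[
\Pr_i\!\left[Z_i \geq \tfrac{1}{4d'}\right] \geq \tfrac{1}{2d'} - \tfrac{1}{4d'} = \tfrac{1}{4d'}\,.
\]
Equivalently, at least a $1/(4d')$-fraction of points in $X \cap V$ satisfy $|u \cdot y_i| \geq 1/(2\sqrt{d'}) \geq 1/(2\sqrt{d})$, i.e., $|v \cdot Ax_i| \geq \|v\|_2 \|Ax_i\|_2 / (2\sqrt{d})$. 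Combining with $|X\cap V| \geq (n/d)d'$, the total fraction over all of $S$ is at least $(d'/d) \cdot (1/(4d')) = 1/(4d)$, which is the claimed bound.

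The substantive step is really the Paley--Zygmund application packaged with the Forster guarantee; everything else is bookkeeping and invocation of the previously established subroutines, so I do not anticipate a serious obstacle. The minor thing to be careful about is to verify that all constants line up when using the $\eps = 1/2$ Forster accuracy (so that the lower bound $(1-\eps)/d'$ exceeds $1/(2d') + 1/(4d')$ strictly enough to obtain a fraction of order $1/d'$), and to handle the degenerate case $\proj_V w = \0$ separately by returning a trivial constant hypothesis. Both are routine.
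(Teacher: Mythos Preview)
Your proposal is correct and follows essentially the same approach as the paper: apply the Forster decomposition with $\eps=1/2$, run the margin perceptron of Lemma~\ref{marginPerceptronLemma} on the transformed data with $\gamma=1/(2\sqrt{d})$, and then use a Markov-type averaging argument on $Z_i=(u\cdot y_i)^2$ to lower bound the fraction of large-margin points. The paper normalizes by $|S|$ from the start (obtaining $\frac{1}{|S|}\sum_{x\in V} f_A(x)f_A(x)^\top \succeq I/(2d)$ directly) whereas you normalize by $|X\cap V|$ and then multiply by $|X\cap V|/|S|\ge d'/d$, but the arithmetic is identical; you are also slightly more careful in flagging the degenerate case $\proj_V w=\0$, which the paper does not address explicitly.
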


\begin{proof}
First, we apply the algorithm of Proposition~\ref{ForsterDecompositionProp}
to the multiset $X = \{x \in \R^d: (x, y) \in S\}$ with $\eps = 1/2$, to obtain $V$ and $A$.
We then let $S' := \{(Ax,y): (x,y) \in S, x \in V\}$.
We note that for all $(z,y)\in S'$ we have that
$y = \sgn(w \cdot x) = \sgn( ((A^\top)^{-1} w) \cdot z ).$
This means that we can apply the algorithm of Lemma \ref{marginPerceptronLemma} to $S'$,
which we do with $\gamma = 1/(2\sqrt{d})$ to obtain $v$.

By the statement of Lemma~\ref{marginPerceptronLemma},
we have that for each $(Ax,y) \in S'$ with $|v\cdot (Ax)| > \|v\|_2 \, \|Ax\|_2/(2\sqrt{d})$
that $y = \sgn(v\cdot (Ax))$, as desired.
It remains to show that this applies to a large fraction of points $(x,y) \in S$.

To establish this, we note that
$$
\frac{1}{|S|} \sum_{(x,y)\in S, x\in V} \frac{(Ax)\, (Ax)^\top}{\|Ax\|_2^2} =
\left( \frac{|S'|}{|S|} \right)\left( \frac{1}{|S'|} \sum_{(x,y)\in S, x\in V} \frac{(Ax) \, (Ax)^\top}{\|Ax\|_2^2} \right)
\succeq \left(\frac{\dim(V)}{d}\right)\left( \frac{I}{2\dim(V)} \right) \succeq \frac{I}{2d} \;.
$$
Therefore, we have that
$$
\frac{1}{|S|} \sum_{(x,y)\in S} \mathbf{1}\{x\in V\}\left( \frac{v\cdot (Ax)}{\|Ax\|_2 \, \|v\|_2} \right)^2
\geq v^\top \left( \frac{I}{2d} \right)v/\|v\|_2^2 \geq \frac{1}{2d} \;.
$$
This means that the average value over $(x,y)\in S$ of
$g(x):=\mathbf{1}\{x\in V\} \left( \frac{v \cdot (Ax)}{\|Ax\|_2 \|v\|_2} \right)^2$ is at least $1/(2d)$.
The contribution from terms with $g(x) < 1/(2\sqrt{d})$ is at most $1/(4d)$.
Since $g(x)\leq 1$ for all $x$, this implies that at least a $1/(4d)$-fraction of points
$(x,y)\in S$ have $g(x)\geq 1/(2\sqrt{d})$.
This completes the proof of Lemma~\ref{PartialClassifierLemma}.
\end{proof}

Ideally, we would like a version of Lemma~\ref{PartialClassifierLemma}
that works over a distribution rather than a finite set.
This can be achieved by running the algorithm of Lemma \ref{PartialClassifierLemma}
on a suitably large set of samples. To establish generalization guarantees,
we leverage the fact that the collection of possible classifiers
comes from a set of bounded VC-dimension.

\begin{proposition}\label{PartialLernerProp}
Let $\D$ be a distribution over $\R^d \times \{\pm 1\}$
such that for some unknown vector $w \in \R_{\ast}^d$ we have
that for $(x,y)\sim \D$ that $y=\sgn(w\cdot x)$ almost surely.
Given $\eps,\delta>0$ with $\eps < 1/(20 d)$,
there exists an algorithm that draws $n=O(d^2 \log(1/\delta)/\eps^2)$
i.i.d.\ samples from $\D$, runs in time strongly polynomial in $n, d$,
and with probability at least $1-\delta$ returns a vector subspace $V$ in $\R^d$,
a linear transformation $A: V \to \R^{\dim(V)}$ and a vector $v \in V$, such that:
\begin{enumerate}
\item The probability over $(x,y)\sim \D$ that $x\in V$,
$|v\cdot (Ax)| \geq \|v\|_2 \, \|Ax\|_2/(2\sqrt{d})$,
and $y \neq \sgn(v\cdot x)$ is at most $\eps$.

\item The probability over $(x,y)\sim \D$ that
$x\in V$ and $|v\cdot (Ax)| \geq \|v\|_2 \, \|Ax\|_2/(2\sqrt{d})$
is at least $1/(5d)$.
\end{enumerate}
\end{proposition}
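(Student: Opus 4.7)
The plan is to draw $n = O(d^2 \log(1/\delta)/\eps^2)$ i.i.d.\ samples from $\D$ to form a multiset $S$, apply Lemma~\ref{PartialClassifierLemma} to $S$ to obtain $V$, $A$, and $v$, and then invoke VC-dimension-based uniform convergence to transfer the empirical guarantees to the underlying distribution $\D$.

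By Lemma~\ref{PartialClassifierLemma}, the output triple $(V, A, v)$ satisfies on the sample $S$ that (a) the empirical misclassification probability of the partial classifier equals zero, and (b) at least a $1/(4d)$-fraction of pairs $(x,y)\in S$ satisfy the classification condition. Define the \emph{classification region} $R_{V,A,v} \eqdef \{x \in \R^d : x \in V \text{ and } |v\cdot (Ax)| \geq \|v\|_2 \|Ax\|_2/(2\sqrt{d})\}$ and the \emph{error region} $E_{V,A,v} \eqdef \{(x,y) \in \R^d \times \{\pm 1\} : x \in R_{V,A,v} \text{ and } y \neq \sgn(v\cdot x)\}$. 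The two statements we need are that the true-distribution measures of $R_{V,A,v}$ and $E_{V,A,v}$ are close to their empirical counterparts on $S$: specifically, $\Pr_\D[R_{V,A,v}] \geq 1/(5d)$ and $\Pr_\D[E_{V,A,v}] \leq \eps$.

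The key observation is that, as $(V,A,v)$ ranges over all possible outputs of Lemma~\ref{PartialClassifierLemma}, the set systems $\{R_{V,A,v}\}$ and $\{E_{V,A,v}\}$ form concept classes of VC dimension polynomial in $d$. Indeed, $R_{V,A,v}$ is the intersection of a linear subspace with a region cut out by a degree-$2$ polynomial inequality in $x$ (the squared form of the margin condition), and $E_{V,A,v}$ further intersects with the halfspace $\{y(v\cdot x) < 0\}$ in $(x,y)$-space; standard Warren/Pollard-type bounds give VC dimension $O(d^2)$. For $\Pr_\D[R_{V,A,v}]$, additive uniform convergence with our chosen sample size yields $|\Pr_\D[R_{V,A,v}] - \Pr_S[R_{V,A,v}]| \leq 1/(20d)$ with probability at least $1-\delta/2$, so $\Pr_\D[R_{V,A,v}] \geq 1/(4d) - 1/(20d) = 1/(5d)$. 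For $\Pr_\D[E_{V,A,v}]$, since the empirical error is zero, the standard realizable VC bound gives $\Pr_\D[E_{V,A,v}] \leq \eps$ with probability at least $1-\delta/2$. The hypothesis $\eps < 1/(20d)$ ensures that our sample size is $n = \Omega(d^4 \log(1/\delta))$, comfortably exceeding what both generalization bounds require. A union bound over the two failure events then yields the proposition with probability at least $1-\delta$.

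The main obstacle is the careful verification of the VC-dimension bound, since the classification region involves two nonstandard ingredients tied to the data-dependent outputs of Lemma~\ref{PartialClassifierLemma}: membership in a data-chosen subspace $V$, and an angular margin condition between $Ax$ and $v$. Once one writes the margin condition in squared form as a degree-$2$ polynomial threshold in $x$ and invokes standard Boolean-closure properties for VC classes, the rest is routine. The runtime is strongly polynomial via Lemma~\ref{PartialClassifierLemma} together with $\poly(n,d)$ elementary operations on the sample.
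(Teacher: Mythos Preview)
Your proposal is correct and follows essentially the same approach as the paper: draw $n$ samples, apply Lemma~\ref{PartialClassifierLemma}, and use VC-dimension bounds (decomposing the relevant events into subspace membership, a halfspace, and a degree-$2$ polynomial threshold) to transfer the empirical guarantees to $\D$. The only cosmetic difference is that the paper applies a single additive $\eps$-uniform-convergence bound to both events, whereas you invoke the realizable bound for the error event and an additive bound for the classification region; this is immaterial.
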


\begin{proof}
We take a set $S$ of $n$ i.i.d.\ samples from $\D$
and apply the algorithm of Lemma~\ref{PartialClassifierLemma} to them
for $n$ a sufficiently large constant multiple of $(d^2 \log(1/\delta)/\eps^2)$.
This is clearly a strongly polynomial time algorithm. It remains to prove correctness.

We note that the probability over $(x,y)$ drawn uniformly from $S$ that
$x\in V$, $|v\cdot (Ax)| \geq \|v\|_2 \, \|Ax\|_2/(2\sqrt{d})$, and
$y \neq \sgn(v\cdot x)$ is $0$.
Furthermore, the probability over $(x,y)$ drawn uniformly from $S$
that $x\in V$ and $|v\cdot (Ax)| \geq \|v\|_2 \, \|Ax\|_2/(2\sqrt{d})$ is at least $1/(4d)$.
It suffices to show that (with high probability over our samples)
these probabilities over $S$ are within $\eps$
of the corresponding probabilities if $(x,y)$ were drawn from $\D$.

In fact, we will show that with probability $1-\delta$ over our choice of samples
the following holds: for any choice of $V, A$, and $v$,
the probabilities over $S$ and $\D$ of these events differ by at most $\eps$.
This will follow from the VC-Inequality~\cite{DL:01},
if we can show that these events come from classes of VC-dimension $O(d^2)$.
We now proceed with the argument.
We note that these events depend only on the following simpler events:
\begin{itemize}
\item Whether $y=1$.
\item Whether $x\in V$.
\item Whether $(v\cdot x) > 0$.
\item Whether $|v\cdot x|^2 \geq \|v\|_2^2 \, \|Ax\|_2^2 / (4d)$.
\end{itemize}
The first of these is a specific event, so has VC-dimension $0$.
The second event checks membership in a subspace, which has VC-dimension $d$.
The third event checks membership in a halfspace, hence also has VC-dimension $d$.
The last of these events is a degree-$2$ threshold condition,
which has VC dimension $O(d^2)$.
Since the events we care about are logical combinations
of finitely many events of VC-dimension $O(d^2)$,
they come from classes with VC-dimension $O(d^2)$.
This completes our proof.
\end{proof}

We are now ready to prove the main result of this section.

\begin{theorem}\label{thm:lft-real-full}
Let $\D$ be a distribution over $\R^d \times \{\pm 1\}$
such that for some unknown vector $w \in \R_{\ast}^d$
we have that for $(x,y)\sim \D$ that $y=\sgn(w\cdot x)$ almost surely.
Given $\eps,\delta > 0$ with $\eps < 1/(20d)$ there is an algorithm
that draws $n = O(d^{9/2}\log(1/\eps) \log(d/\eps\delta)/\eps^2)$ i.i.d.\ samples from $\D$,
runs in strongly polynomial time, and returns a strongly polynomial time computable function
$f:\R^d \to \{\pm 1\}$ such that with probability $1-\delta$ over the samples it holds that
$\pr_{(x,y)\sim \D}[f(x) \neq y] \leq \eps$.
\end{theorem}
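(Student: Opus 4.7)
The plan is to build the final hypothesis $f$ as a decision list that is assembled iteratively by repeatedly invoking the partial learner of Proposition~\ref{PartialLernerProp}. At each iteration $t$, we let $R_t \subseteq \R^d$ be the (measurable) region on which no previous partial classifier has issued a prediction; initially $R_0 = \R^d$. We apply Proposition~\ref{PartialLernerProp} to the conditional distribution $\D_t = \D \mid x \in R_t$ (which is obtained from $\D$ by rejection sampling) with accuracy parameter $\eps' = \Theta(\eps/d)$ and failure probability $\delta' = \delta/T$, where $T = \Theta(d\log(1/\eps))$ is an iteration bound chosen below. This yields a triple $(V_t, A_t, v_t)$ whose induced partial classifier $g_t(x) = \sgn(v_t \cdot A_t x)$ is defined whenever $x \in V_t$ and $|v_t \cdot (A_t x)| \ge \|v_t\|_2 \, \|A_t x\|_2/(2\sqrt{d})$; we append $g_t$ to the decision list and update $R_{t+1}$ accordingly. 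Since $\D_t$ is still consistent with the target halfspace on its support, the hypothesis of Proposition~\ref{PartialLernerProp} applies to $\D_t$.

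Next, I would set up the two-part error accounting. Write $U_t = \Pr_{(x,y)\sim \D}[x \in R_t]$ for the mass of the unclassified region. Conclusion (2) of Proposition~\ref{PartialLernerProp} applied to $\D_t$ gives that a $(1/(5d))$-fraction of $\D_t$'s mass is classified, so
\[
U_{t+1} \le \left(1 - \tfrac{1}{5d}\right) U_t,
\]
and hence $U_T \le (1-1/(5d))^T \le \eps$ for $T = \lceil 5d \ln(1/\eps)\rceil$. Conclusion (1) applied to $\D_t$ says the probability of a misclassification inside the classified portion is at most $\eps'$ under $\D_t$, which contributes at most $\eps' \cdot U_t$ to the error under $\D$. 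Summing over iterations via the geometric bound $\sum_{t \ge 0} U_t \le 5d$, the total misclassification probability of the decision list on points that do get classified is at most $5d \cdot \eps' \le \eps$. On the remaining region $R_T$, we extend $f$ by any default label and incur an additional error of at most $U_T \le \eps$, for total error $\le 2\eps$; rescaling $\eps$ by a constant yields the stated bound.

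The strongly polynomial complexity follows by combining (i) the per-iteration strong polynomiality of Proposition~\ref{PartialLernerProp}, which itself combines the strongly polynomial Forster decomposition (Proposition~\ref{ForsterDecompositionProp}) with the modified perceptron of Lemma~\ref{marginPerceptronLemma}, and (ii) the fact that $T = \poly(d, \log(1/\eps))$ and each iteration invokes $\poly(d,1/\eps)$ samples and arithmetic operations on polynomial-precision numbers. The decision-list evaluation on a new point is also strongly polynomial because each level requires only matrix-vector products and inner-product comparisons. The fresh samples for $\D_t$ are generated by drawing from $\D$ and rejecting points outside $R_t$; since we only iterate while $U_t \ge \eps$ (stopping otherwise and assigning a default label), the expected number of raw $\D$-samples per iteration is at most $1/\eps$ times the sample complexity of Proposition~\ref{PartialLernerProp} at parameters $(\eps',\delta')$, giving an overall sample bound of the form stated in the theorem (up to standard concentration arguments and a union bound over the $T$ iterations, which contributes the $\log(d/(\eps\delta))$ factor).

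The main technical obstacle is the iterative conditioning. Each partial classifier $g_t$ is a randomized function of the samples, so $R_{t+1}$ is itself a random set, and naively invoking Proposition~\ref{PartialLernerProp} on $\D_t$ requires fresh samples independent of the history. I would resolve this either by drawing an independent batch of samples per iteration (paying the $1/U_t \le 1/\eps$ rejection-sampling overhead, which is where the extra factor of $1/\eps$ in the per-iteration count comes from) or, more economically, by drawing one large sample at the outset and arguing via a uniform-convergence/VC bound over the class of decision-list regions $R_t$ that the empirical mass and empirical error inside any such region closely track their true values; the relevant class has VC dimension $O(T d^2) = \poly(d,\log(1/\eps))$, so standard VC generalization suffices. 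The second, milder obstacle is tracking the effect of finite precision inside the Forster decomposition call at each iteration, which is handled by invoking Proposition~\ref{ForsterDecompositionProp} with $\eps = 1/2$ (a constant) so that its strongly polynomial guarantee applies without any bit-complexity blow-up across iterations.
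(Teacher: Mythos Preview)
Your proposal is correct and follows essentially the same approach as the paper: build a decision list by repeatedly applying Proposition~\ref{PartialLernerProp} to the distribution conditioned on the still-unclassified region, use the $1/(5d)$ coverage guarantee to get geometric shrinkage of the unclassified mass $U_t$, and control total error by summing the per-round conditional errors weighted by $U_t$. The paper's implementation differs only in bookkeeping---it draws a fixed fresh batch of $M$ samples from $\D$ each round and filters by $f_{i-1}(x)=0$ (rather than rejection-sampling to a target count), and it phrases the error control as the inductive invariant $\Pr[f_i(x)=-y]<(\eps/2)\Pr[f_i(x)=y]$ instead of your explicit geometric sum $\sum_t \eps' U_t \le 5d\,\eps'$. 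One quantitative caveat: your explicit $1/\eps$ rejection-sampling overhead combined with $T=\Theta(d\log(1/\eps))$ rounds yields a raw sample count larger than the stated $d^{9/2}/\eps^2$ by roughly a $\sqrt{d}/\eps$ factor; the paper's batch-and-filter variant avoids the $1/\eps$ blowup because it simply uses whatever conditional samples land in the batch, so if you want to hit the exact exponent in the theorem you should adopt that variant rather than rejection sampling to a quota.
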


\begin{proof}
For simplicity, we allow our algorithm to output a function $f$ valued in $\{0,1,-1\}$.
The algorithm is as follows:
\begin{algorithm}[hbt!]
   \caption{Halfspace Learning Algorithm}
\begin{algorithmic}[1]
  \Function{$\textsc{LearnLTF}$ }{sample access to distribution $\mathcal{D}$ over $\R^d_\ast \times \{\pm 1\}$, accuracy parameter $\eps$}
  \State Let $f_0 \equiv 0$.
  \State Let $C>0$ be a sufficiently large universal constant.
  \State Let $r = C\sqrt{d} \log(1/\eps) \in\Z_+$.
  \For{For $i=1$ to $r$}
    \State Take $M:=Cd^{4} \log(d/\eps\delta)/\eps^2$ samples from $\D$ and call the resulting multiset $T$.
    \State Let $S$ be the set of $(x,y)\in T$ such that $f_{i-1}(x) = 0$.
    \If{$|S| < \eps M / 4$}
      \State \Return $f_{i-1}$
    \Else
      \State Run the algorithm from Proposition \ref{PartialLernerProp}
with parameters $\eps \leftarrow \eps/(10d)$ and $\delta \leftarrow \delta/(2r)$
to obtain $V,A,v$, using $S$ as the set of samples.
       \State Let
$$ f_i(x) :=
\begin{cases}
f_{i-1}(x) & \textrm{, if }f_{i-1}(x)\neq 0\\
\sgn(v\cdot (Ax)) & \textrm{, if }f_{i-1}=0, x\in V, \textrm{and }|v\cdot (Ax)|/(2\sqrt{d})\\
0 & \textrm{, otherwise}.
\end{cases}
$$
    \EndIf
  \EndFor
  \EndFunction
  \end{algorithmic}
\end{algorithm}

It is easy to see that the sample complexity and runtime are as desired. 

For correctness, we note that $S$ is a set of i.i.d.\ samples from
the distribution of $\D$ conditioned on $f_{i-1}(x)=0$.
By the conclusion of Proposition \ref{PartialLernerProp}, this means
that, with probability at least $1-\delta/(2r)$ over the samples,
the probability that $f_i(x) = 0$ is at most $(1-1/(2\sqrt{d}))$ times the probability
that $f_{i-1}(x) = 0$. Furthermore, with probability at least $(1-\delta/(2r))$,
we have that $|S| < 2 \, M \,  \pr[f_{i-1}(x) = 0]$. Combining the above,
we see that with probability $1-\delta$,
when our algorithm returns an $f$,
it is the case that $\pr_{(x,y)\sim \D}[f(x)=0] \leq \eps/2$.

Furthermore, if all the calls to the algorithm from
Proposition \ref{PartialLernerProp} succeed,
then the probability over $(x,y)\sim \D$ conditioned on $f_{i-1}(x)=0$
that $f_i(x) = y$ is at least $1/(5d)$,
while the probability that $f_i(x)=-y$ is at most $\eps/(10 d)$.

Using this, we can show by induction on $i$ that
$$
\pr_{(x,y)\sim \D}[f_i(x) = -y] < (\eps/2) \, \pr_{(x,y)\sim \D}[f_i(x) = y] \;.
$$
This combined with the result that $\pr_{(x,y)\sim \D}[f_i(x)=0] < \eps/2$
for the returned $f_i$, gives our final result.
\end{proof}

\section{Conclusions and Open Problems} \label{sec:conc}
In this work, we designed the first strongly polynomial time algorithm for computing
$\eps$-approximate Forster transforms of a given dataset\footnote{While our Forster algorithm is randomized, 
we remark that the only source of randomness is due to the method we use 
to compute an approximate eigendecomposition. 
It is plausible that deterministic algorithms exist
for this purpose, in which case our Forster algorithm becomes deterministic as well.}. 
By using this algorithm is an essential ingredient, we gave the first strongly polynomial time 
algorithm for distribution-free PAC learning of halfspaces, 
both in the realizable setting and in the presence of semi-random label noise. 
This algorithmic result is surprising (even in the realizable case), 
as obtaining a strongly polynomial {\em proper} PAC learner 
is {\em equivalent} to strongly polynomial LP --- a major unsolved problem in TCS.


A number of open problems suggest themselves:
\begin{itemize}[leftmargin=*]
\item Our $\eps$-approximate Forster transform algorithm has runtime scaling 
polynomially with $1/\eps$. That is, our algorithm runs in strongly polynomial 
time when $\eps$ is at least inverse polynomial in $n, d$. An obvious open question
is to develop a strongly polynomial algorithm with a $\polylog(1/\eps)$ runtime dependence.
To achieve such a guarantee with our approach, one needs to circumvent two obstacles:
First, one would need to reduce the number of iterations of our algorithm 
(that is controlled by the progress in our potential function). 
Second, one would require a strongly polynomial 
approximate eigendecomposition subroutine with a $\polylog(1/\eps)$ runtime dependence.

\item We believe that the following question is of independent interest:
Is there a strongly polynomial time algorithm for approximate eigendecomposition with 
a $\polylog(1/\eps)$ runtime dependence? 
Moreover, is there a deterministic algorithm?

\item The running time of our algorithm is strongly polynomial in $n, d$, 
but the polynomial dependence is quite large (of the order of $(nd)^{10}$). 
While we did not make any effort to optimize the degree of the polynomials, 
it would be interesting to understand the quantitative limitations of our approach. 
Can our approach lead to algorithms with good practical performance?

\item As mentioned in the introduction of this paper, Forster's rescaling can be viewed as 
a very special cases of operator scaling and tensor scaling~\cite{GargO18}.
These tasks have attracted significant attention in recent years from various communities,
and efficient (weakly polynomial) algorithms (in some cases with a $\poly(1/\eps)$ dependence) 
have been developed, see, e.g.,~\cite{AGLOW18, BFGOWW18} and references therein.
It would be interesting to explore whether our approach can be extended to yield strongly
polynomial algorithms (when $\eps$ is not too small) for such generalizations.

\end{itemize}

\bibliographystyle{alpha}
\bibliography{allrefs}

\newpage

\appendix

\section*{APPENDIX} \label{sec:app}

\section{Proof of Fact~\ref{fact:transform-properties}} \label{app:transform-properties}

We show each property separately.
\begin{enumerate}[(a)]
\item This follows directly from the fact that the transformation 
$f_A(x) =  \frac {A x} {\|A x\|_2}$ is scale-invariant.

\item We have that 
$f_{B} ( f_{A} (x) ) =  f_{B} \left( \frac {A x} {\|A x\|_2} \right) = 
f_{B} ( A x ) = \frac {B A x} {\|B A x\|_2} = f_{B A} (x)$, 
where the second equality follows from part~\ref{fact:scale-inv}.

\item Let $\alpha = \|B - I\|_2$ and $y = f_A(x)$. Note that $\|y\|_2 = 1$.
By property~\ref{fact:comp}, we equivalently want to show that 
$\| f_B(y) - y\|_2 \le \alpha$. We have that 
$$\| f_B(y) - y\|_2 = \left\|  \frac B {\|B y\|_2} y - y \right\|_2 
\leq  \max_{v \in \R^d: \|v\|_2 = 1} \left\|  \frac B {\|B y\|_2} v - v \right\|_2  
= \left\| \frac B {\|B y\|_2} - I \right\|_2 \;.$$
The desired statement follows from the fact that the matrix $\frac B {\|B y\|_2}$ 
has eigenvalues between $\frac 1 {1+\alpha}$ and ${1+\alpha}$, 
as $B$ has eigenvalues between $1$ and $1+\alpha$ and 
$\|B y\|_2 \in [1,1+\alpha]$.

\item We have that 
$$f^{(V)}_{B A}(x) = \proj_V f_{B}(f_A(x)) 
= \proj_V \frac {(I+ a I_V) f_{A}(x)} {\|B f_{A}(x)\|_2} = 
\frac {(1+ a) f^{(V)}_{A}(x)} {\|B f_{A}(x)\|_2} \;.$$
Since $ 1\leq \|B f_{A}(x)\|_2 \leq 1+a$, 
it follows that 
$1 \leq \lambda(x) \eqdef \frac {(1+ a)} {\|B f_{A}(x)\|_2}  \leq 1+a$, 
as desired.
Similarly,  we can write 
$$f^{(V^\perp)}_{BA}(x) = \frac {f^{(V^\perp)}_{A}(x)} {\|B f_{A}(x)\|_2} \;.$$
Since $ 1\leq \|B f_{A}(x)\|_2 \leq 1+a$, 
it follows that $\frac{1}{1+a} \leq \mu(x) \eqdef \frac{1}{\|B f_{A}(x)\|_2} \leq 1$.
\end{enumerate}
This completes the proof of Fact~\ref{fact:transform-properties}.

\end{document}